\documentclass[aos]{imsart}
\usepackage{amsmath}
\usepackage{graphicx,psfrag,epsf}
\usepackage{enumerate}
\usepackage{natbib}
\usepackage{calc}
\usepackage{upgreek}
\usepackage[noend]{algpseudocode}
\usepackage{algorithmicx, algorithm}
\PassOptionsToPackage{svgnames}{xcolor}
\usepackage{url} 
\usepackage{soul}
\usepackage{dsfont}
\usepackage{pgf,tikz}
\usepackage{forest}
\usepackage{tabularx}
\usepackage{extarrows}
\usepackage{mathrsfs}
\usepackage{graphics,epsfig,epsf,psfrag}
\usepackage{url}
\usepackage{bm}
\usepackage{color, appendix}
\usepackage{amsmath, amsfonts, amsthm,amssymb}
\usepackage{epstopdf}
\usepackage{hyperref}
\usepackage{enumerate, framed}
\usepackage{graphicx,lscape,rotate}
\usepackage[noend]{algpseudocode}
\usepackage{algorithmicx, algorithm}
\usepackage{comment}
\usepackage{booktabs}
\usepackage{multirow}
\usepackage{CJK}
\usepackage{tcolorbox}
\usepackage{lipsum}
\usepackage{adjustbox}
\usepackage[margin=1.5in]{geometry}
\tcbuselibrary{skins,breakable}
\usetikzlibrary{shadings,shadows}

\usepackage[acronym,nomain]{glossaries}
\setacronymstyle{long-sc-short}
\makeglossaries

\newacronym
{inform}{inform}{informative illocutionary force}
\newacronym
{point}{point}{demonstrative illocutionary force}
\usepackage{multicol}
\newglossarystyle{glossabbr}{%
    {\begin{multicols}{2}\raggedright}
        {\end{multicols}}

    \renewcommand*{\glsgroupheading}[1]{}


}

\usepackage{amssymb}
\usepackage{tikz}
\usetikzlibrary{arrows.meta}
\usepackage{pifont}

\numberwithin{equation}{section}
\startlocaldefs
\theoremstyle{plain}

\newtheorem{theorem}{Theorem}[section]
\newtheorem{lemma}[theorem]{Lemma}
\newtheorem{corollary}[theorem]{Corollary}
\newtheorem{pro}[theorem]{Proposition}

\usepackage{thmtools}

\newtheorem{definition}[theorem]{Definition}
\theoremstyle{remark}

\newtheorem{ass}[theorem]{Assumption}

\endlocaldefs

\newcommand{\E}{\mathrm{E}}
\newcommand{\pr}{\mathrm{P}}
\newcommand{\pn}{\mathbb{P}_n}
\newcommand{\X}{\mathbf{X}}
\newcommand{\plim}{\operatorname{plim}}
\newcommand{\pa}{\operatorname{pa}}
\newcommand{\Pa}{\operatorname{Pa}}

\newcommand{\var}{\operatorname{var}}
\newcommand{\cov}{\operatorname{cov}}

\newcommand{\bcdot}{\boldsymbol{\cdot}}

\usepackage{mathtools}
\makeatletter
\newcommand{\vast}{\bBigg@{4}}
\newcommand{\Vast}{\bBigg@{5}}

\makeatother

\makeatletter
\newcommand*{\indep}{%
  \mathbin{%
    \mathpalette{\@indep}{}%
  }%
}
\newcommand*{\nindep}{%
  \mathbin{
    \mathpalette{\@indep}{\not}
  }%
}
\newcommand*{\@indep}[2]{%
  \sbox0{$#1\perp\m@th$}
  \sbox2{$#1=$}
  \sbox4{$#1\vcenter{}$}
  \rlap{\copy0}
  \dimen@=\dimexpr\ht2-\ht4-.2pt\relax
  \kern\dimen@
  {#2}%
  \kern\dimen@
  \copy0 
} 
\makeatother

\usetikzlibrary{shapes,decorations,arrows,calc,arrows.meta,fit,positioning}
\tikzset{
    -Latex,auto,node distance =1 cm and 1 cm,semithick,
    state/.style ={ellipse, draw, minimum width = 0.7 cm},
    point/.style = {circle, draw, inner sep=0.04cm,fill,node contents={}},
    bidirected/.style={Latex-Latex,dashed},
    el/.style = {inner sep=2pt, align=left, sloped}
}
\usetikzlibrary{calc}
\usetikzlibrary{arrows.meta}
\usetikzlibrary{positioning}
\usetikzlibrary{fit}
\usetikzlibrary{shapes}

    {\endtcolorbox}

\newlength\oversetwidth
\newlength\underwidth
\newcommand\alignedoverset[2]{
  \settowidth\oversetwidth{$\overset{#1}{#2}$}
  \settowidth\underwidth{$#2$}
  \setlength\oversetwidth{\oversetwidth-\underwidth}
  \hspace{.5\oversetwidth}
  &
  \settowidth\oversetwidth{$\overset{#1}{#2}$}
  \settowidth\underwidth{$#2$}
  \setlength\oversetwidth{\oversetwidth-\underwidth}
  \hspace{-.5\oversetwidth}
  \overset{#1}{#2}
}

\usepackage[backgroundcolor=white]{todonotes}


\allowdisplaybreaks
\begin{document}

\begin{frontmatter}
\title{On Efficient Inference of Causal Effects \\with Multiple Mediators}

\begin{aug}
\author[A]{Haoyu Wei*\ead[label=e1]{h8wei@ucsd.edu}},
\author[B]{Hengrui Cai*\ead[label=e2]{hengrc1@uci.edu}}\footnote{Equal contribution.}
\author[C]{Chengchun Shi\ead[label=e3]{c.shi7@lse.ac.uk}}
\and
\author[D]{Rui Song\dag\ead[label=e4]{songray@gmail.com}}\footnote{Corresponding author.}
\address[A]{Department of Economics, University of California, San Diego\printead[presep={,\ }]{e1}}
\address[B]{Department of Statistics, University of California, Irvine\printead[presep={,\ }]{e2}}
\address[C]{Department of Statistics, London School of Economics and Political Science \printead[presep={,\ }]{e3}}
\address[D]{Department of Statistics, North Carolina State University \printead[presep={,\ }]{e4}}
\end{aug}

\begin{abstract}
    This paper provides robust estimators and efficient inference of causal effects involving multiple interacting mediators. Most existing works either 
    impose a linear model assumption among the mediators or are restricted to handle 
    conditionally independent mediators given the exposure. To overcome these 
    limitations, we 
    define causal and individual mediation effects 
    in a general setting, and employ a semiparametric framework to develop quadruply robust estimators for these causal effects. 
    We further establish the asymptotic normality of the proposed estimators and prove their local semiparametric efficiencies. The 
    proposed method is empirically validated via simulated and real datasets concerning psychiatric disorders in trauma survivors.
\end{abstract}

\begin{keyword}[class=MSC]
\kwd[Primary ]{62A09}
\kwd{62G05}
\kwd{62G35}
\end{keyword}

\begin{keyword}
\kwd{Causal graph}
\kwd{Mediation analysis}
\kwd{Multiply robust estimator}
\kwd{Statistical inference}
\end{keyword}

\end{frontmatter}

\section{Introduction}

Causal inference plays a crucial role in various fields, such as epidemiology \citep{hernan2004definition}, medicine \citep{hernan2000marginal}, education \citep{card1999causal}, and economics \citep{panizza2014public}. Within this spectrum, Pearl's causal graphical models \citep{pearl2000causality,pearl2009causal} have recently emerged as a powerful tool for disentangling causal structures among variables (such as confounders, exposure, mediator(s), and outcome). Causal mediation analysis, a core method for examining causal graphical models, aims to reveal the causal mechanisms underlying observed effects from exposure to outcome through the mediator(s), to evaluate the effectiveness of the intervention, and to better understand the roles of mediators \citep[see, for example, ][]{pearl2012causal,pearl2014interpretation}.

Existing statistical inferential tools for multiple mediators \citep[see e.g.,][]{robins1992identifiability,petersen2006estimation,imai2010general,vanderweele2015explanation,chakrabortty2018inference,cai2020anoce,shi2021testing} 
comprise the following three principal steps. Initially, causal structure learning methodologies \citep[see e.g.,][]{spirtes2000constructing,chickering2002optimal,nandy2018high,li2019likelihood,yuan2019constrained,li2023inference} are applied to estimate the causal graph, often presented by a directed acyclic graph (DAG), using observational data. 
In the absence of additional assumptions \citep{shimizu2006linear, neal2020introduction}, the graph is only identified up to a Markov equivalence class (MEC), and a completed partially directed acyclic graph (CPDAG) in such a class is often used to represent the graph structure. 
The subsequent step is the estimation of the causal effects of mediators based on the DAG or CPDAG obtained from the initial phase. For this task, a variety of estimation techniques have been proposed, including the application of ordinary least squares (OLS) estimators \citep{vanderweele2014causal,lin2017interventional,chakrabortty2018inference}, parametric models \citep{vanderweele2014mediation,vanderweele2016causal,chen2023discovery}, and nonparametric methods \citep{an2022opening,brand2023recent}. 
The final step is to conduct inferences based on the estimated effects, which often requires finding the exact (asymptotic) distributions of the estimators. As pointed out in \cite{chen2023discovery}, such an inference is often regarded a separate task and has received less attention in recent causal graph literature.



Although most of the existing work on causal mediation inference is limited to scenarios with a single mediator \citep{tchetgen2012semiparametric,tchetgen2013inverse,kennedy2017non,wang2018bounded,xia2023identification}, there are some studies that employ the three main standard steps to conduct mediation analysis. However, all of them fall short of comprehensive.  Theoretical challenges in unknown causal structures have led to current methods for multiple mediators inference being categorized mainly into three types. One approach assumes that multivariate mediators are conditionally independent given the treatment, or a set of transformed, conditionally independent variables, significantly simplifying the analysis \citep{preacher2008asymptotic, boca2014testing, zhang2016estimating, huang2016hypothesis, guo2023statistical, yuan2023confounding}. Another category, which does not impose this condition, relies on linear structural equation models (LSEMs) \citep{maathuis2009estimating,nandy2017estimating,nandy2018high,chakrabortty2018inference,zhao2022multimodal,zhao2022pathway}. The last category allows for a general causal structure and correlated mediators, but uses approximations, such as assuming Gaussian conditional distributions under exposure \citep{daniel2015causal,kim2019bayesian,tai2022path}, or following a Probit/logistic model for odds ratios \citep{vanderweele2014mediation,steen2017flexible,park2018causal}. However, these approaches present limitations for complex applications where the causal structure may not be correctly specified.

To bridge this significant gap in addressing potential model misspecification, we consider developing a semiparametric framework to infer causal effects, adapting the general causal structure. 
Extensive research exists on deriving double robust and highly efficient estimates of the total causal effect of exposure when the model is misspecified \citep{scharfstein1999adjusting,bickel2001inference,bang2005doubly}. Complementary to this, multiple robust estimators have been developed to quantify direct and indirect effects \citep{goetgeluk2008estimation,tchetgen2012semiparametric,chan2016globally,bhattacharya2022semiparametric,xia2023identification}. A notable benefit of these multiple robust techniques is their integration of dimension reduction strategies with confounding adjustment, such that the estimators are consistent and asymptotically normal, provided that at least one of the strategies is correct \citep{van2006targeted}. These methods also achieve semiparametric efficiency when all included strategies are correct \citep{van1996weak,bickel2001inference,bang2005doubly}. 
Despite considerable progress in the field, current multiple robust estimators are limited to only a single mediator. Hence, a new semiparametric inference is on demand for inferring causal effects involving multiple interacting mediators under (potentially) unknown causal graphs.


\subsection*{Our Contributions}

We conclude our contributions with the following three folds.
\begin{itemize}
    \setlength\itemsep{1em}
    \item[1.] Conceptually, we introduce the causal direct and indirect interventional effects for individual mediators 
    (Definition \ref{def_med} and Equation \eqref{TM_avg}). Our definitions expand upon those existing in various literature, accommodating a more general model setting.
    Specifically, it is applicable to both linear and non-linear models, thereby extending beyond existing literature such as \citet{nandy2017estimating,chakrabortty2018inference,cai2020anoce}. Moreover, our approach allows mediators to take a general value space, making it more flexible than the discrete settings as in \cite{albert2011generalized,lin2017interventional}. 
    Importantly, our definitions are consistent with the aforementioned literature when applied to the same settings. We further establish the identifiability results of the proposed definitions based on the estimated CPDAG from the data. 
    

    \item[2.] Methodological-wise, based on the proposed definitions, we firstly introduce the semiparametric framework concerning potential model misspecification under unknown graph structure for multiple interacted mediators (Theorem \ref{thm_EIF} and Corollary \ref{cor_eff_DM_IM}).
    Our analytical approach stands out for its novel insights into efficiency and robustness in the context of statistical inference of mediators on causal graphs. Specifically, we integrate four different estimating strategies to introduce new quadruply robust estimators for the causal effects of mediators. Additionally, we propose two algorithms to calculate these estimators together with the confidence intervals provided (Algorithm \ref{alg_QR}, Algorithm \ref{alg_fast_QR}, and Proposition \ref{pro_fast}) to handle general noises and to increase computational speed, respectively. Under a semi-linear framework (Assumption \ref{ass_slsem}), we derive concise parametric expressions for all proposed causal effects, 
    and propose OLS estimators that can be computed using standard regressions, allowing for the direct acquisition of asymptotically valid confidence intervals simultaneously.

    \item[3.] From a theoretical perspective, we prove the asymptotic properties of both our OLS estimators and quadruply robust estimators under mild conditions. Specifically: (i) Our OLS estimators are asymptotical normal with the analytical form of asymptotic variance provided, even under high-dimensional setting (Theorem \ref{thm_CI_DE_IE_DM} and Theorem \ref{thm_CI_IM}); (ii) The introduced quadruply robust estimator is consistent to the true as long as at least some of the conditional densities or conditional expectations are correctly specified, even under a potently increasing function class. Moreover, if all the conditional densities and conditional expectations are correctly specified, and if converge at rates that are conservatively permissible by various machine learning approaches, these estimators can assuredly achieve $n^{1 / 2}$-consistency, asymptotic normality, and semiparametric efficiency (Theorem \ref{thm_QR_nor}). 
    
\end{itemize}

The rest of this paper is organized as follows: Section \ref{sec_pre} presents preliminary concepts. Section \ref{sec_def} formally defines the direct and indirect causal effects of mediators. Section \ref{sec_semi} outlines the semiparametric efficient scores for these causal effects. Section \ref{sec_straight_strategy} explores the direct strategy for estimating the causal effects defined in Section \ref{sec_def}, along with an OLS estimation procedure for semi-linear structures. Section \ref{sec_mr} presents alternative estimation strategies, including the introduction of novel quadruply robust estimators. This section also provides both a general algorithm and, under specific conditions, a faster algorithm for computing these estimators. Section \ref{sec_asy} discusses the asymptotic properties of both the OLS and quadruply robust estimators. Section \ref{sec_sim} presents various simulation results that validate the theories proposed for the estimators. In Section \ref{sec_emp}, an application of the
proposed estimators is used to analyze real data collected from trauma survivors. The glossary of notations, all proofs, and additional technical materials are collected in the Appendix.

\section{Preliminaries}\label{sec_pre}

\subsection{Graph Terminology} 

Consider a graph $\mathcal{G} =({X}, E)$ with a set of nodes $X$ and a set of edges $E$. There is at most one edge between any pair of nodes. If there is an edge between $X_i$ and $X_j$, then $X_i$ and $X_j$ are adjacent. The node $X_i$ is said to be a parent of $X_j$ if there is a directed edge from $X_i$ to $X_j$. Let the set of all parents of node $X_j$ in $\mathcal{G}$ be $ \operatorname{Pa} (X_j) = \operatorname{Pa}_{\mathcal{G}} (X_j) = \Pa_{X_j} (\mathcal{G})$, and all adjacent nodes of $X_j$ in $\mathcal{G}$ by $\operatorname{adj}(X_j) =  \operatorname{adj}_{\mathcal{G}}(X_j)$. A path from $X_i$ to $X_j$ in $\mathcal{G}$ is a sequence of distinct vertices, $\pi := \{a_0, a_1,\cdots,a_L\}\subset V$ such that $a_0 =X_i$, and $a_L=X_j$. A directed path from $X_i$ to $X_j$ is a path between $X_i$ and $X_j$ where all edges are directed towards $X_j$. 
A directed cycle is formed by the directed path from $X_i$ to $X_j$ together with the directed edge $X_j$ to $X_i$. A directed graph that does not contain directed cycles is called a directed acyclic graph (DAG). A directed graph is acyclic if and only if it has a topological ordering. 

\subsection{Causal Graph Structural Assumption}

Let $A$ be a \textbf{binary} exposure/treatment in $\{ 0, 1\}$, $M := (M_1,M_2,\cdots,M_p)^{\top} \in \mathbb{R}^p$ be mediators with dimension $p$ in its support $\mathcal{M} = \mathcal{M}_1 \times \cdots \times \mathcal{M}_p \subseteq \mathbb{R}^p$, and $Y \in \mathbb{R}$ be the outcome of interest. Additionally, we also consider that there are $t - 1$ confounders $C: = (C_1, \ldots, C_{t - 1})^{\top} \in \mathbb{R}^{t - 1}$ in its support $\mathcal{C} \subseteq \mathbb{R}^{t - 1}$. We would just let $t = 1$ here to represent the absence of confounders, that is $C = \varnothing$. Suppose that there exists a DAG $\mathcal{G}=(X, E)$ that characterizes the causal relationship among $X=(C^{\top}, A, M^\top, Y)^\top $, where the dimension of $X$ is $d = t + p + 1$. 
We suppose we observe i.i.d data on $X = (C^{\top}, A, M^{\top}, Y)^{\top}$ is collected for $n$ subjects. 
To characterize our model, we consider the following assumptions.

\begin{ass}\label{ass_structure}
    The causal graph $\mathcal{G}$ satisfies \textit{Causal Markov Condition}, \textit{Causal Faithfulness Condition}, and \textit{Causal Sufficiency} \citep{hasan2023a}. The random vector $X$ satisfies the structure assumption: (i) No potential mediator is a direct cause of confounders $C$; (ii) The outcome $Y$ has no descendant; (iii) The only parents of treatment $A$ are confounders.
\end{ass}

In many instances, the accessible data offers an incomplete view of the inherent causal structure.  To address this gap, \textit{Causal Markov Condition}, \textit{Causal Faithfulness Condition}, and \textit{Causal Sufficiency} in the above assumption provide a sufficient condition for causal discovery in i.i.d. data contexts  \citep{lee2020towards,assaad2022survey,hasan2023a}. The rigorous definitions for them and related details can be found in Section 2.4 in \cite{hasan2023a}. Furthermore, our structural assumptions aim at ensuring the identifiability of the causal model, which are similar to Consistency Assumption and Sequential Ignorability Assumption in \cite{tchetgen2012semiparametric}, and the structure assumptions in Section 2.4 of \cite{chakrabortty2018inference}.

\subsection{Markov Equivalence Class}

A general causal DAG, $\mathcal{G}$, may not be identifiable from the distribution of $X$. According to \cite{pearl2000causality}, a DAG only encodes conditional independence relationships through the concept of $d$-separation. In general, several DAGs can encode the same conditional independence relationships, and such DAGs form a Markov equivalence class. Two DAGs belong to the same Markov equivalence class if and only if they have the same skeleton and the same v-structures \citep{kalisch2007estimating}. A Markov equivalence class of DAGs can be uniquely represented by a completed partially directed acyclic graph (CPDAG) \citep{spirtes2000constructing}, which is a graph that can contain both directed and undirected edges. A CPDAG satisfies the following: $X_i \leftrightarrow X_j$ in the CPDAG if the Markov equivalence class contains a DAG including $X_i \rightarrow X_j$, as well as another DAG including $X_j \rightarrow X_i$. CPDAGs can be estimated from observational data using various algorithms, such as the algorithms in \cite{kalisch2007estimating}, \cite{harris2013pc}, and \cite{zhang2018non}. The Markov equivalence class for a fixed CPDAG $\mathcal{C}$ is denoted by $\operatorname{MEC}(\mathcal{C})$, which is a set containing all DAGs $\mathcal{G}$ that have the CPDAG structure $\mathcal{C}$. 
If we can obtain the true DAG from the data, we can simply treat it as a special case of the "MEC" containing only this DAG, i.e., $\operatorname{MEC}(\mathcal{G})= \{ \mathcal{G} \}$. For simplicity, we denote the corresponding causal structure for the mediators $M$ as $\mathcal{G}_M$, which can be obtained by deleting nodes $C, A, Y$, and the corresponding edges from $\mathcal{G}$. 
The CPDAG of mediators is similarly denoted as $\mathcal{C}_M$. For simplicity and with a minor stretch of notation, we employ $\mathcal{G}_{M}$ and $\mathcal{C}_M$ to denote the causal DAG and CPDAG of $X$, respectively, such that their corresponding mediators' causal DAG and CPDAG are represented by $\mathcal{G}_{M}$ and $\mathcal{C}_M$ exactly.

\section{Definition of Causal Effects}\label{sec_def}

In this section, we will formally give our refined definition of the causal effects of mediators. 
To begin with, we give the total effect $TE$, the natural direct effect that is not mediated by mediators $DE$, and the natural indirect effect that is regulated by mediators $IE$ defined in \cite{pearl2009causal}.

\begin{definition}[\cite{pearl2009causal}]\label{def_na}
        Natural effects are defined as follows:
        \begin{eqnarray*}
            &&TE = \E \big[ Y \mid do(A=1) \big] - \E \big[ Y \mid do(A=0) \big] ,\\
            &&DE = \E \Big[ \E \big[ Y \mid do(A = 1, M = M^{(0)})\big] \Big] - \E \big[ Y \mid do(A=0) \big],\\
            &&IE = \E \Big[ \E \big[ Y \mid do(A=0, M = M^{(1)}) \big] \Big] - \E \big[ Y \mid do(A=0) \big].
        \end{eqnarray*}
\end{definition}

\noindent In the above definition,  $do(A=0) = do_{\mathcal{G}}(A = 0)$ is a mathematical operator to simulate physical interventions that hold $A$ constant as $0$ while keeping the rest of the model unchanged, which corresponds to remove edges into $A$ and replace $A$ by the constant $0$ in the original causal graph $\mathcal{G}$. Here, $M^{(0)}$ is the (random) value of $M$ if setting $do(A=0)$, and $M^{(1)}$ is the (random) value of $M$ if setting $do(A=1)$. 
One can refer to \cite{pearl2009causal} for more details of `do-operator'. 
The expectation $\E [\cdot]$ is an abbreviation of $\E_P [\cdot]$ with $P = P_X$ is the law of $X$ under $\mathcal{G}$.
Inspired by the above definition, we can give the definition of the causal effects for an individual mediator.

\begin{definition}\label{def_med}
    Let $TM_j(\mathcal{G}_M)$ represent total individual mediation effects via an individual mediator $M_j$ defined as 
\begin{eqnarray*}
        TM_j(\mathcal{G}_M) := \bigg\{ \E \big[ Y \mid do(A = 1) &&\big]  - \E \Big[ \E \big[ Y \mid do_{\mathcal{G}_M}(A = 1, M_j)\big] \Big] \bigg\} \\
            && - \bigg\{ \E \big[ Y \mid do(A = 0) \big] - \E \Big[ \E \big[ Y \mid do_{\mathcal{G}_M}(A = 0, M_j)\big] \Big] \bigg\},
        \end{eqnarray*}
    under any fixed mediators' causal structure $\mathcal{G}_M$. Let $DM_j$ denote the direct interventional effect via an individual mediator $M_j$ defined as 
    \begin{eqnarray*}
            DM_j := \E \Bigg[ \int_{\mathcal{M}} \E \big[ Y \mid C, A && = 1, M = m\big] f_{M_{-j} \mid C, A}(m_{-j} \mid C, A = 0 ) \\
            && \times \big\{ f_{M_j \mid C, A} (m_j \mid C, A = 1) - f_{M_j \mid C, A}(m_j \mid C, A = 0) \big\} \, \mathrm{d} m \Bigg],
        \end{eqnarray*}
    where $f_{\cdot \mid \cdot} (\cdot \mid \cdot)$ are the conditional density (or mass) functions.
    Then the indirect interventional effect for $M_j$ under $\mathcal{G}_M$ is defined as $IM_j(\mathcal{G}_M) := TM_j(\mathcal{G}_M) - DM_j$.
\end{definition}

\noindent The Definition \ref{def_med} serves important meanings when we are concerned with different impacts of mediators. Note that $TM_j(\mathcal{G}_M)$ in our definition is an extension for the individual mediation effect in \cite{chakrabortty2018inference} for LSEMs, denoted as 
\[
    \eta_j (\mathcal{G}_M) = \frac{\partial}{\partial a} \E \big[ Y \mid do(A = a) \big] -  \frac{\partial}{\partial a} \E \big[ Y \mid do_{\mathcal{G}_M}(A = a, M_j = m_j)\big],
\]
which can be interpreted as the change in the total causal effect of the exposure $A$ on the response $Y$ when the potential mediator $M_j$ is removed from the causal graph $\mathcal{G}$ through the intervention $d o\left(M_j=m_j\right)$. But under the non-linearity assumption with binary exposure, the above $\eta_j$ will be a function of $j$-th mediator $m_j$ (Remark 2.1 in \cite{chakrabortty2018inference}). Therefore, for solving this problem, we take integral with respect to the density $f_{M_j}(m_j)$ for $M_j$, i.e.
\[
    \begin{aligned}
        & \int_{\mathcal{M}_j} \bigg[ \frac{\partial}{\partial a} \E \big[ Y \mid do(A = a) \big] -  \frac{\partial}{\partial a} \E \big[ Y \mid do_{\mathcal{G}_M}(A = a, M_j = m_j) \big] \bigg] f_{M_j}(m_j) \, \mathrm{d} m_j \\
        = & \frac{\partial}{\partial a} \E \big[ Y \mid do(A = a) \big] - \E \bigg[ \frac{\partial}{\partial a} \E \big[ Y \mid do_{\mathcal{G}_M}(A = a, M_j) \big] \bigg].
    \end{aligned}
\]
Then by $ a \in \{ 0, 1\}$, we get the expression of $TM_j(\mathcal{G}_M)$ in Definition \ref{def_med}. Our definition of $DM_j$ is a straightforward extension of Equation (6) in \cite{vansteelandt2017interventional} by replacing summation with integral. The introduction of $DM_j$ and $IM_j(\mathcal{G}_M) = TM_j(\mathcal{G}_M) - DM_j$ in the above definition is driven by the need for an orthogonal decomposition of the total causal effect of mediators, a concept crucial for unraveling the intricate relationships among variables in mediation analysis, as argued in \cite{cai2020anoce}.
Here $DM_j$ can be interpreted as the causal effect through a particular mediator from the exposure to the outcome, i.e., $\E \big[ Y \mid C, A = 1, M = m\big]  \big\{ f_{M_j\mid C, A} (m_j \mid C, A = 1) - f_{M_j\mid C, A} (m_j \mid C, A = 0) \big\}$,
that is not regulated by any other mediators, i.e., $f_{M_{-j}\mid C, A}(m_{-j} \mid C, A = 0)$, and thus not regulated by its descendant mediators. Then $IM_j(\mathcal{G}_M) = TM_j(\mathcal{G}_M) - DM_j$ captures the indirect effect of the particular mediator $M_j$ on the outcome $Y$ regulated by descendant mediators. 

Definition \ref{def_med} is generally defined for any causal graph with binary exposure $A$. In the context of particular linear causal structures, these definitions transform into concise parametric expressions, providing more intuitive `do' representations and aligning consistently with existing literature. A comprehensive discussion on this can be found in Section \ref{sec_ols_est}.

Usually, we do not know the true structure of mediators $\mathcal{G}_{M}$ and we can only estimate its corresponding CPDAG $\mathcal{C}_{ M}$ \citep{maathuis2009estimating}. 
If the number of $\operatorname{MEC}(\mathcal{C}_M)$ is larger than one, $TM_j$ and $IM_j$ based on elements $\operatorname{MEC}(\mathcal{C}_{M})$ will be not unique. 
As a result, we define an identifiable version of $TM_j$ based on a CPDAG as the average over $\operatorname{MEC}(\mathcal{C}_{0, M})$. Specifically,
\begin{equation}\label{TM_avg}
    \begin{aligned}
        \overline{TM}_j & := \frac{1}{\# \operatorname{MEC}(\mathcal{C}_{M})} \sum_{\mathcal{G}_{M} \in \operatorname{MEC}(\mathcal{C}_{M})} TM_j(\mathcal{G}_{M}).
    \end{aligned}
\end{equation}
The corresponding identifiable indirect interventional effect of mediator $M_j$ is $\overline{IM}_j := \overline{TM}_j - DM_j$ for $j \in [p]$.

\section{Semiparametric Efficient Scores}\label{sec_semi}

We start with exploring the definition in Section \ref{sec_def}. Define propensity score $e_{a'}(x_S) := \pr (A = a' \mid X_S = x_S)$ for $a' \in \{ 0, 1\}$, the outcome mean $\mu_{}(x_S) := \E \big[ Y \mid X_S = x_S\big]$, and the conditional density $\pi_{x_S}(m_T) := f_{M_T \mid X_S}(m_{T} \mid X_{S} = x_{S})$ for any subset $T \subseteq [p]$ and $S \subseteq [t + p + 1]$. Suppose all these functions belong to $\ell^2$-class. Note that in our notation, \( x_S \) and \( m_T \) can represent vectors. At times, we may abbreviate \( \mu \big((x_{S_1}^{\top}, x_{S_2}^{\top}, \ldots, x_{S_N}^{\top})^{\top} \big) \) as \( \mu (x_{S_1}, x_{S_2}, \ldots, x_{S_N}) \) and \( \pi_{(x_{S_1}^{\top}, x_{S_2}^{\top}, \ldots, x_{S_N}^{\top})^{\top}}(m_T) \) as \( \pi_{x_{S_1}, x_{S_2}, \ldots, x_{S_N}}(m_T) \) when \( x_S \) is the concatenation of these vectors, i.e., \( x_S = (x_{S_1}^{\top}, x_{S_2}^{\top}, \ldots, x_{S_N}^{\top})^{\top} \).
Let $\Pa_j (\mathcal{G}_M) = \Pa_{\mathcal{G}_M}(M_j) \subseteq \{ M_1, \ldots, M_{j - 1}, \penalty 0 M_{j + 1}, \ldots, M_p\}$ denote the \textit{parent mediators} of $M_j$ and $\pa_j (\mathcal{G}_M)$ as its realization.
Denote 
\begin{equation}\label{def_kappa}
    \kappa (a', C) := \E [Y \mid C, A = a'],
\end{equation}
\begin{equation}\label{def_zeta}
        \begin{aligned}
            \zeta_j (a', a, C) := \int_{\mathcal{M}} \mu(C, a^*, m) \pi_{C, a'}(m_j) \pi_{C, a}(m_{-j}) \, \mathrm{d} m,
        \end{aligned}
\end{equation}
and 
\begin{equation}\label{def_varrho}
    \begin{aligned}
        \varrho_j (a', M_j, C \, ; \, \mathcal{G}_M) := \int_{\mathcal{M}_{\pa_j} (\mathcal{G}_M)} \mu(C, a', \pa_j, M_j) \, \pi_{C, a'}(\pa_j)  \, \mathrm{d} \pa_j,
    \end{aligned}
\end{equation}
for a fixed causal graph $\mathcal{G}_M$, where $\mathcal{M}_{\pa_j} (\mathcal{G}_M)$ is the support of $\Pa_j (\mathcal{G}_M)$ and $a^* = 0$ is the reference level of the exposure. It is worth noting that all three quantities above are random due to the randomness in $C$ (and $M_j$). 
Given that the exposure features two levels, $0$ and $1$, for simplicity, we use the notation $\langle \cdot \rangle$ to signify the difference evaluated at these two levels. Specifically, let us define
\begin{equation}\label{def_diff_notation}
    \big\langle g_{\boldsymbol{\cdot}, u} (\boldsymbol{\cdot}, v) \big\rangle := g_{1, u} (1, v) - g_{0, u} (0, v),
\end{equation}
for any function $g_{\boldsymbol{\cdot}, u} (\boldsymbol{\cdot}, v)$, where $u$ and $v$ are arbitrary parameters.
Then we can have the following theorem to characterize the relationship between the interventional effects of mediators as specified in Definition \ref{def_med} and the quantities defined above.

\begin{theorem}\label{thm_direct_exp}
    Suppose Assumption \ref{ass_structure} holds, then for any $j \in [p]$,
    \[
        DM_j = \E \big[ \big\langle \zeta_j (\boldsymbol\cdot, 0, C) \big\rangle \big],
    \]
    and for any fixed $\mathcal{G}_M$ 
    \[
        \begin{aligned}
            TM_j(\mathcal{G}_M) = \E \big[ \big\langle \kappa(\bcdot, C) - \varrho_j (\bcdot, M_j, C \, ; \, \mathcal{G}_M) \big\rangle \big].
        \end{aligned}
    \]
\end{theorem}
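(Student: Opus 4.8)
The statement splits into two essentially independent claims that demand very different amounts of work. The identity for $DM_j$ is purely a matter of rewriting, since $DM_j$ is already defined through observational conditional densities and conditional means and involves no \emph{do}-operator; the identity for $TM_j(\mathcal{G}_M)$, by contrast, is where all the causal identification happens, because $TM_j$ is defined entirely through interventional quantities that must first be expressed in terms of the observational functionals $\kappa$, $\mu$ and $\pi$.

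For the $DM_j$ identity I would simply substitute the definitions $\mu(x_S) = \E[Y\mid X_S = x_S]$ and $\pi_{x_S}(m_T) = f_{M_T\mid X_S}(m_T \mid X_S = x_S)$ into the expression for $DM_j$ in Definition \ref{def_med}. Writing $\E[Y\mid C, A=1, M=m] = \mu(C,1,m)$, $f_{M_{-j}\mid C,A}(m_{-j}\mid C, A=0) = \pi_{C,0}(m_{-j})$ and $f_{M_j\mid C,A}(m_j \mid C, A=a') = \pi_{C,a'}(m_j)$, the bracketed difference of the two $M_j$-densities lets me split the inner integral into two pieces matching the two evaluations $\zeta_j(1,0,C)$ and $\zeta_j(0,0,C)$ of \eqref{def_zeta}. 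Pulling the difference through the integral and invoking the difference notation \eqref{def_diff_notation} then collapses the whole expression to $\E[\langle \zeta_j(\bcdot,0,C)\rangle]$. The only points needing a line of justification are the interchange of the expectation over $C$ with the integral over $\mathcal{M}$ (Fubini, using the assumed $\ell^2$ membership) and the bookkeeping of which exposure level sits in each factor.

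The substantive work is the $TM_j$ identity, where I would identify each of the four interventional terms separately under Assumption \ref{ass_structure}. For the marginal terms $\E[Y\mid do(A=a')]$, part (iii) of the assumption (the only parents of $A$ are the confounders $C$) makes $C$ a valid back-door adjustment set for $A \to Y$, so the $g$-formula gives $\E[Y\mid do(A=a')] = \E_C[\E[Y\mid C, A=a']] = \E[\kappa(a',C)]$, using \eqref{def_kappa}. For the mediator-removed terms $\E[\E[Y\mid do_{\mathcal{G}_M}(A=a', M_j)]]$ I would first fix the intervened value $M_j = m_j$ and apply the truncated factorization for the joint intervention $do(A=a', M_j = m_j)$; the resulting post-intervention mean is then re-expressed by adjusting for the set $\{C, \Pa_j(\mathcal{G}_M)\}$, which yields $\E_C[\varrho_j(a', m_j, C; \mathcal{G}_M)]$ with $\varrho_j$ as in \eqref{def_varrho}. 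Averaging over the natural law of $M_j$ and differencing the $a'=1$ and $a'=0$ versions via \eqref{def_diff_notation} produces $\E[\langle \kappa(\bcdot,C) - \varrho_j(\bcdot, M_j, C;\mathcal{G}_M)\rangle]$.

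The main obstacle is the reduction, inside the $TM_j$ identity, of the full post-intervention outcome mean to the parent set $\Pa_j(\mathcal{G}_M)$. Concretely, I must verify that $\{C, \Pa_j(\mathcal{G}_M)\}$ satisfies the back-door criterion for the effect of $M_j$ on $Y$ given $do(A=a')$, i.e.\ that conditioning on the confounders together with the parent mediators blocks every back-door path out of $M_j$ into $Y$; this is a $d$-separation argument relying on the Causal Markov and Faithfulness conditions, on parts (i)--(ii) of Assumption \ref{ass_structure} (no mediator causes $C$, and $Y$ has no descendant), and on the acyclicity of $\mathcal{G}$. Once this is secured, the observational conditional mean $\mu(C, a', \pa_j, M_j)$ legitimately replaces the interventional one, and the marginalization of the non-parent mediators is absorbed into $\mu$ by the tower property. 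A secondary but real point of care is the consistency of the averaging measures: I must check that integrating $\varrho_j$ over $(C, M_j)$ matches the order in which $C$ is integrated inside the do-expression and $M_j$ is integrated outside, so that both sides carry the same measure before the difference operator \eqref{def_diff_notation} is applied.
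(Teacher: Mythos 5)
Your proposal is correct and follows essentially the same route as the paper: the $DM_j$ identity is handled definitionally (the paper's proof says it holds ``just by the definition''), and $TM_j(\mathcal{G}_M)$ is identified exactly as you outline --- $\E[Y\mid do(A=a')]$ by adjusting for $C$, the joint intervention $do(A=a',M_j=m_j)$ by the truncated factorization with adjustment set $\{C,\Pa_j(\mathcal{G}_M)\}$ yielding $\varrho_j$, then averaging over $(C,M_j)$ and differencing the two exposure levels. The only divergence is in packaging: what you flag as the main obstacle --- verifying the back-door criterion for $\{C,\Pa_j(\mathcal{G}_M)\}$ via a $d$-separation argument invoking faithfulness --- the paper dispatches in one line by citing Kuroki and Miyakawa's intervention-density formula (their Theorem~6) together with $\Pa(A)=C$ and $\Pa(M_j)=(C,A,\Pa_j(\mathcal{G}_M))$, i.e., adjusting for the parents of the intervened nodes, so the Causal Markov condition alone suffices and faithfulness is never actually needed for this step.
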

Consider $\mathscr{M}_{\text {nonpar }}$ as the full model where the observed data likelihood is not constrained, encompassing all conventional laws $P_X$ or, equivalently, distribution $F_X$ of the observed data $X$. The aforementioned theorem establishes that the causal effects detailed in Section \ref{sec_def} can be represented as regular expectations. Consequently, they function as mappings from $\mathscr{M}_{\text{nonpar}}$ to the real line $\mathbb{R}$. We assume $P_X$
satisfies the positivity assumption given below. 
\begin{ass}\label{ass_pos}
     There exists a $\varepsilon > 0$ such that for any $c \in \mathcal{C}$, $a' \in \{ 0, 1\}$, and $m \in \mathcal{M}$,
     \[
        \varepsilon < e_{a'}(c) < 1 - \varepsilon \qquad \text{and} \qquad \varepsilon < \pi_{c, a', m_j}(m_{-j}) < \infty,
     \]
     with probability one. 
\end{ass}
The efficient scores for the functionals $DE$ and $IE$ have been studied in various literature \citep{tchetgen2012semiparametric,tchetgen2013inverse,shi2020multiply}. The explicit expressions and detailed analysis can be seen in Theorem 1 of \cite{tchetgen2012semiparametric}. For finding the efficient scores for the functionals $DM_j$ and $IM_j$, we denote
\begin{equation}\label{def_tau}
    \tau_{\boldsymbol{\cdot} \, ; \, S}(C, a', M_T) := \int_{\mathcal{M}_{S}} \mu (C, a', m_{S}, M_{T})  \pi_{\boldsymbol{\cdot}} (m_{S}) \, \mathrm{d} m_{S}
\end{equation}
for any $a' \in \{ 0, 1\}$ and $T, S\subseteq [p]$. Then we can derive the efficient score for $\kappa(a', C)$, $\zeta_j (a', a, C)$, and $\varrho_j (a', M_j, C \, ; \, \mathcal{G}_M)$ on $\mathscr{M}_{\text{nonpar}}$ in the following theorem.

\begin{theorem}\label{thm_EIF}
    Suppose the Assumption \ref{ass_structure} and \ref{ass_pos} hold, we have the efficient scores for $\E \kappa(a', C)$, $\E \zeta_j(a', 0, C)$, and $\E \varrho_j (a', M_j, C\, ; \, \mathcal{G}_M)$ as
    \[
         S^{\text{eff}, \text{nonpar}} \big(\E \kappa(a', C) \big) = \frac{\mathds{1}(A = a')}{e_{a'} (C)} \Big\{ Y - \kappa(a', C) \Big\} + \kappa(a', C) - \E \kappa(a', C),
    \]
    \[
        \begin{aligned}
            & S^{\text{eff}, \text{nonpar}} \big( \E \zeta_j(a', 0, C) \big) = \frac{\mathds{1}(A = 1)}{e_1(C)} \frac{  \pi_{C, a'}(M_{-j}) }{\pi_{C, 1, M_{j}}(M_{-j})} \Big[ Y - \mu (C, 1, M) \Big] \\
            & + \frac{\mathds{1}(A = 0)}{e_0(C)} \Big[ \tau_{C, a'; j}(C, 1,M_{-j})  - \zeta_j (a', 0, C) \Big] + \frac{\mathds{1}(A = a')}{e_{a'} (C)} \Big[ \tau_{C, 0 \, ; \, -j}(C, 1, M_j) - \zeta_j (a', 0, C) \Big]  \\
            & +  \zeta_j (a', 0, C)  - \E  \zeta_j (a', 0, C),
        \end{aligned}
    \]
    and
    \[
        \begin{aligned}
            &  S^{\text{eff}, \text{nonpar}} \big( \E \varrho_j(a', M_j, C \, ; \, \mathcal{G}_M) \big) = \frac{\mathds{1}(A = a')}{e_{a'}(C)} \pi_{C, a'} \big(\Pa_j (\mathcal{G}_M) \big) \Big[ Y - \mu(C, a', M) \Big] \\
            & + \frac{\mathds{1}(A = a')}{e_{a'}(C)} \Big[ \tau_{C\, ; \, j}(C, a', \Pa_j(\mathcal{G}_M))  - \E \big[ \varrho_j (a', M_j, C \, ; \, \mathcal{G}_M) \mid C \big] \Big] \\
            & + \varrho_j(a', M_j, C \, ; \, \mathcal{G}_M) - \E \big[ \varrho_j (a', M_j, C \, ; \, \mathcal{G}_M) \mid C \big] + \E \big[ \varrho_j (a', M_j, C \, ; \, \mathcal{G}_M) \mid C \big] - \E \varrho_j(a', M_j, C \, ; \, \mathcal{G}_M) 
        \end{aligned}
    \]
    under model $\mathscr{M}_{\text{nonpar}}$ for any $j \in [p]$, $a' \in \{ 0, 1\}$, and fixed $\mathcal{G}_M$, where 
    \[
        \begin{aligned}
            & \E \big[ \varrho_j (a', M_j, C \, ; \, \mathcal{G}_M) \mid C \big] \\
            = & \int_{\mathcal{M}_{\pa_j} (\mathcal{G}_M) \, \cup \, \mathcal{M}_j} \mu(C, a', \pa_j(\mathcal{G}_M), m_j) \pi_{C, a'} \big(\pa_j (\mathcal{G}_M) \big) \pi_{C}(m_j) \, \mathrm{d} \pa_j(\mathcal{G}_M) \, \mathrm{d} m_j.
        \end{aligned}
    \]
    Here, we let $\pi_{C, a'} \big(\pa_j (\mathcal{G}_M) \big) \equiv 1$ if $\Pa_j (\mathcal{G}_M) = \varnothing$.
\end{theorem}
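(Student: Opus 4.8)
The plan is to exploit that $\mathscr{M}_{\text{nonpar}}$ is locally saturated, so its tangent space at $P$ is all of $L^2_0(P)$ and every pathwise-differentiable functional admits a \emph{unique} gradient that coincides with its efficient influence function. It therefore suffices, for each of $\E\kappa(a',C)$, $\E\zeta_j(a',0,C)$ and $\E\varrho_j(a',M_j,C\,;\,\mathcal{G}_M)$, to exhibit a mean-zero $\phi\in L^2(P)$ such that $\tfrac{d}{d\epsilon}\Psi(P_\epsilon)\big|_{\epsilon=0}=\E[\phi\,s]$ along every regular one-dimensional submodel $\{P_\epsilon\}$ with score $s$. First I would fix the likelihood factorization $p(x)=f(c)\,f(a\mid c)\,f(m\mid c,a)\,f(y\mid c,a,m)$ and the associated orthogonal score decomposition $s=s_C+s_{A\mid C}+s_{M\mid C,A}+s_{Y\mid C,A,M}$, and then compute the contribution of each block to the pathwise derivative separately; summing the block-wise representers yields $\phi$. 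Assumption \ref{ass_pos} guarantees that the inverse-propensity weights $1/e_{a'}(C)$ and the density ratios below are bounded, so every candidate $\phi$ lies in $L^2_0(P)$.

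For $\E\kappa(a',C)$ the computation is the classical adjusted-mean argument: the $s_{A\mid C}$ block contributes nothing beyond the weight, the $s_C$ block yields the centering $\kappa(a',C)-\E\kappa(a',C)$, and the $s_{M\mid C,A}$ and $s_{Y\mid C,A,M}$ blocks combine, through the telescoping identity $Y-\kappa(a',C)=\big(Y-\mu(C,a',M)\big)+\big(\mu(C,a',M)-\kappa(a',C)\big)$, into the single inverse-probability residual $\tfrac{\mathds{1}(A=a')}{e_{a'}(C)}\{Y-\kappa(a',C)\}$. This reproduces the first display.

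For $\E\zeta_j$ and $\E\varrho_j$, which are nested g-formula functionals built from the products $\mu\cdot\pi\cdot\pi$ and $\mu\cdot\pi$, I would differentiate factor-by-factor (Leibniz/product rule) and identify the representer of each block. Perturbing $\mu$ activates $s_{Y\mid C,A,M}$: since $Y$ is only observed at the factual $(C,A,M)$ while the target law of $M$ is a \emph{product} of conditional densities taken at different exposure levels, this block must be reweighted by $\mathds{1}(A=\cdot)/e_{\cdot}(C)$ together with the change-of-measure ratio (e.g.\ $\pi_{C,a'}(M_{-j})/\pi_{C,1,M_j}(M_{-j})$) that converts the factual $f_{M\mid C,A=1}$ into the target, producing the residual term in $Y-\mu(C,1,M)$. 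Perturbing each conditional density activates $s_{M\mid C,A}$ and yields an inverse-probability term of the form $\tfrac{\mathds{1}(A=\cdot)}{e_{\cdot}(C)}\big[\tau_{\cdots}-\zeta_j\big]$ (respectively $[\tau-\E(\varrho_j\mid C)]$), where the inner partial integral $\tau$ in \eqref{def_tau} is exactly the conditional mean of the integrand given the conditioning event of that block. Finally the $s_C$ block (and, for $\varrho_j$, the $s$ block of $f_{M_j\mid C}$) supplies the centering terms $\zeta_j-\E\zeta_j$, respectively $\varrho_j-\E[\varrho_j\mid C]$ and $\E[\varrho_j\mid C]-\E\varrho_j$. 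Adding the block-wise representers gives the stated scores.

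I expect the main obstacle to be the bookkeeping for $\zeta_j$: one has to track the change of measure so that the \emph{single} factual residual $Y-\mu(C,1,M)$ carries the indicator $\mathds{1}(A=1)$ with the correct density ratio, while simultaneously attributing the perturbations of the two distinct densities $\pi_{C,a'}(\cdot)$ and $\pi_{C,0}(\cdot)$ to their respective inverse-probability terms with the indicators $\mathds{1}(A=0)$ and $\mathds{1}(A=a')$. For $\varrho_j$ the subtlety is that $\varrho_j$ is a function of $M_j$ as well as $C$, so the fluctuation must be split into a within-$C$ part and an across-$C$ part (hence the two-stage centering through $\E[\varrho_j\mid C]$), and the integral runs only over the graph-determined set $\Pa_j(\mathcal{G}_M)$, so the density bookkeeping must respect the DAG factorization, with the convention $\pi_{C,a'}(\Pa_j(\mathcal{G}_M))\equiv1$ when $\Pa_j(\mathcal{G}_M)=\varnothing$. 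Once the representers are assembled, I would verify directly that each $\phi$ has mean zero and, under Assumption \ref{ass_pos}, lies in $L^2_0(P)$, and then invoke uniqueness of the gradient on the saturated model to conclude that these are the efficient scores.
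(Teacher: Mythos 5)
Your plan is correct and follows essentially the same route as the paper: the paper likewise takes a one-dimensional regular submodel, differentiates the g-formula representation of each functional factor-by-factor (the terms $A_{j,1},\dots,A_{j,4}$ and $B_{j,1},\dots,B_{j,4}$ in the paper correspond exactly to your block-wise perturbations of $\mu$, the conditional mediator densities, and $f(c)$), rewrites each derivative as $\E[U\,\phi_k]$ to read off the representer, and uses saturation of $\mathscr{M}_{\text{nonpar}}$ (implicitly, plus the cited literature for the $\kappa$ term) to identify the resulting mean-zero, $L^2$-bounded sum as the efficient score. Your explicit appeal to uniqueness of the gradient on the saturated tangent space $L^2_0(P)$ is just a cleaner statement of what the paper leaves implicit, not a different argument.
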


\noindent In Theorem \ref{thm_EIF}, we retain the final two terms in $S^{\text{eff}, \text{nonpar}} \big( \E \varrho_j(a', M_j, C , ; , \mathcal{G}_M) \big)$, because we want to express $S^{\text{eff}, \text{nonpar}} \big( \E \varrho_j(a', M_j, C \, ; \, \mathcal{G}_M) \big)$ is composed of four parts. This result can be then combined with Theorem \ref{thm_direct_exp} to obtain the efficient scores for $DM_j$ and $IM_j(\mathcal{G}_M)$, thus $TM_j(\mathcal{G}_M) = DM_j + IM_j(\mathcal{G}_M)$. 

\begin{corollary}\label{cor_eff_DM_IM}
    Suppose the conditions in Theorem \ref{thm_EIF} holds, then we have 
    \[
        \begin{aligned}
            S^{\text{eff}, \text{nonpar}} (TE) = \Big\langle S^{\text{eff}, \text{nonpar}} \big(\E \kappa(\boldsymbol{\cdot}, C) \big) \Big\rangle,
        \end{aligned}
    \]
    \[
        \begin{aligned}
            S^{\text{eff}, \text{nonpar}} (DM_j) =  \Big\langle S^{\text{eff}, \text{nonpar}} \big( \E \zeta_j(\boldsymbol\cdot, 0, C) \big) \Big\rangle,
        \end{aligned}
    \]
    for any $j \in [p]$. Furthermore, for any fixed $\mathcal{G}_M$, we have
    \[
         \begin{aligned}
             & S^{\text{eff}, \text{nonpar}}  \big(TM_j(\mathcal{G}_M) \big) = S^{\text{eff}, \text{nonpar}} (TE)  - \Big\langle S^{\text{eff}, \text{nonpar}} \big( \E \varrho_j(\boldsymbol{\cdot}, M_j, C \, ; \, \mathcal{G}_M) \big)  \Big\rangle
         \end{aligned}
    \]
    and $S^{\text{eff}, \, \text{nonpar}} \big(IM_j (\mathcal{G}_M)\big) = S^{\text{eff}, \, \text{nonpar}} \big(TM_j (\mathcal{G}_M) \big) - S^{\text{eff}, \text{nonpar}} (DM_j)$ for any $j \in [p]$.
\end{corollary}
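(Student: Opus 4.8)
The plan is to exploit the \emph{linearity} of the efficient-score map on the unrestricted model $\mathscr{M}_{\text{nonpar}}$. Since the observed-data likelihood is unconstrained, the tangent space at $P_X$ is all of $L^2_0(P_X)$, so every pathwise-differentiable functional admits a \emph{unique} gradient, namely its efficient influence function; consequently the assignment $\Psi \mapsto S^{\text{eff}, \text{nonpar}}(\Psi)$ is $\mathbb{R}$-linear over the vector space of such functionals. Theorem~\ref{thm_EIF} already certifies that the three building blocks $\E\kappa(a', C)$, $\E\zeta_j(a', 0, C)$ and $\E\varrho_j(a', M_j, C\,;\,\mathcal{G}_M)$ are pathwise differentiable and exhibits their gradients, so the whole corollary reduces to writing each target functional as a finite linear combination of these blocks and then transporting that combination through the efficient-score map.

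First I would dispatch $TE$. Under Assumption~\ref{ass_structure} the confounders $C$ form a valid backdoor set for $A$, so $\E[Y\mid do(A=a')] = \E[\kappa(a', C)] = \E\kappa(a', C)$, whence $TE = \E\kappa(1, C) - \E\kappa(0, C) = \big\langle \E\kappa(\bcdot, C)\big\rangle$. Because the index $a'$ is a fixed evaluation level and does \emph{not} perturb the law $P_X$ along which we differentiate, $\E\kappa(1, C)$ and $\E\kappa(0, C)$ are two genuine functionals of the same $P_X$; linearity then gives $S^{\text{eff}, \text{nonpar}}(TE) = \big\langle S^{\text{eff}, \text{nonpar}}(\E\kappa(\bcdot, C))\big\rangle$, reproducing the first display. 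The identical reasoning applied to the representation $DM_j = \E\big[\big\langle \zeta_j(\bcdot, 0, C)\big\rangle\big] = \big\langle \E\zeta_j(\bcdot, 0, C)\big\rangle$ from Theorem~\ref{thm_direct_exp} yields the second display, the interchange of the outer expectation with $\big\langle\,\cdot\,\big\rangle$ being mere linearity of $\E$.

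Next I would treat $TM_j(\mathcal{G}_M)$. Theorem~\ref{thm_direct_exp} gives $TM_j(\mathcal{G}_M) = \E\big[\big\langle \kappa(\bcdot, C) - \varrho_j(\bcdot, M_j, C\,;\,\mathcal{G}_M)\big\rangle\big]$; expanding the difference bracket and the outer expectation splits this as $\big\langle \E\kappa(\bcdot, C)\big\rangle - \big\langle \E\varrho_j(\bcdot, M_j, C\,;\,\mathcal{G}_M)\big\rangle$, i.e. $TE$ minus the $\varrho_j$-block. Applying linearity once more produces $S^{\text{eff}, \text{nonpar}}(TM_j(\mathcal{G}_M)) = S^{\text{eff}, \text{nonpar}}(TE) - \big\langle S^{\text{eff}, \text{nonpar}}(\E\varrho_j(\bcdot, M_j, C\,;\,\mathcal{G}_M))\big\rangle$. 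Finally, $IM_j(\mathcal{G}_M) = TM_j(\mathcal{G}_M) - DM_j$ holds \emph{by definition} (Definition~\ref{def_med}), so linearity of the efficient score immediately delivers the last identity $S^{\text{eff}, \text{nonpar}}(IM_j(\mathcal{G}_M)) = S^{\text{eff}, \text{nonpar}}(TM_j(\mathcal{G}_M)) - S^{\text{eff}, \text{nonpar}}(DM_j)$.

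The genuinely delicate point --- the one I would argue most carefully --- is the justification that the difference operator $\big\langle\,\cdot\,\big\rangle$ really commutes with the efficient-score map, which hinges on the claim that the treatment levels $a'=1$ and $a'=0$ are \emph{fixed labels} rather than interventions altering the distribution being differentiated. Once it is clear that each block is a bona fide pathwise-differentiable map $P \mapsto \E_P[\,\cdot\,]$ on $\mathscr{M}_{\text{nonpar}}$ (which is the substance of Theorem~\ref{thm_EIF}, not of this corollary), the additivity of canonical gradients over finite linear combinations is a one-line consequence of the uniqueness of gradients in the saturated tangent space, and no fresh influence-function computation is needed. I would therefore expect essentially all of the work to sit in Theorem~\ref{thm_EIF}, with the corollary itself being a short linearity argument whose only care is bookkeeping which argument the $\big\langle\,\cdot\,\big\rangle$ difference acts on.
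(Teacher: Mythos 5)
Your proposal is correct and matches the paper's proof, which consists of exactly the same observation: the paper dispatches the corollary in one line by invoking the linearity of efficient scores, relying (as you do) on the representations of $TE$, $DM_j$, and $TM_j(\mathcal{G}_M)$ from Theorem \ref{thm_direct_exp} and the blockwise scores from Theorem \ref{thm_EIF}. Your added care about the difference operator $\big\langle\,\cdot\,\big\rangle$ acting on fixed evaluation levels $a'\in\{0,1\}$ and the uniqueness of gradients in the saturated tangent space is a more explicit write-up of what the paper leaves implicit, not a different route.
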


\noindent In Corollary \ref{cor_eff_DM_IM}, the explicit formulas for \( S^{\text{eff}, \text{nonpar}} (DM_j) \), \( S^{\text{eff}, \text{nonpar}} \big( TM_j(\mathcal{G}_M)\big) \), and \( S^{\text{eff}, \text{nonpar}} \big( IM_j(\mathcal{G}_M)\big) \) can be directly derived by substituting the relevant expressions from Theorem \ref{thm_EIF}. Consequently, the semiparametric efficiency bounds for estimating \( DM_j \), \( IM_j(\mathcal{G}_M) \), and \( TM_j(\mathcal{G}_M) \) within the full nonparametric model \( \mathscr{M}_{\text {nonpar}} \) are respectively \( \E \big[ S^{\text{eff, nonpar}} (DM_j)\big]^2 \), \( \E \big[ S^{\text{eff, nonpar}} (IM_j(\mathcal{G}_M))\big]^2 \), and \( \E \big[ S^{\text{eff, nonpar}} (TM_j(\mathcal{G}_M))\big]^2 \) for any specified \( \mathcal{G}_M \), all with clearly delineated forms.
The asymptotic variances of any regular asymptotic linear estimators in $\mathscr{M}_{\text{nonpar}}$ must be greater than or equal to these bounds. Given that $TM_j(\mathcal{G}_M) = DM_j + IM_j(\mathcal{G}_M)$, we will only focus on $DM_j$ and $IM_j(\mathcal{G}_M)$ in the subsequent sections. 

\section{Direct Strategy and Ordinary Least Squares (OLS) Estimations}\label{sec_straight_strategy}

An important implication of Corollary \ref{cor_eff_DM_IM} is that all regular and asymptotically linear (RAL) estimators of $DM_j$ and $IM_j(\mathcal{G}_M)$ in the model $\mathscr{M}_{\text{nonpar}}$ share the common score $S^{\text{eff, nonpar}} (DM_j)$ and $S^{\text{eff,  nonpar}} (IM_j(\mathcal{G}_M))$, respectively. For illustrating this and as a motivation for multiply robust estimation when nonparametric methods are not appropriate, we provide a detailed study of different estimating strategies in this section and the next section.

Theorem \ref{thm_direct_exp} gives an explicit expression for $DM_j$ and ${IM}_j(\mathcal{G}_M)$, we can correspondingly give their estimators by (i) replacing the unknown quantities $\kappa(\cdot, \cdot)$, $\zeta_j(\cdot, \cdot, \cdot)$, $\varrho_j(\cdot, \cdot, \cdot \, ; \, \mathcal{G}_M)$ with their estimators and then (ii) replacing $\E [\cdot]$ by $\pn [\cdot] = n^{-1} \sum_{i = 1}^n [\cdot]_i$ directly. To be specific, in the step (i), we construct the following estimators:
\[
    \widehat\kappa^{\mathscr{M}_0} (a', c) := \widehat{\mu} (c, a'),
\]
\[
    \widehat\zeta_j^{\mathscr{M}_0}(a', 0, c) := \int_{\mathcal{M}} \widehat{\mu}(C, 1, m) \widehat\pi_{C, a'}(m_j) \widehat\pi_{C, 0}(m_{-j}) \, \mathrm{d} m,
\]
and
\[
    \widehat\varrho_j^{\mathscr{M}_0} (a', m_j, c \, ; \, \mathcal{G}_M) := \int_{ \mathcal{M}_{\pa_j} (\mathcal{G}_M)} \widehat\mu \big(C, a', \pa_j (\mathcal{G}_M), M_j \big) \widehat\pi_{C, a'} \big(\pa_j(\mathcal{G}_M) \big) \, \mathrm{d} \pa_j (\mathcal{G}_M),
\]
which are consistent for $\kappa(a', C)$, $\zeta_j(a', 0, c)$ and $\varrho_j(a', m_j, c)$ for any $c \in \mathcal{C}$, $a' \in \{ 0,1 \}$, and $m \in \mathcal{M}$.
Note that $\kappa(a', C)$ can be written as 
\[
    \E [Y \mid C, A = a'] = \int_{\mathcal{M}} \E [Y \mid C, A = a', M = m] f_{M \mid C, A}(m \mid C, A = a') \, \mathrm{d} m.
\]
Therefore, the consistency of $\widehat\kappa^{\mathscr{M}_0} (a', c)$, $\widehat\zeta_j^{\mathscr{M}_0}(a', 0, c)$, and $\widehat\varrho_j^{\mathscr{M}_0} (a', m_j, c \, ; \, \mathcal{G}_M)$ will use the correctly specified information as follows:
\begin{itemize}
    \item $\mathscr{M}_0 $: the conditional expectation $\E [Y \mid C = \bcdot, A = \bcdot, M = \bcdot]$
    and the conditional density of the mediator $f_{M \mid C, A}(m \mid C = \bcdot, A = \bcdot)$ are correctly specified.
\end{itemize}
Then in the step (ii), we can construct the estimators
\[
    \widehat{DM}_j^{\mathscr{M}_0} = \pn \Big[ \big\langle \widehat\zeta_j^{\mathscr{M}_0} (\boldsymbol{\cdot}, 0, C) \big\rangle \Big],
\]
\[
    \widehat{TM}_j^{\mathscr{M}_0} (\mathcal{G}_M) =\pn \Big[ \big\langle \widehat\kappa^{\mathscr{M}_0} (\bcdot, C) -  \widehat\varrho_j^{\mathscr{M}_0} (\bcdot, M_j, C \, ; \, \mathcal{G}_M) \big\rangle \Big]
\]
and $\widehat{IM}_j^{\mathscr{M}_0} (\mathcal{G}_M) = \widehat{TM}_j^{\mathscr{M}_0} (\mathcal{G}_M) - \widehat{DM}_j^{\mathscr{M}_0}$. Next, the estimators for the identifiable $\overline{TM}_j$ and $\overline{IM}_j$ are 
\[
    \widehat{TM}_j^{\text{avg}, \mathscr{M}_0 } = \frac{1}{\# \operatorname{MEC}(\widehat{\mathcal{C}}_{ M})} \sum_{\mathcal{G}_{M} \in \operatorname{MEC}(\widehat{\mathcal{C}}_{ M})} \widehat{TM}_j^{\mathscr{M}_0} (\mathcal{G}_M),
\]
and $\widehat{IM}_j^{\text{avg}, \mathscr{M}_0} = \widehat{TM}_j^{\text{avg}, \mathscr{M}_0} - \widehat{DM}_j^{\mathscr{M}_0}$, where $\widehat{\mathcal{C}}_{M} \in \{ 0, 1\}^{p \times p}$ is the adjacency matrix of the estimated CPDAG for the mediators $M = (M_1, \ldots, M_p)^{\top} \in \mathbb{R}^p$. The consistent causal structure $\widehat{\mathcal{C}}_M$ can be obtained by after we obtain the estimated adjacency matrix $\widehat{\mathcal{C}}$ for the whole causal graph, and we then extract a subset $\widehat{\mathcal{C}}_M = \big[ \widehat{\mathcal{C}}_{kk} \big]_{k \in { t + 1, \ldots, t + p}}$ to arrive at the causal structure for the mediators. The estimated adjacency matrix  $\widehat{\mathcal{C}}$ can be achieved through methods such as the PC algorithm \citep{spirtes2000constructing}, greedy equivalence search (GES) \citep{chickering2002optimal}, and adaptively restricted greedy equivalence search (ARGES) \citep{nandy2018high}, among others.

\subsection{OLS estimator under semi-linear model}\label{sec_ols_est}

The direct strategy in model $\mathscr{M}_0$ involves two unknown quantities: $\E [Y \mid C = \bcdot, A = \bcdot, M = \bcdot]$ and $f_{M \mid C, A}(m \mid C = \bcdot, A = \bcdot)$. Specially, when we have known that the structure follows $Y \, \leftarrow \, h_Y(C, A, M) + \epsilon_{Y}$ and $M \, \leftarrow \, h_M(C, A, M) + \epsilon_{M}$\footnote{The symbol $\leftarrow$ emphasizes that the expressions should be understood
as a generating mechanism rather than as a mere equation.} with given functions $h_Y(\bcdot)$ and $h_M(\bcdot)$, and given that the error terms $\epsilon_{Y}$ and $\epsilon_M$ belong to some classes of distributions, $\mathscr{M}_0$ will be correctly recovered. 
A commonly used approach for this is assuming Linear Structural Equation Models (LSEMs), i.e., $X \leftarrow B^{\top} X + \epsilon$ with mean-zero and jointly independent error vector $\epsilon$, where $B=(b_{ij})_{1\leq i\leq d,1\leq j\leq d}$ be a $d\times d$ matrix, where $b_{ij}$ is the weight of the edge $X_i\rightarrow X_j \in E$, and $b_{ij}=0$ otherwise. 
There are numerous rigorous theoretical findings for LSEMs, as discussed in \citep{chakrabortty2018inference,cai2020anoce,shi2022testing}.
However, LSEMs are not applicable when dealing with binary exposure, given that the element $A$ in $X$ is constrained to either $0$ or $1$. In lieu of LSEMs, we propose the following \textit{semi-linear} structure assumption.

\begin{ass}\label{ass_slsem}
    We assume $X$ is \textbf{semi-linear} when it is generated as follows
    
    \begin{equation}\label{lsem_x}
        \begin{aligned}
            A \, & \leftarrow \, h (C, \epsilon_A), \\
            M \, & \leftarrow \, B_{M C}^{\top} C + \beta_{MA} A + B_{MM}^{\top} M + \epsilon_{M}, \\
            Y \, & \leftarrow \, \beta_{YC}^{\top} C + \alpha_{YA} A + \beta_{YM}^{\top} M + \epsilon_{Y},
        \end{aligned}
    \end{equation}
    
    where $h : \mathbb{R}^{t - 1} \times \mathbb{R} \rightarrow \{ 0, 1\}$ is a known link function and $\epsilon_A, \epsilon_M, \epsilon_Y$ are mean-zero error terms independent with each other as well as $C$.
\end{ass}

Through this paper, $\alpha$, $\beta$, and $B$ will always represent a scalar, vector, and matrix, respectively. Define $\theta_{MA} := \big[ (I_p - B_{MM}^{\top})^{-1} \beta_{MA} \big]$ where $I_p \in \mathbb{R}^{p \times p}$ is the identity matrix, then under the above semi-linear structural assumptions, we have the following propositions for the uniqueness of $\theta_{MA}$ under MEC, interpretation displays, and neat {parametric} expressions for the causal effects defined in Section \ref{sec_def}.

\begin{pro}[Identification]\label{thm_unique}
    Under Assumption \ref{ass_slsem}, $\theta_{MA}$ is unique in any fixed $\operatorname{MEC}(\mathcal{C}_M)$. Hence, $DE$, $IE$, and $DM_j$ are also unique in $\operatorname{MEC}(\mathcal{C}_M)$.
\end{pro}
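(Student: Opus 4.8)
The plan is to identify $\theta_{MA}$ with the \emph{total} effect of $A$ on the mediator vector $M$, a functional of the observational law that makes no reference to the internal orientation of the mediator edges. First I would solve the second line of \eqref{lsem_x} for $M$. Because the mediator sub-DAG $\mathcal{G}_M$ is acyclic, arranging $M$ in a topological order makes $B_{MM}^{\top}$ strictly triangular, so $I_p - B_{MM}^{\top}$ is invertible and
\[
    M = (I_p - B_{MM}^{\top})^{-1} \big( B_{MC}^{\top} C + \beta_{MA} A + \epsilon_M \big).
\]
Under Assumption \ref{ass_slsem}, $\epsilon_M$ is mean-zero and independent of $C$ and of $\epsilon_A$, while $A \leftarrow h(C, \epsilon_A)$; hence $\E[\epsilon_M \mid C, A] = 0$ and
\[
    \E[M \mid C, A] = (I_p - B_{MM}^{\top})^{-1} \big( B_{MC}^{\top} C + \beta_{MA} A \big).
\]
Differencing at the two exposure levels then gives $\theta_{MA} = \E[M \mid C, A = 1] - \E[M \mid C, A = 0]$.

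The crucial observation is that the right-hand side of this identity is determined by $P_{(C, A, M)}$ alone and contains no quantity tied to the particular $\mathcal{G}_M$. I would then note that every DAG in $\operatorname{MEC}(\mathcal{C}_M)$ admits a semi-linear parametrization of the same law: Markov equivalent linear models share the identical conditional mean $\E[M \mid C, A]$, so repeating the computation above under any $\mathcal{G}_M' \in \operatorname{MEC}(\mathcal{C}_M)$ yields the same functional $\E[M \mid C, A = 1] - \E[M \mid C, A = 0]$. Therefore the structural coefficients $(B_{MM}, \beta_{MA})$ may vary across the class, but $(I_p - B_{MM}^{\top})^{-1} \beta_{MA}$ cannot, which is exactly the asserted uniqueness of $\theta_{MA}$.

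To transfer uniqueness to the causal effects, I would substitute the linear outcome equation into the definitions of Section \ref{sec_def}. A direct calculation gives $DE = \alpha_{YA}$ and $IE = \beta_{YM}^{\top} \theta_{MA}$. For $DM_j$ I would insert $\E[Y \mid C, A = 1, M = m] = \beta_{YC}^{\top} C + \alpha_{YA} + \beta_{YM}^{\top} m$ into Definition \ref{def_med}; since $\int \{ f_{M_j \mid C, A}(m_j \mid C, A = 1) - f_{M_j \mid C, A}(m_j \mid C, A = 0) \} \, \mathrm{d} m_j = 0$, every term constant in $m_j$ integrates away and the $m_{-j}$ contribution vanishes for the same reason, leaving $DM_j = \beta_{YM, j} \, \theta_{MA, j}$. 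The coefficients $\alpha_{YA}$ and $\beta_{YM}$ are read off from $\E[Y \mid C, A, M]$, a functional of $P$ that does not involve $\mathcal{G}_M$, so they are common to the whole class; combined with the uniqueness of $\theta_{MA}$, this makes $DE$, $IE$, and $DM_j$ unique in $\operatorname{MEC}(\mathcal{C}_M)$.

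The step I expect to require the most care is the claim that each member of the equivalence class carries a semi-linear representation of the fixed distribution with the \emph{same} reduced-form mean $\E[M \mid C, A]$. This is the place to invoke that Markov equivalent linear structural models reproduce identical first- and second-moment structure, so re-orienting edges within $\mathcal{C}_M$ merely redistributes effects among the structural coefficients without altering their reduced form. The remaining points — invertibility of $I_p - B_{MM}^{\top}$ from acyclicity, and $\E[\epsilon_M \mid C, A] = 0$ from the exogeneity built into Assumption \ref{ass_slsem} — are routine.
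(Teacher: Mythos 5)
Your proof is correct, and it reaches the conclusion by a cleaner and more elementary route than the paper. The paper proves uniqueness of $\theta_{MA}$ by casting $(\Theta_{MC},\theta_{MA})$ as the solution of the population least-squares problem $\min_{[\Theta\ \theta]}\operatorname{tr}\E[M-\Theta C-\theta A][M-\Theta C-\theta A]^{\top}$, showing via the decomposition $\E e_M^{\top}e_M+\E\|(\Theta_{MC}-\Theta)C+(\theta_{MA}-\theta)A\|^2$ that the structural parameters attain the minimum, computing the objective value $\sum_{j,k}\sigma_j^{-2}(B_{MM})_{kj}^2$, and then invoking the shared skeleton across the MEC together with M-estimation theory; you instead identify $\theta_{MA}$ directly as the reduced-form contrast $\E[M\mid C,A=1]-\E[M\mid C,A=0]$, using $\E[\epsilon_M\mid C,A]=0$ from the exogeneity in Assumption \ref{ass_slsem} and invertibility of $I_p-B_{MM}^{\top}$ from acyclicity. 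Both arguments hinge on the same core fact — that $\theta_{MA}$ coincides with a graph-free functional of the law of $(C,A,M)$ — but your conditional-mean derivation avoids the paper's somewhat opaque ``shares the same skeleton'' step entirely. Two small remarks: first, the step you flagged as delicate (that every DAG in $\operatorname{MEC}(\mathcal{C}_M)$ carries a semi-linear parametrization with the same reduced form) is not actually load-bearing — for any DAG in the class that \emph{does} admit a semi-linear parametrization of the observed law, the derived $\theta_{MA}$ must equal the conditional-mean contrast, and DAGs admitting no such parametrization contribute nothing, so uniqueness follows without any moment-matching lemma; second, you should note that reading off $(\Theta_{MC},\theta_{MA})$ from $\E[M\mid C,A]=\Theta_{MC}C+\theta_{MA}A$ requires a non-degenerate design for $(C,A)$, which is supplied by the positivity condition. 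Your explicit verification of $DM_j=\beta_{YM,j}\theta_{MA,j}$ by integrating the linear outcome mean against the density difference (whose integral over $m_j$ vanishes) is also sound, and replaces the paper's appeal to its Proposition \ref{thm_par_exp}, whose proof is deferred.
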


\begin{pro}[Interpretation]\label{pro_inter}
    Under Assumption \ref{ass_structure} and Assumption \ref{ass_slsem}, for any fixed $\mathcal{G}_M$, we have
    \[
        DM_j = \Big\{ \E \big[ Y \mid do (A = 0, M_j = m_j^{(0)} + 1, M_{-j} = m_{-j}^{(0)}) \big]  - \E \big[ Y \mid do (A = 0) \big] \Big\} \times \Delta_j^*,
    \]
    and 
    \[
        \begin{aligned}
            IM_j (\mathcal{G}_M) & = \Big\{ \E \big[ Y \mid do (A = 0, M_j = m_j^{(0)} + 1) \big] \\
            & ~~~~~~~~~~~~~~~~~~~~~~~~~~~~~~- \E \big[ Y \mid do (A = 0, M_j = m_j^{(0)} + 1, M_{-j} = m_{-j}^{(0)}) \big]  \Big\}  \times \Delta_j^*,
        \end{aligned}
    \]
    where $\Delta_j^* = \big\langle \E \big[ M_j \mid do (A = \bcdot) \big] \big\rangle$.
\end{pro}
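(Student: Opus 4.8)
The plan is to substitute the semi-linear generating mechanism of Assumption~\ref{ass_slsem} into the definitions of $DM_j$, $TM_j(\mathcal{G}_M)$, and $IM_j(\mathcal{G}_M)$, to evaluate every conditional and interventional expectation through the reduced form of the mediator block, and finally to match each closed-form expression against the claimed $do$-contrast. The single computation driving everything is the reduced form: solving $M \leftarrow B_{MC}^{\top}C + \beta_{MA}A + B_{MM}^{\top}M + \epsilon_M$ yields $M = (I_p - B_{MM}^{\top})^{-1}(B_{MC}^{\top}C + \beta_{MA}A + \epsilon_M)$, whence $\E[M \mid C, A=a'] = (I_p-B_{MM}^{\top})^{-1}(B_{MC}^{\top}C + \beta_{MA}a')$ and $\E[M\mid do(A=a')] = (I_p-B_{MM}^{\top})^{-1}(B_{MC}^{\top}\E[C] + \beta_{MA}a')$. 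Taking the level difference in $a'$ and reading off coordinate $j$ gives $\Delta_j^{*} = (\theta_{MA})_j$ with $\theta_{MA} = (I_p-B_{MM}^{\top})^{-1}\beta_{MA}$, a constant free of $C$ (the scalar that Proposition~\ref{thm_unique} shows to be MEC-invariant); it will be factored out of both effects at the end. I read $m^{(0)}$ throughout as $\E[M\mid do(A=0)]$, which is what makes the stated contrasts exact.

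For $DM_j$ I would use the representation $DM_j = \E[\langle \zeta_j(\bcdot,0,C)\rangle]$ from Theorem~\ref{thm_direct_exp}. Under Assumption~\ref{ass_slsem} the full outcome mean is affine, $\mu(C,a^{*},m) = \beta_{YC}^{\top}C + \alpha_{YA}a^{*} + \beta_{YM}^{\top}m$, so inserting it into \eqref{def_zeta} and differencing the two exposure levels produces the factor $\pi_{C,1}(m_j)-\pi_{C,0}(m_j)$; since it integrates to zero in $m_j$, the additive constant and the $m_{-j}$-linear piece vanish and only the coordinate-$j$ term survives, giving $\langle\zeta_j(\bcdot,0,C)\rangle = (\beta_{YM})_j\{\E[M_j\mid C,A=1]-\E[M_j\mid C,A=0]\} = (\beta_{YM})_j(\theta_{MA})_j$, free of $C$. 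Hence $DM_j = (\beta_{YM})_j\,\Delta_j^{*}$. On the other side, intervening on the whole of $(A,M)$ in the $Y$-equation gives $\E[Y\mid do(A=0,M=m)] = \beta_{YC}^{\top}\E[C] + \beta_{YM}^{\top}m$, so the bracketed contrast equals $\beta_{YM}^{\top}\{(m_j^{(0)}+1,m_{-j}^{(0)}) - m^{(0)}\} = (\beta_{YM})_j$; multiplying by $\Delta_j^{*}$ matches $DM_j$ and proves the first identity.

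For $IM_j(\mathcal{G}_M) = TM_j(\mathcal{G}_M) - DM_j$ I would obtain $TM_j(\mathcal{G}_M)$ from its $do$-form in Definition~\ref{def_med} rather than from $\varrho_j$. Holding $M_j$ fixed at a value $v$ and letting its descendants in $\mathcal{G}_M$ respond, the reduced form of the remaining mediators is again affine, so $\E[Y\mid do(A=a,M_j=v)]$ is affine in $(a,v,\E[C])$ with the coefficient of $v$ equal to the total effect $\gamma_j := (\beta_{YM})_j + \sum_{k\neq j}(\beta_{YM})_k t_{kj}$ of $M_j$ on $Y$ along directed mediator paths, where $t_{kj}$ is the total effect of $M_j$ on $M_k$. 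Averaging over the marginal of $M_j$ removes the $v$-dependence in the level difference, so $\langle\E[\E[Y\mid do(A=\bcdot,M_j)]]\rangle$ equals the $A$-to-$Y$ effect with $M_j$ severed; subtracting this from $TE = \langle\E[Y\mid do(A=\bcdot)]\rangle$ leaves precisely the $A$-to-$Y$ effect routed through $M_j$, which factors (node-disjointness in the DAG) as $\Delta_j^{*}\gamma_j$. Thus $TM_j(\mathcal{G}_M) = \Delta_j^{*}\gamma_j$ and $IM_j(\mathcal{G}_M) = \Delta_j^{*}(\gamma_j - (\beta_{YM})_j)$. Finally, evaluating the two interventions in the stated $IM_j$ contrast by the same reduced form, $\E[Y\mid do(A=0,M_j=m_j^{(0)}+1)]$ (only $M_j$ fixed) minus $\E[Y\mid do(A=0,M_j=m_j^{(0)}+1,M_{-j}=m_{-j}^{(0)})]$ (all of $M$ fixed) collapses---after the $C$- and $m^{(0)}$-dependent terms cancel using $m^{(0)} = \E[M\mid do(A=0)]$---to $\gamma_j - (\beta_{YM})_j$, and multiplying by $\Delta_j^{*}$ gives the second identity.

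The main obstacle I expect is the reduced-form bookkeeping for $TM_j$: one must solve the mediator system with $M_j$ held fixed, identify the coefficient of $v$ as the directed-path total effect $\gamma_j$, and justify the factorization of the through-$M_j$ effect of $A$ as $\Delta_j^{*}\gamma_j$ (which rests on the uniqueness of the split of any $A\!\to\!\cdots\!\to\!M_j\!\to\!\cdots\!\to\!Y$ path at $M_j$ in an acyclic graph). A related delicate point is the cancellation in the $IM_j$ contrast, which only closes because $m^{(0)}$ is the interventional mean $\E[M\mid do(A=0)]$, so that the nuisance terms $\sum_{k\neq j}(\beta_{YM})_k t_{kj}m_j^{(0)}$ from the fixed-$M_j$ reduced form exactly offset those from the free reduced form. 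Routing $TM_j$ through its $do$-form---rather than through $\varrho_j$, whose integrand $\mu(C,a',\pa_j,M_j)$ conditions on only a subset of mediators and would require linear conditional means, i.e.\ an extra ellipticity hypothesis on $\epsilon_M$---is what keeps the argument elementary; once the path-coefficient identities are in hand, all remaining steps are the same affine, density-integrates-to-one simplifications used for $DM_j$.
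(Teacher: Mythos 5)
Your proof is correct, but it takes a genuinely different route from the paper's. The paper never proves Proposition~\ref{pro_inter} as a standalone item: the identity is discharged inside the appendix proof of Proposition~\ref{thm_par_exp} by citing Theorem F.4 and Corollary F.1 of \cite{cai2020anoce}, which identify the $DM_j$ and $TM_j$ of Definition~\ref{def_med} with the definitions in that reference---and those definitions are exactly the stated $do$-contrasts multiplied by $\Delta_j^*$; the DAG algebra this requires is packaged in Lemma~\ref{lem_dag}, the path/matrix representation of $\theta_{MA,-j}-\theta_{M_{-j}A}$. You instead derive everything from scratch: $\Delta_j^*=\theta_{MA,j}$ from the reduced form, $DM_j=\beta_{YM,j}\theta_{MA,j}$ by inserting the affine $\mu$ into the representation of Theorem~\ref{thm_direct_exp} and using that $\int\{\pi_{C,1}(m_j)-\pi_{C,0}(m_j)\}\,\mathrm{d}m_j=0$ kills the constant and $m_{-j}$ terms, and $TM_j(\mathcal{G}_M)=\theta_{MA,j}\gamma_j$ by the path-concatenation argument. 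That argument is sound: the unique split of a through-$M_j$ path at $M_j$, together with the observation that a vertex shared by an $A\!\to\!M_j$ segment and an $M_j\!\to\!Y$ segment would force a directed cycle, is exactly what validates the factorization, and it reproduces in elementary form the content of Lemma~\ref{lem_dag}; your $\gamma_j$ coincides with the paper's $\E^{\text{reg}}\big[Y\mid M_j\cup\Pa_j(\mathcal{G}_M)\cup A\cup C\big]_1$ from part (ii') of Proposition~\ref{thm_par_exp}, so your closed form $IM_j=\theta_{MA,j}(\gamma_j-\beta_{YM,j})$ matches the paper's $\beta_{YM,-j}^{\top}(\theta_{MA,-j}-\theta_{M_{-j}A})$. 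Two of your side remarks are genuinely load-bearing and correctly flagged: reading $m^{(0)}$ as $\E[M\mid do(A=0)]$ is harmless only because all structural equations for $M$ and $Y$ are affine, so the random natural value and its interventional mean yield identical contrasts (the same linearity is why intervening on $M_j$ at its natural mean leaves all downstream means unchanged, the cancellation that closes your $IM_j$ computation); and your $DM_j$ calculation is insensitive to the exposure slot $a^*$ appearing in \eqref{def_zeta}, since it contributes only an additive constant annihilated by the density difference. In short, the paper's route buys brevity by outsourcing the equivalence to \cite{cai2020anoce}; yours buys a self-contained derivation that makes explicit where acyclicity and where linearity are each used.
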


\begin{pro}\label{thm_par_exp}
    Under Assumption \ref{ass_structure} and Assumption \ref{ass_slsem}, we have 
    (i):
    \[
        DE = \alpha_{YA}, \qquad IE = \theta_{MA}^{\top} \beta_{MA}
    \]
    and hence $ TE  = \alpha_{YA} + \theta_{MA}^{\top} \beta_{MA}$ for natural effects, and \[
        DM_j = \beta_{YM, j} \theta_{MA, j}.
    \]
    (ii) For any fixed $\mathcal{G}_M$,
    \[
        TM_j = \beta_{YM}^{\top} \theta_{MA} - \beta_{YM_{-j}}^{\top} \theta_{M_{-j}A},
    \]
    where $\theta_{M_{-j}A} := (I - B_{M_{-j}M_{-j}}^{\top})^{-1} \beta_{M_{-j}A}$, and hence
    \[
        IM_j = \beta_{YM, j}^{\top} (\theta_{MA, -j} - \theta_{M_{-j}A}).
    \]
    (ii') For any fixed $\mathcal{G}_M$, we have the alternative expression for $TM_j$ as 
    \[
        TM_j = \theta_{MA, j} \times \E^{\text{reg}} \big[ Y \mid M_j \cup \Pa_j (\mathcal{G}_{M}) \cup A \cup C \big]_1,
    \]
    hence
    \[
        IM_j = \theta_{MA, j} \times \Big( \E^{\text{reg}} \big[ Y \mid M_j \cup \Pa_j (\mathcal{G}_{M}) \cup A \cup C \big]_1 - \beta_{YM, j} \Big),
    \]
    where \(\E^{\text{reg}} [X_k \mid X_l \, \cup \, X_S]_1\) denotes the true coefficient of \(X_l\) in the linear regression of \(X_k\) on the combined set \(X_l \, \cup \, X_S\).
\end{pro}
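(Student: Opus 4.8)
The backbone of every identity is the reduced (solved) form of the mediator block. Solving the $M$-equation in \eqref{lsem_x} gives $M = (I_p - B_{MM}^\top)^{-1}\big(B_{MC}^\top C + \beta_{MA}A + \epsilon_M\big)$, and since $\epsilon_M$ is mean-zero and independent of $(C,A)$ (as $A=h(C,\epsilon_A)$ with $\epsilon_A\perp\epsilon_M$), this yields $\E[M\mid C,A=a']=(I_p-B_{MM}^\top)^{-1}(B_{MC}^\top C+\beta_{MA}a')$, so differencing in $a'$ isolates $\theta_{MA}=(I_p-B_{MM}^\top)^{-1}\beta_{MA}$ as the total effect of $A$ on $M$. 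I would prove all of part (i) by substitution. For $DE,IE,TE$, I plug this reduced form into Pearl's nested interventional expectations in Definition \ref{def_na}; because $\E[Y\mid C,A,M]=\beta_{YC}^\top C+\alpha_{YA}A+\beta_{YM}^\top M$ is affine and the errors are centered and independent, each intervention simply replaces the intervened argument by its constant and all remaining terms by their means, so the confounder contributions cancel in the differences, leaving $DE=\alpha_{YA}$ and $IE=\beta_{YM}^\top\theta_{MA}$, hence $TE=\alpha_{YA}+\beta_{YM}^\top\theta_{MA}$. For $DM_j$, I substitute the same affine $\mu$ into the integral of Definition \ref{def_med}: the constant-in-$m$ part and every off-diagonal term $\beta_{YM,k}m_k$ with $k\neq j$ vanish because $\int (f_{M_j\mid C,1}-f_{M_j\mid C,0})\,dm_j=0$, leaving only $\beta_{YM,j}\big[\E(M_j\mid C,1)-\E(M_j\mid C,0)\big]=\beta_{YM,j}\theta_{MA,j}$, which is free of $C$.

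For part (ii) the cleanest route is the do-formulation of $TM_j$ in Definition \ref{def_med} rather than the $\varrho_j$ expression. When $M_j$ is held fixed at a constant under $do(A=a,M_j=m_j)$, its structural equation is removed and the remaining mediators satisfy $(I-B_{M_{-j}M_{-j}}^\top)M_{-j}=B_{M_{-j}C}^\top C+\beta_{M_{-j}A}A+(\text{terms in }m_j,\epsilon_{M_{-j}})$, so the $A$-effect on $M_{-j}$ is governed by the principal submatrix, giving $\theta_{M_{-j}A}=(I-B_{M_{-j}M_{-j}}^\top)^{-1}\beta_{M_{-j}A}$. Consequently $\partial_a\E[Y\mid do(A=a,M_j=m_j)]=\alpha_{YA}+\beta_{YM_{-j}}^\top\theta_{M_{-j}A}$, which is constant in $m_j$, so integrating against $f_{M_j}$ is harmless; subtracting this from $\partial_a\E[Y\mid do(A=a)]=\alpha_{YA}+\beta_{YM}^\top\theta_{MA}$ yields $TM_j=\beta_{YM}^\top\theta_{MA}-\beta_{YM_{-j}}^\top\theta_{M_{-j}A}$. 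Splitting $\beta_{YM}^\top\theta_{MA}$ into its $j$ and $-j$ blocks and subtracting $DM_j=\beta_{YM,j}\theta_{MA,j}$ then produces the stated $IM_j$.

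For part (ii') I would invoke path-tracing (Wright's rules): in a linear DAG the sum of path products over all directed $A\to Y$ paths that traverse $M_j$ factorizes — each such path passes through $M_j$ exactly once — into (total effect of $A$ on $M_j$) $\times$ (total effect of $M_j$ on $Y$), that is $TM_j=\theta_{MA,j}\cdot\gamma_j$ with $\gamma_j$ the total effect of $M_j$ on $Y$. It then remains to identify $\gamma_j$ with $\E^{\text{reg}}[Y\mid M_j\cup\Pa_j(\mathcal{G}_M)\cup A\cup C]_1$: by Assumption \ref{ass_structure} the parents of $M_j$ in the full graph are exactly $\{A,C,\Pa_j(\mathcal{G}_M)\}$, which form a valid backdoor adjustment set for $(M_j,Y)$, so adjusting for them in a linear regression returns the total effect as the coefficient on $M_j$. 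The expression for $IM_j$ follows by subtracting $DM_j=\theta_{MA,j}\beta_{YM,j}$.

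The main obstacle, and the step I would treat most carefully, is the $TM_j$ computation. Proving (ii) verbatim through Theorem \ref{thm_direct_exp} is awkward because $\varrho_j$ uses $\mu(C,a',\pa_j,M_j)=\E[Y\mid C,A,\Pa_j,M_j]$, which silently marginalizes the non-parent, non-$j$ mediators and forces one to compute the induced linear regression of $Y$ on $(C,A,\Pa_j,M_j)$ in the reduced form — this is exactly the backdoor computation behind (ii') in disguise, and reconciling it with the submatrix expression $\theta_{M_{-j}A}$ of (ii) is the delicate algebra. The other point needing care is justifying that intervening on $M_j$ severs precisely the directed paths through $M_j$, so that the residual system among $M_{-j}$ is governed by $B_{M_{-j}M_{-j}}$ alone, and that the path factorization and backdoor identification hold for every $\mathcal{G}_M$ in the equivalence class, consistently with the uniqueness asserted in Proposition \ref{thm_unique}.
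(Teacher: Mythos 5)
Your proposal is correct in substance and, for most of the proposition, more self-contained than the paper's own argument. For part (i) the paper routes $DE$, $IE$ through Lemma \ref{lem_nat_eff} (natural effects as averages over $C$) and then \emph{cites} Theorem F.4 of \cite{cai2020anoce} to identify $DM_j$, whereas you compute $DM_j$ directly from the integral in Definition \ref{def_med}, observing that the affine $\mu$ factorizes against the product of marginal conditional densities and that $\int\{f_{M_j\mid C,A}(\cdot\mid C,1)-f_{M_j\mid C,A}(\cdot\mid C,0)\}\,\mathrm{d}m_j=0$ kills the constant and off-$j$ terms; this is a genuine simplification and matches the stated formula. For part (ii) the paper imports the identity $TM_j = IE - IE_{\mathcal{G}_{(-j)}}$ from Theorem E.2 of \cite{watson2023heterogeneous} and Corollary F.1 of \cite{cai2020anoce} and then applies the part-(i) formula to the subgraph; your graph-surgery derivation — deleting the $M_j$-equation so that $M_{-j}$ solves the principal-submatrix system with $A$-coefficient $\theta_{M_{-j}A}=(I-B_{M_{-j}M_{-j}}^{\top})^{-1}\beta_{M_{-j}A}$, noting the $a$-difference of $\E[Y\mid do(A=a,M_j=m_j)]$ is free of $m_j$ so the outer average over $f_{M_j}$ is harmless — is in effect a direct proof of that imported identity, and the block split $TM_j-DM_j=\beta_{YM,-j}^{\top}(\theta_{MA,-j}-\theta_{M_{-j}A})$ agrees with the paper's computation. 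Your factorization in (ii') via Wright's rules is also sound, with one point you should make explicit: the concatenation of a directed $A\to M_j$ path with a directed $M_j\to Y$ path is automatically a \emph{simple} path because a shared vertex $v\neq M_j$ would yield directed paths $v\to M_j$ and $M_j\to v$, contradicting acyclicity; this is exactly the content of the path representation in the paper's Lemma \ref{lem_dag}.

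The one step where your argument has a genuine gap is the backdoor-regression identification in (ii'). You invoke "adjusting for a valid backdoor set in a linear regression returns the total effect," but the full system here is \emph{not} an LSEM: $A\leftarrow h(C,\epsilon_A)$ is a non-linear, binary equation, and with non-Gaussian errors the identity "population least-squares coefficient on $M_j$ equals the total effect" is a covariance-algebra fact for linear systems (the content of Theorems 3.1--3.2 in \cite{nandy2017estimating}) that does not automatically survive when one regressor ($A$) is a non-linear function of another ($C$). The paper repairs exactly this by first conditioning on $C=c$ — within each stratum the subsystem in $(A,M,Y)$ is linear with exogenous binary $A$, so Proposition 2.1 of \cite{chakrabortty2018inference} applies and yields $\theta_{MA,j}\times\E^{\text{reg}}[Y\mid M_j\cup\Pa_j(\mathcal{G}_M)\cup A\cup\{C=c\}]_1$ with slopes constant in $c$ — and only then passes to the stated regression including $C$. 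Your unconditional statement would need this conditioning device (or an equivalent argument) to be rigorous; as written, that step could fail in a system where the non-linearity of $h$ is not quarantined this way. With that repair inserted, your proof goes through and recovers all parts of the proposition.
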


\noindent Proposition \ref{thm_unique} gives the fact that only $IM_j$ and $TM_j$ require specific DAG structure, while other quantities do not require any knowledge of the causal structure under semi-linear assumption. 
Meanwhile, Proposition \ref{pro_inter} implies that, under the semi-linear assumption, our definitions for direct/indirect individual mediation effects in Definition \ref{def_med} exactly coincides with the definitions in \cite{cai2020anoce}: the first multiplier is in Proposition \ref{pro_inter} with the classical meaning of `natural' in the causal inference literature
\cite{pearl2000causality}. Thus, $DM_j$ can be interpreted as the causal effect through a particular mediator from the treatment on the outcome that is not regulated by its descendant mediators. Similarly, by the first multiplier in the $IM_j$, we know that $IM_j$ captures the indirect effect of a particular mediator on the outcome regulated by its descendant mediators.

More importantly, Proposition \ref{thm_unique} can imply a simply OLS estimator for the direct strategy together with Proposition \ref{thm_par_exp} as long as the sample size $n$ is larger than the dimension $d$. Indeed, we can rewrite the part of semi-linear structure \eqref{ass_slsem} as follows:
\begin{equation}\label{reg_exp}
    \left\{ \begin{array}{l}
            M = \underbrace{(I - B_{MM}^{\top})^{-1}B_{MC}^{\top}}_{= :\Theta_{MC}} C + \underbrace{(I - B_{MM}^{\top})^{-1} \beta_{MA}}_{\theta_{MA}} A + \underbrace{(I - B_{MM}^{\top})^{-1}\epsilon_M}_{=: e_M}, \\
            Y =  (\beta_{YC}^{\top}, \alpha_{YA}, \beta_{YM}^{\top}) (C^{\top}, A, M^{\top})^{\top}+ \epsilon_Y
    \end{array} \right..
\end{equation}
Write $\widehat{\theta}_{MA}$ as the OLS estimator of unknown parameter ${\theta}_{MA}$, similarly define the other corresponding estimated quantities as follows:
\[
    \left [ \begin{array}{c}
                 \widehat{\Theta}_{MC}^{\top} \\
                 \widehat{\theta}_{MA}^{\top} 
            \end{array}\right] \qquad \text{ and } \qquad \left[ \begin{array}{c}
                \widehat{\beta}_{YC} \\
                \widehat{\alpha}_{YA} \\
                \widehat{\beta}_{YM}
            \end{array}\right].
\]
Then we will have OLS estimators for the direct strategy estimators: For $DE$ and $IE$, $\widehat{DE}^{\text{OLS}} = \widehat{\alpha}_{YA}$ and $\widehat{IE}^{\text{OLS}} = \widehat{\beta}_{YM}^{\top} \widehat\theta_{MA}$; For $DM_j$ and $IM_j$, $\widehat{DM}_j^{\text{OLS}} = \widehat{\beta}_{YM, j}\widehat\theta_{MA, j}$, $\widehat{{IM}}_j^{\text{OLS}} (\mathcal{G}_{M}) = \widehat{\theta}_{MA, j} \big\{ \widehat{\E}^{\text{reg}} \big[ Y \mid M_j \cup \Pa_j (\mathcal{G}_{M}) \cup A \cup C \big]_1 - \widehat{\beta}_{YM, j} \big\}$, and
\[
    \widehat{{IM}}_j^{\text{avg}, \, \text{OLS}} = \frac{1}{\# \operatorname{MEC}(\widehat{\mathcal{C}}_{ M})} \sum_{\mathcal{G}_{M} \in \operatorname{MEC}(\widehat{\mathcal{C}}_{ M})} \widehat{{IM}}_j^{\text{OLS}} (\mathcal{G}_{M}),
\]
where $\widehat\E^{\text{reg}} [X_k \mid X_l \, \cup \, X_S]_1$ is the estimated coefficient for $X_l$ obtained from the linear regression of $X_k$ on $X_l \, \cup \, X_S$, as determined from the data.
Thus, when the semi-linear structure is determined, we can simplify direct strategy estimators to OLS estimators. 
All the these OLS estimators can be easily obtained by just applying simple regressions with nice properties, we will discuss their asymptotic properties in Section \ref{sec_asy}. 

\section{Multiple Robust Estimators}\label{sec_mr}

\subsection{Several Alternative Strategies}
   

For a fixed $j \in [p]$, beyond the direct strategy above, there are alternative identification formulas for $\E \kappa(a', C)$, $\E \zeta_j (a', 0, C)$, and $\E \varrho_j (a', M_j, C \, ; \, \mathcal{G}_M)$. Based on these formulations, we can derive the corresponding estimators. We will discuss them one by one in the subsequent sections. 

\subsubsection{Alternative Strategy 1}\label{sub_sec_alternative_strategy_1}

The first one is using propensity score to construct the inverse probability weighting estimator. Note that we have\footnote{The calculation details are shown in  \ref{proof_alter_strategies}.}
\[
    \E \bigg[ \frac{\mathds{1}(A = a')}{e_{a'}(C)} Y\bigg] = \E \kappa(a', C),
\]
\begin{equation}\label{strategy_1_DM}
    \begin{aligned}
        \E \bigg[ \frac{\mathds{1}(A = 1)}{e_1(C)} \frac{ \pi_{C, a'}(M_{-j})}{\pi_{C, 1, M_j}(M_{-j}) } Y \bigg] = \E \zeta_j (a', 0, C),
    \end{aligned}
\end{equation}
and
\begin{equation}\label{strategy_1_IM}
    \begin{aligned}
        \E \bigg[ \frac{\mathds{1}(A = a')}{e_{a'}(C)} \pi_{C, a'} \big(\Pa_j (\mathcal{G}_M) \big) Y \bigg] = \E \varrho_j (a', M_j, C \, ; \, \mathcal{G}_M).
    \end{aligned}
\end{equation}
Thus, corresponding estimators take the form
\[
    \widehat\kappa^{\mathscr{M}_1} (a', C) = \frac{\mathds{1}(A = a')}{\widehat{e}_{a'} (C)} Y,
\]
\[
    \begin{aligned}
        \widehat\zeta_j^{\mathscr{M}_1} (a', 0, C) = \frac{\mathds{1}(A = 1)}{\widehat{e}_1(C)} \frac{\widehat{\pi}_{C, a'}(M_{-j})}{\widehat\pi_{C, 1, M_j}(M_{-j})} Y,
    \end{aligned}
\]
and
\[
    \widehat\varrho_j^{\mathscr{M}_1} (a', M_j, C \, ; \, \mathcal{G}_M) = \frac{\mathds{1}(A = a')}{\widehat{e}_{a'}(C)} \widehat\pi_{C, a'} \big(\Pa_j (\mathcal{G}_M) \big) Y,
\]
respectively. Here, the propensity scores $e_{a'}(\bcdot)$ and conditional densities $\pi_{\bcdot}(\bcdot)$ appearing in $\widehat\kappa^{\mathscr{M}_1} (a', C)$, $\widehat\zeta_j^{\mathscr{M}_1} (a', 0, C)$, and $\widehat{\varrho}_j^{\mathscr{M}_1} (a', M_j, C \, ; \, \mathcal{G}_M)$ should be correctly estimated in the collection of quantities $\mathscr{M}_{j, \, 1}$ such that 
\begin{itemize}
    \item $\mathscr{M}_{j, \, 1}$: The propensity scores $\pr (A = \bcdot \mid C = \bcdot)$, and the conditional density of the mediator $f_{M_{-j} \, \mid \, C, A}(\bcdot \mid C = \bcdot, A = \bcdot)$ and $f_{M_{-j} \, \mid \, C, A, M_j} (\bcdot \mid C = \bcdot, A = 1, M_j = \bcdot)$ are correctly specified for any $\mathcal{G}_M \in \operatorname{MEC}(\mathcal{C}_{M})$.
\end{itemize}
Define
\begin{equation}\label{def_varrho_0_1}
    \begin{aligned}
        & \overline{\pn} \widehat\varrho_{j \, ; \, 1- 0}^{\mathscr{M}_1} (M_j, C \, ; \, \mathcal{C}_M) = \frac{1}{\# \operatorname{MEC}({\mathcal{C}}_{ M})} \sum_{\mathcal{G}_{M} \in \operatorname{MEC}({\mathcal{C}}_{ M})} \pn \Big[\big\langle  \widehat\varrho_j^{\mathscr{M}_1} (\bcdot, M_j, C \, ; \, \mathcal{G}_M) \big\rangle \Big],
    \end{aligned}
\end{equation}
then, we can construct the estimators under $\mathscr{M}_{j, \, 1}$ is $\widehat{DM}_j^{\mathscr{M}_1} = \pn \Big[ \big\langle \widehat\zeta_j^{\mathscr{M}_1} (\bcdot, 0, C) \big\rangle  \Big]$,
\[
    \begin{aligned}
        \widehat{TM}_j^{\text{avg}, \, \mathscr{M}_1} & = \frac{1}{\# \operatorname{MEC}(\widehat{\mathcal{C}}_{ M})} \sum_{\mathcal{G}_{M} \in \operatorname{MEC}(\widehat{\mathcal{C}}_{ M})} \widehat{TM}_j^{\mathscr{M}_1} (\mathcal{G}_{M} ) \\
        & = \pn \big[\big\langle \widehat\kappa^{\mathscr{M}_1} (\bcdot, C) \big\rangle \big] - \overline{\pn} \widehat\varrho_{j \, ; \, 1- 0}^{\mathscr{M}_1} (M_j, C \, ; \, \widehat{\mathcal{C}}_M),
    \end{aligned}
\]
and $\widehat{IM}_j^{\text{avg}, \mathscr{M}_1} = \widehat{TM}_j^{\text{avg}, \mathscr{M}_1} - \widehat{DM}_j^{\mathscr{M}_1}$, where $\widehat{\mathcal{C}}_M$ is the estimated adjacency matrix
consistent with the true $\mathcal{C}_M$.
Here, the superscript $\mathscr{M}_1$ associated with these estimators signifies that their consistency relies on the correct specification of information in $\mathscr{M}_{j, \, 1}$. For clarity and where there is no risk of confusion, we will also employ $\mathscr{M}_1$ to represent the estimation methodology behind these estimators. In the subsequent two subsections, the notations $\mathscr{M}_{2}$ and $\mathscr{M}_{3}$ bear analogous meanings.

\subsubsection{Alternative Strategy 2}\label{sub_sec_alternative_strategy_2}
Similarly, we can verify that 
\begin{equation}\label{strategy_2_DM}
    \begin{aligned}
        \E \bigg[ \frac{\mathds{1}(A = 0)}{e_0(C)} \int_{\mathcal{M}_j} \mu(C, 1, m_j, M_{-j}) \pi_{a', C}(m_j) \, \mathrm{d} m_j \bigg] = \E \zeta_j (a', 0, C)
    \end{aligned}
\end{equation}
and 
\begin{equation}\label{strategy_2_IM}
    \begin{aligned}
         \E \bigg[ \frac{\mathds{1}(A = a')}{e_{a'} (C)}  \int_{\mathcal{M}_j} \mu(C, a',  \Pa_j(\mathcal{G}_M), m_j) \pi_{C}(m_j) \, \mathrm{d} m_j \bigg] = \E \varrho_j (a', M_j, C \, ; \, \mathcal{G}_M).
    \end{aligned}
\end{equation}
Thus, corresponding estimators take the forms
\[
    \begin{aligned}
        \widehat\zeta_j^{\mathscr{M}_2} (a', 0, C)  = \frac{\mathds{1}(A = 0)}{\widehat{e}_0(C)} \int_{\mathcal{M}_j} \widehat{\mu}(C, 1, m_j, M_{-j}) \widehat{\pi}_{C, a'}(m_j)\, \mathrm{d} m_j,
    \end{aligned}
\]
and
\[
    \widehat{\varrho}_j^{\mathscr{M}_2} (a', M_j, C \, ; \, \mathcal{G}_M) = \frac{\mathds{1}(A = a')}{\widehat{e}_{a'} (C)} \int_{\mathcal{M}_j}\widehat{\mu}(C, a', \Pa_j(\mathcal{G}_M), m_j) \widehat{\pi}_C(m_j) \, \mathrm{d} m_j
\]
with estimators $\widehat{e}_{\bcdot} (\bcdot)$, $\widehat\mu(\bcdot)$, and $\widehat\pi_{\bcdot} (\bcdot)$ appear in $\widehat\kappa^{\mathscr{M}_1} (a', C)$, $\widehat\zeta_j^{\mathscr{M}_2} (a', 0, C)$, and $\widehat{\varrho}_j^{\mathscr{M}_2} (a', M_j, C \penalty 0 \, ; \, \mathcal{G}_M)$. They use the information in $\mathscr{M}_{j, 2}$ such that
\begin{itemize}
    \item $\mathscr{M}_{j, \, 2}$: The propensity scores $\pr (A = \bcdot \mid C = \bcdot)$, conditional density of $j$-th mediator $f_{M_j \mid C, A}(\bcdot \mid C = \bcdot, A = \bcdot)$, and the conditional expectations $\E [Y \mid C = \bcdot, A = \bcdot, M = \bcdot]$ and $\E [Y \mid C = \bcdot, A = \bcdot, \Pa_{j} (\mathcal{G}_M) = \bcdot, M_j = \bcdot]$ are correctly specified for any $\mathcal{G}_M \in \operatorname{MEC}(\mathcal{C}_M)$.
\end{itemize}
\noindent Here we notice the fact that $\pi_{c}(m_j) = \pi_{c, 0}(m_j) + \pi_{c, 1}(m_j)$, and thus, $\widehat{\kappa}^{\mathscr{M}_1}(a', C)$ will also be consistent in $\mathscr{M}_{j, \, 2}$.
Then $\widehat{DM}_j^{\mathscr{M}_2} = \pn \big[ \big\langle \widehat\zeta_j^{\mathscr{M}_2} (\bcdot, 0, C) \big\rangle \big]$, 
\[
    \begin{aligned}
        \widehat{TM}_j^{\text{avg}, \, \mathscr{M}_2} = \pn \big[\big\langle \widehat\kappa^{\mathscr{M}_1} (\bcdot, C) \big\rangle \big] - \overline{\pn} \widehat\varrho_{j \, ; \, 1- 0}^{\mathscr{M}_2} (M_j, C \, ; \, \widehat{\mathcal{C}}_M),
    \end{aligned}
\]
and  \( \widehat{IM}_j^{\text{avg}, \, \mathscr{M}_2} = \widehat{TM}_j^{\text{avg}, \, \mathscr{M}_2} - \widehat{DM}_j^{\mathscr{M}_2} \) are consistent provided that the estimated adjacency matrix \( \widehat{\mathcal{C}}_M \) is consistent to \( \mathcal{C}_{M} \). \( \overline{\pn} \widehat\varrho_{j \, ; \, 1- 0}^{\mathscr{M}_2} (M_j, C \, ; \, \widehat{\mathcal{C}}_M) \) is similarly defined by substituting \( \mathscr{M}_1 \) with \( \mathscr{M}_2 \) as detailed in \eqref{def_varrho_0_1}.

\subsubsection{Alternative Strategy 3}\label{sub_sec_alternative_strategy_3}

The last strategy is based on the third representation of the functional as follows:
\begin{equation}\label{strategy_3_DM}
     \begin{aligned}
         \E \bigg[ \frac{\mathds{1}(A = a')}{e_{a'} (C)}  \int_{\mathcal{M}_{-j}} \mu(C, 1, M_j, m_{-j}) \pi_{C, 0}(m_{-j}) \, \mathrm{d} m_{-j} \bigg] = \E \zeta_j(a', 0, C),
     \end{aligned}
\end{equation}
and
\begin{equation}\label{strategy_3_IM}
     \begin{aligned}
         & \E \Big[ \E \big[ \varrho_j (a', M_j, C \, ; \, \mathcal{G}_M ) \mid C \big] \Big] = \E \varrho_j (a', M_j, C \, ; \, \mathcal{G}_M ). 
     \end{aligned}
\end{equation}
Similarly, we can consider the estimators
\[
    \widehat{\zeta}_j^{\mathscr{M}_3}(a', 0, C) = \frac{\mathds{1}(A = a')}{\widehat{e}_{a'}(C)} \int_{\mathcal{M}_{-j}} \widehat{\mu} (C, 1, M_{j}, m_{-j}) \widehat\pi_{C, 0} (m_{-j})\, \mathrm{d} m_{-j},
\]
and
\[
    \begin{aligned}
        & \widehat{\varrho}_j^{\mathscr{M}_3} (a', M_j, C \, ; \, \mathcal{G}_M) = \widehat\E \big[ \varrho_j (a', M_j, C \, ; \, \mathcal{G}_M ) \mid C \big] \\
        = & \int_{\mathcal{M}_{\pa_j} (\mathcal{G}_M)\, \cup \, \mathcal{M}_j} \widehat{\mu} \big(C, a', \pa_j(\mathcal{G}_M), m_j \big) \widehat\pi_{C, a'} \big(\pa_j(\mathcal{G}_M) \big) \widehat\pi_{C} (m_j)\, \mathrm{d} \pa_j (\mathcal{G}_M) \, \mathrm{d} m_j.
    \end{aligned}
\]
Thus, our estimators under the third identification formulas can be written as $\widehat{DM}_j^{\mathscr{M}_3} = \pn \big[ \widehat\zeta_j^{\mathscr{M}_3} (1, 0, C) - \widehat\zeta_j^{\mathscr{M}_3} (0, 0, C) \big]$, 
\[
    \begin{aligned}
        \widehat{TM}_j^{\text{avg}, \, \mathscr{M}_3} =\pn \big[\big\langle \widehat\kappa^{\mathscr{M}_1} (\bcdot, C) \big\rangle \big] - \overline{\pn} \widehat\varrho_{j \, ; \, 1- 0}^{\mathscr{M}_3} (M_j, C \, ; \, \widehat{\mathcal{C}}_M),
    \end{aligned}
\]
and $\widehat{IM}_j^{\text{avg}, \, \mathscr{M}_3} = \widehat{TM}_j^{\text{avg}, \, \mathscr{M}_3} - \widehat{DM}_j^{\mathscr{M}_3}$, where the estimated adjacency matrix $\widehat{\mathcal{C}}_M$ is consistent to $\mathcal{C}_M$, and \( \overline{\pn} \widehat\varrho_{j \, ; \, 1- 0}^{\mathscr{M}_3} (M_j, C \, ; \, \widehat{\mathcal{C}}_M) \) is  by replacing \( \mathscr{M}_1 \) with \( \mathscr{M}_3 \) in \eqref{def_varrho_0_1}. The estimators $\widehat{e}_{\bcdot} (\bcdot)$, $\widehat\mu(\bcdot)$, and $\widehat\pi_{\bcdot} (\bcdot)$ use the following information:
\begin{itemize}
    \item $\mathscr{M}_{j, \, 3}$: The propensity scores $\pr(A = \bcdot \mid C = \bcdot)$, the conditional densities $f_{M_{-j} \mid C, A}(\bcdot \mid C = \bcdot, A = \bcdot)$ and $f_{M_j \mid C}(\bcdot \mid C = \bcdot)$, and the conditional expectation $\E [Y \mid C = \bcdot, A = \bcdot, M = \bcdot]$ are correctly specified for any $\mathcal{G}_M \in \operatorname{MEC}(\mathcal{C}_M)$.
\end{itemize}
Here we note the fact that $\widehat{\kappa}^{\mathscr{M}_1}(a', C)$ will be consistent in $\mathscr{M}_{j, \, 3}$ again.

\subsection{Quadruply Robust Estimator}

Denote $\mathscr{M}_{j, \, \text{union}} := \mathscr{M}_0 \, \cup \, \mathscr{M}_{j, \, 1} \, \cup \, \mathscr{M}_{j, \, 2} \, \cup \, \mathscr{M}_{j, \, 3}$, then $\cup_{j = 1}^p \, \mathscr{M}_{j, \, \text{union}} \subsetneq \mathscr{M}_{\text{nonpar}}$, and $\widehat{DM}_j^{\mathscr{M}_{\ell}}$ and $\widehat{IM}_j^{\mathscr{M}_{ \ell}} (\mathcal{G}_M)$ are all mapping the estimated distribution $\widehat{F}_X$ to the true $DM_j$ and ${IM}_j (\mathcal{G}_M)$ defined in Definition \ref{def_med} for $\ell = 0, 1, 2, 3$ and any fixed $\mathcal{G}_M$,
since all these representations agree on the nonparametric model $\mathscr{M}_{\text{nonpar}}$. Therefore, we may conclude that both direct strategy and alternative strategies are in fact asymptotically efficient in $\mathscr{M}_{\text{nonpar}}$ with common scores $S^{\text {eff, nonpar }}(DM_j)$ and $S^{\text {eff, nonpar }} \big(IM_j (\mathcal{G}_M)\big)$. Furthermore, from this observation, one further concludes that (asymptotic) inferences obtained using one of the four representations are identical to inferences using either of the other three representations for a fixed $\mathcal{G}_M$. 
However, to achieve this, each strategy need exactly correctly specified for the conditional expectation and conditional density, i.e., correctly specified for the corresponding collections $\mathscr{M}_{j, \ell}$ in Section \ref{sec_straight_strategy}, \ref{sub_sec_alternative_strategy_1}, \ref{sub_sec_alternative_strategy_2}, and \ref{sub_sec_alternative_strategy_3}, where we denote $\mathscr{M}_{j, \, 0} \equiv \mathscr{M}_0$ for each $j \in [p]$. In general, $\widehat{DM}_j^{\mathscr{M}_{\ell}}$, $\widehat{IM}_j^{\mathscr{M}_{\ell}} (\mathcal{G}_M)$ fail to be consistent outside of the corresponding submodel $\mathscr{M}_{j, \, \ell}$ for each $\ell \in \{ 0, 1, 2, 3\}$.

Note that the alternative strategy 1 in Section \ref{sub_sec_alternative_strategy_1} in $\mathscr{M}_{j, \, 1}$ induces Inverse Probability Weighted (IPW) estimator. A commonly-used method is combining the direct strategy estimator in Section \ref{sec_straight_strategy} correctly specified with the model $\mathscr{M}_0$ and IPW estimator in \ref{sub_sec_alternative_strategy_1} with the model $\mathscr{M}_0$, and getting the double robust estimator.
But the double robust estimators only combine two estimation strategies, $\mathscr{M}_0$ and $\mathscr{M}_{j, \, 1}$, and ignore use other two alternative strategies. Hence, the double robust estimator may be inconsistent outside of $\mathscr{M}_0 \, \cup \, \mathscr{M}_{j, 1}$.   
To overcome this problem, we propose an approach that produces a quadruply robust estimator by \textit{combining the above all four strategies} as follows: 
\begin{itemize}
    \item $\widehat{DM}_j^{\text{QR}}$ solves
    \[
        \pn \widehat{S}^{\text{eff, nonpar}} (\widehat{DM}_j^{\text{QR}}) = 0;
    \]

    \item For a fixed DAG $\mathcal{G}_M$, $\widehat{IM}_j^{\text{QR}}(\mathcal{G}_M)$ solves
    \[
        \pn \widehat{S}^{\text{eff, nonpar}} (\widehat{IM}_j^{\text{QR}}(\mathcal{G}_M)) = 0,
    \]
\end{itemize}
where $\widehat{S}^{\text{eff, nonpar}} (\bcdot)$ is equal to ${S}^{\text{eff, nonpar}} (\bcdot)$ evaluated at the given consistent estimators $\widehat{e}_{a'}(\bcdot)$, $\widehat{\pi}_{\bcdot}(\bcdot)$, and $\widehat{\mu}(\bcdot)$ for all propensity scores, the conditional densities, and the conditional expectations appearing in ${S}^{\text{eff, nonpar}} (\bcdot)$. Denote
\[
    \widehat{\tau}_{\boldsymbol{\cdot} \, ; \, S} (C, a', M_{T}) := \int_{\mathcal{M}_{-T}} \widehat\mu (C, a', m_{-T}, M_{T}) \widehat\pi_{\boldsymbol{\cdot}} (m_{-T}) \, \mathrm{d} m_{-T},
\]
as the corresponding estimator for ${\tau}_{\boldsymbol{\cdot} \, ; \, S} (C, a', M_{T})$ defined in \eqref{def_tau},
then we have the following explicit expressions for the quadruply estimators as
\begin{equation}\label{DM_MR}
    \begin{aligned}
        & \widehat{DM}_j^{\text{QR}} := \widehat{DM}_j^{\mathscr{M}_0} + \mathbb{P}_n \Bigg[ \frac{\mathds{1}(A = 1)}{\widehat{e}_1(C)}  \frac{\langle \widehat\pi_{C, \boldsymbol{\cdot}}(M_{-j}) \rangle}{\widehat\pi_{C, 1, M_j}(M_{-j})} \Big[ Y - \widehat\mu (C, 1, M) \Big] \\
        & ~~~~~ + \frac{\mathds{1}(A = 0)}{\widehat{e}_0(C)} \left\langle \widehat\tau_{C, \boldsymbol\cdot \, ; \, j}(C, 1, M_{-j})  - \widehat{\zeta}_j^{\mathscr{M}_0} (\boldsymbol\cdot,  0, C) \right\rangle \\
        & ~~~~~ + \left\langle \frac{\mathds{1}(A = \boldsymbol\cdot)}{\widehat{e}_{\boldsymbol\cdot}(C)} \right\rangle \widehat{\tau}_{C, 0 \, ; \, -j} (C, 1, M_{j}) - \left\langle \frac{\mathds{1}(A = \boldsymbol{\cdot})}{\widehat{e}_{\boldsymbol{\cdot}}(C)} \widehat\zeta_j^{\mathscr{M}_0} (\boldsymbol{\cdot}, 0, C) \right\rangle\Bigg],
    \end{aligned}
\end{equation}
\begin{equation}\label{IM_MR}
    \begin{aligned}
        & \widehat{TM}_j^{\text{QR}} (\mathcal{G}_M) := \widehat{TM}_j^{\mathscr{M}_0} + 
        \mathbb{P}_n \Bigg[ \left\langle \frac{\mathds{1}(A = \boldsymbol{\cdot})}{\widehat{e}_{\boldsymbol{\cdot}}(C)} \Big( Y - \widehat{\mu} (C, \boldsymbol{\cdot}) \Big) \right\rangle \Bigg] \\
        & - \mathbb{P}_n \Bigg[ \left\langle \frac{\mathds{1} (A = \boldsymbol{\cdot})}{\widehat{e}_{\boldsymbol{\cdot}}(C)} \widehat\pi_{C, \boldsymbol{\cdot}} \big(\Pa_j (\mathcal{G}_M) \big) \Big[ Y - \widehat{\mu} (C, \boldsymbol{\cdot}, \Pa_j(\mathcal{G}_M), M_j)\Big] \right\rangle  \\
        & ~~~~~~~~~~~~~ + \left\langle \frac{\mathds{1} (A = \boldsymbol{\cdot})}{\widehat{e}_{\boldsymbol{\cdot}}(C)} \bigg( \widehat\tau_{C \, ; \, j} \big( C, \boldsymbol{\cdot}, \Pa_j (\mathcal{G}_M) \big) - \widehat{\E} \big[ \varrho_j (\boldsymbol{\cdot}, M_j, C\, ; \, \mathcal{G}_M) \mid C \big] \bigg) \right\rangle \Bigg],
    \end{aligned}
\end{equation}
and $\widehat{IM}_j^{\text{QR}} (\mathcal{G}_M) = \widehat{TM}_j^{\text{QR}} (\mathcal{G}_M) - \widehat{DM}_j^{\text{QR}}$. Then the quadruply estimator for indirect interventional mediation effect with a consistent estimated $\widehat{\mathcal{C}}_M$ is defined as
$\widehat{IM}_j^{\text{avg}, \, \text{QR}} = \frac{1}{\# \operatorname{MEC}(\widehat{\mathcal{C}}_{M})} \sum_{\mathcal{G}_{M} \in \operatorname{MEC}(\widehat{\mathcal{C}}_{ M})}  \penalty 0 \widehat{IM}_j^{\text{QR}} (\mathcal{G}_M)$. Compared to double robust estimators, our novel quadruply robust estimators can tolerate a higher degree of misspecification outside of $\mathscr{M}_0 \, \cup \, \mathscr{M}_{j, \, 1}$ and still achieve consistency. We will see this in Section \ref{sec_asy_mr}.

Subject to some mild regularity conditions, delineated in  Section \ref{sec_asy_mr}, our quadruply estimators are asymptotic normal and efficient. Thus, based on the semiparametric efficient scores, we get the score-based variance estimators for $\widehat{DM}_j^{\text{QR}}$ and $\widehat{IM}_j^{\text{avg}, \, \text{QR}}$ as
\[
    \widehat\var ( \widehat{DM}_j^{\text{QR}} ) := \frac{1}{n^2} \sum_{i = 1}^n \left[  \widehat{S}^{\text{eff, nonpar}} (\widehat{DM}_j^{\text{QR}}) - \widehat{DM}_j^{\text{QR}}\right]^2
\]
and
\[
    \begin{aligned}
        & \widehat\var ( \widehat{IM}_j^{\text{avg}, \, \text{QR}} ) \\
        & := \frac{1}{n^2} \sum_{i = 1}^n \left[ \frac{1}{\# \operatorname{MEC}({\mathcal{C}}_{M})} \sum_{\mathcal{G}_{M} \in \operatorname{MEC}({\mathcal{C}}_{ M})} \left( \widehat{S}^{\text{eff, nonpar}} \big(\widehat{IM}_j^{\text{QR}}(\mathcal{G}_M) \big) - \widehat{IM}_j^{\text{QR}}(\mathcal{G}_M) \right) \right]^2
    \end{aligned}
\]
correspondingly. However, in practical scenarios, confidence intervals (CIs) derived using the Wald-type method, especially when grounded on score-based variance estimators, tend to be more narrow \citep{boos2013essential}. This can potentially result in anti-conservatism. To achieve more concise statistical inference for our quadruply estimators, we consider utilizing the variances derived from the symmetric $t$-bootstrap approach \citep{hall1988symmetric} here. A pseudocode summarizing
the proposed algorithm for these quadruply estimators and their bootstrap CIs is given in Algorithm \ref{alg_QR}. 
The $\log n$ truncations in Algorithm \ref{alg_QR} aims to achieve the numerical stability, which is a technique widely recognized in statistical literature \citep{heckman1976common,sun2020adaptive,chinot2020robust}.
\begin{algorithm}[!htp]
    \caption{General algorithm for quadruply robust estimations}\label{alg_QR}
    \begin{flushleft}
        \textbf{INPUT:} The data $\{ X_i \}_{i = 1}^n$, the treatment index $t$, and the Monte Carlo sample size $N$.
    \end{flushleft}
\begin{algorithmic}[1] 
    \State Apply any suitable structure learning algorithm to learn the CPDAG of $X$ and obtain the corresponding estimated adjacency matrix $\widehat{\mathcal{C}}$, then set the estimated adjacency matrix of mediators as $\widehat{\mathcal{C}}_M = \left[\widehat{\mathcal{C}}_{kk}\right]_{k \in (t + 1):(t + p)}$.

    \State Using any proper estimating procedure to estimate the propensity score $e_{a'}(c) = \pr (A = a' \mid C = c)$ with $a' \in \{ 0, 1\}$.

    \State Using any proper estimation method to estimate the conditional densities $\pi_{\bcdot}(\bcdot)$ and conditional expectations $\mu(\bcdot)$.

    \State For estimating the causal effect of $j$-th mediator, set $\widehat{DM}_j^{\text{QR}} = \widehat{TM}_j^{\text{avg}, \, \text{QR}} = 0$.

    \For{$i \in [n]$}
        \State Sample $M_{\widehat{\pi}_{C_i, 1}}^{(1)}, \ldots, M_{\widehat{\pi}_{C_i, 1}}^{(N)}\, \overset{\text{i.i.d.}}{\sim} \, \widehat{\pi}_{C, 1} (m) := \widehat{\pi}_{C, 1}(m_j) \widehat{\pi}_{C, 1, m_j}(m_{-j})$, and similarly sample $M_{\bcdot,  \widehat{\pi}_{\bcdot_i}}^{(1)}, \ldots, M_{\bcdot,  \widehat{\pi}_{\bcdot_i}}^{(N)}$ from any other conditional densities displaying in \eqref{DM_MR}.

        \State $\widehat{DM}_j^{\text{QR}} \, \longleftarrow \, \widehat{DM}_j^{\text{QR}} + n^{-1} \widehat{QR}_{DM_j}^{\text{MC}}(X_i) \mathds{1} \big( |\widehat{QR}_{DM_j}^{\text{MC}}(X_i)|\leq \log n\big)$, where $\widehat{QR}_{DM_j}^{\text{MC}}(X_i)$ is defined in \eqref{DM_QR_MC}.

        \For{$\mathcal{G}_M \in \operatorname{MEC}(\widehat{\mathcal{C}}_M)$}
            \State Obtain $M_j$'s parent mediators $\Pa_j (\mathcal{G}_M) = \left[ \mathcal{G}_M^{\top}  \circ 1_p 1_p^{\top} \right]_{j:}$.

            \State Sample $M_{\bcdot,  \widehat{\pi}_{\bcdot_i}}^{(1)}, \ldots, M_{\bcdot,  \widehat{\pi}_{\bcdot_i}}^{(N)}$ from all conditional densities displaying in \eqref{IM_MR} and in $\widehat{TM}_j^{\mathscr{M}_0}$.

            \State $\widehat{TM}_j^{\text{QR}}(\mathcal{G}_M) \, \longleftarrow \, \widehat{TM}_j^{\text{QR}} (\mathcal{G}_M) + n^{-1} \widehat{QR}_{TM_j}^{\text{MC}}(X_i \, ; \, \mathcal{G}_M) \mathds{1} \big( |\widehat{QR}_{TM_j}^{\text{MC}}(X_i)|\leq \log n\big)$, where $\widehat{QR}_{TM_j}^{\text{MC}}(X_i \, ; \, \mathcal{G}_M)$ is defined in \eqref{TM_QR_MC}.
        \EndFor
    \EndFor
    
    \State \Return {$\widehat{DM}_j^{\text{QR}}$ and $\widehat{IM}_j^{\text{avg}, \, \text{QR}} = \frac{1}{\# \operatorname{MEC}(\widehat{\mathcal{C}}_M)} \sum_{\mathcal{G}_M \in \operatorname{MEC}(\widehat{\mathcal{C}}_M)} \widehat{TM}_j^{\text{QR}} (\mathcal{G}_M) - \widehat{DM}_j^{\text{QR}}$.}

    \State Symmetric \textit{t}-bootstrap \citep{hall1988symmetric} is applied to construct confidence intervals for $\sqrt{n} (\widehat{DM}_j^{\text{QR}} - DM_j)$ and $\sqrt{n} (\widehat{IM}_j^{\text{avg}, \, \text{QR}} - \overline{IM}_j)$.
\end{algorithmic}
\end{algorithm}

\subsection{Practical fast implement}

The formulas for the quadruply robust estimators, as shown in equations \eqref{DM_MR} and \eqref{IM_MR}, require several numerical integrals for each $i \in [n]$, which may be computationally demanding. To address this challenge, we purpose Algorithm \ref{alg_QR} in the above section, in which we employ the Monte Carlo method to evaluate these integrals. However, when the data partly satisfy the semi-linear structure and both $\epsilon_M$ and $\epsilon_Y$ adhere to a mean-zero Gaussian distribution, explicit expressions for these numerical integrals can be derived, facilitating faster computation. Indeed, if we assume the linear structure in $M \, \leftarrow \, C \oplus A \oplus M$ and denote the density (or mass) function of $\epsilon_M = (\epsilon_{M, 1}, \ldots, \epsilon_{M, p})^{\top}$ as $f(x) = f(x_1, \ldots, x_p)$, then the conditional density of $M$ given $C$ and $A = 1$ is $f\big(x - {\Theta}_{MC} C - \theta_{MA} \big)$ from \eqref{reg_exp}.
This allows us to compute
\begin{equation}\label{fast_imple_eq1}
	\begin{aligned}
		& \int_{\mathcal{M}_j} \mu (C, 1, m_j, M_{-j}) \pi_{C, 1} (m_j) \, \mathrm{d} m_j \\
		= & \int_{\mathcal{M}_j} \Big[ \beta_{YC} C + \alpha_{YA}  + \beta_{YM, j} m_j + \beta_{YM, -j}^{\top} M_{-j} \Big]  f\big(m_j - \big[{\Theta}_{MC} C \big]_j - \theta_{MA, j} \big)\, \mathrm{d} m_j \\
		= & \beta_{YC} C + \alpha_{YA}+ \beta_{YM, j} \Big\{ \big[\widehat{\Theta}_{MC} C \big]_j + \theta_{MA, j}\Big\} + \beta_{YM, -j}^{\top} M_{-j}.
	\end{aligned}
\end{equation}
Similarly, we can derive explicit expressions for some other integrals in equations \eqref{DM_MR} and \eqref{IM_MR} as long as the linear structure in $M \, \leftarrow \, C \oplus A \oplus M$ holds. One step more, when \( \epsilon_M \) is a mean-zero Gaussian distribution, any integral in \eqref{DM_MR} and \eqref{IM_MR} will have an explicit expression. This leads to a more efficient implementation of \eqref{DM_MR} and \eqref{IM_MR}. The following Algorithm \ref{alg_fast_QR} and Proposition \ref{pro_fast} elaborates on this. 

\begin{algorithm}[H]
    \caption{Fast implement algorithm for quadruply robust estimations}\label{alg_fast_QR}
    \begin{flushleft}
        \textbf{INPUT:} The data $\{ X_i \}_{i = 1}^n$ and the treatment index $t$.
    \end{flushleft}
\begin{algorithmic}[1] 
    \State Apply any suitable structure learning algorithm (such as GES or PC) to learn the CPDAG of $X$ and obtain $\widehat{\mathcal{C}}$, then obtain the CPDAG of mediators $\left[\widehat{\mathcal{C}}_{kk}\right]_{k \in (t + 1):(t + p)}$.

    \State Using any proper estimating procedure to estimate the propensity score $e_1(c) = \pr (A = 1 \mid C = c)$ with $\widehat{e}_1(c)$ and $\widehat{e}_0(c) = 1 - \widehat{e}_1(c)$.
    
    \State Regress $M$ on $(C^{\top}, A)^{\top}$ obtain OLS estimator \( (\widehat{\Theta}_{MC}, \widehat\theta_{MA})^{\top} \) and the estimated covariance of error term $\widehat{\var}(e_M)$.
    
    \State Regress $Y$ on $(C^{\top}, A, M^{\top})^{\top}$ and $(C^{\top}, A, M^{\top})^{\top}$, obtain OLS estimators \( (\widehat\beta_{YC}^{\top}, \widehat\alpha_{YC}, \widehat\beta_{YM}^{\top})^{\top} \) and $(\widehat\eta_{YA}^{\dag}, \widehat\gamma_{YC}^{\dag \top})^{\top}$ correspondingly.

    \For{$j \in [p]$}
        \State Set $\widehat{DM}_j^{\text{QR}} = \widehat\beta_{YM, j} \widehat\theta_{MA, j}$ and $\widehat{TM}_j^{\text{avg}, \, \text{QR}} = 0$.

         \For{$i \in [n]$}
             \State $\widehat{DM}_j^{\text{QR}} \, \longleftarrow \, \widehat{DM}_j^{\text{QR}} + n^{-1} \widehat{QR}_{DM_j}^{\text{fast}}(X_i)$, where $\widehat{QR}_{DM_j}^{\text{fast}}(X_i)$ is defined in \eqref{DM_QR_fast}.

             \For{$\mathcal{G}_M \in \operatorname{MEC}(\widehat{\mathcal{C}}_M)$}
                 \State Obtain $M_j$'s parent mediators $\Pa_j (\mathcal{G}_M) = \left[ \mathcal{G}_M^{\top}  \circ 1_p 1_p^{\top} \right]_{j:}$.

                 \State Regress $Y$ on $(M_j,  \Pa_j(\mathcal{G}_M)^{\top},  A, C)^{\top}$ and get OLS estimator $(\widehat\eta_{Y M_j}, \widehat\gamma_{Y \Pa_j(\mathcal{G}_M)}^{\top}, \widehat\eta_{YA}, \widehat\gamma_{YC}^{\top})^{\top}$.
                 
                 \State $\widehat{TM}_j^{\text{QR}} (\mathcal{G}_M) \, \longleftarrow \, n^{-1} \left( \widehat\eta_{Y M_j} + \widehat{QR}_{TM_j}^{\text{fast}}(X_i; \mathcal{G}_M) \right)$, where $\widehat{QR}_{TM_j}^{\text{fast}}(X_i, \mathcal{G}_M)$ is defined in \eqref{IM_QR_fast}.
             \EndFor
        \EndFor
        \State \Return {$\widehat{DM}_j^{\text{QR}}$ and $\widehat{IM}_j^{ \text{avg}, \, \text{QR}} = \frac{1}{\# \operatorname{MEC}(\widehat{\mathcal{C}}_M)} \sum_{\mathcal{G}_M \in \operatorname{MEC}(\widehat{\mathcal{C}}_M)} \widehat{TM}_j^{\text{QR}} (\mathcal{G}_M) - \widehat{DM}_j^{\text{QR}}$ with their symmetric \textit{t}-bootstrap CIs.}
    \EndFor
\end{algorithmic}
\end{algorithm}

\begin{pro}\label{pro_fast}
    Assume that at least one linear structure in Assumption \ref{ass_slsem} holds, and that \( \epsilon_M \) and \( \epsilon_Y \) are both mean-zero Gaussian distributed, Algorithm \ref{alg_fast_QR} produces valid quadruply robust estimators $\big\{ \widehat{DM}_j^{\text{QR}}, \widehat{IM}_j^{\text{avg}, \, \text{QR}} \big\}_{j = 1}^p$ as defined in Section \ref{sec_mr}.
\end{pro}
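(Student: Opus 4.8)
The plan is to prove the proposition as a purely computational identity: I would show that, under the stated hypotheses, every integral appearing in the quadruply robust estimators \eqref{DM_MR} and \eqref{IM_MR} admits a closed-form expression, and that the summands $\widehat{QR}_{DM_j}^{\text{fast}}(X_i)$ and $\widehat{QR}_{TM_j}^{\text{fast}}(X_i \, ; \, \mathcal{G}_M)$ used in Algorithm \ref{alg_fast_QR} are exactly these closed forms. Once this is established, the sample averages formed in the algorithm reproduce $\widehat{DM}_j^{\text{QR}}$ and $\widehat{TM}_j^{\text{QR}}(\mathcal{G}_M)$ term by term, and averaging over $\operatorname{MEC}(\widehat{\mathcal{C}}_M)$ returns $\widehat{IM}_j^{\text{avg}, \, \text{QR}}$ exactly as defined in Section \ref{sec_mr}. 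The first step is therefore to enumerate the distinct integral types occurring in \eqref{DM_MR} and \eqref{IM_MR}: the fully marginalized $\widehat{\zeta}_j^{\mathscr{M}_0}$, the partially marginalized $\widehat{\tau}$-type quantities such as $\widehat{\tau}_{C, \bcdot \, ; \, j}(C, 1, M_{-j})$ and $\widehat{\tau}_{C, 0 \, ; \, -j}(C, 1, M_j)$, the parent-marginalized $\widehat{\tau}_{C \, ; \, j}(C, \bcdot, \Pa_j(\mathcal{G}_M))$ together with $\widehat{\E}[\varrho_j(\bcdot, M_j, C \, ; \, \mathcal{G}_M) \mid C]$, and finally the pointwise density ratio $\langle \widehat{\pi}_{C, \bcdot}(M_{-j}) \rangle / \widehat{\pi}_{C, 1, M_j}(M_{-j})$, which is evaluated rather than integrated.

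The key computational step supplies the two distributional facts that drive everything. Under the linear mediator structure in Assumption \ref{ass_slsem} and Gaussian $\epsilon_M$, the reduced form \eqref{reg_exp} gives $M \mid C, A \sim \mathcal{N}(\Theta_{MC} C + \theta_{MA} A, \, \var(e_M))$, and since this joint law is Gaussian, every marginal and conditional that appears --- $M_{-j} \mid C, A$, $M_{-j} \mid C, A, M_j$, $M_j \mid C$, and $\Pa_j(\mathcal{G}_M) \mid C, A$ --- is again Gaussian with mean and covariance expressible through $\widehat{\Theta}_{MC}$, $\widehat{\theta}_{MA}$, and $\widehat{\var}(e_M)$ computed in step 3 of Algorithm \ref{alg_fast_QR}. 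This yields two consequences. First, the integral of a linear function against any such density equals that linear function evaluated at the conditional mean; this is precisely the prototype computation \eqref{fast_imple_eq1}, and it disposes of all the $\zeta$- and $\tau$-type integrals. Second, the conditional and marginal densities needed pointwise have explicit Gaussian values, so the ratio $\langle \widehat{\pi}_{C, \bcdot}(M_{-j}) \rangle / \widehat{\pi}_{C, 1, M_j}(M_{-j})$ is a closed-form expression in the fitted Gaussian parameters. Under the linear outcome structure and Gaussian $\epsilon_Y$, $\mu(C, A, M) = \E[Y \mid C, A, M]$ is exactly the linear regression function estimated in step 4, while the reduced conditional expectation $\E[Y \mid C, A, \Pa_j(\mathcal{G}_M), M_j]$ is again linear and is supplied by the regression of $Y$ on $(M_j, \Pa_j(\mathcal{G}_M)^{\top}, A, C)^{\top}$ performed within the $\operatorname{MEC}$ loop of Algorithm \ref{alg_fast_QR}; this identifies the leading coefficient $\widehat{\eta}_{Y M_j}$ appearing in $\widehat{TM}_j^{\text{QR}}(\mathcal{G}_M)$.

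With these facts in hand, I would match the two estimators integral by integral. In \eqref{DM_MR}, replacing $\widehat{\tau}_{C, \bcdot \, ; \, j}(C, 1, M_{-j})$, $\widehat{\tau}_{C, 0 \, ; \, -j}(C, 1, M_j)$, and $\widehat{\zeta}_j^{\mathscr{M}_0}(\bcdot, 0, C)$ by the corresponding linear-function-at-the-mean evaluations, and the density ratio by its Gaussian value, collects exactly into $\widehat{QR}_{DM_j}^{\text{fast}}(X_i)$ of \eqref{DM_QR_fast}; likewise, replacing $\widehat{\tau}_{C \, ; \, j}(C, \bcdot, \Pa_j(\mathcal{G}_M))$ and $\widehat{\E}[\varrho_j(\bcdot, M_j, C \, ; \, \mathcal{G}_M) \mid C]$ in \eqref{IM_MR} by their closed forms yields $\widehat{QR}_{TM_j}^{\text{fast}}(X_i \, ; \, \mathcal{G}_M)$ of \eqref{IM_QR_fast}. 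The $\langle \cdot \rangle$ and $\mathds{1}(A = \bcdot)/\widehat{e}_{\bcdot}(C)$ bookkeeping carries over verbatim since only the inner integrals change. Summing over $i$ and averaging over $\operatorname{MEC}(\widehat{\mathcal{C}}_M)$ then establishes the claimed equality with the Section \ref{sec_mr} estimators.

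The main obstacle I anticipate is twofold and concentrated in the terms that genuinely require Gaussianity rather than mere mean-zero errors. The delicate computation is the conditional law $M_{-j} \mid C, A = 1, M_j$: obtaining $\widehat{\pi}_{C, 1, M_j}(M_{-j})$ in closed form requires the Gaussian conditioning formula, so its mean is affine in $M_j$ and its covariance is the Schur complement of $\widehat{\var}(e_M)$, while the marginal $\widehat{\pi}_C(m_j) = f_{M_j \mid C}$ must be handled as a Gaussian mixture over $A \mid C$; keeping these parameterizations consistent with the regressions actually run in the algorithm is the fussiest part of the verification. The second subtlety is the role of the hypothesis that \emph{at least one} linear structure holds: the closed-form evaluations above are valid for the plugged-in linear-Gaussian nuisances regardless of the truth, so the computational identity with \eqref{DM_MR}--\eqref{IM_MR} holds unconditionally, whereas the linear-structure assumption is what guarantees that at least one of the submodels $\mathscr{M}_0, \mathscr{M}_{j, 1}, \mathscr{M}_{j, 2}, \mathscr{M}_{j, 3}$ is correctly specified, so that the object produced is a genuinely \emph{valid} (consistent) quadruply robust estimator in the sense made precise in Section \ref{sec_asy_mr}.
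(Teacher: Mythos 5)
Your proposal is correct and follows essentially the same route as the paper's proof: exploiting the Gaussian reduced form of \eqref{reg_exp} to evaluate every $\zeta$-, $\tau$-, and $\varrho$-type integral as a linear function at a (conditional) Gaussian mean, handling $\widehat\pi_{C,1,M_j}(M_{-j})$ via Gaussian conditioning and $\widehat\pi_C(m_j)$ as the mixture $e_0(C)\pi_{C,0}(m_j)+e_1(C)\pi_{C,1}(m_j)$, and matching the resulting closed forms term by term against \eqref{DM_QR_fast} and \eqref{IM_QR_fast}. Your closing observation about the role of the ``at least one linear structure'' hypothesis --- that the closed-form identity is unconditional while consistency is inherited from the quadruple robustness of Theorem \ref{thm_QR_nor} --- is exactly the remark with which the paper's proof ends.
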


In practical scenarios where the sample size \( n \) is sufficiently large, it becomes reasonable to treat the sample means $\overline{\epsilon}_M := \frac{1}{n} \sum_{i = 1}^n \epsilon_{M, i}$ and $\overline{\epsilon}_Y := \frac{1}{n} \sum_{i = 1}^n \epsilon_{Y, i}$ as if they follow mean-zero Gaussian distributions. This permits the utilization of Proposition \ref{pro_fast}, particularly when empirical evidence can support the linear structural relationships for \( M \, \leftarrow \, C \oplus A \oplus M \) or \( Y \, \leftarrow \, C \oplus A \oplus M \).

\section{Asymptotic Behavior}\label{sec_asy}

In this section, we first give the asymptotic properties of the OLS estimators when the model satisfies Assumption \ref{ass_slsem}. Then we will establish the asymptotic normality of quadruply robust estimators, allowing the model misspecification. 

\subsection{Asymptotic Properties of OLS estimators}\label{sec_asy_ols}
In section \ref{sec_ols_est}, we highlighted that given the causal structure is appropriately specified as semi-linear according to Assumption \ref{ass_slsem}, one can employ OLS estimators by just applying two simple regressions. As we allow the number of mediators $p$ can grow with sample size $n$, some assumptions are required. The following assumptions come from \cite{portnoy1984asymptotic} and \cite{portnoy1985asymptotic}. 
They control the behavior of minimum eigenvalue will hold in probability if the observations are a sample from an appropriate distribution in $\mathbb{R}^p$. Denote the error vector as $\epsilon = (\epsilon_A^{\top}, \epsilon_M^{\top}, \epsilon_Y^{\top})^{\top}$.

\begin{ass}\label{ass_error_dis}
    (\textit{Assumptions for Error Distributions}) $\epsilon$ is marginal sub-Gaussian with finite Orlicz norm (Definition (6.18) in \cite{wainwright2019high}).
\end{ass}

\begin{ass}\label{ass_nor_eigen}
    (\textit{Restricted Eigenvalue Condition}) $\lim_{n \rightarrow \infty} \var \big( Y \mid C \, \cup \, A \, \cup \, M\big) > 0$ and $\lim_{n \rightarrow \infty } \E \var (M \mid C \, \cup \, A) \succ 0$.
\end{ass}

We use the bold symbol $\mathbf{X}$ to represent the data matrix of any i.i.d. random observations $\{ X_i \}_{i = 1}^n $. i.e. $\mathbf{X} = [X_1, \ldots, X_n]^{\top}$. Denote the transformation $\widehat\Gamma_{\bcdot, \bcdot}: \mathbb{R}^{n \times p_1} \times \mathbb{R}^{n \times p_2} \rightarrow \mathbb{R}^{p_1 \times n}$ of two data matrix with sample size $n$ as 
\begin{equation*}
    \widehat\Gamma_{X, Z} := \big[ \mathbf{X}^{\top} (I_n - P_{\mathbf{Z}}) \mathbf{X} \big]^{-1} \mathbf{X}^{\top} (I_n - P_{\mathbf{Z}}),
\end{equation*}
where $P_{\mathbf{Z}} = \mathbf{Z} (\mathbf{Z}^{\top} \mathbf{Z})^{-1} \mathbf{Z}^{\top} \in \mathbb{R}^{n \times n}$ is the projection matrix of $\mathbf{Z}$. This transformation streamlines our representation of the asymptotics for our OLS estimators.

\begin{theorem}\label{thm_CI_DE_IE_DM}
    Suppose the model satisfies Assumptions \ref{ass_structure}, \ref{ass_slsem}, \ref{ass_error_dis}, and \ref{ass_nor_eigen}. Let $\{ \widehat{e}_{M, i} \}_{i = 1}^n $ and $\{ \widehat{\epsilon}_{Y, i} \}_{i = 1}^n$ be the residuals from the OLS estimator in regression \eqref{reg_exp}.
    Then for any $\alpha \in (0, 1)$, we have 
    \[
        \lim_{n \rightarrow \infty}\pr \Bigg( \big| \sqrt{n}(\widehat{DE}^{\text{OLS}} - DE)  \big| \leq \Phi^{-1} (1 - \alpha / 2) \sqrt{\widehat{\Gamma}_{A, (M, C)} \widehat{\Gamma}_{A, (M, C)}^{\top} \sum_{i = 1}^n \widehat\epsilon_{Y, i}^2} \Bigg) = 1 - \alpha
    \]
    Furthermore, denote $\widehat{\Sigma}_{\beta_{YM}} = \sum_{i = 1}^n \widehat\epsilon_{Y, i}^2 \widehat{\Gamma}_{M, (C, A)} \widehat{\Gamma}_{M, (C, A)}^{\top}$ and $\widehat{\Sigma}_{\theta_{MA}} = \sum_{i = 1}^n \widehat{e}_{M, i} \widehat{e}_{M, i}^{\top} \widehat\Gamma_{A, C} \widehat\Gamma_{A, C}^{\top}$, then 
    \[
        \lim_{n \rightarrow \infty}\pr \Bigg( \big| \sqrt{n}(\widehat{IE}^{\text{OLS}} - IE)  \big| \leq \Phi^{-1} (1 - \alpha / 2)\sqrt{\widehat\beta_{YM}^{\top} \widehat{\Sigma}_{\theta_{MA}} \widehat\beta_{YM} + \widehat\theta_{MA}^{\top} \widehat{\Sigma}_{\beta_{YM}} \widehat\theta_{MA}} \Bigg) 
        \geq 1 - \alpha,
    \]
    and
    \[
        \lim_{n \rightarrow \infty}\pr \Bigg( \big| \sqrt{n}(\widehat{DM}_j^{\text{OLS}} - DM_j)  \big| \leq \Phi^{-1} (1 - \alpha / 2)\sqrt{\widehat\beta_{YM, j}^2\widehat{\Sigma}_{\theta_{MA}, jj} + \widehat\theta_{MA, j}^2 \widehat{\Sigma}_{\beta_{YM}, jj}} \Bigg) 
        \geq 1 - \alpha
    \]
    for any $j \in [p]$.
\end{theorem}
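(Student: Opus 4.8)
The plan is to reduce each estimand to the regression coefficients identified in Proposition \ref{thm_par_exp}, to express each OLS coefficient through the Frisch--Waugh--Lovell partialling-out identity encoded by the transformation $\widehat\Gamma_{\bcdot,\bcdot}$, and then to feed the resulting (linear or bilinear) error expansions into a central limit theorem valid for a growing number of mediators.

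First, by Proposition \ref{thm_par_exp} we have $DE=\alpha_{YA}$, $IE=\beta_{YM}^{\top}\theta_{MA}$ and $DM_j=\beta_{YM,j}\theta_{MA,j}$, so it suffices to analyze $\widehat\alpha_{YA}$, $\widehat\beta_{YM}$ and $\widehat\theta_{MA}$. Applying Frisch--Waugh--Lovell to the two regressions in \eqref{reg_exp}, the partialled-out design columns are annihilated by $I_n-P_{\mathbf Z}$, leaving the expansions
\[
\widehat\alpha_{YA}=\alpha_{YA}+\widehat\Gamma_{A,(M,C)}\,\boldsymbol\epsilon_Y,\quad \widehat\beta_{YM}=\beta_{YM}+\widehat\Gamma_{M,(C,A)}\,\boldsymbol\epsilon_Y,\quad \widehat\theta_{MA}=\theta_{MA}+\big(\widehat\Gamma_{A,C}\,\mathbf e_M\big)^{\top}.
\]
The $DE$ statement is then the exact case: $\sqrt n(\widehat{DE}^{\text{OLS}}-DE)=\sqrt n\,\widehat\Gamma_{A,(M,C)}\boldsymbol\epsilon_Y$ is a design-weighted sum of the i.i.d.\ mean-zero sub-Gaussian errors $\epsilon_{Y,i}$. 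Conditioning on the design, a Lindeberg central limit theorem (whose negligibility condition follows from the Orlicz-norm control of Assumption \ref{ass_error_dis} together with the eigenvalue lower bound of Assumption \ref{ass_nor_eigen}, giving Portnoy-type control of $\widehat\Gamma_{A,(M,C)}$) yields asymptotic normality with conditional variance $\big(\tfrac1n\sum_i\epsilon_{Y,i}^2\big)\,\widehat\Gamma_{A,(M,C)}\widehat\Gamma_{A,(M,C)}^{\top}\cdot n$. Since $\epsilon_Y$ is homoscedastic under Assumption \ref{ass_slsem}, substituting the residuals $\widehat\epsilon_{Y,i}$ is consistent, the Studentized statistic converges to $N(0,1)$, and coverage is exactly $1-\alpha$.

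For $IE$ and $DM_j$ the estimator is a product, so I would expand
\[
\widehat\beta_{YM}^{\top}\widehat\theta_{MA}-\beta_{YM}^{\top}\theta_{MA}=(\widehat\beta_{YM}-\beta_{YM})^{\top}\theta_{MA}+\beta_{YM}^{\top}(\widehat\theta_{MA}-\theta_{MA})+(\widehat\beta_{YM}-\beta_{YM})^{\top}(\widehat\theta_{MA}-\theta_{MA}),
\]
with the analogous scalar expansion for $DM_j$. The conceptual crux is that the two leading terms are \emph{asymptotically uncorrelated}: by \eqref{reg_exp} the first is a linear functional of $\boldsymbol\epsilon_Y$ while the second is a linear functional of $\mathbf e_M=(I-B_{MM}^{\top})^{-1}\boldsymbol\epsilon_M$, and since $\epsilon_Y\perp\epsilon_M$ with $\E[\epsilon_Y\mid C,A,M]=0$, conditioning on the full design annihilates the cross-covariance. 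Hence the limiting variance is the \emph{additive} form $\theta_{MA}^{\top}\Sigma_{\beta_{YM}}\theta_{MA}+\beta_{YM}^{\top}\Sigma_{\theta_{MA}}\beta_{YM}$ with no cross term, which the sandwich estimators $\widehat\Sigma_{\beta_{YM}}$ and $\widehat\Sigma_{\theta_{MA}}$ estimate consistently by Slutsky together with $\widehat\beta_{YM}\to\beta_{YM}$ and $\widehat\theta_{MA}\to\theta_{MA}$; restricting to coordinate $j$ gives the $DM_j$ statement.

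The main obstacle is the high-dimensional regime $p=p_n\to\infty$, where I must (i) prove a central limit theorem for the one-dimensional projections $\theta_{MA}^{\top}(\widehat\beta_{YM}-\beta_{YM})$ and $\beta_{YM}^{\top}(\widehat\theta_{MA}-\theta_{MA})$ when the contracting vectors themselves grow, and (ii) show the bilinear remainder $(\widehat\beta_{YM}-\beta_{YM})^{\top}(\widehat\theta_{MA}-\theta_{MA})$ is negligible after scaling by $\sqrt n$. Both rest on the sub-Gaussian control of Assumption \ref{ass_error_dis} and the restricted-eigenvalue bounds of Assumption \ref{ass_nor_eigen}, which pin $\|\widehat\beta_{YM}-\beta_{YM}\|$ and $\|\widehat\theta_{MA}-\theta_{MA}\|$ at the $O_p(\sqrt{p/n})$ rate and force the interaction term to vanish. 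Because this remainder is bounded rather than evaluated exactly at finite $p_n$, the guaranteed level for the two product functionals is the conservative $\ge 1-\alpha$, tightening to exact coverage once the interaction term is negligible.
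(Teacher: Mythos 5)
Your expansions and variance calculus match the paper's proof almost exactly: the paper also reduces $DE$, $IE$, $DM_j$ to the coefficients via Proposition \ref{thm_par_exp}, uses the partialling identity $\widehat\alpha_{YA}-\alpha_{YA}=\widehat\Gamma_{A,(M,C)}\boldsymbol\epsilon_Y$, $\widehat\beta_{YM}-\beta_{YM}=\widehat\Gamma_{M,(C,A)}\boldsymbol\epsilon_Y$, $\widehat\theta_{MA}-\theta_{MA}=\widehat\Gamma_{A,C}\mathbf{e}_M$ (packaged there as a lemma on partialled-out OLS), exploits homoscedasticity of $\epsilon_Y$ and the fact that $\var(e_M\mid C,A)$ is design-free, and obtains the additive limiting variance with no cross term from $\epsilon_Y\indep e_M$ --- exactly your joint-normality argument. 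The $DE$ part of your proposal is therefore complete and correct.

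However, there is a genuine gap in your explanation of why $IE$ and $DM_j$ only get ``$\geq 1-\alpha$.'' You attribute the inequality to the bilinear remainder $(\widehat\beta_{YM}-\beta_{YM})^{\top}(\widehat\theta_{MA}-\theta_{MA})$ being merely \emph{bounded} in the growing-$p$ regime, ``tightening to exact coverage once the interaction term is negligible.'' That is not the source of the conservativeness, and your argument fails precisely in the case that forces the inequality: the degenerate null $\theta_{MA}=\beta_{YM}=0$ (resp.\ $\theta_{MA,j}=\beta_{YM,j}=0$). There the two \emph{leading} linear terms in your product expansion vanish identically --- the contracting vectors are zero, so your one-dimensional-projection CLT is vacuous --- and the ``remainder'' $\sqrt n(\widehat\beta_{YM}-\beta_{YM})^{\top}(\widehat\theta_{MA}-\theta_{MA})$ is the entire statistic, not a negligible correction. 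The paper handles this by writing the Studentized quantity as
\[
\frac{\sqrt{n}\,\widehat\theta_{MA}^{\top}\,\sqrt{n}\,\widehat\beta_{YM}}{\sqrt{\sqrt{n}\widehat\beta_{YM}^{\top}\widehat\Sigma_{\theta_{MA}}\sqrt{n}\widehat\beta_{YM}+\sqrt{n}\widehat\theta_{MA}^{\top}\widehat\Sigma_{\beta_{YM}}\sqrt{n}\widehat\theta_{MA}}}
\]
and applying the continuous mapping theorem: the limit is the self-normalized ratio $Z_\theta^{\top}Z_\beta\big/\sqrt{Z_\theta^{\top}\Sigma_\beta Z_\theta+Z_\beta^{\top}\Sigma_\theta Z_\beta}$ with independent Gaussian $Z_\theta,Z_\beta$, which is non-Gaussian and strictly more concentrated around zero than $\mathcal N(0,1)$ (as argued in \cite{chakrabortty2018inference}); this over-concentration is what yields ``$\geq$,'' with exact $=\,1-\alpha$ in the non-degenerate case where your delta-method argument does apply. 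To repair your proof you must split into the two cases and establish the non-Gaussian limit and its stochastic-domination property in the null case, rather than appealing to remainder negligibility.
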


\noindent The above theorem ensures that under mild conditions we can construct valid confidence intervals for $\widehat{DE}^{\text{OLS}}$, $\widehat{IE}^{\text{OLS}}$, and $\widehat{DM_j}^{\text{OLS}}$ when $n$ is large enough. It is worthy to note that the probabilities for $\widehat{IE}^{\text{OLS}}$ and $\widehat{DM_j}^{\text{OLS}}$ is $\geq$ instead of $=$.
This distinction arises from the dual nature of the limiting distributions for these two OLS estimators: one is the standard normal, the other is not.
However, as argued in \cite{chakrabortty2018inference}, the non-standard asymptotic distributions here are more conservative than $\mathcal{N} (0, 1)$. Thus, we obtain $\geq$ instead of $=$. 
The details can be found in the proof. Notably, these asymptotic confidence intervals can be derived concurrently with the regression estimators and residuals. When applying the regression to procure these estimators, no additional steps are needed to obtain these confidence intervals.

For the estimators $\widehat{IM}_j^{\text{OLS}}$, additional assumptions are needed due to their reliance on the unknown DAG structure. This necessitates consistent CPDAG estimation, as well as more strong sparsity assumptions and restricted eigenvalue conditions, which are common in high-dimensional settings \citep{portnoy1985asymptotic, van2014asymptotically, zhang2014confidence, chakrabortty2018inference}.

\begin{ass}\label{ass_nor_cpdag}
    (\textit{Structure learning consistency}) Consistency of learning structure: $\pr (\widehat{\mathcal{C}}_{M} \neq \mathcal{C}_{M}) \longrightarrow 0$.
\end{ass}

\begin{ass}\label{ass_nor_adj}
    The sparsity of maximum degree in $\mathcal{C}_{ M}$, $\max_{j \in [p]} q_j = \max_{j \in [p]} \penalty 0 |\operatorname{adj}  (M_j)| = O(n^{1 - b_1})$ for some $0 < b_1 \leq 1$.
\end{ass}

\begin{ass}\label{ass_nor_ele}
    $\lim_{n \rightarrow \infty} \max_{j \in [p]} n^{-1 / 2}\left\{q_{j}+\log \left(L_{\mathrm{distinct}, j}\right)\right\} = 0$, where $L_{\text {distinct}, j}$ is the number of distinct elements of the set $\{{X}_{S_{j 1}}, \penalty 0 \ldots, {X}_{S_{j L_{j}}}\} = \big\{ (M_{j}, \Pa_j(\mathcal{G}_{M}), A, C)^{\top}: \mathcal{G}_{M}\in \operatorname{MEC}(\mathcal{C}_{M})\big\}$.
\end{ass}

\begin{ass}\label{ass_nor_egv_pa}
    $\lim_{n \rightarrow \infty} \min_{j \in [p]} \var (Y \mid \operatorname{adj}(M_j) \,  \cup \, C \, \cup \, M_j ) > 0$ and $\lim_{n \rightarrow \infty} \penalty 0 \min_{j \in [p]} \E \var (M_j \mid \operatorname{adj}(M_j) \, \cup \, C) > 0$.
\end{ass}

\begin{theorem}\label{thm_CI_IM}
    Suppose 
    Assumption \ref{ass_structure}, \ref{ass_slsem}, \ref{ass_error_dis}, \ref{ass_nor_eigen} and Assumption \ref{ass_nor_cpdag}, \ref{ass_nor_adj}, \ref{ass_nor_ele}, \ref{ass_nor_egv_pa} 
    hold, then
    \[
        \lim_{n \rightarrow \infty} \pr \Big( \sqrt{n} \big| \widehat{IM}_j^{OLS} - IM_j \big| \geq \widehat\sigma_{\overline{IM}_j} \Phi^{-1} (1 - \alpha / 2)  \Big) \geq 1 - \alpha
    \]
    for any $\alpha \in (0, 1)$. The explicit formula for $\widehat\sigma_{\overline{IM}_j}^2$ can be found in \eqref{est_sigma_IM} in Appendix \ref{proof_A}.
\end{theorem}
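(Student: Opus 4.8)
\emph{The plan} is to establish the limiting law of the averaged OLS estimator and match its scale to $\widehat\sigma_{\overline{IM}_j}$, so that the target
\[
    \lim_{n \rightarrow \infty} \pr \Big( \sqrt{n} \big| \widehat{IM}_j^{\text{OLS}} - IM_j \big| \geq \widehat\sigma_{\overline{IM}_j} \Phi^{-1} (1 - \alpha / 2) \Big) \geq 1 - \alpha
\]
can be read off from the tail of that law. Throughout I identify the estimator with the identifiable averaged version $\widehat{IM}_j^{\text{avg},\text{OLS}}$ and the estimand with $\overline{IM}_j$, the quantity for which uniqueness holds by Proposition~\ref{thm_unique}.

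\emph{Step 1 (reduction to the true class).} By the structure-learning consistency of Assumption~\ref{ass_nor_cpdag}, $\pr(\widehat{\mathcal{C}}_M \neq \mathcal{C}_M) \to 0$, so on an event of probability tending to one we may replace $\operatorname{MEC}(\widehat{\mathcal{C}}_M)$ by the fixed set $\operatorname{MEC}(\mathcal{C}_M)$. It therefore suffices to analyse the fixed-graph estimators $\{\widehat{IM}_j^{\text{OLS}}(\mathcal{G}_M)\}_{\mathcal{G}_M \in \operatorname{MEC}(\mathcal{C}_M)}$ and their average.

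\emph{Step 2 (large-$p$ joint normality of the ingredients).} By Proposition~\ref{thm_par_exp}(ii'), each fixed-graph effect factorises as $IM_j(\mathcal{G}_M) = \theta_{MA,j}\big(\E^{\text{reg}}[Y \mid M_j \cup \Pa_j(\mathcal{G}_M) \cup A \cup C]_1 - \beta_{YM,j}\big)$, and $\widehat{IM}_j^{\text{OLS}}(\mathcal{G}_M)$ substitutes the OLS coefficients from \eqref{reg_exp} and from the parent-adjusted regression of $Y$. Under the semi-linear model (Assumption~\ref{ass_slsem}) with sub-Gaussian errors (Assumption~\ref{ass_error_dis}) and the restricted-eigenvalue, sparsity, and distinct-element conditions (Assumptions~\ref{ass_nor_eigen}, \ref{ass_nor_adj}, \ref{ass_nor_ele}, \ref{ass_nor_egv_pa}), I would invoke Portnoy's growing-dimension central limit theorem to obtain joint asymptotic normality of $(\widehat\theta_{MA,j}, \widehat\beta_{YM,j}, \widehat{\E}^{\text{reg}}[\cdots]_1)$, with sandwich covariance expressed through the projection maps $\widehat\Gamma_{\cdot,\cdot}$, uniformly over the graphs in $\operatorname{MEC}(\mathcal{C}_M)$ via the distinct-element bound of Assumption~\ref{ass_nor_ele}.

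\emph{Step 3 (delta method, averaging, variance matching, conclusion).} Since $IM_j(\mathcal{G}_M)$ is a product of asymptotically normal quantities, a delta-method linearisation yields the leading stochastic expansion of $\sqrt{n}(\widehat{IM}_j^{\text{OLS}}(\mathcal{G}_M) - IM_j(\mathcal{G}_M))$; averaging over $\operatorname{MEC}(\mathcal{C}_M)$ gives the expansion for $\sqrt{n}(\widehat{IM}_j^{\text{avg},\text{OLS}} - \overline{IM}_j)$, exactly as Theorem~\ref{thm_CI_DE_IE_DM} treats the product estimators $\widehat{IE}^{\text{OLS}}$ and $\widehat{DM}_j^{\text{OLS}}$. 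I would then show that the plug-in $\widehat\sigma_{\overline{IM}_j}^2$ recorded in the Appendix is the matching sandwich scale for this expansion. Standardising, $\sqrt{n}(\widehat{IM}_j^{\text{OLS}} - IM_j)/\widehat\sigma_{\overline{IM}_j}$ converges to a non-standard (mixture-type) limit $W$, and evaluating the tail probability $\pr(|W| \geq \Phi^{-1}(1-\alpha/2))$ against the nominal Gaussian threshold delivers the stated bound $\geq 1-\alpha$.

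\emph{Main obstacle.} The hardest part is Steps~2--3 jointly: controlling the OLS coefficients \emph{uniformly} across all DAGs in the equivalence class while $p$ grows with $n$, which demands that the Portnoy-type minimum-eigenvalue control hold simultaneously over the distinct parent sets $\Pa_j(\mathcal{G}_M)$, and then tracking how the product and the averaging reshape the Gaussian ingredients into the non-standard limiting law $W$ whose tail behaviour underpins the claimed inequality.
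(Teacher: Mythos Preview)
Your plan matches the paper's proof closely. The paper also reduces to the true CPDAG via Assumption~\ref{ass_nor_cpdag}, uses the product representation from Proposition~\ref{thm_par_exp}(ii'), decomposes $\widehat{IM}_j^{\text{OLS}}-\overline{IM}_j$ into a CPDAG-error piece plus three product-difference pieces, linearises each via Lemma~\ref{lem_simple_reg} and the machinery behind Theorem~\ref{thm_CI_DE_IE_DM}, and then handles the degenerate case $\theta_{MA,j}=\overline{\xi}_j(\mathcal{C}_M)-\beta_{YM,j}=0$ separately to obtain the conservative ``$\geq$''. The one concrete tool you leave vague is the uniform-over-MEC linearisation of the averaged parent-adjusted coefficient $\widetilde\xi_j(\mathcal{C}_M)-\overline\xi_j(\mathcal{C}_M)$: rather than a direct appeal to Portnoy, the paper invokes Remark~5.2 of \cite{chakrabortty2018inference}, which supplies an explicit influence-function expansion $\tfrac{1}{n}\sum_i W_i^{(j)}(\mathcal{C}_M)+o_p(n^{-1/2})$ for the average over $\operatorname{MEC}(\mathcal{C}_M)$ under exactly the distinct-element and sparsity conditions of Assumptions~\ref{ass_nor_adj}--\ref{ass_nor_egv_pa}; this is what delivers both the uniformity and the explicit $\widehat\sigma_{\overline{IM}_j}^2$ in \eqref{est_sigma_IM}.
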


\noindent We now therefore obtain a valid asymptotic confidence interval for $\overline{IM}_j$ for any $j \in [p]$ alongside the regression from Theorem \ref{thm_CI_IM}.


\subsection{Asymptotic Properties of Quadruply Robust Estimators}\label{sec_asy_mr}

The quadruply robust estimators aim to obtain the robust estimators even when the model is misspeficied. The double robust estimators, which combines the direct and IPW strategies, possess commendable properties and have been the subject of extensive research as evidenced in literature such as \citep{laan2003unified,tsiatis2006semiparametric,kang2007demystifying}. In this section, we will show that the proposed novel quadruply robust estimators exhibit more favorable asymptotic properties.

To present the results, we assume that the propensity score $e_{a'} (\bcdot) \in \mathcal{E}$ with some function classes $\mathcal{E}$. Similarly, for each $j \in [p]$, we assume any conditional density employed in \eqref{DM_MR} and \eqref{IM_MR} is 
\[
    f_{M_T \mid X_S}(m_T \mid x_{S}) \, \in \, \mathcal{F}_{j, \, T \mid S},
\]
and any conditional mean used in \eqref{DM_MR} and \eqref{IM_MR} adheres to
\[
    {\E} [Y \mid x_{S}] \, \in \, \mathcal{U}_{j, \, S} 
\]
with some specific function classes $\mathcal{F}_{j, \, T \mid S}$ and $\mathcal{U}_{j, \, S}$. We propose the following assumptions concerning these function classes and the convergence rates of the estimators within these classes.

\begin{ass}\label{ass_fun_class}
    For any fixed $j \in [p]$, any subset $S \subseteq [t + p + 1]$ and $T \subseteq [p]$ used in \eqref{DM_MR} and \eqref{IM_MR}, the function classes $\mathcal{E}$, $\mathcal{F}_{j, \, T \mid S}$, and $\mathcal{U}_{j, \, S}$ are bounded and belong to VC type classes (Definition 2.1 in \cite{chernozhukov2014gaussian}) with VC indices upper bounded by $v_j = O(n^{\vartheta_j})$ for some $\vartheta_j$ such that $\vartheta_j \in [0, 1/2)$.
\end{ass}




\begin{ass}\label{ass_conv_rate_QR}
    For any fixed $j \in [p]$, any subset $S \subseteq [t + p + 1]$ and $T \subseteq [p]$ used in \eqref{DM_MR} and \eqref{IM_MR}, the estimators $\widehat\pi_{x_S} (m_T)$ and $\widehat{\mu} (x_s)$ converge with $\ell_2$-norm to their true values at the rates of $n^{-\vartheta_{j, \pi}^*}$ and $n^{-\vartheta_{j, \mu}^*}$, and the propensity score estimator $\widehat{e}_{a'} (\bcdot)$ with $a' \in \{ 0, 1\}$ converge with $\ell_2$-norm to their true values at the rates of $n^{-\vartheta_{j, e}^*}$. Here the positive numbers $\vartheta_{j, e}^*, \vartheta_{j, \pi}^*, \vartheta_{j, \mu}^*$ satisfies: (i) $\min \{ \vartheta_{j, e}^*, \vartheta_{j, \pi}^*, \vartheta_{j, \mu}^* \} > \vartheta_j / 2$; (ii) $v_1^* + v_2^* > 1 / 2$ for any $\{ v_1^*, v_2^*\} \subsetneq \{ \vartheta_{j, e}^*, \vartheta_{j, \pi}^*, \vartheta_{j, \mu}^* \}$.
\end{ass}


\noindent Assumption \ref{ass_fun_class} is reasonably moderate, as the function classes are user-defined. VC-type classes encompass a broad spectrum of functional categories, including but not limited to classic parametric model, neural networks and regression trees. The VC index governs the complexity of the model, typically escalating with an increase in the number of parameters within the model. We permit the VC index to diverge alongside the sample size, which serves to minimize the estimator's bias arising from model misspecification. On the other hand, an important feature of Assumption \ref{ass_conv_rate_QR} is that the required estimators' convergence rates can only be nonparametric (slower than $n^{-1 / 2}$ ) and no metric entropy condition (Donsker class for instance) is needed. In particular, $\vartheta_{j, e}^*, \vartheta_{j, \pi}^*, \vartheta_{j, \mu}^* > 1 / 4$ will perfectly admit Assumption \ref{ass_conv_rate_QR}. Therefore, the estimators can be computed via standard nonparametric estimation \citep{fan2003nonlinear} and supervised learning algorithms \citep[including random forests and deep learning,][]{wager2018estimation,schmidt2020nonparametric}. 
The reason both assumptions regarding the sizes of the function classes and the rate of convergence for $\widehat{DM}_j^{\text{QR}}$ and $\widehat{IM}_j^{\text{avg}, \, \text{QR}}$ are identical, which are different from conditions in Theorem \ref{thm_CI_DE_IE_DM} and Theorem \ref{thm_CI_IM}, stems from the uniform convergence characteristics of our estimations for any $\mathcal{G}_M$ in $\operatorname{MEC}(\mathcal{C}_M)$.

\begin{theorem}\label{thm_QR_nor}
    Let the conditions in Theorem \ref{thm_EIF} and Assumption \ref{ass_fun_class} hold. Suppose the estimators $\widehat{e}_{a'}(x_s)$, $\widehat{\mu} (x_s)$, and $\widehat{\pi}_{x_s} (m_T)$ in either $\mathscr{M}_0$, $\mathscr{M}_{j, 1}$, $\mathscr{M}_{j, 2}$, or $\mathscr{M}_{j, 3}$ converges in $\ell_2$-norm to their true values for each $j \in [p]$. Then
    \begin{itemize}
        \item $\widehat{DM}_j^{\text{QR}}$ is the consistent estimator of $DM_j$ under the model $\mathscr{M}_{j, \, \text{union}}$ for any $j \in [p]$. Furthermore, if Assumption \ref{ass_conv_rate_QR} holds, then $\sqrt{n} \big( \widehat{DM}_j^{\text{QR}}  - {DM}_j \big)$ is asymptotic normally distributed under model $\mathscr{M}_{\text{nonpar}}$ with asymptotic variance $\E \Big\{ \big[ S^{\text{eff, nonpar}}(DM_j)\big]^2 \Big\}$.
        \item If Assumption \ref{ass_nor_cpdag} also hold, $\widehat{IM}_j^{\text{avg}, \, \text{QR}}$ is the consistent estimator of $\overline{IM}_j$ under the model $\mathscr{M}_{j, \, \text{union}}$ for any $j \in [p]$. Furthermore, if Assumption \ref{ass_conv_rate_QR} holds, then $\sqrt{n} \big( \widehat{IM}_j^{\text{avg}, \, \text{QR}} - \overline{IM}_j\big)$ is asymptotically normally distributed under model $\mathscr{M}_{\text{nonpar}}$ with asymptotic variance 
        \[
            \begin{aligned}
               \E \Bigg\{ \Bigg[  \frac{1}{\# \operatorname{MEC}(\mathcal{C}_{M})}  \sum_{\mathcal{G}_{M} \in \operatorname{MEC}(\mathcal{C}_{M})} S^{\text{eff, nonpar}}(IM_j (\mathcal{G}_M)) \Bigg]^2 \Bigg\}.
            \end{aligned}
        \]
    \end{itemize}
\end{theorem}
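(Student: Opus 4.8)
The plan is to treat $\widehat{DM}_j^{\text{QR}}$ and each $\widehat{TM}_j^{\text{QR}}(\mathcal{G}_M)$ as solutions of estimating equations built from the efficient scores of Corollary \ref{cor_eff_DM_IM}, and to analyze them through the standard influence-function decomposition. Since the score depends on the target parameter only through an additive centering term, solving $\pn \widehat{S}^{\text{eff, nonpar}}(\widehat{DM}_j^{\text{QR}}) = 0$ returns the explicit plug-in form \eqref{DM_MR}; write $\phi_j(X;\eta)$ for the uncentered influence function so that $\widehat{DM}_j^{\text{QR}} = \pn[\phi_j(X;\widehat\eta)]$, where $\eta = (e, \pi, \mu)$ collects all nuisances and $\eta^*$ denotes the $\ell_2$-limits of $\widehat\eta$. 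The whole argument rests on the decomposition
\[
\widehat{DM}_j^{\text{QR}} - DM_j = (\pn - \E)\phi_j(X;\eta^*) + (\pn - \E)\big[\phi_j(X;\widehat\eta) - \phi_j(X;\eta^*)\big] + \big\{\E\phi_j(X;\widehat\eta) - DM_j\big\}.
\]

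For the consistency claims I would first establish the algebraic \emph{quadruple robustness identity}: the population map $\eta \mapsto \E\phi_j(X;\eta)$ returns the true $DM_j$ whenever $\eta$ restricts to a correctly specified configuration in any one of $\mathscr{M}_0, \mathscr{M}_{j,1}, \mathscr{M}_{j,2}, \mathscr{M}_{j,3}$. This is checked by substituting the alternative identification formulas \eqref{strategy_1_DM}, \eqref{strategy_2_DM}, \eqref{strategy_3_DM} (and the direct expression from Theorem \ref{thm_direct_exp}) into $\phi_j$ and verifying that, in each submodel, the remaining augmentation terms have conditional mean zero, so the bias term vanishes; this is the quadruply-robust analogue of the mixed-bias structure of doubly robust scores. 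Granting this identity, under $\mathscr{M}_{j,\text{union}}$ the limit $\eta^*$ lies in at least one submodel, hence the third term is zero; the first is $o_p(1)$ by the law of large numbers and the second is $o_p(1)$ by boundedness (Assumption \ref{ass_fun_class}) and $\ell_2$-consistency of $\widehat\eta$, giving $\widehat{DM}_j^{\text{QR}} \to_p DM_j$.

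For asymptotic normality under $\mathscr{M}_{\text{nonpar}}$ all nuisances are correctly specified, so $\eta^* = \eta_0$ and $\phi_j(X;\eta_0) = S^{\text{eff, nonpar}}(DM_j) + DM_j$; scaling the first term by $\sqrt n$ produces the efficient influence term, whose CLT limit has variance $\E[S^{\text{eff, nonpar}}(DM_j)]^2$. It remains to show the other two terms are $o_p(n^{-1/2})$. For the empirical-process term I would invoke the VC-type maximal inequality (Assumption \ref{ass_fun_class}, via \cite{chernozhukov2014gaussian}) together with the $\ell_2$-shrinkage of $\widehat\eta - \eta_0$ from Assumption \ref{ass_conv_rate_QR}(i); because $\min\{\vartheta_{j,e}^*, \vartheta_{j,\pi}^*, \vartheta_{j,\mu}^*\} > \vartheta_j/2$, the shrinkage dominates the diverging-VC entropy, yielding stochastic equicontinuity with no Donsker requirement. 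For the bias term I would Taylor-expand $\E\phi_j(X;\widehat\eta) - DM_j$ around $\eta_0$; by Neyman orthogonality of the efficient score the first-order Gateaux derivative vanishes, leaving a second-order remainder that is a sum of pairwise cross-products of the nuisance errors $\widehat e - e_0$, $\widehat\pi - \pi_0$, $\widehat\mu - \mu_0$. Each such product is $O_p(n^{-(v_1^*+v_2^*)})$ and hence $o_p(n^{-1/2})$ by Assumption \ref{ass_conv_rate_QR}(ii), so Slutsky delivers asymptotic normality with the efficient variance.

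The indirect-effect statement follows the same template applied to $\widehat{TM}_j^{\text{QR}}(\mathcal{G}_M)$, with two additional ingredients. First, Assumption \ref{ass_nor_cpdag} lets me replace the random $\operatorname{MEC}(\widehat{\mathcal{C}}_M)$ by the deterministic $\operatorname{MEC}(\mathcal{C}_M)$ on an event of probability tending to one, so $\widehat{IM}_j^{\text{avg}, \, \text{QR}}$ is asymptotically a fixed finite average of quadruply robust estimators. Second, since the finite family $\{\mathcal{G}_M \in \operatorname{MEC}(\mathcal{C}_M)\}$ is indexed uniformly, the same VC-type bounds apply simultaneously across the MEC—this is precisely why Assumptions \ref{ass_fun_class} and \ref{ass_conv_rate_QR} are stated with a single rate over all subsets $S, T$—and averaging the per-graph linearizations gives asymptotic linearity with influence function $\frac{1}{\#\operatorname{MEC}(\mathcal{C}_M)}\sum_{\mathcal{G}_M} S^{\text{eff, nonpar}}(IM_j(\mathcal{G}_M))$ and the stated variance. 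I expect the main obstacle to be the explicit verification of the second-order mixed-bias expansion: unlike the doubly robust case, the quadruply robust score couples four nuisance components, so one must track which pairwise products survive Neyman orthogonality and confirm that every surviving product is controlled by the pairwise condition $v_1^*+v_2^*>1/2$; carrying out this bookkeeping correctly, uniformly over $\operatorname{MEC}(\mathcal{C}_M)$, is the crux of the proof.
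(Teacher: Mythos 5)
Your proposal is correct and follows essentially the same route as the paper's proof: the identical three-term influence-function decomposition, the same use of the VC-type maximal inequality (Corollary 5.1 of \cite{chernozhukov2014gaussian}) with Assumption \ref{ass_conv_rate_QR}(i) to kill the empirical-process term, the same pairwise-product (mixed-bias) control of the remainder via Assumption \ref{ass_conv_rate_QR}(ii), and the same use of Assumption \ref{ass_nor_cpdag} to replace $\operatorname{MEC}(\widehat{\mathcal{C}}_M)$ by $\operatorname{MEC}(\mathcal{C}_M)$ before averaging the per-graph linearizations. The only difference is presentational: you invoke Neyman orthogonality abstractly and phrase consistency as a population quadruple-robustness identity, whereas the paper verifies the same facts concretely, by pairing the score components $\psi_j^{(0)},\dots,\psi_j^{(3)}$ in each of the four submodels and by algebraically rearranging the remainder $R_{j,a'}^{\text{QR}}$ into four integrals of pairwise nuisance-error products---exactly the bookkeeping you correctly identify as the crux.
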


\noindent An important result of Theorem \ref{thm_QR_nor} is that: for any $j \in [p]$, the quadruply robust estimators $\widehat{DM}_j^{\text{QR}}$ and $\widehat{IM}_j^{\text{avg}, \, \text{QR}}$
are semiparametric locally efficient in the sense that they are regular and asymptotically linear under model $\mathscr{M}_{j, \, \text{union}}$, and achieve the
semiparametric efficiency bound for ${DM}_j$ and $\overline{IM}_j$ under model at the intersection submodel $\mathscr{M}_0 \, \cap \, \mathscr{M}_{j, 1} \, \cap \, \mathscr{M}_{j, 2} \, \cap \, \mathscr{M}_{j, 3}$. Hence, when all models are correct, $\widehat{DM}_j^{\text{QR}}$ and $\widehat{IM}_j^{\text{avg}, \, \text{QR}}$ are semiparametric efficient in the model $\mathscr{M}_{\text{nonpar}}$ at the
intersection submodel $\mathscr{M}_0 \, \cap \, \mathscr{M}_{j, 1} \, \cap \, \mathscr{M}_{j, 2} \, \cap \, \mathscr{M}_{j, 3}$ by part iv in \cite{bickel2001inference} for any $j \in [p]$.

\section{Simulation Studies}\label{sec_sim}

In this section, we assess the finite-sample performance of our proposed quadruply robust estimators across two simulation scenarios.
The first scenario seeks to illustrate the robustness characteristics of our estimator in comparison to other estimation strategies, particularly when certain model specifications to a specific mediator are not met.
In the second simulation study, we demonstrate that our method can also be superior to any other estimation strategies in estimating the both direct and indirect interventional effects across all mediators, on average, within commonly adopted model configurations.

\subsection{Simulation for a single mediator}

We consider the finite-sample performance of the proposed quadruply robust estimators in comparison to the estimators under direct strategy, and the alternative strategies in Section \ref{sec_straight_strategy}, \ref{sub_sec_alternative_strategy_1}, \ref{sub_sec_alternative_strategy_2}, and \ref{sub_sec_alternative_strategy_3} for a single mediator.
We describe the detailed setting as follows: we set $t = p = 3$, and fix the pre-specified $j$ randomly sampled from $U\{ [p] \}$. Then we design the following four data generating processes (DGPs), here $\Phi(\bcdot)$ and $\operatorname{logit}(\bcdot)$ are the standard normal distribution function and the inverse of the standard logistic function, and $B_{j :}$ and $B_{- j :}$ represent the $j$-th row of the matrix $B$ and the matrix $B$ with
$j$-th row removed.
\begin{itemize}
    \setlength\itemsep{1.5ex}
    \item All correct: $C \, \leftarrow \, \mathcal{N} (0, I_{t - 1})$, $A \, \leftarrow \, \mathds{1} \big\{ U[0, 1] \leq \Phi(\beta_{AC}^{\top} C) \big\}$, $M \, \leftarrow \, B_{M C}^{\top} C + \beta_{MA} A + B_{MM}^{\top} M + \mathcal{N} (0, I_p)$, and $Y \, \leftarrow \, \beta_{YC}^{\top} C + \alpha_{YA} A + \beta_{YM}^{\top} M + \mathcal{N} (0, 1)$;
    \item $\mathscr{M}_0$ is correct: the exposure $A$ comes from $A \, \leftarrow \, \mathds{1} (\operatorname{logit}(U[0, 1]) \leq \beta_{AC}^{\top} C)$ instead;
    \item $\mathscr{M}_{j, \, 1}$ is correct: the outcome $Y$ comes from $Y \, \leftarrow \, (\beta_{YC}^{\top} C + \alpha_{YA} A + \beta_{YM}^{\top} M)^{2 / 3} + \mathcal{N} (0, 1)$ instead;
    \item $\mathscr{M}_{j, \, 2}$ is correct: the mediators have the alternative structure $ M_j \, \leftarrow \, \Theta_{MC, j:} C + \theta_{MA, j} A + \Big[ (I - B_{MM}^{\top})^{-1}\mathcal{N} (0, I_p) \Big]_{j}$ and $M_k \, \leftarrow \, {(\Theta_{MA, k:} C + \theta_{MA, k} A)^{2 / 3}} + \Big[ (I - B_{MM}^{\top})^{-1}\mathcal{N} (0, I_p) \Big]_{k}$ for $ k \neq j$;
    \item $\mathscr{M}_{j, \, 3}$ is correct: the mediators have the alternative structure $M_j \, \leftarrow \, \penalty 0 {\Theta_{MC, j:} C + \frac{1}{2} \theta_{MA, j}} + \Big[ (I - B_{MM}^{\top})^{-1}\mathcal{N} (0, I_p) \Big]_{j}$ and $M_{-j} \, \leftarrow \, \Theta_{MC, -j :} C + \theta_{MA, -j} A + \Big[ (I - B_{MM}^{\top})^{-1} \penalty 0 \mathcal{N} (0, I_p) \Big]_{- j}$.
\end{itemize}
Here the true adjacency matrix of mediators is generated from the Erd\H{o}s-Rényi (ER) model with an expected degree as $\lfloor p / 2 \rfloor$, and the non-zero entries in $B_{MM} \in \mathbb{R}^{p \times p}$ and all the elements in $\alpha_{YA}, \beta_{MA}, \beta_{YC}, \beta_{YM}, B_{MC}$ are independently sampled from $U(-1, 1)$. In each estimation method, we consistently treat $\mathscr{M}_0$ as the underlying true model by default. We generate $n = 1000$ simulation samples, each comprising $N = 100$ independent observations, and the result for estimating the direct and indirect interventional effect of the pre-specified mediator $M_j$ is shown in Table \ref{tab_art_each}. Here, we use the PC algorithm \citep{harris2013pc} to estimate the adjacency matrix of CPDAGs.

\begin{table}[htbp]
  \centering
  \caption{The average Bias (Standard Error) under simulation with sample size $n = 1000$ under $N = 100$ replications. }
    \begin{tabular}{ccccccc}
    \toprule
    \toprule
          &       & all correct & $\mathscr{M}_0$ is correct & $\mathscr{M}_{j, 1}$ is correct & $\mathscr{M}_{j, 2}$ is correct & $\mathscr{M}_{j, 3}$ is correct \\
    \midrule
    \textit{direct} & \multirow{2}[1]{*}{$\mathscr{M}_0$} & 0.006 (0.009) & \multicolumn{1}{c}{0.004 (0.004)} & 0.187 (0.069) & 0.032 (0.038) & 0.014 (0.001) \\
\cmidrule{1-1}\cmidrule{3-7}    \textit{indirect} &       & 0.033 (0.028) & \multicolumn{1}{c}{0.041 (0.033)} & 1.930 (0.027) & 0.028 (0.035) & 0.001 (0.002) \\
    \midrule
    \textit{direct} & \multirow{2}[1]{*}{$\mathscr{M}_1$} & 0.227 (0.074) & \multicolumn{1}{c}{0.064 (0.069)} & 0.234 (0.485) & 0.899 (0.087) & 0.383 (0.087) \\
\cmidrule{1-1}\cmidrule{3-7}    \textit{indirect} &       & 0.683 (0.344) & \multicolumn{1}{c}{0.756 (0.395)} & 0.418 (0.511) & 2.214 (0.131) & 1.675 (0.140) \\
    \midrule
    \textit{direct} & \multirow{2}[1]{*}{$\mathscr{M}_2$} & 0.007 (0.009) & \multicolumn{1}{c}{0.004 (0.004)} & 0.187 (0.068) & 0.032 (0.038) & 0.014 (0.001) \\
\cmidrule{1-1}\cmidrule{3-7}    \textit{indirect} &       & 0.853 (0.081) & \multicolumn{1}{c}{0.914 (0.104)} & 2.040 (0.134) & 0.237 (0.102) & 1.292 (0.095) \\
    \midrule
    \textit{direct} & \multirow{2}[1]{*}{$\mathscr{M}_3$} & 0.038 (0.039) & \multicolumn{1}{c}{0.045 (0.057)} & 0.229 (0.179) & 0.033 (0.042) & 0.050 (0.062) \\
\cmidrule{1-1}\cmidrule{3-7}    \textit{indirect} &       & 0.176 (0.128) & \multicolumn{1}{c}{0.068 (0.082)} & 0.127 (0.156) & 1.030 (0.079) & 0.069 (0.086) \\
    \midrule
    \textit{direct} & \multirow{2}[1]{*}{QR} & 0.031 (0.043) & 0.008 (0.012) & 0.145 (0.212) & 0.054 (0.065) & 0.025 (0.031) \\
\cmidrule{1-1}\cmidrule{3-7}    \textit{indirect} &       & 0.124 (0.118) & 0.087 (0.109) & 0.281 (0.571) & 0.053 (0.064) & 0.024 (0.030) \\
    \bottomrule
    \bottomrule
    \end{tabular}%
  \label{tab_art_each}%
\end{table}%

\noindent As illustrated in Table \ref{tab_art_each}, the simulation results align with the theoretical predictions made in previous sections. Specifically, when the entire distribution $F_X(\bcdot)$ is correctly specified, all estimators display consistency. However, in the presence of at least one misspecified component, only the quadruply robust estimator retains consistency. In contrast, one among the other estimators, $\mathscr{M}_{\ell}$ for $\ell = 0, 1, 2, 3$, becomes inconsistent. Although we present only the continuous scenario in this part, our simulations under discrete $C$ or $M$ settings yielded similar outcomes. Importantly, under this simulation scenario, the estimator $\widehat{TM}_j^{\mathscr{M}_0} = \widehat{DM}_j^{\mathscr{M}_0} + \widehat{IM}_j^{\mathscr{M}_0}$ corresponds precisely to the estimator utilized for the individual mediation effect $\eta_j$ proposed in \cite{chakrabortty2018inference}. Thus, our quadruply robust estimators outperform the estimator defined in \cite{chakrabortty2018inference}.

\subsection{Simulation for all mediators}

Next, we consider the average performance of our quadruply robust estimators compared with other estimations under a fair model misspecification scenario in both continuous case (Section 1.4 in \cite{kang2007demystifying}) and discrete case (Section 4.1 in \cite{xia2023identification}). The DGPs are defined as follows:
\begin{itemize}
    \setlength\itemsep{1.5ex}
    \item \textit{Continuous $M$:} $Z \, \leftarrow \, \mathcal{N} (0, I_{t - 1})$, $A \, \leftarrow \, \mathds{1} \{ U[0, 1] \leq \Phi(\beta_{AC}^{\top} Z) \}$, 
    \[
        M \, \leftarrow \, B_{M C}^{\top} Z + \beta_{MA} A + B_{MM}^{\top} M + \mathcal{N} \left(0, \left[ \begin{matrix}
            \sigma_1^2 & &  \\
            & \ddots & \\
            & & \sigma_p^2
        \end{matrix}\right] \right),
    \]
    and $Y \, \leftarrow \, \beta_{YC}^{\top} Z + (\alpha_{YA} A + \beta_{YM}^{\top} M)^{2 / 3} + \mathcal{N} (0, 1)$;

    \item \textit{Discrete $M$:} $C \,\leftarrow \, \mathcal{N} (0, I_{t - 1})$, $A \, \leftarrow \, \mathds{1} \{ U[0, 1] \leq \Phi(\beta_{AC}^{\top} C) \}$, 
    \begin{small}
    \[
        M_j \, \leftarrow \, \mathds{1}\left\{ \operatorname{logit}(U[0, 1]) \leq \Theta_{MC, j:} C + \theta_{MA, j} A + \left[ (I - B_{MM}^{\top})^{-1}\mathcal{N} \left(0, \left[ \begin{matrix}
            \sigma_1^2 & &  \\
            & \ddots & \\
            & & \sigma_p^2
        \end{matrix}\right] \right) \right]_{j} \right\},
    \]
    \end{small}
    and $Y \, \leftarrow \, \beta_{YC}^{\top} C + \alpha_{YA} A + \beta_{YM}^{\top} M + \beta_{YC}^{\top} AC + \beta_{YM}^{\top} AM + \mathcal{N} (0, 1)$.
\end{itemize}
Here $\sigma_1^2, \ldots, \sigma_p^2$ are independently drawn from the uniform distribution in $[0.5,1]$, whereas the other setting is the same as previous. In the continuous setting, instead of observing the $Z_i$ 's, we observe $C_i$ as the transformations of $Z_i$. We will always leave out the  interaction and $x^{2 / 3}$ when fitting
each model, and we also assume the link functions are all Probit.
For computations in the continuous $M$ setting, we implement Algorithm \ref{alg_fast_QR}. While in the discrete $M$ context, we employ Algorithm \ref{alg_QR}, setting the Monte Carlo sample size to $L = 100$.

As illustrated in Figure \ref{fig_art_all}, aside from the quadruply estimators (QR), other methods fail to yield consistent results. Furthermore, in most cases, our quadruply estimators exhibit a lower standard error compared to other methods. Thus, this also shows the robustness of our estimators.


\begin{figure}[!htp]
    \minipage{0.5\textwidth}
         \includegraphics[width=\linewidth]{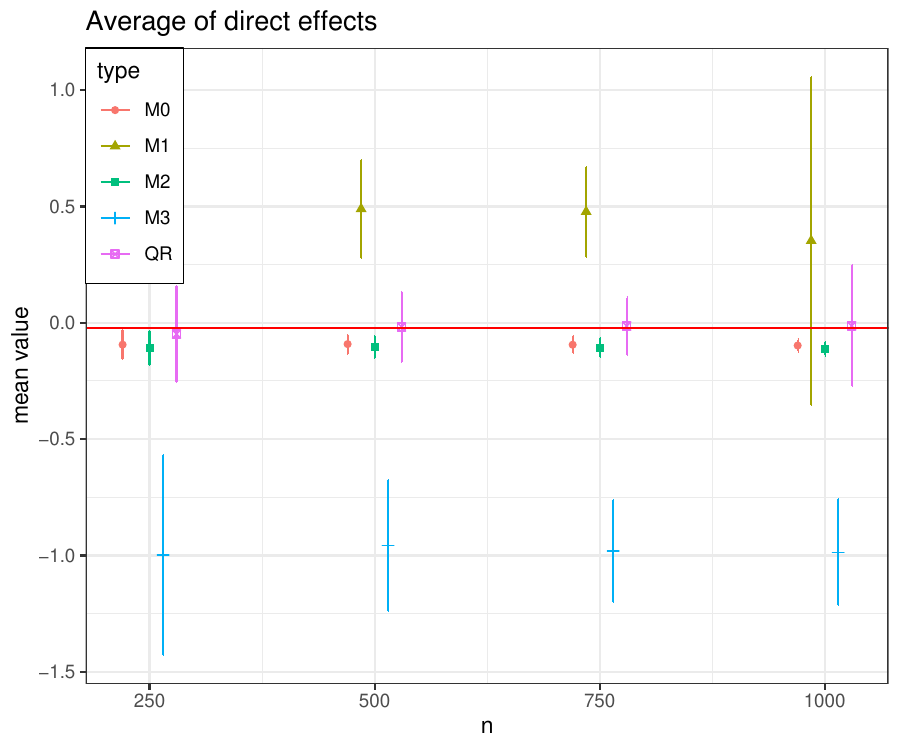}
    \endminipage\hfill
    \minipage{0.5\textwidth}
        \includegraphics[width=\linewidth]{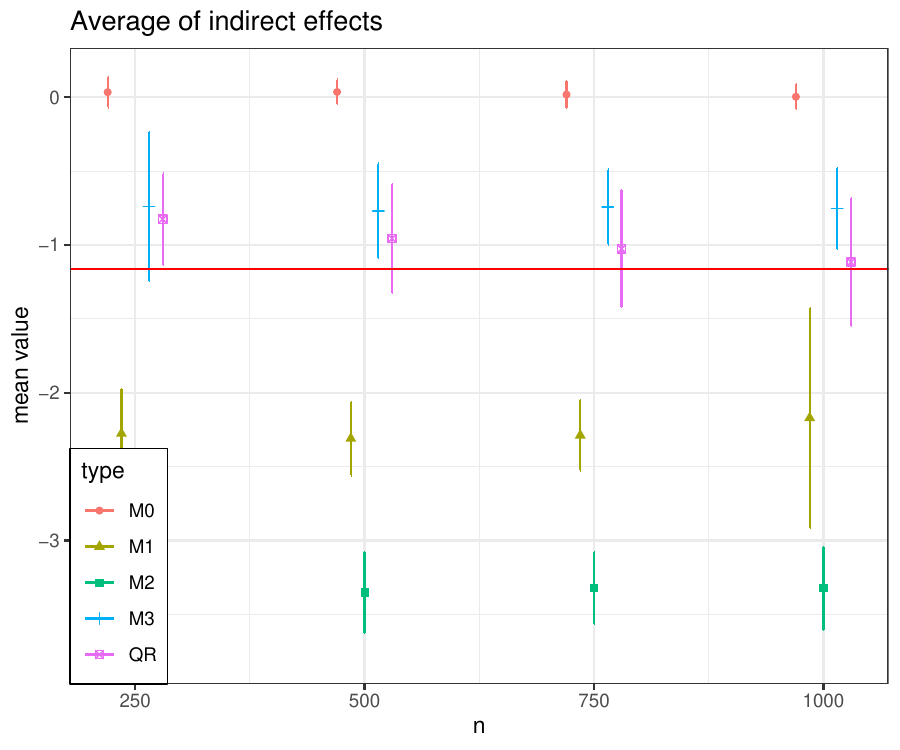}
    \endminipage \\
    \minipage{0.5\textwidth}
         \includegraphics[width=\linewidth]{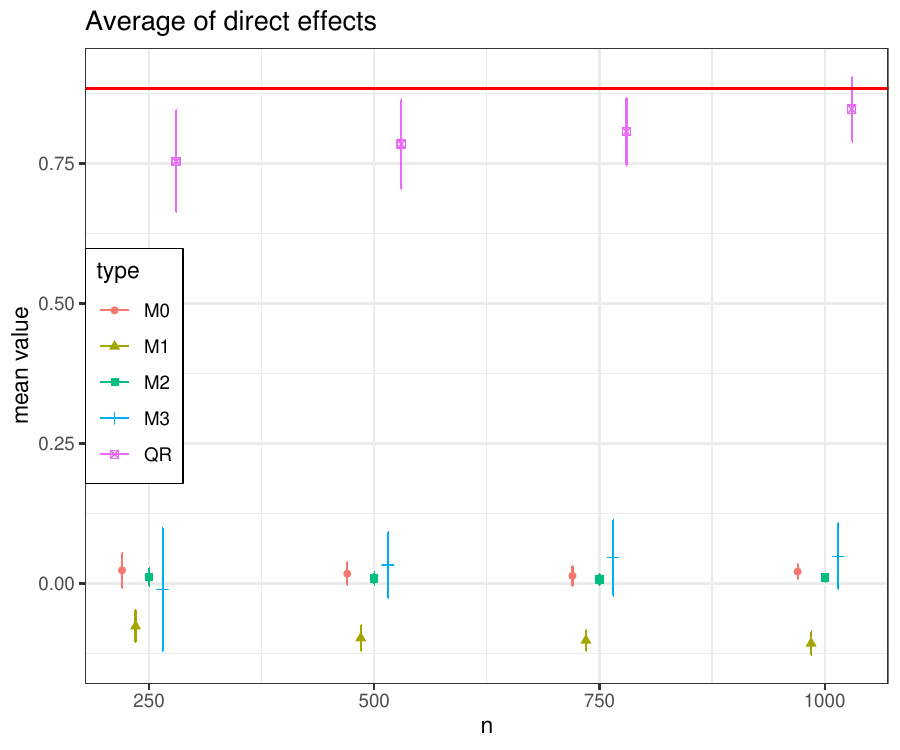}
    \endminipage\hfill
    \minipage{0.5\textwidth}
        \includegraphics[width=\linewidth]{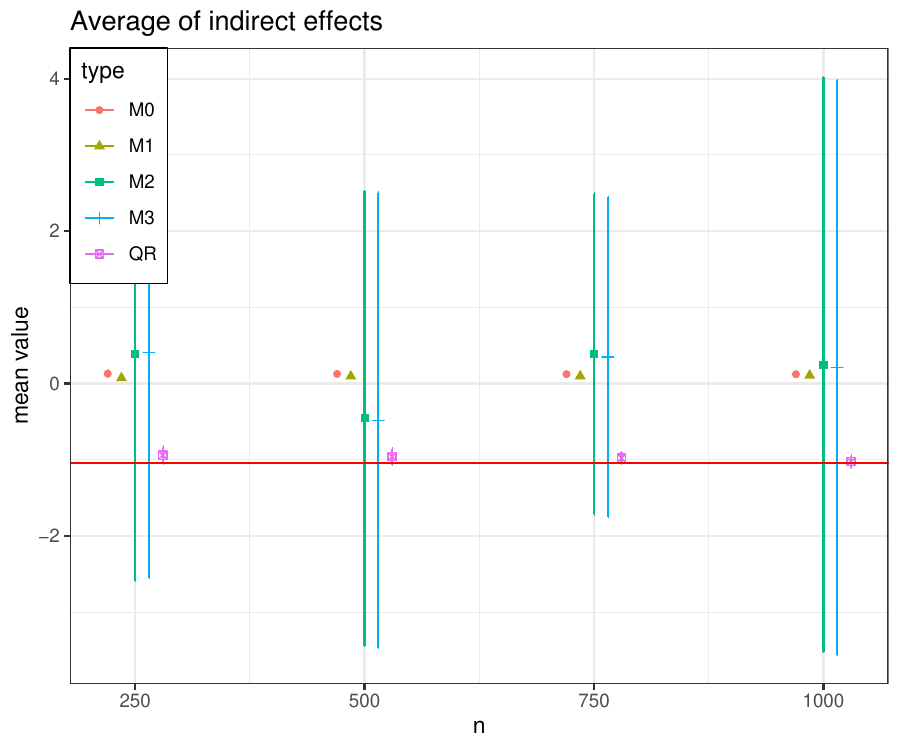}
    \endminipage
    \caption{Simulation results for both continuous (top row) and discrete (bottom row) scenarios, showcasing the estimated average causal direct mediation effect (left column) and indirect mediation effect (right column) over $N = 100$ replications. The dots represent estimated averages, error bars detail the standard error derived from the replications, and the red line represents the true average value, calculated via numeric integrals according to Definition \ref{def_med}.}
    \label{fig_art_all}
\end{figure}



\section{Empirical Study}\label{sec_emp}

In this section, we illustrate our estimator in a real world application from AURORA study to explore the causal association of psychiatric disorders among trauma survivors, which is also studied in \cite{watson2023heterogeneous}. In the study, our primary response of interest is the post-traumatic stress disorder (PTSD), which was assessed three months post-trauma $Y$. The focal event, in this case, is the pre-trauma insomnia $A$ that trauma survivors often experience: $A = 1$ represents survivor does have insomnia and $A = 0$ represents does not. The 4-dimensional potential mediator $M$ including Peri-traumatic PT (PTSD), stress, acute distress (ASD), and depression, gauged two weeks subsequent to the traumatic incident, are included in our analysis. This study also accounts for various confounders $C$ is a 9-dimensional vector such as age, gender, race, education level, pre-trauma physical and mental health, perceived stress level, neuroticism, and childhood trauma. The same as \cite{watson2023heterogeneous}, before employing our methodology, categorical variables underwent one-hot encoding, numerical variables were centralized, and any missing data was excluded. 
The total number of observations is $n = 1494$ with $t = 10$ and $p = 4$.
The estimated DAG of the mediators by PC algorithm \citep{harris2013pc} is shown in Figure \ref{fig_emp_dag}. Results from the quadruply robust estimators, as obtained using Algorithm \ref{alg_QR} with Monte Carlo sample size $L = 100$ and a bootstrap number of $B = 500$, along with other estimation methods employing a \(\log n\) truncation and the same Monte Carlo sample size and bootstrap number, are presented in Table \ref{tab_emp}.

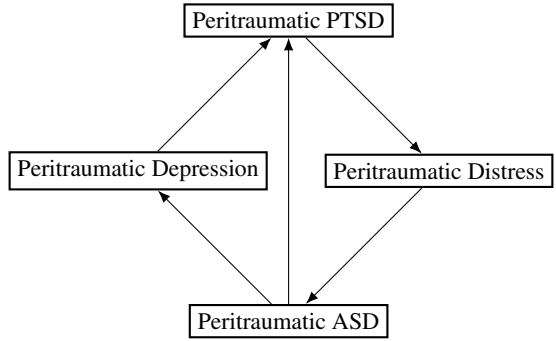
\begin{figure}
    \centering
        \begin{tikzpicture}
            \begin{scope}[every node/.style={thick,draw}]
                 \node (dp) at (0, 0) {Peritraumatic Depression};
                 \node (ptsd) at (2, 2) {Peritraumatic PTSD};
                 \node (asd) at (2, -2) {Peritraumatic ASD};
                 \node (dt) at (4, 0) {Peritraumatic Distress};
             \end{scope}
             \begin{scope}
                 \path (dp) edge node {} (ptsd);
                 \path (ptsd) edge node {} (dt);
                 \path (asd) edge node {} (dp);
                 \path (asd) edge node {} (ptsd);
                 \path (dt) edge node {} (asd);
             \end{scope}
         \end{tikzpicture}
    \caption{Estimated DAG of mediators}\label{fig_emp_dag}
\end{figure}

\begin{table}[!htp]
  \centering
  \caption{Estimated direct and indirect interventional effects of mediators using Algorithm \ref{alg_QR}. Values in bold denote statistical significance at the 95\% confidence level.}
    \resizebox{0.99\textwidth}{!}{%
    \begin{tabular}{ccccccc}
    \toprule
    \toprule
          &       & $\mathscr{M}_0$    & $\mathscr{M}_1$    & $\mathscr{M}_2$    & $\mathscr{M}_3$    & QR \\
    \midrule
    \multirow{2}[1]{*}{distress} & \textit{direct}  & 0.012 (0.022) & 0.564 (17.827) & 0.004 (0.677) & 0.555 (144.040) & 0.080 (0.055) \\
\cmidrule{2-7}          & \textit{indirect} & \multicolumn{1}{l}{0.004 (0.029)} & -12.970 (146.383) & -13.672 (1150.637) & -14.021 (246.881) & 0.101 (0.089) \\
    \midrule
    \multirow{2}[1]{*}{ASD} & \textit{direct}  & 0.134 (0.112) & -3.284 (33.808) & 0.043 (6.680) & 0.557 (162.481) & 0.166 (0.140) \\
\cmidrule{2-7}          & \textit{indirect} & 0.810 (0.633) & {-8.846 (149.805)} & -14.406 (2454.382) & -13.265 (264.681) & \textbf{0.781 (0.278)} \\
    \midrule
    \multirow{2}[1]{*}{PTSD} & \textit{direct}  & 0.604 (0.358) & 0.120 (22.549) & 0.194 (96.854) & {0.711 (164.921)} & \textbf{0.737 (0.278)} \\
\cmidrule{2-7}          & \textit{indirect} & 0.004 (0.164) & {-12.742 (153.783)} & -14.883 (2367.793) & -13.630 (313.450) & 0.012 (0.093) \\
    \midrule
    \multirow{2}[1]{*}{depression} & \textit{direct}  & 0.104 (0.087) & -3.410 (56.095) & 0.034 (5.875) & 1.108 (142.886) & 0.134 (0.105) \\
\cmidrule{2-7}          & \textit{indirect} & 0.120 (0.280) & -8.307 (148.382) & -13.731 (1522.495) & -14.517 (247.827) & 0.198 (0.108) \\
    \bottomrule
    \bottomrule
    \end{tabular}%
  }
  \label{tab_emp}%
\end{table}%


\noindent As demonstrated in Table \ref{tab_emp}, for each mediator under consideration, a substantial discrepancy is observed between the estimates of $\mathscr{M}_{\ell_1}$ and those of $\mathscr{M}_{\ell_2}$ for $\ell_2 \neq \ell_1$ when employing any of the four estimation methods $\mathscr{M}_{\ell}$ for $\ell = 0, 1, 2, 3$. 
Moreover, none of these estimation methods manage to identify significant direct or indirect interventional effects for any of the mediators. 
This highlights the pressing need for robust estimation approaches in this dataset. Notably, with the quadruply robust estimation, we discern that both the indirect interventional effect of acute distress and the direct interventional effect of peritraumatic PT are significant at the 95\% confidence level, while other effects remain non-significant. These results also indicate that preventive intervention of 3-month PTSD after trauma exposure that focuses on reducing acute distress and peritraumatic PT is more likely to be effective for trauma survivors.  


\section{Discussion}\label{sec_dis}

The main contribution of this article is the introduction of direct and indirect interventional effects of mediators, alongside their semiparametric bounds and quadruply robust estimators. Our method accommodates continuous, categorical, and multivariate pre-treatments, mediators, and outcomes. Moreover, extending our methodology and theory to polytomous exposures is straightforward.
However, extending to continuous exposure, even under LSEMs, is non-trivial in theoretical sense. A potential method is suggested in \cite{cai2021deep} to replace the indicator function $\mathds{1}(A = a)$ with some kernel function $K ((A - a) / h)$ under bandwidth $h$, but as discussed in \cite{diaz2013targeted, kennedy2017non, kennedy2023semiparametric}, pathwise differentiability will fail in this case, necessitating alternative estimation procedures and techniques.
On the other hand, note that our framework is dimensional-free, as long as the conditional densities and expectations meet the mild convergence rate, our quadruply robust estimations will always achieve semiparametric efficiency. However, in high-dimensional cases, non-parametric estimation mentioned in this article may not achieve the rate, necessitating additional assumptions like symmetry and shape constraints, as discussed in \cite{deng2021confidence, xu2021high, rodriguez2022data}. Introducing these assumptions still validates the semiparametric framework in our article under the full nonparametric model $\mathscr{M}_{\text{nonpar}}$, but our quadruply estimators may not be the most efficient under these added conditions. 

\bibliographystyle{chicago}
\bibliography{ref}

\appendix

\section{Glossary of Terms and Notations}\label{app_glossary}

\renewcommand*{\arraystretch}{1.3}
\begin{longtable}{|c|m{3.5cm}|c|m{3.5cm}|}
\hline
\hline
\textbf{Symbol} & \textbf{Definition} & \textbf{Symbol} & \textbf{Definition} \\
\hline
$X$ & Observed variables  & $\mathcal{G}$ & Causal DAG (or corresponding adjacency matrix) of $X$ \\
$C$ & Confounders & $\mathcal{C}$ & Causal CPDAG (or corresponding adjacency matrix) of $X$ \\
$A$ & $0$-$1$ exposure & $\operatorname{MEC}(\mathcal{C})$ & Adjacency matrix in the Markov equivalence class of $\mathcal{C}$ \\
$M$ & Mediators & $\mathcal{G}_M$ & Causal DAG (or corresponding adjacency matrix) of $M$ \\
$Y$ & Univariate outcome & $\mathcal{C}_M$ & Causal CPDAG (or corresponding adjacency matrix) of $M$ \\
$\Pa_j(\mathcal{G}_M), \pa_j(\mathcal{G}_M)$ & Parents of $M_j$ in mediators $\{ M_1, \ldots, M_p\}$, its realization & $e_{a'}(x_S)$ & $\pr (A = a' \mid X_S  = x_S)$ \\
$\mu(\boldsymbol\cdot)$ & Conditional expectation of $Y$ given $\boldsymbol\cdot$ & $\pi_{\boldsymbol\cdot} (m_T)$ & Conditional density of $M_T$ given $\boldsymbol\cdot$ evaluated at $m_T$ \\
$S^{\text{eff, nonpar}}(\boldsymbol\cdot)$ & Efficient score for $\boldsymbol\cdot$ on full model & $\pn [\boldsymbol\cdot]$ & $\frac{1}{n} \sum_{i = 1}^n [\boldsymbol\cdot]_i$ \\

$\kappa (a', C)$ & Formula \eqref{def_kappa} & $\zeta_j(a', a, C)$ & Formula \eqref{def_zeta} \\
$\varrho_j(a', M_j, C \, ; \, \mathcal{G}_M)$ & Formula \eqref{def_varrho} & $\big\langle g_{\boldsymbol{\cdot}, u} (\boldsymbol{\cdot}, v) \big\rangle$ & Difference in $g_{\boldsymbol{\cdot}, u} (\boldsymbol{\cdot}, v)$ evaluated at different exposure defined in Equation \eqref{def_diff_notation} \\
$\tau_{\boldsymbol{\cdot} \, ; \, S} (C, a', M_T)$ & Formula \eqref{def_tau}  \\
\hline
\hline
\end{longtable}

\section{Explicit formulas for Algorithm \ref{alg_QR} and \ref{alg_fast_QR}}\label{app_alg}

We first give the formulas in Algorithm \ref{alg_QR}, the formulas in Algorithm \ref{alg_fast_QR} will be shown in the proof of Proposition \ref{pro_fast} later in this section. For calculating the formulas in Algorithm \ref{alg_QR}, by Monte Carlo approximation, we have
\[
    \begin{aligned}
        \widehat\zeta_j^{\mathscr{M}_0}(a', 0, C) & = \int_{\mathcal{M}} \widehat{\mu}(C, 1, m) \widehat\pi_{C, a'}(m_j) \widehat\pi_{C, 0}(m_{-j}) \, \mathrm{d} m \\
        & = \int_{\mathcal{M}} \widehat{\mu}(C, 1, m) \frac{\widehat\pi_{C, a'}(m_j) \widehat\pi_{C, 0}(m_{-j})}{\widehat\pi_{C, 1}(m)} \widehat\pi_{C, 1}(m) \, \mathrm{d} m \\
        & \approx \frac{1}{N} \sum_{\ell = 1}^N \widehat{\mu}(C, 1, M_{\widehat{\pi}_{C, 0}}^{(\ell)}) \frac{\widehat\pi_{C, a'}(M_{\widehat{\pi}_{C, 0}, j}^{(\ell)}) \widehat\pi_{C, 0}(M_{\widehat{\pi}_{C, 0}, -j}^{(\ell)})}{\widehat\pi_{C, 1}(M_{\widehat{\pi}_{C, 1}}^{(\ell)})}.
    \end{aligned}
\]
where the sampling density $\widehat{\pi}_{C, 1} (m)$ is calculated by $\widehat{\pi}_{C, 1} (m) = \widehat{\pi}_{C, 1}(m_j) \widehat{\pi}_{C, 1, m_j}(m_{-j})$ and does not need to estimate additionally.
Similarly, we have
\[
    \begin{aligned}
        \big\langle \widehat\tau_{C, \boldsymbol\cdot \, ; \, j}(C, 1, M_{-j}) \big\rangle \approx \frac{1}{N} \sum_{\ell = 1}^N \left[  \widehat\mu (C, 1, M_{j, \widehat\pi_{C, 1}}^{(\ell)}, M_{-j}) -  \widehat\mu (C, 1, M_{j, \widehat\pi_{C, 0}}^{(\ell)}, M_{-j}) \right]
    \end{aligned}
\]
and
\begin{small}
\[
    \begin{aligned}
        \widehat{\tau}_{C, 0 \, ; \, -j} (C, 1, M_{j} ) \approx \frac{1}{N} \sum_{\ell = 1}^N \widehat{\mu} (C, 1, M_{j}, M_{-j, \widehat{\pi}_{C, 0}}^{(\ell)}).
    \end{aligned}
\]
Note that we can rewrite $\widehat{DM}_j^{\text{QR}} = \frac{1}{n} \sum_{i = 1}^n \widehat{QR}_{DM_j}(X_i)$ with
\[
    \begin{aligned}
        \widehat{QR}_{DM_j}(X) & :=  \frac{\mathds{1}(A = 1)}{\widehat{e}_1(C)} \frac{\big\langle\widehat\pi_{C, \bcdot}(M_{-j}) \big\rangle}{\widehat\pi_{C, 1, M_j}(M_{-j})} \big[ Y - \widehat\mu (C, 1, M) \big] + \frac{\mathds{1}(A = 0)}{\widehat{e}_0(C)}  \big\langle \widehat\tau_{C, \boldsymbol\cdot \, ; \, j}(C, 1, M_{-j}) \big\rangle \\
        & ~~~~~ +  \bigg\langle \frac{\mathds{1}(A = \bcdot)}{\widehat{e}_{\bcdot}(C)} \bigg\rangle \widehat{\tau}_{C, 0\, ; \, -j} (C, 1, M_{j}) + \bigg[ 1 -  \frac{\mathds{1}(A = 0)}{\widehat{e}_0(C)} - \frac{\mathds{1}(A = 1)}{\widehat{e}_1(C)} \bigg] \widehat\zeta_j^{\mathscr{M}_0}(1, 0, C) \\
        & ~~~~~  + \left[ 2 \frac{\mathds{1}(A = 0)}{\widehat{e}_0(C)} - 1 \right] \widehat\zeta_j^{\mathscr{M}_0}(0, 0, C).
    \end{aligned}
\]
\end{small}
Therefore, we have the approximation as $\widehat{DM}_j^{\text{QR}} \approx \frac{1}{n} \sum_{i = 1}^n \widehat{QR}_{DM_j}^{\text{MC}}(X_i)$ with the following explicit formula
\begin{small}
\begin{equation}\label{DM_QR_MC}
    \begin{aligned}
        \widehat{QR}_{DM_j}^{\text{MC}}(X) & :=  \frac{\mathds{1}(A = 1)}{\widehat{e}_1(C)} \bigg[ \frac{\widehat\pi_{C, 1}(M_{-j}) -\widehat\pi_{C, 0}(M_{-j})}{\widehat\pi_{C, 1, M_j}(M_{-j})} \bigg] \big\{ Y - \widehat\mu (C, 1, M) \big\} \\
        & ~~~~~ + \frac{\mathds{1}(A = 0)}{\widehat{e}_0(C)}   \frac{1}{N} \sum_{\ell = 1}^N \left[  \widehat\mu (C, 1, M_{j, \widehat\pi_{C, 1}}^{(\ell)}, M_{-j}) -  \widehat\mu (C, 1, M_{j, \widehat\pi_{C, 0}}^{(\ell)}, M_{-j}) \right]  \\
        & ~~~~~ +  \bigg[ \frac{\mathds{1}(A = 1)}{\widehat{e}_1(C)} -  \frac{\mathds{1}(A = 0)}{\widehat{e}_0(C)} \bigg] \frac{1}{N} \sum_{\ell = 1}^N \widehat{\mu} (C, 1, M_{j}, M_{-j, \widehat{\pi}_{C, 0}}^{(\ell)}) \\
        & ~~~~~ + \bigg[ 1 -  \frac{\mathds{1}(A = 0)}{\widehat{e}_0(C)} - \frac{\mathds{1}(A = 1)}{\widehat{e}_1(C)} \bigg] \frac{1}{N} \sum_{\ell = 1}^N \widehat{\mu}(C, 1, M_{\widehat{\pi}_{C, 0}}^{(\ell)}) \frac{\widehat\pi_{C, 1}(M_{\widehat{\pi}_{C, 0}, j}^{(\ell)}) \widehat\pi_{C, 0}(M_{\widehat{\pi}_{C, 0}, -j}^{(\ell)})}{\widehat\pi_{C, 1}(M_{\widehat{\pi}_{C, 1}}^{(\ell)})} \\
        & ~~~~~ + \left[ 2 \frac{\mathds{1}(A = 0)}{\widehat{e}_0(C)} - 1 \right] \frac{1}{N} \sum_{\ell = 1}^N \widehat{\mu}(C, 1, M_{\widehat{\pi}_{C, 0}}^{(\ell)}) \frac{\widehat\pi_{C, 0}(M_{\widehat{\pi}_{C, 0}, j}^{(\ell)}) \widehat\pi_{C, 0}(M_{\widehat{\pi}_{C, 0}, -j}^{(\ell)})}{\widehat\pi_{C, 1}(M_{\widehat{\pi}_{C, 1}}^{(\ell)})}.
    \end{aligned}
\end{equation}
\end{small}
Similarly, note that the double integral can be rewritten as
\[
    \begin{aligned}
        \E \big[ \varrho_j (a', M_j, C \, ; \, \mathcal{G}_M) \mid C \big] & = \int \mu \big( a', m_j, \pa_j(\mathcal{G}_M), C \big) \pi_{C, a'} \big(\pa_j(\mathcal{G}_M)  \big) \pi_C (m_j) \, \mathrm{d} (m_j, \pa_j(\mathcal{G}_M)) \\
        & \approx \frac{1}{N^2} \sum_{\ell_1 = 1}^N \sum_{\ell_2 = 1}^N \mu \big( a', M_{j, \pi_{C}}^{(\ell_1)}, \Pa_{j, \pi_{C, a', \mathcal{G}_M}}^{(\ell_2)}, C\big).
    \end{aligned}
\]
Thus, given the fact $\widehat{TM}_j^{\text{QR}} (\mathcal{G}_M)= \frac{1}{n} \sum_{i = 1}^n \widehat{QR}_{TM_j}(X_i \, ; \, \mathcal{G}_M)$ with
\begin{small}
\[
    \begin{aligned}
        \widehat{QR}_{TM_j}(X \, ; \, \mathcal{G}_M) & := \left\langle \frac{\mathds{1}(A = \bcdot)}{\widehat{e}_{\bcdot}(C)} \big[ Y - \widehat{\mu} (C, \bcdot) \big] -  \frac{\mathds{1} (A = \bcdot)}{\widehat{e}_{\bcdot}(C)} \widehat\pi_{C, \bcdot} \big(\Pa_j (\mathcal{G}_M) \big) \Big( Y - \widehat{\mu} (C, \bcdot, \Pa_j(\mathcal{G}_M), M_j)\Big) \right\rangle \\
        & ~~~~~~~~~ - \left\langle  \frac{\mathds{1} (A = \bcdot)}{\widehat{e}_{\bcdot}(C)} \Big\{ \widehat\tau_{C \, ; \, j} \big( C, 1, \Pa_j (\mathcal{G}_M) \big) - {\E} \big[ \widehat\varrho_j (\bcdot, M_j, C\, ; \, \mathcal{G}_M) \mid C  \big] \Big\} \right\rangle  \\
        & ~~~~~~~~~ + \left\langle \widehat{\mu}(C, \bcdot) - \widehat\tau_{C, \bcdot \, ; \, \pa_j(\mathcal{G}_M)} (C, \bcdot , M_j) \right\rangle,
    \end{aligned}
\]
\end{small}
we have the approximation $\widehat{DM}_j^{\text{QR}} \approx \frac{1}{n} \sum_{i = 1}^n \widehat{QR}_{TM_j}^{\text{MC}}(X_i)$, where 
\begin{small}
\begin{equation}\label{TM_QR_MC}
    \begin{aligned}
        \widehat{QR}_{TM_j}^{\text{MC}}(X \, ; \, \mathcal{G}_M) := & \Bigg[ \frac{\mathds{1}(A = 1)}{\widehat{e}_1(C)} \Big\{ Y - \widehat{\mu} (C, 1) \Big\} - \frac{\mathds{1}(A = 0)}{\widehat{e}_0(C)} \Big\{ Y - \widehat{\mu} (C, 0) \big] \Big\} \Bigg] \\
        & - \Bigg[ \frac{\mathds{1} (A = 1)}{\widehat{e}_1(C)} \widehat\pi_{C, 1} \big(\Pa_j (\mathcal{G}_M) \big) \Big\{ Y - \widehat{\mu} (C, 1, \Pa_j(\mathcal{G}_M), M_j)\Big\} \\
        & ~~~~~~~~~~~~~~~~~  - \frac{\mathds{1} (A = 0)}{\widehat{e}_0(C)}  \widehat\pi_{C, 0} \big(\Pa_j (\mathcal{G}_M) \big) \Big\{ Y - \widehat{\mu} (C, 0, \Pa_j(\mathcal{G}_M), M_j) \Big\} \\
        & + \frac{1}{N^2} \sum_{\ell_1 = 1}^N \sum_{\ell_2 = 1}^N \Big[ \widehat\mu \big(C, 1, M_{j, \widehat\pi_{C}}^{(\ell_1)}, \Pa_{j} (\mathcal{G}_M)\big) - \widehat\mu \big(C, 1, M_{j, \widehat\pi_{C}}^{(\ell_1)}, \Pa_{j, \widehat\pi_{C, 1, \mathcal{G}_M}}^{(\ell_2)}\big) \Big] \\
        & - \frac{1}{N^2} \sum_{\ell_1 = 1}^N \sum_{\ell_2 = 1}^N \Big[ \widehat\mu \big( C, 0, M_{j, \widehat\pi_{C}}^{(\ell_1)}, \Pa_{j} (\mathcal{G}_M) \big) - \widehat\mu \big(C, 0, M_{j, \widehat\pi_{C}}^{(\ell_1)}, \Pa_{j, \widehat\pi_{C, 0, \mathcal{G}_M}}^{(\ell_2)}, C\big) \Big] \Bigg] \\
        & + \widehat{\mu}(C, 1) - \widehat{\mu}(C, 0)  - \bigg[ \frac{1}{N} \sum_{\ell = 1}^N \Big[ \widehat\mu \big( C, 1, \Pa_{j, \widehat\pi_{C, 1, \mathcal{G}_M}}^{(\ell)}, M_j \big) - \widehat\mu \big( C, 0, \Pa_{j, \widehat\pi_{C, 0, \mathcal{G}_M}}^{(\ell)}, M_j \big) \bigg].
    \end{aligned}
\end{equation}
\end{small}
\noindent \textbf{Proof of Proposition \ref{pro_fast}:}

\begin{proof}
    Denote $\phi(\bcdot \, ; \mu, \Sigma)$ as the density of distribution $\mathcal{N} (\mu, \Sigma)$. Similar to \eqref{fast_imple_eq1}, we have
    \[
        \begin{aligned}
            & \int_{\mathcal{M}_j} \mu (C, 1, m_j, M_{-j}) \pi_{C, 1} (m_j) \, \mathrm{d} m_j \\
		= & \int_{\mathcal{M}_j} \Big[ \beta_{YC} C + \alpha_{YA}  + \beta_{YM, j} m_j + \beta_{YM, -j}^{\top} M_{-j} \Big]  \phi \Big(m_j \, ; \, \big[{\Theta}_{MC} C \big]_j + \theta_{MA, j}, \, \big[ \var(e_M) \big]_{jj} \Big) \, \mathrm{d} m_j \\
		= & \beta_{YC} C + \alpha_{YA}+ \beta_{YM, j} \Big\{ \big[\widehat{\Theta}_{MC} C \big]_j \Big\} + \beta_{YM, -j}^{\top} M_{-j},
        \end{aligned}
    \]
    thus
    \[
    	\begin{aligned}
    		\big\langle \tau_{C, \boldsymbol\cdot \, ; \, j}(C, 1, M_{-j}) \big\rangle & = \Big[ \beta_{YC} C + \alpha_{YA}+ \beta_{YM, j} \Big\{ \big[{\Theta}_{MC} C \big]_j + \theta_{MA, j}\Big\} + \beta_{YM, -j}^{\top} M_{-j} \Big]  \\
            & \qquad - \Big[ \beta_{YC} C + \alpha_{YA}+ \beta_{YM, j} \big[{\Theta}_{MC} C \big]_j + \beta_{YM, -j}^{\top} M_{-j} \Big] = \beta_{YM, j}\theta_{MA, j}.
    	\end{aligned}
    \]
    Similarly, we have 
    \[
        \begin{aligned}
            & \int_{\mathcal{M}_{-j}} {\mu} (C, 1, M_j, m_{-j}) \pi_{C, 0} (m_{-j}) \, \mathrm{d} m_{-j} \\
            = & \int_{\mathcal{M}_j} \Big[ \beta_{YC} C + \alpha_{YA}  + \beta_{YM, j} M_j + \beta_{YM, -j}^{\top} m_{-j} \Big]  \phi \Big(m_{-j} \, ; \, \big[{\Theta}_{MC} C \big]_{-j}, \, \big[ \var(e_M) \big]_{-j, -j} \Big) \, \mathrm{d} m_{-j} \\
		= & \beta_{YC} C + \alpha_{YA}+ \beta_{YM, j} M_j + \beta_{YM, -j}^{\top} \big[{\Theta}_{MC} C \big]_{-j}
        \end{aligned}
    \]
    and
    \[
        \begin{aligned}
            & \zeta_j(a', 0, C)\\
            = & \int_{\mathcal{M}} \mu(C, 1, m) \pi_{C, a'}(m_j) \pi_{C, 0} (m_{-j}) \, \mathrm{d} m \\
            = & \int_{\mathcal{M}} \Big[ \beta_{YC} C + \alpha_{YA}  + \beta_{YM, j} m_j + \beta_{YM, -j}^{\top} m_{-j} \Big] \phi \Big(m_{j} \, ; \, \big[{\Theta}_{MC} C \big]_{j} + \theta_{MA, j} a', \, \big[ \var(e_M) \big]_{j, j} \Big) \\
            & \qquad \qquad \qquad \qquad \phi \Big(m_{-j} \, ; \, \big[{\Theta}_{MC} C \big]_{-j} , \, \big[ \var(e_M) \big]_{- j, - j} \Big) \, \mathrm{d} m_j \, \mathrm{d} m_{- j} \\
            = & \beta_{YC} C + \alpha_{YA} + \beta_{YM, j} \Big\{ \big[{\Theta}_{MC} C \big]_{j} + \theta_{MA, j} a' \Big\} + \beta_{YM, -j}^{\top} \big[{\Theta}_{MC} C \big]_{-j}.
        \end{aligned}
    \]
    Therefore, under Assumption \ref{ass_slsem} and Gaussian assumption,
    \[
        \widehat{DM}_j^{\text{QR}} = \pn \widehat{QR}_{DM_j}^{\text{fast}}(X) + \widehat{DM}_j^{\text{OLS}},
    \]
    where the (estimated) quadruply adjustment terms $\widehat{QR}_{DM_j}(X)$ is defined as
    \begin{small}
    \begin{equation}\label{DM_QR_fast}
    \begin{aligned}
        & \widehat{QR}_{DM_j}^{\text{fast}}(X) = \widehat{QR}_{DM_j}^{\text{fast}} (C, A, M, Y) \\
        = &  \frac{\mathds{1}(A = 1)}{\widehat{e}_1(C)} \bigg[ \frac{\phi \big(M_{-j} ; \big[ \widehat{\Theta}_{MC} C \big]_{- j} + \widehat\theta_{MA, - j}, \big[ \widehat\var(e_M) \big]_{-j, -j} \big) -\phi \big(M_{-j} ; \big[ \widehat{\Theta}_{MC} C \big]_{- j}, \big[ \widehat\var(e_M) \big]_{-j, -j} \big)}{\phi \big(M_{-j} \mid M_j ; \big[ \widehat{\Theta}_{MC} C + \widehat\theta_{MA}  \big]_{-j \, \mid \, j}, \big[ \widehat\var(e_M) \big]_{-j \, \mid \, j} \big)} \bigg] \\
        & ~~~~~~~~~~~~~~~~~~~~~~~~~~~~~~ \times \Big\{ Y - \big[ \widehat\beta_{YC} C + \widehat\alpha_{YA} + \widehat\beta_{YM}^{\top} M \big] \Big\} \\
        & ~~~~~ + \bigg[ \frac{\mathds{1}(A = 1)}{\widehat{e}_1(C)} -  \frac{\mathds{1}(A = 0)}{\widehat{e}_0(C)} \bigg] \Big[ \widehat\beta_{YC} C + \widehat\alpha_{YA}+ \widehat\beta_{YM, j} M_j + \widehat\beta_{YM, -j}^{\top} \big[{\widehat\Theta}_{MC} C \big]_{-j} \Big] \\
        & ~~~~~ - \bigg[ \frac{\mathds{1}(A = 1)}{\widehat{e}_1(C)} \Big[ \widehat\beta_{YC}^{\top} C + \widehat\alpha_{YA} + \widehat\beta_{YM, j} \Big\{ \big[{\widehat\Theta}_{MC} C \big]_{j} + \widehat\theta_{MA, j}  \Big\} + \widehat\beta_{YM, -j}^{\top} \big[{\widehat\Theta}_{MC} C \big]_{-j} \Big] \\
        & ~~~~~~~~~~~~~ - \frac{\mathds{1}(A = 0)}{\widehat{e}_0(C)} \Big[ \widehat\beta_{YC}^{\top} C + \widehat\alpha_{YA} + \widehat\beta_{YM, j}  \big[{\widehat\Theta}_{MC} C \big]_{j}  + \widehat\beta_{YM, -j}^{\top} \big[{\widehat\Theta}_{MC} C \big]_{-j} \Big] \bigg].
    \end{aligned}
    \end{equation}
    \end{small}
    Similarly, we can show that 
    \begin{small}
    \[
        \begin{aligned}
            \pi_C(M_j) & = e_0(C) \pi_{C, 0} (M_j) + e_1(C) \pi_{C, 1} (M_j) \\
            & = e_0(C) \phi \Big(m_{j} \, ; \, \big[{\Theta}_{MC} C \big]_{j}, \big[ \var(e_M) \big]_{j, j} \Big) + e_1(C) \phi \Big(m_{j} \, ; \, \big[{\Theta}_{MC} C \big]_{j} + \theta_{MA, j}, \big[ \var(e_M) \big]_{j, j} \Big),
        \end{aligned}
    \]
    \[
        \begin{aligned}
            \pi_{C, a'} \big( \Pa_j (\mathcal{G}_M)\big) = \phi \Big( \Pa_j (\mathcal{G}_M)  \, ; \, \big[{\Theta}_{MC} C \big]_{k : k \in \pa_j(\mathcal{G}_M)} + \theta_{MA,  k \in \pa_j(\mathcal{G}_M)} a', \big[ \var(e_M) \big]_{kk : k \in \pa_j(\mathcal{G}_M)} \Big),
        \end{aligned}
    \]
    \[
        \begin{aligned}
            & \tau_{C \, ; \, j}\big( C, a', \Pa_j (\mathcal{G}_M) \big)  =  \int_{\mathcal{M}_j} \Big\{ \eta_{YM_j} m_j + \gamma_{Y \Pa_j(\mathcal{G}_M)}^{\top} \Pa_j(\mathcal{G}_M) + \eta_{YA} a' + \gamma_{YC}^{\top} C\Big\}\\
            & ~~~~~~~ \Big\{ e_0(C) \phi \Big(m_{j} \, ; \, \big[{\Theta}_{MC} C \big]_{j}, \big[ \var(e_M) \big]_{j, j} \Big) + e_1(C) \phi \Big(m_{j} \, ; \, \big[{\Theta}_{MC} C \big]_{j} + \theta_{MA, j}, \big[ \var(e_M) \big]_{j, j} \Big) \Big\} \, \mathrm{d} m_j \\
            = & \gamma_{Y \Pa_j(\mathcal{G}_M)}^{\top} \Pa_j(\mathcal{G}_M) + \eta_{YA} a' + \gamma_{YC}^{\top} C  + \eta_{YM_j} \Big[ e_0(C) \big[{\Theta}_{MC} C \big]_{j} + e_1 (C) \big( \big[{\Theta}_{MC} C \big]_{j} + \theta_{MA, j}\big) \Big],
        \end{aligned}
    \]
    \end{small}
    and
    \begin{small}
    \[
        \begin{aligned}
            & {\E} \big[ \varrho_j (a', M_j, C \, ; \, \mathcal{G}_M) \mid C \big] \\
            = & \int_{\mathcal{M}_{\pa_j} (\mathcal{G}_M)} \, \mathrm{d} \pa_j(\mathcal{G}_M) \int_{\mathcal{M}_j} {\mu} \big(C, a', \pa_j(\mathcal{G}_M), m_j \big) {\pi}_{C, a'} \big(\pa_j( \mathcal{G}_M) \big)  {\pi}_{C} (m_j)  \, \mathrm{d} m_j \\
            = & \int_{\mathcal{M}_{\pa_j} (\mathcal{G}_M)} \, \mathrm{d} \pa_j(\mathcal{G}_M)  \int_{\mathcal{M}_j} \Big\{ \eta_{YM_j} m_j + \gamma_{Y \Pa_j(\mathcal{G}_M)}^{\top} \pa_j + \eta_{YA} a' + \gamma_{YC}^{\top} C\Big\} \\
            & ~~~~ \times \phi \Big( \pa_j  \, ; \, \big[{\Theta}_{MC} C \big]_{k : k \in \pa_j(\mathcal{G}_M)} + \theta_{MA,  k \in \pa_j(\mathcal{G}_M)} a', \big[ \var(e_M) \big]_{kk : k \in \pa_j(\mathcal{G}_M)} \Big) \\
            & ~~~~\times\Big\{ e_0(C) \phi \Big(m_{j} \, ; \, \big[{\Theta}_{MC} C \big]_{j}, \big[ \var(e_M) \big]_{j, j} \Big) + e_1(C) \phi \Big(m_{j} \, ; \, \big[{\Theta}_{MC} C \big]_{j} + \theta_{MA, j}, \big[ \var(e_M) \big]_{j, j} \Big) \Big\} \, \mathrm{d} m_j \\
            = &  \eta_{YA} a' + \gamma_{YC}^{\top} C + \gamma_{Y \Pa_j(\mathcal{G}_M)}^{\top} \Big\{ \big[{\Theta}_{MC} C \big]_{k : k \in \pa_j(\mathcal{G}_M)} + \theta_{MA,  k \in \pa_j(\mathcal{G}_M)} a' \Big\} \\
            & ~~~~~~~~~~ + \eta_{YM_j} \Big\{ e_0(C)  \big[{\Theta}_{MC} C \big]_{j} + e_1 (C) \big[{\Theta}_{MC} C \big]_{j} +  e_1 (C) \theta_{MA, j} \Big\}.
        \end{aligned}
    \]
    \end{small}
    Note that $\mu (C, a') = \gamma_{YA}^{\dag} a' + \eta_{YC}^{\dag \top} C$.
    Hence, we conclude
    \[
        \widehat{TM}_j^{\text{QR}} (\mathcal{G}_M) = \pn \widehat{QR}_{TM_j}^{\text{fast}} (X; \mathcal{G}_M) + \widehat{TM}_j^{\text{OLS}},
    \]
    with the estimated adjustment term $\widehat{QR}_{TM_j}^{\text{fast}} (X; \mathcal{G}_M)$ defined as
    \begin{small}
    \begin{equation}\label{IM_QR_fast}
    \begin{aligned}
        & \widehat{QR}_{TM_j}^{\text{fast}} (X; \mathcal{G}_M) 
        = \Bigg[ \frac{\mathds{1}(A = 1)}{\widehat{e}_1(C)} \Big\{ Y - \widehat\gamma_{YA}^{\dag} - \widehat\eta_{YC}^{\dag \top} C\Big\} - \frac{\mathds{1}(A = 0)}{\widehat{e}_0(C)} \Big\{ Y - \widehat\eta_{YC}^{\dag \top} C \big] \Big\} \Bigg] \\
        & - \Bigg[ \frac{\mathds{1} (A = 1)}{\widehat{e}_1(C)} \phi \Big( \Pa_j (\mathcal{G}_M)  \, ; \, \big[\widehat{\Theta}_{MC} C \big]_{k : k \in \pa_j(\mathcal{G}_M)} + \widehat\theta_{MA,  k \in \pa_j(\mathcal{G}_M)}, \big[ \widehat{\var}(e_M) \big]_{kk : k \in \pa_j(\mathcal{G}_M)} \Big) \\
        & ~~~~~~~~~~ \times \Big\{ Y - \widehat\eta_{Y M_j} M_j - \widehat\gamma_{Y \Pa_j(\mathcal{G}_M) }^{\top} \Pa_j(\mathcal{G}_M) - \widehat\eta_{YA} A - \widehat\gamma_{YC}^{\top} C \Big\} \\
        & ~~~~~~~~~~  - \frac{\mathds{1} (A = 0)}{\widehat{e}_0(C)}\phi \Big( \Pa_j (\mathcal{G}_M)  \, ; \, \big[\widehat{\Theta}_{MC} C \big]_{k : k \in \pa_j(\mathcal{G}_M)}, \big[ \widehat{\var}(e_M) \big]_{kk : k \in \pa_j(\mathcal{G}_M)} \Big)\\
        & ~~~~~~~~~~ \times \Big\{ Y  - \widehat\eta_{Y M_j} M_j - \widehat\gamma_{Y \Pa_j(\mathcal{G}_M) }^{\top} \Pa_j(\mathcal{G}_M) - \widehat\gamma_{YC}^{\top} C \Big\} \\
        & + \frac{\mathds{1} (A = 1)}{\widehat{e}_1(C)} \Big\{  \widehat\gamma_{Y \Pa_j(\mathcal{G}_M)}^{\top} \Big[ \Pa_j(\mathcal{G}_M) - \big[\widehat{\Theta}_{MC} C \big]_{k: k \in \pa_j(\mathcal{G}_M)} - \widehat\theta_{MA, k : k \in \pa_j(\mathcal{G}_M)} \Big] \Big\} \\
        & - \frac{\mathds{1} (A = 0)}{\widehat{e}_0(C)} \Big\{  \widehat\gamma_{Y \Pa_j(\mathcal{G}_M)}^{\top} \Big[ \Pa_j(\mathcal{G}_M) - \big[\widehat{\Theta}_{MC} C \big]_{k: k \in \pa_j(\mathcal{G}_M)} \Big] \Big\} \Bigg]
    \end{aligned}
    \end{equation}
    \end{small}
    where $\phi(\bcdot \mid X_{S_1} \, ; \, \mu_{S_2 \mid S_1}, \Sigma_{S_2 \mid S_1})$ is the density for the conditional distribution $X_{S_2}  \mid X_{S_1}$ where $(X_{S_1}^{\top}, X_{S_2}^{\top})^{\top} \, \sim \, \mathcal{N} (\mu, \Sigma)$.
    Finally, by $\widehat{IM}_j^{\text{QR}} (\mathcal{G}_M) = \widehat{TM}_j^{\text{QR}} (\mathcal{G}_M) - \widehat{DM}_j^{\text{QR}}$, we conclude the proposition when both linear structures hold in Assumption \ref{ass_slsem}. Finally, it is noteworthy, as seen in Theorem \ref{thm_QR_nor}, that even if only one linear assumption holds in Assumption \ref{ass_slsem}, our estimator remains consistent.
\end{proof}

\section{Technique Proofs}\label{proof_A}

In the proof, to avoid any misunderstanding, we abbreviate the conditional density \(f_{Z_2 \mid Z_1} (z_2 \mid z_1)\) as \(f(z_2 \mid z_1)\), and omit the interval of the integral, assuming the interval is the support of the integrand variable.

\subsection{The Proofs of Section \ref{sec_semi}} 
~\\

\noindent \textbf{Proof of Theorem \ref{thm_direct_exp}:}

\begin{proof}
    Fixing a DAG $\mathcal{G}_M$, we drop the DAG index in these quantities for simplicity. Note that the joint intervention density can be written as 
    \[
        \begin{aligned}
            f \big( y \mid do(A = a, \, M_j = m_j)\big) & = \int \frac{f(y, c, a, m_j, \pa_j)}{f(a \mid c) f(m_j \mid c, a, \pa_j)} \, \mathrm{d} c \, \mathrm{d} \pa_j \\
            & = \int \frac{f(y, c, a, m_j, \pa_j)}{f(m_j, c, a, \pa_j)} \frac{f(c, a, \pa_j)}{f(a, c)} f(c) \, \mathrm{d} c \, \mathrm{d} \pa_j \\
            &= \int f(y \mid c, a, m_j, \pa_j) f(\pa_j \mid c, a) f(c) \, \mathrm{d} c \, \mathrm{d} \pa_j,
        \end{aligned}
    \]
    where the first equation is applying $\Pa(A) = C$, and $\Pa(M_j) = (C, A, \Pa_j)$ with Theorem 6 in \cite{kuroki1999identifiability}. Thus, 
    \[
         \begin{aligned}
             \E \big[ Y \mid do(A = a, \, M_j = m_j), C \big] & = \int \E \big[ Y \mid A = a, M_j = m_j, {\Pa}_j = \pa_j, C \big] f(\pa_j \mid a, C) \, \mathrm{d} \pa_j \\
             & = \varrho_j (a, m_j, C),
         \end{aligned}
    \]
    then
    \[
        \begin{aligned}
            \E \big[ Y \mid do(A = 1, M_j), C = c \big] - \E \big[ Y \mid do(A = 0, M_j), C = c \big] = \varrho_j (1, M_j, c) - \varrho_j (0, M_j, c).
        \end{aligned}
    \]
    Denote
    \[
         \begin{aligned}
             TM_j(c) := & \Big\{ \E \big[ Y \mid A = 1, C = c\big] - \E \big[ Y \mid A = 0, C = c\big] \Big\} \\
             & - \Big\{ \E \big[ Y \mid do(A = 1, M_j), C = c \big] - \E \big[ Y \mid do(A = 0, M_j), C = c \big] \Big\}.
         \end{aligned}
    \]
    Since $TM_j = \E TM_j(C)$, we have 
    \[
        \begin{aligned}
            TM_j & = \E \big[ TM_j(C) \big] \\
            & = \Big\{ \E \big[ Y \mid A = 1, C \big] - \E \big[ Y \mid A = 0, C \big] \Big\} - \Big\{ \varrho_j (1, M_j, C) - \varrho_j (0, M_j, C) \Big\}.
        \end{aligned}
    \]
    The formula for $DM_j$ is just by the definition.
\end{proof}

\noindent \textbf{Proof of Theorem \ref{thm_EIF}:}

\begin{proof}
    Let $F_{X; t}$ denote a one dimensional regular parametric submodel of $\mathscr{M}_{\text{nonpar}}$, with $F_{X; 0}=F_X$, and let $\E_t$ be the expectation with respect to $F_{X; t}$. 
    Denote $U$ the score of $F_{X ; t}$ at $t=0$ and $\nabla_{t=0}$ denoting differentiation with respect to $t$ at $t=0$. For a fixed DAG $\mathcal{G}_M$, we drop $\mathcal{G}_M$ in $\varrho_j (a', M_j, C; \mathcal{G}_M)$ as $\varrho_j (a', M_j, C)$. Note that we can write 
    \begin{small}
    \[
        \begin{aligned}
            & \E \varrho_j (a', M_j, C ) \\
            = & \int f(m_j, c) \mathrm{d} (m_j, c)\int \mu(C, a', \pa_j, M_j) \, \pi_{C, a'}(\pa_j)  \, \mathrm{d} \pa_j \\
            = & \int \E \big[ Y \mid A = a', M_j = m_j, {\Pa}_j = \pa_j, C = c\big] f(\pa_j \mid a', c) f(m_j \mid c) f(c) \, \mathrm{d} \pa_j \, \mathrm{d} m_j \, \mathrm{d} c.
        \end{aligned}  
    \]
    Denote $\mathrm{d} (c, \pa_j, m_j) = \mathrm{d} \mu$, then we can get 
    \[
        \begin{aligned}
            & \left. \frac{\partial}{\partial t} \E_t \varrho_j (a', M_j, C) \right|_{t = 0} \\
            = & \int \nabla_{t = 0} \E_t \big[ Y \mid A = a', M_j = m_j, {\Pa}_j = \pa_j, C = c\big] f(\pa_j \mid a', c) f(m_j \mid c) f(c) \mathrm{d} \mu \, \mathrm{d} F_{X, t} \\
            & ~~~~ + \int \E \big[ Y \mid A = a', M_j = m_j, {\Pa}_j = \pa_j, C = c\big] \nabla_{t = 0} f_t(\pa_j \mid a', c) f(m_j \mid c) f(c) \, \mathrm{d} \mu \, \mathrm{d} F_{X, t} \\
            & ~~~~ + \int \E \big[ Y \mid A = a', M_j = m_j, {\Pa}_j = \pa_j, C = c\big]  f(\pa_j \mid a', c) \nabla_{t = 0} f_t(m_j \mid c) f(c) \, \mathrm{d} \mu \, \mathrm{d} F_{X, t} \\
            & ~~~~ + \int \E \big[ Y \mid A = a', M_j = m_j, {\Pa}_j = \pa_j, C = c\big]  f(\pa_j \mid a', c) f(m_j \mid c) \nabla_{t = 0} f_t(c) \, \mathrm{d} \mu \, \mathrm{d} F_{X, t} \\
            =: & \sum_{k = 1}^4 A_{j, k}.
        \end{aligned}
    \]
    \end{small}
    Here $A_{j, k}, k = 1, 2, 3, 4$ can be calculated straightforwardly by the method in \cite{tchetgen2012semiparametric}. First,
    \begin{small}
    \[
        \begin{aligned}
            & A_{j, 1} = \int \nabla_t \E_t \big[ Y \mid A = a', M_j = m_j, {\Pa}_j = \pa_j, C = c\big] f(\pa_j \mid a', c) f(m_j \mid c) f(c) \, \mathrm{d} \pa_j \, \mathrm{d} (m_j, c) \, \mathrm{d} F_{X, t} \\
            = & \E \Bigg[ U \int \frac{\mathds{1}(a' = A, m_j = M_j, \pa_j = \Pa_j, c = C)}{f(c) f (a' \mid c) f(\pa_j \mid a', c) f(m_j \mid \pa_j, a', c)} \\
            & ~~~~~ \times \Big\{ Y -  \E \big[ Y \mid A = a', M_j = m_j, {\Pa}_j = \pa_j, C = c\big] \Big\}  f(\pa_j \mid a', c) f(m_j, c) \, \mathrm{d} \pa_j \, \mathrm{d} (m_j, c) \Bigg] \\
            = & \E \Bigg[ U \int \frac{\mathds{1}(a' = A, m_j = M_j, \pa_j = \Pa_j, c = C) f(m_j \mid c)}{f(a' \mid c)f(m_j \mid \pa_j, a', c)} \\
            & ~~~~~~~~~~~~~~~~~~ \times \Big\{ Y -  \E \big[ Y \mid A = a', M_j = m_j, {\Pa}_j = \pa_j, C = c\big] \Big\}  \, \mathrm{d} \pa_j \, \mathrm{d} (m_j, c) \Bigg] \\
            = & \E \Bigg[ U \frac{\mathds{1}(A = a')}{e_{a'} (C)} \pi_{C, a'} ( \Pa_j)  \Big\{ Y -  \E \big[ Y \mid A = a', M_j, {\Pa}_j, C\big] \Big\}  \Bigg].
        \end{aligned}
    \]
    \end{small}
    Similarly, one can easily obtain that
    \begin{small}
    \[
        \begin{aligned}
            & A_{j, 2} = \int  \E \big[ Y \mid A = a', M_j = m_j, {\Pa}_j = \pa_j, C = c\big] \nabla_t f_t (\pa_j \mid a', c) f(m_j, c) \, \mathrm{d} \pa_j \, \mathrm{d} (m_j, c) \, \mathrm{d} F_{X, t} \\
            = & \E \Bigg[ U \int\E \big[ Y \mid A = a', M_j = m_j, {\Pa}_j = \pa_j, C = c\big] \\
            & ~~~~~~~~~~ \times \frac{\mathds{1}(a' = A, c = C)}{f(a' \mid c) f(c)} \Big\{ \mathds{1} (\pa_j = \Pa_j) - f(\pa_j \mid a', c) \Big\} f(m_j \mid c) f(c) \, \mathrm{d} \pa_j \, \mathrm{d} m_j \, \mathrm{d} c \Bigg] \\
            = & \E \Bigg[ U \frac{\mathds{1} (A = a')}{e_{a'}(C)} \bigg\{  \int \E \big[ Y \mid C, A = a', \Pa_j, M_j = m_j\big] f(m_j \mid C) \, d m_j - \E \big[ \varrho_j (a', M_j, C) \mid C \big] \bigg\} \Bigg],
        \end{aligned}
    \]
    \end{small}
    \begin{small}
    \[
        \begin{aligned}
            A_{j, 3} & =  \int  \E \big[ Y \mid A = a', M_j = m_j, {\Pa}_j = \pa_j, C = c\big] f (\pa_j \mid a', c) \nabla_t f_t (m_j \mid c) f(c) \, \mathrm{d} \pa_j \, \mathrm{d} (m_j, c) \, \mathrm{d} F_{X, t}\\
            & = \E \Bigg[ U \int  \E \big[ Y \mid A = a', M_j = m_j, {\Pa}_j = \pa_j, C = c\big] f (\pa_j \mid a', c) \\
            & ~~~~~~~~~~~~~~~~~~~~~~~~~~~ \times \frac{\mathds{1} (c = C)}{f(c)}\Big\{ \mathds{1} (m_j = M_j) - f(m_j \mid c) \Big\} \, f(c) \, \mathrm{d} m_j \, \mathrm{d} \pa_j \mathrm{d} c \Bigg] \\
            & = \E \Big[ U \big\{ \varrho_j(a', M_j, C) - \E \big[ \varrho_j(a', M_j, C) \mid C \big] \big\} \Big],
        \end{aligned}
    \]
    \end{small}
    and
    \begin{small}
    \[
        \begin{aligned}
            A_{j, 4} & = \int \E \big[ Y \mid A = a', M_j = m_j, {\Pa}_j = \pa_j, C = c\big] f (\pa_j \mid a', c)  f (m_j \mid c) \nabla_t f_t(c) \, \mathrm{d} \pa_j \, \mathrm{d} (m_j, c)\, \mathrm{d} F_{X, t} \\
            & = \E \Bigg[ U \int \E \big[ Y \mid A = a', M_j = m_j, {\Pa}_j = \pa_j, C = c\big] \\
            & ~~~~~~~~~~~~~~~~~~~~ \times f (\pa_j \mid a', c)  f (m_j \mid c) \big\{ \mathds{1} (c = C) - f(c)\big\} \, \mathrm{d} \pa_j \, \mathrm{d} (m_j, c) \Bigg] \\
            & = \E \Big[ U \big\{  \E \big[ \varrho_j(a', M_j, C) \mid C \big] - \E \varrho_j(a', M_j, C) \big\} \Big].
        \end{aligned}
    \]
    Similarly, we can decompose 
    \[
        \begin{aligned}
            & \left. \frac{\partial}{\partial t} \E_t \zeta_j (a', 0, C) \right|_{t = 0} \\
            = & \int \nabla_{t = 0} \E_t \big[ Y \mid A = 1, M = m, C = c \big] f (m_{-j} \mid A = 0, C = c )  f (m_{j} \mid A = 1, C = c ) f(c) \, \mathrm{d} \mu \, \mathrm{d} F_{X, t} \\
            & ~ + \int  \E_t \big[ Y \mid A = 1, M = m, C = c \big] \nabla_{t = 0} f (m_{-j} \mid A = 0, C = c )  f (m_{j} \mid A = 1, C = c ) f(c) \, \mathrm{d} \mu \, \mathrm{d} F_{X, t} \\
            & ~ + \int  \E_t \big[ Y \mid A = 1, M = m, C = c \big]  f (m_{-j} \mid A = 0, C = c ) \nabla_{t = 0} f (m_{j} \mid A = 1, C = c ) f(c) \, \mathrm{d} \mu \,\mathrm{d} F_{X, t} \\
            & ~ + \int  \E_t \big[ Y \mid A = 1, M = m, C = c \big]  f (m_{-j} \mid A = 0, C = c )f (m_{j} \mid A = 1, C = c )  \nabla_{t = 0}  f(c) \, \mathrm{d} \mu \,\mathrm{d} F_{X, t} \\
            =: & \sum_{k = 1}^4 B_{j, k},
        \end{aligned}
    \]
    \end{small}
    and similarly verify that
    \begin{small}
    \[
        B_{j, 1} = \E \Bigg[ U \frac{\mathds{1}(A = 1)}{f( A = 1 \mid C)} \frac{f (M_{j} \mid A = 0, C) f (M_{-j} \mid A = a', C ) }{f(M_j \mid A = 1, C, M_{-j})f(M_{-j} \mid A = 1, C)} \Big\{ Y - \E \big[ Y \mid A = 1, M, C\big] \Big\} \Bigg],
    \]
    \end{small}
    \begin{small}
    \[
        B_{j, 2} = \E \Bigg[ U \frac{\mathds{1}(A = 0)}{f(A = 0 \mid C)} \bigg\{ \int \E \big[ Y \mid A = 1, M_{-j}, C, M_j = m_j\big]  f (m_{j} \mid A = a', C = c ) \, \mathrm{d} m_j - \zeta_j (a', 0, C) \bigg\} \Bigg],
    \]
    \end{small}
    \begin{small}
    \[
        B_{j, 3} = \E \Bigg[ U \frac{\mathds{1}(A = a')}{f(A = a' \mid C)} \bigg\{ \int \E \big[ Y \mid A = 1, M_{j}, C, M_{-j} = m_{-j}\big]  f (m_{-j} \mid A = 0, C  ) \, \mathrm{d} m_{-j} - \zeta_j (a', 0, C) \bigg\}\Bigg],
    \]
    \end{small}
    and
    \[
        B_{j, 4} = \E \Big[ U \big(  \zeta_j (a', 0, C)  - \E  \zeta_j (a', 0, C) \big) \Big].
    \]
    Finally, the efficient score for $\E \kappa(a', C)$ equal to 
    \[
        \frac{\mathds{1}(A = a')}{e_{a'} (C)} \Big\{ Y - \kappa(a', C) \Big\} + \kappa(a', C) - \E \kappa(a', C)
    \]
    has been well studied in various literature, see Section 2.2 in \cite{tchetgen2012semiparametric} for example. Thus, we conclude the results in the theorem.
\end{proof}

\noindent \textbf{Proof of Corollary \ref{cor_eff_DM_IM}:}

\begin{proof}
    The results are directly derived due to the linear property of efficient scores.
\end{proof}

\subsection{The Proofs of Section \ref{sec_straight_strategy}} 
~\\

Let $\E_{\xi} [\bcdot]$ represent the expected value computed with respect to the random variable $\xi$, while treating other variables as constants.
Employing graphical techniques, we can demonstrate the lemma as follows. 

\begin{lemma}\label{lem_nat_eff}
    Suppose the model satisfies Assumption \ref{ass_slsem}, then natural effects defined in Definition \ref{def_na} satisfy
    \[
        \begin{aligned}
            TE = \E_C \big[ \E \{ Y \mid do(A = 1), C\} - \E \{ Y \mid do(A = 0), C \} \big],
        \end{aligned}
    \]
    \[
         DE = \E_C \big[ \E\{Y \mid do(A=1, M=m^{(0)}), C \}- \E\{Y \mid do(A=0), C\} \big],
    \]
   and
   \[
       IE = \E_C \big[ \E\{Y \mid do(A=0, M=m^{(1)}), C\} - \E\{Y \mid do(A=0), C\} \big].
   \]

\end{lemma}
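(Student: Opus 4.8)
The plan is to reduce all three identities to a single principle: under the semi-linear structural equations of Assumption~\ref{ass_slsem}, the marginal law of the confounder $C$ is invariant under any intervention on $A$ and/or $M$, so that the law of total expectation can be applied to each post-intervention distribution using the \emph{observational} law $P_C$ as the outer mixing measure.

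First I would establish this invariance. In \eqref{lsem_x} the vector $C$ is generated before $A$, $M$, and $Y$, has no parent among them, and its mechanism carries a noise term independent of $\epsilon_A,\epsilon_M,\epsilon_Y$. Writing the truncated factorization (g-formula) of the post-intervention density for any regime $r$ that fixes $A$ and possibly $M$, the factor corresponding to $C$ is never deleted or modified, since $do(A=a)$ only removes the equation for $A$ and $do(M=m)$ only the equations for the $M_j$. Hence $P(C\mid do(\cdot))=P_C$, and for any such regime $r$,
\[
\E_r[Y]=\E_{C\sim P_C}\big[\E_r[Y\mid C]\big].
\]

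Next I would specialize this to each effect. For $TE$ the decomposition is immediate: both $\E[Y\mid do(A=1)]$ and $\E[Y\mid do(A=0)]$ split as $\E_C[\E\{Y\mid do(A=a),C\}]$, and subtracting the two levels yields the stated form. For $DE$ and $IE$ the nested counterfactuals $M^{(0)}$ and $M^{(1)}$ require more care, so I would work directly on the SEM sample space. Solving \eqref{lsem_x} gives $M^{(a)}=(I-B_{MM}^\top)^{-1}(B_{MC}^\top C+\beta_{MA}a+\epsilon_M)$, while $Y$ under $do(A=1,M=m)$ equals $\beta_{YC}^\top C+\alpha_{YA}+\beta_{YM}^\top m+\epsilon_Y$; thus every quantity appearing in $DE=\E[\E\{Y\mid do(A=1,M=M^{(0)})\}]-\E[Y\mid do(A=0)]$ is a measurable function of the shared exogenous variables $(C,\epsilon_M,\epsilon_Y)$. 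Conditioning on $C$ then folds the averaging over $M^{(0)}$ (driven by $\epsilon_M$ at the same $C$) into the inner conditional expectation $\E\{Y\mid do(A=1,M=m^{(0)}),C\}$, and the invariance from the first step lets me extract the outer average over $C$ with weight $P_C$. The case of $IE$ is identical, with $A=0$ in the outcome equation and $M^{(1)}$ in place of $M^{(0)}$.

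The main obstacle I anticipate is the cross-world nature of the nested intervention in $DE$ and $IE$: the counterfactual mediator $M^{(0)}$ and the outcome under $do(A=1,M=M^{(0)})$ must be driven by the \emph{same} realization of $C$ for the conditioning step to be legitimate. This is exactly where the structural content packaged into Assumption~\ref{ass_slsem}---that $C$ is exogenous, that the only parents of $A$ are confounders, and that no mediator is a cause of $C$---is indispensable, since it simultaneously guarantees the well-definedness of the shared-$C$ counterfactuals and the invariance of $P_C$ that makes the outer expectation over $C$ valid.
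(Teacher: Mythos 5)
Your proposal is correct, and it rests on the same underlying principle as the paper's proof---the marginal law of $C$ is unchanged by interventions on $A$ (and $M$), so each post-intervention mean can be disintegrated over $C$ with the observational weight $P_C$---but the formalization differs in two respects. The paper encodes the invariance through an auxiliary variable $A^{\dag}$ with the same law as $A$ but with $A^{\dag}\indep C$, writes $\E\{Y\mid do(A=a)\}=\E\{Y\mid A^{\dag}=a\}$, and applies the tower property; you instead derive $P(C\mid do(\cdot))=P_C$ directly from the truncated factorization, which is marginally more general since it covers joint interventions on $(A,M)$ in one stroke. More substantively, the paper proves only the $TE$ identity explicitly and dismisses $DE$ and $IE$ with ``similarly,'' whereas those two cases are precisely where the cross-world coupling between $M^{(0)}$ (or $M^{(1)}$) and the intervened outcome must be handled: your explicit passage to the SEM sample space, solving \eqref{lsem_x} so that $M^{(a)}$ and $Y$ under $do(A=a',M=m)$ are joint measurable functions of the shared exogenous variables $(C,\epsilon_M,\epsilon_Y)$, supplies a genuine argument for the step the paper leaves implicit, and it correctly identifies that the independence of $\epsilon_M,\epsilon_Y$ from $C$ in Assumption~\ref{ass_slsem} is what legitimizes folding the average over $m^{(0)}$ into the inner conditional expectation at fixed $C$ before extracting the outer $\E_C$. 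In short, your route buys a complete treatment of the nested counterfactual terms at the cost of slightly more machinery, while the paper's $A^{\dag}$ device is shorter but, as written, only demonstrably covers the $TE$ case.
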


\noindent \textbf{Proof of Lemma \ref{lem_nat_eff}:}

\begin{proof}
     Introduce $A^{\dag}$ has the same law as $A$ except that $A^{\dag} \indep C$, then by the definition of $do$ operator,
    \begin{equation*}
        \begin{aligned}
            \E_C \big[ \E \{ Y \mid do(A = a), C\} \big] & = \E_C \E \{ Y \mid A^{\dag} = a, C \} \\
            & = \E \{ Y \mid A^{\dag} = a \} = \E \{ Y \mid do(A = a)\}.
        \end{aligned}
    \end{equation*}
    for any $a \in \{ 0, 1\}$ This furthermore implies
    \[
        \begin{aligned}
             TE & = \E \{ Y \mid do(A = 1)\} - \E \{ Y \mid do(A = 0)\} \\
             & =  \E_C \big[ \E \{ Y \mid do(A = 1), C\} \big] - \E_C \big[ \E \{ Y \mid do(A = 0 ), C\} \big].
        \end{aligned}
    \]
    This proves the result of $TE$. We can similarly prove the result for $DE$ and $IE$.
\end{proof}

Lemma \ref{lem_nat_eff} shows the definition of natural effects defined in \cite{pearl2000causality} can actually be written as the average on confounders.
 We require some DAG lemmas in order to study the formula for indirect effects of mediators. To the best of our knowledge, these lemmas are also novel, and they can be helpful resources for pertinent research.

\begin{lemma}\label{lem_dag}
    Suppose $B$ is the weight matrix of a DAG on $\{ X_1, \ldots, X_q\}$. Define $\Delta_{-i}  := \big[ (I_q - B^{\top})^{-1} u \big]_{-i} - (I_{q - 1} - B^{\top}_{-i, -i})^{-1} u _{-i}$, then
    \begin{itemize}
        \item[(1)] (Path Representation)
        \begin{equation*}
            \Delta_{-i} = u_i \Bigg[ \frac{\partial}{\partial x_i} \E [X_j \mid do_B (X_i = x_i) ] \Bigg]_{j \neq i} +  \Bigg[ \sum_{k \neq i} u_k \frac{\partial}{\partial x_k} \E [X_j \mid do_{B_{-i, -i}}(X_k = x_k)]\Bigg]_{j \neq i},
        \end{equation*}
        where $do_{A}$ is the $do$ operator on the DAG weight matrix $A$. 
        
        \item[(2)] (Matrix Expression)
        \begin{equation*}
            \Delta_{-i} = u_i \big[ (I_q - B^{\top})^{-1} \big]_{-i, i} + \big[ (I_q - B^{\top})^{-1} \big]_{-i, i} \big[ (I_q - B^{\top})^{-1} \big]_{i, -i} u_{-i} .
        \end{equation*}
    \end{itemize}
\end{lemma}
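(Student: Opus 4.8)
The plan is to establish the matrix expression in part (2) first, by a block inversion of $I_q - B^{\top}$, and then to read off the path representation in part (1) by interpreting each surviving matrix factor as a total causal effect generated by the $do$-operator. Throughout I would relabel the vertices so that $i = q$ is the last index, which is harmless since permuting coordinates commutes with every operation involved. Writing the symmetric partition
\[
    I_q - B^{\top} = \begin{pmatrix} M & -b \\ -c^{\top} & 1 \end{pmatrix}, \qquad M := I_{q-1} - B_{-i,-i}^{\top}, \quad b := B_{-i,i}^{\top}, \quad c^{\top} := B_{i,-i}^{\top},
\]
the lower-right entry is $1$ because a DAG has no self-loops ($b_{ii}=0$). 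Applying the Schur-complement inverse formula with pivot $M$ expresses every block of $(I_q - B^{\top})^{-1}$ in terms of $M^{-1}$, $b$, $c$, and the scalar Schur complement $S := 1 - c^{\top} M^{-1} b$.

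The key step is to show $S = 1$. Since $\mathcal{G}$ is acyclic, the reduced matrix $B_{-i,-i}$ is nilpotent under a topological order, so $M^{-1} = \sum_{k\ge 0}(B_{-i,-i}^{\top})^{k}$. Expanding $c^{\top}M^{-1}b = \sum_{k\ge 0} c^{\top}(B_{-i,-i}^{\top})^{k} b$ and recalling $b_{l} = B_{il}$ (an edge $X_i\to X_l$) and $c_{l'} = B_{l'i}$ (an edge $X_{l'}\to X_i$), each summand enumerates weighted directed walks from $X_i$ back to $X_i$ through vertices other than $i$; these are directed cycles through $X_i$, which are absent in a DAG. Hence $c^{\top}M^{-1}b = 0$ and $S = 1$. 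With $S=1$ the blocks collapse to $[(I_q-B^{\top})^{-1}]_{-i,i} = M^{-1}b$, $[(I_q-B^{\top})^{-1}]_{i,-i} = c^{\top}M^{-1}$, and $[(I_q-B^{\top})^{-1}]_{-i,-i} = M^{-1} + M^{-1}b\,c^{\top}M^{-1}$. Substituting into $\Delta_{-i} = [(I_q-B^{\top})^{-1}u]_{-i} - M^{-1}u_{-i}$, the leading $M^{-1}u_{-i}$ cancels, leaving $\Delta_{-i} = u_i\,M^{-1} b + M^{-1} b\, c^{\top} M^{-1} u_{-i}$, which is exactly part (2) once the two factors are re-identified as the off-diagonal blocks $[(I_q-B^{\top})^{-1}]_{-i,i}$ and $[(I_q-B^{\top})^{-1}]_{i,-i}$.

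To obtain the path representation (1), I would identify each matrix factor with a $do$-derivative under the LSEM reading $X = (I_q-B^{\top})^{-1}\epsilon$ that underlies the $do_B$ notation. There $\E[X_j \mid do_B(X_i = x_i)]$ is linear in $x_i$ with slope $[(I_q-B^{\top})^{-1}]_{ji} = \sum_{k\ge0}[(B^{\top})^{k}]_{ji}$, i.e. the sum over directed paths $X_i \rightsquigarrow X_j$, which is the total-effect reading of the Neumann series. Thus $[(I_q-B^{\top})^{-1}]_{-i,i} = \big[\partial_{x_i}\E[X_j \mid do_B(X_i=x_i)]\big]_{j\neq i}$, delivering the first term $u_i\big[\partial_{x_i}\E[X_j\mid do_B(X_i=x_i)]\big]_{j\neq i}$ at once, while the reduced-graph slopes $\partial_{x_k}\E[X_j\mid do_{B_{-i,-i}}(X_k=x_k)]$ are the entries of $M^{-1}$. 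The remaining task is bookkeeping: to verify that the block product $M^{-1}b\,c^{\top}M^{-1}u_{-i}$ collects exactly the contributions of directed paths from the non-$i$ sources to the targets $\{X_j\}_{j\neq i}$ that pass through $X_i$, with $c^{\top}M^{-1}$ accumulating the source-to-$i$ subpaths and $M^{-1}b$ the $i$-to-target subpaths, and then to recast this concatenation through $X_i$ in the $do$-derivative form recorded in (1).

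The main obstacle I anticipate is the acyclicity argument pinning $S=1$: this is the step that converts a purely algebraic block inversion into a statement with genuine causal content, and it must be argued through the walk/cycle enumeration of $c^{\top}M^{-1}b$ rather than by a generic matrix identity. A secondary and more clerical difficulty is aligning the through-$X_i$ path concatenation of the second term of (1) with the compact product $M^{-1}b\,c^{\top}M^{-1}$ from (2), while tracking which subpaths avoid $i$ in their interior.
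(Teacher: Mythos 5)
Your treatment of part (2) is correct and is essentially the paper's own argument read in the opposite direction: the paper applies the Woodbury identity $(A_{-i,-i})^{-1}=[A^{-1}]_{-i,-i}-[A^{-1}]_{-i,i}\big([A^{-1}]_{ii}\big)^{-1}[A^{-1}]_{i,-i}$ with $A=I_q-B^{\top}$ and invokes acyclicity to get $[A^{-1}]_{ii}=1$, which is precisely your Schur-complement fact $S=1-c^{\top}M^{-1}b=1$ established by enumerating the (nonexistent) directed cycles through $X_i$. One clerical slip: your block labels $b:=B_{-i,i}^{\top}$ and $c^{\top}:=B_{i,-i}^{\top}$ are transposed relative to your later, correct, coordinate identification $b_l=B_{il}$, $c_{l'}=B_{l'i}$; in $I_q-B^{\top}$ the $(-i,i)$ block is $-[B_{ij}]_{j\neq i}$ (out-edges of $X_i$) and the $(i,-i)$ block is $-[B_{ki}]_{k\neq i}$ (in-edges). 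The paper, by contrast, proves (1) first by the Neumann/path expansion of Lemma \ref{lem_digra} and then proves (2) separately; your plan of deriving (2) first and reading (1) off from it is a legitimate reorganization.

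The genuine gap sits exactly where you deferred the work to ``bookkeeping'' in part (1), and it cannot be closed as planned. Your algebra correctly identifies the second term of $\Delta_{-i}$ as $M^{-1}b\,c^{\top}M^{-1}u_{-i}$, whose $j$-th coordinate is $\sum_{k\neq i}u_k(\text{paths }k\to i)(\text{paths }i\to j)$, i.e.\ the contributions of paths that pass \emph{through} $X_i$. But the second term recorded in (1) is $\sum_{k\neq i}u_k\,\partial_{x_k}\E[X_j\mid do_{B_{-i,-i}}(X_k=x_k)]$, a reduced-graph effect that sums over paths \emph{avoiding} $X_i$; these are different quantities, and no recasting bridges them. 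Concretely, take $q=3$, $i=1$, with $X_1$ isolated and a single edge $X_2\to X_3$ of weight $\gamma$: then $\Delta_{-1}=0$ and your formula gives $0$, while the second term of (1) at coordinate $j=3$ equals $\gamma u_2\neq 0$. What your route actually proves is the through-$i$ version, with second term $\big[\partial_{x_i}\E[X_j\mid do_B(X_i=x_i)]\sum_{k\neq i}u_k\,\partial_{x_k}\E[X_i\mid do_B(X_k=x_k)]\big]_{j\neq i}$, which is the $do$-derivative reading of (2). You are in good company here: the paper's own proof of (1) correctly reduces to $\Delta_{-i}=\big[\sum_{k}u_k\sum_{\Pi\in\pi_{kj},\,i\in\Pi}|w_B(\Pi)|\big]_{j\neq i}$ and then, in the next display, replaces the condition $i\in\Pi$ by $i\notin\Pi$ for $k\neq i$ in order to land on the printed formula, so the obstruction you met is an inconsistency between parts (1) and (2) of the statement itself rather than a defect of your method; note that the coefficient-matching used in Corollary \ref{cor_equal_cor} only relies on the $u_i$-terms and on (2), which your argument does establish.
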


\noindent Before proving this lemma, we need the following result for digraph (do not require acyclic). We denote $|w(x)|$ as the product of the weights of the edges for any path $x$.

\begin{lemma}[\cite{yu2019dag}]\label{lem_digra}
    If $A$ is an adjacency matrix of a weighted digraph on $m$ vertices, then
    \[
        \left[A^{k}\right]_{i j}=\sum_{x \in \pi_{i j}^{+}(k)}|w(x)| - \sum_{y \in \pi_{i j}^{-}(k)}|w(y)|,
    \]
    where $\pi_{i j}^{+}(k)$ is the set of path with positive direction from $i$ to $j$ and $\pi_{i j}^{-}(k)$ is the set of path with negative direction from $i$ to $j$ with length $k$.
\end{lemma}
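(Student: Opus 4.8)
The plan is to recognize Lemma~\ref{lem_digra} as a signed refinement of the classical walk-counting identity for powers of a weighted adjacency matrix, and to reduce the displayed positive-minus-negative form to that identity. For a length-$k$ walk $x = (i = v_0, v_1, \ldots, v_k = j)$ in the digraph, write its signed weight as $w(x) := \prod_{\ell=1}^k A_{v_{\ell-1} v_\ell}$, so that $|w(x)| = \prod_{\ell=1}^k |A_{v_{\ell-1} v_\ell}|$ and $w(x) = \sgn(w(x))\,|w(x)|$. Here $\pi_{ij}^+(k)$ collects the length-$k$ walks with $w(x) > 0$ and $\pi_{ij}^-(k)$ those with $w(x) < 0$ (a walk traversing a zero-weight, i.e.\ absent, edge has $w(x) = 0$ and may be discarded). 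With this reading of ``positive/negative direction,'' the right-hand side of the lemma is exactly
\[
    \sum_{x \in \pi_{ij}^+(k)} |w(x)| - \sum_{y \in \pi_{ij}^-(k)} |w(y)| = \sum_{x} \sgn(w(x))\,|w(x)| = \sum_{x} w(x),
\]
the sum ranging over all length-$k$ walks from $i$ to $j$. Thus it suffices to establish the unsigned identity $[A^k]_{ij} = \sum_{x} w(x)$.

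First I would prove this identity by induction on $k$. For $k = 1$, a length-$1$ walk from $i$ to $j$ is the single edge $i \to j$ with $w(x) = A_{ij}$, giving $[A]_{ij} = A_{ij}$. For the inductive step, I use $A^{k+1} = A^k A$, so that $[A^{k+1}]_{ij} = \sum_{\ell} [A^k]_{i\ell}\, A_{\ell j}$, and by the induction hypothesis $[A^k]_{i\ell} = \sum_{x'} w(x')$ summed over length-$k$ walks $i \to \ell$. Every length-$(k+1)$ walk $i \to j$ factors uniquely as a length-$k$ walk $x'$ from $i$ to some intermediate vertex $\ell$ followed by the edge $\ell \to j$, and its weight multiplies as $w(x) = w(x')\,A_{\ell j}$. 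Summing over $\ell$ and over $x'$ recovers $\sum_{x} w(x)$, which closes the induction.

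The main point requiring care is terminological rather than computational, and this is where I expect the only real obstacle to lie: since the lemma does not assume acyclicity, the sets $\pi_{ij}^\pm(k)$ must be understood as length-$k$ \emph{walks} (vertices may repeat), not paths in the distinct-vertex sense used in Section~\ref{sec_pre}; otherwise the identity fails as soon as the digraph contains a cycle. The sign bookkeeping is then the only substantive addition over the textbook identity, since grouping walks by $\sgn(w(x))$ and factoring the sign out front converts $\sum_x w(x)$ into the displayed difference, which is immediate once the reduction above is in place. Finally, in the acyclic case invoked later (Lemma~\ref{lem_dag}) no further argument is needed, because a repeated vertex along a walk would force a directed cycle; hence every length-$k$ walk in a DAG is automatically a simple directed path.
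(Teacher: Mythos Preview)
The paper does not supply its own proof of Lemma~\ref{lem_digra}; it is quoted verbatim from \cite{yu2019dag} and then invoked once, inside the proof of Lemma~\ref{lem_dag}. So there is no ``paper's proof'' to match against. What matters is whether your argument delivers what the paper actually uses: the identity $[B^m]_{kj} = \sum_{\Pi \in \pi_{kj}^{(m)}} |w_B(\Pi)|$ for a DAG weight matrix $B$, where $|w_B(\Pi)|$ is (per the sentence immediately preceding the lemma) \emph{defined} as ``the product of the weights of the edges'' along $\Pi$, not the absolute value of that product. Your induction proves precisely this walk-counting identity, and your closing remark that in a DAG every walk is a simple directed path is exactly the reduction the paper performs. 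So the usable content of your write-up is correct and sufficient for the downstream application.

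Where your reading departs from the paper is in the interpretation of the symbols $|w(x)|$ and $\pi_{ij}^{\pm}(k)$. You take $|w(x)|$ to be the ordinary absolute value $\prod_\ell |A_{v_{\ell-1}v_\ell}|$ and then read $\pi^{+}$/$\pi^{-}$ as the walks with positive/negative signed weight. That cannot be the paper's intended meaning: the paper explicitly defines $|w(x)|$ as the (possibly negative) edge-weight product, and in the proof of Lemma~\ref{lem_dag} it drops the $\pm$ superscript ``by the fact that there is no negative path in DAG.'' Under your sign-based reading that claim is false, since a DAG can have negative edge weights and hence paths with negative product. Whatever the $\pm$ distinction means in \cite{yu2019dag}, it is a structural feature that vanishes in the acyclic case, not a sign-of-weight partition. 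This misreading does not damage your argument for the DAG case, but you should flag that your proof establishes the walk-sum identity $[A^k]_{ij}=\sum_{\text{walks}} w(\text{walk})$ directly, and that this already coincides with the paper's stated specialization to DAGs, rather than claiming to have recovered the general signed decomposition as the cited source states it.
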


\noindent \textbf{Proof of Lemma \ref{lem_dag}:}

\begin{proof}
    Denote $j$ as the major index. And for any nonempty set $S = \{ l_1, \ldots, l_{|S|}\}$, we denote
    \[
        \big[ u_j \big]_{S} = \big[ u_j \big]_{j \in S} = (u_{l_1}, \ldots, u_{l_{|S|}})^{\top}
    \]
    as a vector with dimension $|S|$. Specially, define $[u_j] := [u_j]_{j \in [q]}$. By lemma \ref{lem_digra}, for any $u \in \mathbb{R}^q$, we have 
    \[
        \begin{aligned}
            (I - B^{\top})^{-1} u & = u + \sum_{m = 1}^{\infty} \big[ B^{\top} \big]^m u  \\
            & = [u_j] + \sum_{m = 1}^{\infty}\bigg[ \sum_{k = 1}^q \big[ (B^m)^{\top}\big]_{jk} u_k\bigg] \\
            & = [u_j] + \sum_{m = 1}^{\infty}\bigg[ \sum_{k = 1}^q [B^m]_{kj} u_k\bigg] \\
            \alignedoverset{\text{By applying Lemma \ref{lem_digra}}}{=} [u_j] + \sum_{m = 1}^{\infty} \bigg[ \sum_{k = 1}^q u_k \sum_{\Pi \in \pi_{kj}^{(m)}}  \big| w_B (\Pi)\big|  \bigg] \\
            & = [u_j] + \sum_{k = 1}^q u_k \bigg[  \sum_{m = 1}^{\infty} \sum_{\Pi \in \pi_{kj}^{(m)}} \big| w_{B}  (\Pi) \big| \bigg] \\
            & = \Bigg[u_j + \sum_{k = 1}^q u_k \sum_{\exists \,  \Pi \in \pi_{kj}} \big| w_B (\Pi) \big| \Bigg],
        \end{aligned}
    \]
    where we drop the superscript in the path $\pi_{kj}^{+}$ by the fact that there is no negative path in DAG. Similarly, one can prove 
    \[
        (I - B_{-i, -i}^{\top})^{-1} u_{-i} = \Bigg[u_j + \sum_{k = 1}^p u_k \sum_{\exists \, \Pi \in \pi_{kj}  \text{ and } i \notin \Pi} \big| w_B (\Pi) \big| \Bigg]_{j \neq i}.
    \]
    Therefore, 
    \[
        \begin{aligned}
            \Delta_{-i} & = \big[ (I - B^{\top})^{-1} u \big]_{-i} - (I - B^{\top}_{-i, -i})^{-1} u _{-i} \\
            & = \Bigg[u_j + \sum_{k = 1}^q u_k \sum_{\exists \,  \Pi \in \pi_{kj}} \big| w_B (\Pi) \big| \Bigg]_{j \neq i} - \Bigg[u_j + \sum_{k = 1}^p u_k \sum_{\exists \, \Pi \in \pi_{kj}  \text{ and } i \notin \Pi} \big| w_B (\Pi) \big| \Bigg]_{j \neq i} \\
            & = \Bigg[\sum_{k = 1}^p u_k \sum_{\exists \, \Pi \in \pi_{kj} \text{ and } i \in \Pi} \big| w_B (\Pi) \big| \Bigg]_{j \neq i} \\
            & = \Bigg[u_i \sum_{\exists \, \Pi \in \pi_{ij}} \big| w_B (\Pi) \big| + \sum_{k \neq i}^q u_k \sum_{\exists \, \Pi \in \pi_{kj}  \text{ and } i \notin \Pi} \big| w_B (\Pi) \big| \Bigg]_{j \neq i} \\
            & = \Bigg[u_i \sum_{\exists \, \Pi \in \pi_{ij}} \big| w_B (\Pi) \big| \Bigg]_{j \neq i } + \Bigg[ \sum_{k \neq i}^q u_k \sum_{\exists \, \Pi \in \pi_{kj}  \text{ and } i \notin \Pi} \big| w_B (\Pi) \big| \Bigg]_{j \neq i}\\ 
            & =: \Delta_{-i, 1} + \Delta_{-i, 2}. 
        \end{aligned}
    \]
    By the definition of $do$ operator and Proposition 3.1 in the Supplementary of \cite{nandy2017estimating}, one have
    \[
        \begin{aligned}
            \Delta_{-i, 1} & = \Bigg[u_i \sum_{\exists \, \Pi \in \pi_{ij}} \big| w_B (\Pi) \big| \Bigg]_{j \neq i } \\
            & = u_i  \Bigg[\sum_{\exists \, \Pi \in \pi_{ij}} \big| w_B (\Pi) \big| \Bigg]_{j \neq i } \\
            & = u_i \Bigg[ \frac{\partial}{\partial x_i}\E [X_j \mid do_B(X_i = x_i)]\Bigg]_{j \neq i} .
        \end{aligned}
    \]
    Similarly,
    \[
        \begin{aligned}
            \Delta_{-i, 2} & = \Bigg[ \sum_{k \neq i}^q u_k \sum_{\exists \, \Pi \in \pi_{kj}  \text{ and } i \notin \Pi} \big| w_B (\Pi) \big| \Bigg]_{j \neq i} \\
            & = \Bigg[ \sum_{k \neq i}^q u_k \sum_{\exists \, \Pi \in \pi_{kj}(B_{-i, -i})} \big| w_{B_{-i, -i}} (\Pi) \big| \Bigg]_{j \neq i}  \\
            & = \Bigg[ \sum_{k \neq i} u_k   \frac{\partial}{\partial x_k}\E [X_j \mid do_{B_{-i, -i}}(X_k = x_k)] \Bigg]_{j \neq i},
        \end{aligned}
    \]
    This proves the path representation. For the matrix representation, we use Woodbury matrix identity
    \[
        \left(A_{-j, -j}^{-1}\right)^{-1}=\left[A^{-1}\right]_{-j, -j}-\left[A^{-1}\right]_{-j,  j}\left(\left[A^{-1}\right]_{jj}\right)^{-1}\left[A^{-1}\right]_{j -j}.
    \]
    Thus, take $A = (I_q - B^{\top})$, we have 
    \[
        \begin{aligned}
            \left(A_{-i, -i}^{-1}\right)^{-1} & = \left( I_{q - 1} - B_{-i, -i} \right)^{-1} \\
            & = \big[ ( I_q - B^{\top} )^{-1} \big]_{-i, -i} -  \big[ ( I_q - B^{\top} )^{-1} \big]_{-i, i} \Big(  \big[ ( I_q - B^{\top} )^{-1} \big]_{i, i}\Big)^{-1} \big[ ( I_q - B^{\top} )^{-1} \big]_{i, - i} \\
            & = \big[ ( I_q - B^{\top} )^{-1} \big]_{-i, -i} -  \big[ ( I_q - B^{\top} )^{-1} \big]_{-i, i} \big[ ( I_q - B^{\top} )^{-1} \big]_{i, - i}.
        \end{aligned}
    \]
    where we use the fact that $[A^{-1}]_{i, i} \equiv 1$ for $i \in [p]$ as $B^{\top}$ is acyclic. Then,
    \[
        \begin{aligned}
            & \left( I_{q - 1} - B_{-i, -i} \right)^{-1} u_{-i} \\
            = & \big[ ( I_q - B^{\top} )^{-1} \big]_{-i, -i} u_{-i} -  \big[ ( I_q - B^{\top} )^{-1} \big]_{-i, i} \big[ ( I_q - B^{\top} )^{-1} \big]_{i, - i} u_{-i}.
        \end{aligned}
    \]
    On the other hand, note that, for any $j \neq i$, $(A u)_j = \sum_{k = 1}^q  A_{j k} u_k$. Thus,
    \[
        \begin{aligned}
            [Au]_{-i} & = \big[ A u \big]_{j \neq i} = \left[ A_{j i} u_i + \sum_{k \neq i} A_{j k} u_k \right]_{j \neq i} \\
            & = u_i [ A ]_{-i, i} + A_{-i, -i} u_{-i}.
        \end{aligned}
    \]
    Note that 
    \[
        \begin{aligned}
            & \big[ (I_q - B^{\top})^{-1} u \big]_{-i} \\
            = & u_{i} \big[ ( I_q - B^{\top} )^{-1} \big]_{-i, i} + \big[ ( I_q - B^{\top} )^{-1} \big]_{-i, - i} u_{-i}.
        \end{aligned}
    \]
    Combine the two results, we obtain that 
    \[
        \begin{aligned}
            \Delta_{-i} & = \big[ (I_q - B^{\top})^{-1} u \big]_{-i} - (I_{q - 1} - B^{\top}_{-i, -i})^{-1} u _{-i} \\
            & = \left( u_{i} \big[ ( I_q - B^{\top} )^{-1} \big]_{-i, i} + \big[ ( I_q - B^{\top} )^{-1} \big]_{-i, - i} u_{-i} \right) \\
            & ~~~~~~~~~~~~~~~~~~~~- \left( \big[ ( I_q - B^{\top} )^{-1} \big]_{-i, -i} u_{-i} -  \big[ ( I_q - B^{\top} )^{-1} \big]_{-i, i} \big[ ( I_q - B^{\top} )^{-1} \big]_{i, - i} u_{-i} \right) \\
            & = u_{i} \big[ ( I_q - B^{\top} )^{-1} \big]_{-i, i}  + \big[ ( I_q - B^{\top} )^{-1} \big]_{-i, i} \big[ ( I_q - B^{\top} )^{-1} \big]_{i, - i} u_{-i}.
        \end{aligned}
    \]
    This implies the matrix expression.
\end{proof}

An important result from Lemma \ref{lem_dag} is that it guarantees the following two identities. See the following corollary.

\begin{corollary}\label{cor_equal_cor}
     Suppose the conditions are the same as Lemma \ref{lem_digra}, then we have
     \begin{equation}\label{ident_1}
        \Big[ \E [X_j \mid X_i \cup \Pa_B (X_i)]_1 \mathds{1}(j \notin \Pa_B (X_i))\Big]_{j \neq i} = \big[ (I - B^{\top})^{-1} \big]_{-i, i},
    \end{equation}
    and
    \begin{equation}\label{ident_2}
        \Big[ \E [X_i \mid X_j \cup \Pa_B (X_j)]_1 \mathds{1}(i \notin \Pa_B (X_j))\Big]_{j \neq i}  = \big[ (I - B^{\top})^{-1} \big]_{i, -i}.
    \end{equation}
\end{corollary}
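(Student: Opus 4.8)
The plan is to prove \eqref{ident_1} first and then obtain \eqref{ident_2} by relabeling the indices. Write $T := (I - B^{\top})^{-1}$, so that the structural equations $X = B^{\top} X + \epsilon$ give $X = T\epsilon$; in particular $T_{ji}$ captures the $j$-th coordinate's total dependence on $\epsilon_i$, and by the Neumann-series/path expansion used in the proof of Lemma \ref{lem_dag} (via Lemma \ref{lem_digra}) it equals the sum of $|w_B(\Pi)|$ over directed paths $\Pi$ from $X_i$ to $X_j$, i.e. the total causal effect of $X_i$ on $X_j$. Thus \eqref{ident_1} amounts to the statement that, for $j \neq i$ with $j \notin \Pa_B(X_i)$, the population regression coefficient of $X_i$ in the regression of $X_j$ on $\{X_i\} \cup \Pa_B(X_i)$ equals $T_{ji}$.

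The core computation I would carry out is a Frisch--Waugh--Lovell (partialling-out) argument. First I would show that the residual of $X_i$ after its linear regression on $\Pa_B(X_i)$ is exactly the structural error $\epsilon_i$: since $X_i = \sum_{k \in \Pa_B(X_i)} b_{ki} X_k + \epsilon_i$ and every parent $X_k$ is a non-descendant of $X_i$, each $X_k$ is a linear function of errors other than $\epsilon_i$, so $\epsilon_i$ is uncorrelated with $\Pa_B(X_i)$ and the projection of $X_i$ onto $\Pa_B(X_i)$ is $\sum_k b_{ki} X_k$ with residual $\epsilon_i$. By Frisch--Waugh--Lovell the coefficient of $X_i$ in the multiple regression then equals $\cov(X_j, \epsilon_i)/\var(\epsilon_i)$. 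Finally, using $X_j = \sum_k T_{jk}\epsilon_k$ together with the independence (hence diagonal covariance) of the errors, $\cov(X_j, \epsilon_i) = T_{ji}\var(\epsilon_i)$, so the coefficient is $T_{ji}$, as claimed.

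It then remains to deal with the indicator and the degenerate case. When $j \in \Pa_B(X_i)$ the left-hand side of \eqref{ident_1} is multiplied by $\mathds{1}(j \notin \Pa_B(X_i)) = 0$, so I only need to check that the corresponding entry of the right-hand side vanishes; but $j \in \Pa_B(X_i)$ means there is an edge $X_j \to X_i$, and acyclicity forbids any directed path $X_i \to X_j$, whence $T_{ji} = 0$ by the path representation. For $j \neq i$ with $j \notin \Pa_B(X_i)$ the regression is non-degenerate ($X_j$ is not among the regressors) and the computation above applies, establishing \eqref{ident_1} coordinate by coordinate. Identity \eqref{ident_2} is then immediate: since \eqref{ident_1} holds for \emph{every} choice of the fixed center node, for each $j \neq i$ I apply it with the roles of the two indices interchanged (treat $X_j$ as the fixed node and $X_i$ as the response), obtaining $\E[X_i \mid X_j \cup \Pa_B(X_j)]_1\, \mathds{1}(i \notin \Pa_B(X_j)) = T_{ij} = [(I - B^{\top})^{-1}]_{i,j}$, which is exactly the $j$-th entry of the vector $[(I - B^{\top})^{-1}]_{i,-i}$.

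The main obstacle I anticipate is the first step of the partialling-out argument, namely verifying rigorously that regressing $X_i$ on its parents returns precisely $\epsilon_i$ as residual. This is where the DAG structure is essential: it is exactly the fact that $\Pa_B(X_i)$ contains no descendant of $X_i$, so that $\epsilon_i$ is uncorrelated with (indeed $\epsilon_i \indep \Pa_B(X_i)$), that makes the parent set a valid back-door adjustment set and collapses the regression coefficient onto the pure directed-path contribution $T_{ji}$. One must also be slightly careful that the error variances are positive so that the relevant regressions are well defined; granting this, the remaining covariance bookkeeping is routine given $X = T\epsilon$ and the diagonal error covariance.
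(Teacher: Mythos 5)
Your proof is correct, but it takes a genuinely different route from the paper's. The paper proves \eqref{ident_1} by comparing the path representation and the matrix expression in Lemma \ref{lem_dag} (applied with arbitrary $u$) to identify $\big[(I-B^{\top})^{-1}\big]_{-i,i}$ with the vector of interventional derivatives $\frac{\partial}{\partial x_i}\E\big[X_j \mid do_B(X_i = x_i)\big]$, and then invokes Pearl's back-door adjustment to convert each such derivative into the adjusted regression coefficient $\E[X_j \mid X_i \cup \Pa_B(X_i)]_1\,\mathds{1}(j \notin \Pa_B(X_i))$; identity \eqref{ident_2} follows ``similarly.'' You instead argue directly at the level of population least squares: writing $X = (I-B^{\top})^{-1}\epsilon =: T\epsilon$, you show the residual of $X_i$ projected onto $\Pa_B(X_i)$ is exactly $\epsilon_i$ (since parents are non-descendants, hence built from errors other than $\epsilon_i$), apply Frisch--Waugh--Lovell to get the coefficient $\cov(X_j,\epsilon_i)/\var(\epsilon_i)$, and read off $T_{ji}$ from the diagonal error covariance. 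This is more self-contained and elementary: it avoids citing the back-door formula as a black box, makes explicit exactly where joint independence of the errors and positivity of their variances enter (the paper's one-line appeal to back-door adjustment leaves these implicit), and it explicitly verifies the degenerate case $j \in \Pa_B(X_i)$, where the indicator zeroes the left side and acyclicity forces $T_{ji}=0$ on the right. What the paper's route buys in exchange is brevity given the machinery of Lemmas \ref{lem_digra}--\ref{lem_dag} already established, plus a transparent causal reading (the column of $(I-B^{\top})^{-1}$ as total effects under $do$-interventions) that matches how the corollary is used downstream in Proposition \ref{thm_par_exp}. Your swap-of-indices derivation of \eqref{ident_2} from \eqref{ident_1} is exactly the ``similarly'' step the paper leaves implicit, and it is valid since \eqref{ident_1} holds for every choice of the center node.
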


\noindent \textbf{Proof of Corollary \ref{cor_equal_cor}:}

\begin{proof}
    Compare the two expressions in Lemma \ref{lem_digra}, take arbitrary $u \in \mathbb{R}^q$, we have
    \[
         \big[ (I - B^{\top})^{-1} \big]_{-i, i} = \Bigg[ \frac{\partial}{\partial x_i} \E [X_j \mid do_B (X_i = x_i) ] \Bigg]_{j \neq i}.
    \]
    On the other hand, the Pearl's back-door adjustment implies
    \[
        \frac{\partial}{\partial x_i} \E [X_j \mid do_B (X_i = x_i) ] = \E [X_j \mid X_i \cup \Pa_B (X_i)]_1 \mathds{1} (j \notin \Pa_B(X_i)).
    \]
    The two results give \eqref{ident_1}. Similarly, we can get \eqref{ident_2}.
\end{proof}

\noindent \textbf{Proof of Proposition \ref{thm_unique}:}

\begin{proof}
    Consider the M-estimator defined as 
    \begin{equation}\label{thm_unique_1}
        \big[ \Theta^* \quad \theta^*\big] \in \arg \min_{[\Theta \ \theta] \in \mathbb{R}^{p \times t}} \operatorname{tr} \E\big[ M - \Theta C - \theta A \big] \big[ M - \Theta C - \theta A \big]^{\top}
    \end{equation}
    Since the distribution of $X$ is non-degenerate, 
    \[
        \begin{aligned}
            & \operatorname{tr} \E\big[ M - \Theta C - \theta A \big] \big[ M - \Theta C - \theta A \big]^{\top} \\
            = & \E \operatorname{tr} \big[ M - \Theta C - \theta A \big] \big[ M - \Theta C - \theta A \big]^{\top} \\
            = & \E \big[ M - \Theta C - \theta A \big]^{\top} \big[ M - \Theta C - \theta A \big] \\
            = & \E \big[ M - \Theta_{MC} C - \theta_{MA} A + (\Theta_{MC} - \Theta) C + (\theta_{MA} - \theta) A \big]^{\top} \\
            & ~~~~~~~~~~~~~~~~~~~~~~~~~~~~~~~~~~~ \big[ M - \Theta_{MC} C - \theta_{MA} A + (\Theta_{MC} - \Theta) C + (\theta_{MA} - \theta) A \big] \\
            = & \E \big[ e_M + (\Theta_{MC} - \Theta) C + (\theta_{MA} - \theta) A \big]^{\top}  \big[ e_M + (\Theta_{MC} - \Theta) C + (\theta_{MA} - \theta) A \big] \\
            = & \E e_M^{\top} e_M + \underbrace{\E \big[ (\Theta_{MC} - \Theta) C + (\theta_{MA} - \theta) A \big]^{\top}  \big[(\Theta_{MC} - \Theta) C + (\theta_{MA} - \theta) A \big]}_{\geq 0}.
        \end{aligned}
    \]
    Therefore, $ \big[ \Theta_{MC} \quad \theta_{MA}\big] $ is a solution of \eqref{thm_unique_1}. It is enough to show that the solution $ \big[ \Theta^* \quad \theta^*\big] $ is unique over MEC. Indeed, use the fact in Assumption \ref{ass_slsem} that the components of $\epsilon$ are mutually independent, 
    \[
        \begin{aligned}
            \E \big[ M - \Theta C - \theta A \big] \big[ M - \Theta C - \theta A \big]^{\top} & = \E e_M e_M^{\top} \\
            & = (I - B_{MM}^{\top}) \cov (\epsilon) (I - B_{MM}) \\
            & = \Big( \frac{1}{\sigma_{\ell_1}^2}  [I - B_{MM}]_{\ell_1 :}^{\top}  [I - B_{MM}]_{:\ell_2} \Big)_{1 \leq \ell_1, \ell_2  \leq p}.
        \end{aligned}
    \]
    When $\ell_1  = \ell_2 = \ell$, $\frac{1}{\sigma_{\ell_1}^2}  [I - B_{MM}]_{\ell_1 :}^{\top}  [I - B_{MM}]_{:\ell_2} = \frac{1}{\sigma_{\ell}^2} \sum_{k = 1}^p \big( B_{MM}\big)_{k \ell}^2 $. Thus, we know that
    \[
        \operatorname{tr} \E \big[ M - \Theta C - \theta A \big] \big[ M - \Theta C - \theta A \big]^{\top} = \sum_{1 \leq  j, k \leq p} \frac{1}{\sigma_j^2} (B_{MM})_{kj}^2.
    \]
    Since any DAG in the same MEC shares the same skeleton, we know that the above optimization problem \eqref{thm_unique_1} is unique over MEC. The unique solution comes from the theory of M-estimator, see \cite{van2014asymptotically}.
\end{proof}

\noindent \textbf{Proof of Proposition \ref{thm_par_exp}:}

\begin{proof}
    The basic idea is implying Lemma \ref{lem_nat_eff}. Denote the conditional version of the natural direct effect as
    \[
        DE(c) := \E\{Y \mid do(A=1, M=m^{(0)}), C = c\}- \E\{Y \mid do(A=0), C = c\}.
    \]
    Note that under Assumption \ref{ass_slsem}, we can write 
    \[
        A \leftarrow h(C, \epsilon_A),
    \]
    \[
        M \leftarrow \left(I_{t-1}-B_{M M}^{\top}\right)^{-1} B_{M C}^{\top} C+\left(I_{t-1}-B_{M M}^{\top}\right)^{-1} \beta_{M A} A + \left(I_{t-1}-B_{M M}^{\top}\right)^{-1} \epsilon_M.
    \]
    Thus, we have 
    \[
        \begin{aligned}
            \E \{Y \mid d o (A = 1, M=m^{(0)}), C=c \} & = \E \{ Y \mid A = 1, M = m^{(0)}, C = c\} \\
            & = \beta_{Y C}^{\top} c +\alpha_{Y A} + \beta_{Y M}^{\top} m^{(0)},
        \end{aligned}
    \]
    and
    \[
        \begin{aligned}
            \E\{Y \mid do(A = 0), C = c\} & = \E\{Y \mid A = 0, C = c\} = \beta_{Y C}^{\top} C + \beta_{Y M}^{\top} m^{(0)}.
        \end{aligned}
    \]
    These equations together with the first display in Lemma \ref{lem_nat_eff} indicate
    \[
        \begin{aligned}
            DE & = \E_C DE (C) \\
            & = \E_C \Big[ \left( \beta_{Y C}^{\top} C +\alpha_{Y A} + \beta_{Y M}^{\top} m^{(0)} \right) - \left(\beta_{Y C}^{\top} C  + \beta_{Y M}^{\top} m^{(0)}\right) \Big] \\
            & = \E_C \alpha_{YA} = \alpha_{YA}.
        \end{aligned}
    \]
    And similarly, 
    \[
        \begin{aligned}
            IE & = \E_C IE(C) \\
            & = \E_C \Big[ \mathrm{E} \{Y \mid A = 0, M=m^{(1)}, C \}-\mathrm{E}\{Y \mid A = 0, C\} \Big] \\
            & = \E_C \Big[ \beta_{Y C}^{\top} C +  \beta_{Y M}^{\top} m^{(1)} - \left(\beta_{Y C}^{\top} C + \beta_{Y M}^{\top} m^{(0)}\right)  \Big] \\
            & = \E_C \Big[ \beta_{Y M}^{\top}\left(m^{(1)}-m^{(0)}\right) \Big] \\
            \alignedoverset{\text{by } C \, \indep \, \epsilon_M }{=} \E_C \bigg[ \left(I_{t-1}-B_{M M}^{\top}\right)^{-1} B_{M C}^{\top} C + \left(I_{t-1}-B_{M M}^{\top}\right)^{-1} \beta_{M A} \cdot 1 \\
            & ~~~~~~~~~~~~~ - \left( \left(I_{t-1}-B_{M M}^{\top}\right)^{-1} B_{M C}^{\top} C +\left(I_{t-1}-B_{M M}^{\top}\right)^{-1} \beta_{M A} \cdot 0 \right) \bigg] \\
            & = \E_C \left[ \beta_{Y M}^{\top}\left(I-B_{M M}^{\top}\right)^{-1} \beta_{M A} \right] = \beta_{Y M}^{\top}\left(I-B_{M M}^{\top}\right)^{-1} \beta_{M A}.
        \end{aligned}
    \]
    This completes part (i).
    \vspace{1ex}
    
    For proving part (ii), we first note that Theorem F.4 in \cite{cai2020anoce} implies $DM_j$ defined in Definition \ref{def_med} is equivalent to Definition 3.2 in  \cite{cai2020anoce}, which derives the expression $DM_j$ directly from Part (i). On the other hand, recall we have $TM_j = TM_j(c) = IE - IE_{\mathcal{G}_{(-j)}}$. Hence, $TM_j$ also has the exactly same as Definition 3.3 in \cite{cai2020anoce} by Corollary F.1 in \cite{cai2020anoce}. Therefore, we denote the total effect's condition version as
    \[
        TE(c) = \E \{ Y \mid do(A = 1), C = c\} - \E \{Y \mid do(A = 0), C = c \}.
    \]
    Then, by Theorem E.2 in \cite{watson2023heterogeneous}, that is $TM_j(c)$ can be interpreted as the effect of treatment $A$ on the outcome $Y$ that is mediated by the mediator $M_j$, or inversely as the change in total treatment effect caused by $M_j$ being removed from the causal graph, i.e.
    \[
        TM_j(c) = TE(c) - TE_{\mathcal{G}_{(-j)}}(c) = IE(c) - IE_{\mathcal{G}_{(-j)}} (c) = IE - IE_{\mathcal{G}_{(-j)}}.
    \]
    Similar to Lemma \ref{lem_nat_eff}, one can show $TM_j  = \E_C TM_j(C)$. Thus,
    \begin{equation}\label{TM_sub_graph}
        \begin{aligned}
            TM_j & = \E_C TM_j(C) \\
            & \qquad - \operatorname{cov} (\E \{Y \mid do(M_j = m_j + 1), C \} - \E \{Y \mid do(M_j = m_j), C\} , \Delta_j(C)) \\
            & = \E_C TM_j(C) = IE - IE_{\mathcal{G}_{(-j)}}.
        \end{aligned}
    \end{equation}
    Therefore,
    \[
        \begin{aligned}
            IM_j & = TM_j - DM_j = IE - IE_{\mathcal{G}_{(-j)}} - DM_j \\
            & = \beta_{Y  M}^{\top} \big( I - B_{M  M}^{\top}\big)^{-1} \beta_{M  A} -  \beta_{Y  M_{-j}}^{\top} \big( I - B_{M_{-j}  M_{-j}}^{\top}\big)^{-1} \beta_{M_{-j}  A} - DM_j \\
            & = \beta_{YM, -j}^{\top} \theta_{MA, -j} - \beta^{\top}_{YM_-j} \theta_{M_{-j}A} = \beta_{YM, -j}^{\top} \big( \theta_{MA, -j} - \theta_{M_{-j}A} \big).
        \end{aligned}
    \]
    Finally, for proving (ii'), note that $Y$ is still linear about $C$, $A$, and $M$, given $C = c$, the proof argument in proof of Theorem 3.1 and Theorem 3.2 of \cite{nandy2017estimating} still holds. Thus, fixed $C = c$, Proposition 2.1 in \cite{chakrabortty2018inference} guarantees, that is 
    \[
        \begin{aligned}
            \eta_j (c) & = \E \big[ M_j \mid A \, \cup \, \{ C = c \} \big]_1 \times \E \big[ Y \mid M_j, \operatorname{Pa} (M_j), A, \{C = c\} \big]_1 \\
            \alignedoverset{\text{by \eqref{reg_exp}}}{=} \theta_{MA, j} \times \E \big[ Y \mid M_j, \operatorname{Pa} (M_j), A, \{C = c\} \big]_1
        \end{aligned}
    \]
    for any fixed $C = c$, which combined with the expression of $DM_j$ gives the result for the alternative expression for $IM_j$. Hence, we complete our proof.
\end{proof}

\subsection{The Proofs of Section \ref{sec_mr}} 
~\\

\noindent \textbf{Proof of Alternative Strategies:}\label{proof_alter_strategies}

\begin{proof}
Just by the definition, we can prove the equivalence for alternative strategy 1 as follows: 
\[
    \begin{aligned}
        \E \bigg[ \frac{\mathds{1}(A = a')}{f(A = a' \mid c)} Y\bigg] & = \int \frac{\mathds{1}(a = a')}{f(a = a' \mid c)} y f(c, a, y) \mathrm{d} (a, c) \\
        & =  \int \frac{1}{f(a = a' \mid c)} y f(y \mid a' , c) f(a = a' \mid c) f(c) \, \mathrm{d} (a, c) \\
        & = \int y f(y \mid a' , c) f(c) \, \mathrm{d} (a, c) \\
        & = \E \kappa(a', C),
    \end{aligned}
\]
\begin{small}
\[
    \begin{aligned}
        & \E \bigg[ \frac{\mathds{1}(A = 0)}{f(A = 0 \mid C)} \bigg\{ \int \E \big[ Y \mid A = 1, M_{-j}, C, M_j = m_j\big]  f (m_{j} \mid A = a', C = c ) \, \mathrm{d} m_j \bigg\} \bigg] \\
        = & \int \frac{\mathds{1}(a = 0)}{f(A = a \mid C)} f(c, a, m_{-j}) \, \mathrm{d} c \, \mathrm{d} a \, \mathrm{d} m_{-j} \\
        & ~~~~~~~~~~~~~~~~~ \times \bigg\{ \int \E \big[ Y \mid A = 1, M_{-j} = m_{-j}, C = c, M_j = m_j\big]  f (m_{j} \mid A = a', C = c ) \, \mathrm{d} m_j \bigg\} \\
        = & \int \frac{1}{f(A = 0 \mid C)} f(c) f(A = 0 \mid c) f(m_{-j} \mid C = c, A = 0) \, \mathrm{d} c \, \mathrm{d} m_{-j} \\
        & ~~~~~~~~~~~~~~~~~ \times  \bigg\{ \int \E \big[ Y \mid A = 1, M_{-j} = m_{-j}, C = c, M_j = m_j\big]  f (m_{j} \mid A = a', C = c ) \, \mathrm{d} m_j \bigg\} \\
        = &  \int \E \big[ Y \mid A = 1, M = m, C\big]  f (m_{j} \mid C = c, A = a' ) f(m_{-j} \mid C = c, A = 0) f(c)\, \mathrm{d} m \, \mathrm{d} c \\
        = & \E \big[ \zeta_j (a', 0, C) \big]
    \end{aligned}
\]
\end{small}
and 
\begin{small}
\[
    \begin{aligned}
         & \E \bigg[ \frac{\mathds{1}(A = a')}{f( A = a' \mid c )} \left\{ \int \E \big[ Y \mid C, A = a', \Pa_j(\mathcal{G}_M), M_j = m_j\big] f(m_j \mid C) \, \mathrm{d} m_j\right\} \bigg] \\
         = & \int  \frac{\mathds{1}(A = a')}{f( A = a' \mid c )} f(c, a, \pa_j) \, \mathrm{d} c \, \mathrm{d} a  \, \mathrm{d} \pa_j \\
         & ~~~~~~~~~~~~~~~~~ \times   \left\{ \int \E \big[ Y \mid C = c, A = a', \Pa_j(\mathcal{G}_M) = \pa_j , M_j = m_j\big] f(m_j \mid C = c) \, \mathrm{d} m_j\right\} \\
         = & \int \E \big[ Y \mid C = c, A = a', \Pa_j(\mathcal{G}_M) = \pa_j , M_j = m_j\big] f(\pa_j \mid A = a', c) f(m_j, c) \, \mathrm{d} \pa_j \, \mathrm{d} (m_j, c) \\
         = & \int \varrho_j (a', m_j, c \, ; \, \mathcal{G}_M)\, \mathrm{d} (m_j, c) = \E \big[ \varrho_j (a', M_j, C \, ; \, \mathcal{G}_M) \big].
    \end{aligned}
\]
\end{small}
Thus, all identities for alternative strategy 1 hold. Similarly, for Strategy 2 and Strategy 3, we have 
\begin{small}
\[
     \begin{aligned}
         & \E \bigg[ \frac{\mathds{1}(A = a')}{f(A = a' \mid C)} \bigg\{ \int \E \big[ Y \mid A = 1, M_{j}, C, M_{-j} = m_{-j}\big]  f (m_{-j} \mid A = 0, C  ) \, \mathrm{d} m_{-j} \bigg\} \bigg] \\
         = & \int \frac{\mathds{1}(A = a')}{f(A = a' \mid C)} f(c, a, m_j) \, \mathrm{d} c \, \mathrm{d} a \, \mathrm{d} m_j \\
         & ~~~~~~~~~~~~ \times \bigg\{  \int \E \big[ Y \mid A = 1, M_{j} = m_j, C = c, M_{-j} = m_{-j}\big]  f (m_{-j} \mid A = 0, C = c ) \, \mathrm{d} m_{-j} \bigg\} \\
         = & \int \E \big[ Y \mid A = 1, C = c, M = m \big]  f (m_{-j} \mid A = 0, c) f(m_j \mid A = a', c) f(c)\, \mathrm{d} m_{-j} \mathrm{d} c \, \mathrm{d} m_j \\
         = & \E \big[\zeta_j(a', 0, C) \big],
     \end{aligned}
\]
\end{small}
\begin{small}
\[
    \begin{aligned}
        & \E \bigg[ \frac{\mathds{1}(A = 1)}{f( A = 1 \mid C)} \frac{f (M_{-j} \mid A = a', C ) }{f(M_{-j} \mid A = 1, C, M_j)} Y \bigg] \\
        = & \int \frac{\mathds{1}(a = 1)}{f( a = 1 \mid c)} \frac{f (m_{j} \mid a = 1, c) f (m_{-j} \mid A = a', c ) }{f(m_j \mid a = 1, c)f(m_{-j} \mid a = 1, c, m_j)} y f (c, a, m, y) \, \mathrm{d} \mu \\
        = &  \int \frac{1}{f( a = 1 \mid c)} \frac{f (m_{j} \mid a = 1, c) f (m_{-j} \mid A = a', c ) }{f(m \mid a = 1, c)} \\
        & ~~~~~~~~~~~~ \times y f(y \mid a = 1, c, m) f(m \mid a = 1, c) f(a = 1 \mid c) f(c) \, \mathrm{d} \mu \\
        = & \int  y f(y \mid a = 1, c, m) f (m_{j} \mid a = 1, c) f (m_{-j} \mid A = a', c ) f (c) \,  \mathrm{d} \mu \\
        = & \int \E [Y \mid A = 1, M = m, C = c]  f (m_{j} \mid a = 1, c) f (m_{-j} \mid A = a', c ) f (c) \,  \mathrm{d} \mu = \E \big[ \zeta_j (a', 0, C)\big],
    \end{aligned}
\]
\end{small}
and
\begin{small}
\[
    \begin{aligned}
        & \E \bigg[ \frac{\mathds{1}(A = a')}{f(A = a' \mid c)} f\big(\Pa_j (\mathcal{G}_M) \mid C, A = a'\big)  Y \bigg] \\
        = & \int \frac{\mathds{1}(a = a')}{f(a = a' \mid c)} f(\pa_j \mid c, a = a' )  y f(c, a, m, y)  \,  \mathrm{d} \mu \\
        = & \int \frac{1}{f(a = a' \mid c)} f(\pa_j \mid c, a = a' )  y f(y \mid a = a', c, m_j, \pa_j) f(m_j \mid a = a', c) f(a = a' \mid c) f(c) \, \mathrm{d} \mu \\
        = & \int y f(y \mid c, a = a', m_j, \pa_j) f(\pa_j \mid c, a = a' ) f(m_j, c) \, \mathrm{d} \pa_j \, \mathrm{d} (m_j, c) \\
        = & \E \big[ \varrho_j (a', M_j, C \, ; \, \mathcal{G}_M)\big],
    \end{aligned}
\]
\end{small}
which complete the calculations.
\end{proof}

\subsection{Proof of Section \ref{sec_asy_ols}}

Denote $a = \plim a_n$ if $a_n \, \xlongrightarrow{\pr} \, a$.

\begin{lemma}\label{lem_simple_reg}
    Let $X$ are mean-zero random vector and $Y$ are univariate random variable. Suppose $\E (\epsilon \mid X) = 0$ and $\var (\epsilon \mid X) = \sigma^2$ is free of $X$ in the regression 
    \[
        Y = \beta^{\top} X + \epsilon = \beta_1^{\top} X_1 + \beta_2^{\top} X_2 + \epsilon.
    \]
    Suppose $X \in \mathbb{R}^p$, and $\widehat{\beta}$ is the OLS estimator with sample size $p < n$,
    \[
        \sqrt{n} \big( \widehat{\beta}_1 - \beta_1 \big) = \left[ \plim  \widehat{\Gamma}_{X_1, X_2} \widehat{\Gamma}_{X_1, X_2}^{\top} \right]^{1 / 2}  \frac{1}{\sqrt{n}}  \sum_{i = 1}^n \varepsilon_i + o_p(1) \, \rightsquigarrow \, \mathcal{N} \left( 0, \plim \sum_{i = 1}^n \epsilon_i^2 \left[ \widehat{\Gamma}_{X_1, X_2} \widehat{\Gamma}_{X_1, X_2}^{\top} \right] \right),
    \]
    where $\{ \widehat{\epsilon}_i \}_{i = 1}^n$ are the residuals and $\{ \varepsilon_i\}_{i = 1}^n$ are the i.i.d. $\mathcal{N}(0, \sigma^2 I)$ error vector independent with $X$.
\end{lemma}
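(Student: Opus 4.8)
The plan is to reduce the partitioned regression to a single-block problem via the Frisch--Waugh--Lovell (FWL) identity, after which asymptotic normality follows from a central limit theorem applied to a linear functional of the errors. First I would write the data version of the model as $\mathbf{Y} = \mathbf{X}_1 \beta_1 + \mathbf{X}_2 \beta_2 + \boldsymbol{\epsilon}$, where $\boldsymbol{\epsilon} = (\epsilon_1, \ldots, \epsilon_n)^{\top}$, and recall that $\widehat\Gamma_{X_1, X_2} = [\mathbf{X}_1^{\top}(I_n - P_{\mathbf{X}_2})\mathbf{X}_1]^{-1}\mathbf{X}_1^{\top}(I_n - P_{\mathbf{X}_2})$. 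By FWL, the OLS coefficient on the first block equals $\widehat\beta_1 = \widehat\Gamma_{X_1, X_2}\mathbf{Y}$, since partialling out $\mathbf{X}_2$ and using idempotency of $I_n - P_{\mathbf{X}_2}$ collapses the normal equations to exactly this expression. Substituting the model and using the two algebraic identities $\widehat\Gamma_{X_1, X_2}\mathbf{X}_1 = I$ and $\widehat\Gamma_{X_1, X_2}\mathbf{X}_2 = 0$ (the latter because $(I_n - P_{\mathbf{X}_2})\mathbf{X}_2 = 0$) yields the clean reduction $\widehat\beta_1 - \beta_1 = \widehat\Gamma_{X_1, X_2}\boldsymbol{\epsilon}$.

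Next I would compute the conditional second moment. Conditioning on the design and using $\E(\boldsymbol{\epsilon} \mid \mathbf{X}) = 0$, $\var(\boldsymbol{\epsilon} \mid \mathbf{X}) = \sigma^2 I_n$, together with the idempotency $(I_n - P_{\mathbf{X}_2})^2 = I_n - P_{\mathbf{X}_2}$, gives $\var(\widehat\Gamma_{X_1,X_2}\boldsymbol{\epsilon} \mid \mathbf{X}) = \sigma^2 \widehat\Gamma_{X_1,X_2}\widehat\Gamma_{X_1,X_2}^{\top} = \sigma^2 [\mathbf{X}_1^{\top}(I_n - P_{\mathbf{X}_2})\mathbf{X}_1]^{-1}$. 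This pins down the target variance; replacing $\sigma^2$ by the residual average $n^{-1}\sum_i \widehat\epsilon_i^2$ via Slutsky's lemma then matches the stated form $\plim \sum_i \epsilon_i^2\,[\widehat\Gamma_{X_1,X_2}\widehat\Gamma_{X_1,X_2}^{\top}]$.

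The main work is the central limit theorem for $\sqrt{n}\,\widehat\Gamma_{X_1,X_2}\boldsymbol{\epsilon} = \sum_{i=1}^n \sqrt{n}\,[\widehat\Gamma_{X_1,X_2}]_{\cdot i}\,\epsilon_i$, a sum of conditionally independent mean-zero vectors in a dimension $p$ that is allowed to grow with $n$; this is the principal obstacle, since the ordinary multivariate CLT presumes fixed dimension. I would handle it by the Cram\'er--Wold device: for a fixed unit vector $v$, establish a Lindeberg--Feller CLT for the scalar triangular array $v^{\top}[\widehat\Gamma_{X_1,X_2}]_{\cdot i}\epsilon_i$. Verifying the Lindeberg condition requires controlling the maximal leverage, i.e.\ uniformly bounding the row norms of $\widehat\Gamma_{X_1,X_2}$; this is where Assumption \ref{ass_nor_eigen} enters, bounding $[\mathbf{X}_1^{\top}(I_n - P_{\mathbf{X}_2})\mathbf{X}_1]^{-1}$ from above in probability, and Assumption \ref{ass_error_dis} enters, supplying the uniform sub-Gaussian tail that prevents any single summand from dominating. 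The behaviour of the minimum eigenvalue of the partialled-out Gram matrix in the growing-$p$ regime, and the resulting validity of the Gaussian coupling, is precisely the content of the results of \cite{portnoy1984asymptotic, portnoy1985asymptotic}, which I would invoke to pass from the array CLT to the coupling representation $\sqrt{n}(\widehat\beta_1 - \beta_1) = [\plim \widehat\Gamma_{X_1,X_2}\widehat\Gamma_{X_1,X_2}^{\top}]^{1/2}\,n^{-1/2}\sum_{i=1}^n \varepsilon_i + o_p(1)$ with $\varepsilon_i$ i.i.d.\ $\mathcal{N}(0,\sigma^2 I)$. Combining this coupling with the variance identity from the previous step and Slutsky's lemma completes the proof.
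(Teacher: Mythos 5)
Your proposal is correct and takes essentially the same route as the paper's proof: the FWL/partitioned-regression reduction $\widehat{\beta}_1 - \beta_1 = \widehat{\Gamma}_{X_1, X_2}\boldsymbol{\epsilon}$, identification of the asymptotic variance as $\sigma^2 \plim n\,\widehat{\Gamma}_{X_1,X_2}\widehat{\Gamma}_{X_1,X_2}^{\top}$ under homoskedasticity, and Slutsky via $n^{-1}\sum_{i}\widehat{\epsilon}_i^2 \xrightarrow{\pr} \sigma^2$. If anything, you are more explicit than the paper, which asserts asymptotic normality of the full OLS vector outright (leaning on the Portnoy-type Assumptions \ref{ass_error_dis} and \ref{ass_nor_eigen}) and recovers the block variance by a sandwich formula plus block inversion of the joint covariance, whereas you obtain the conditional variance directly from idempotency of $I_n - P_{\mathbf{X}_2}$ and sketch the growing-$p$ Lindeberg--Feller verification that the paper leaves implicit.
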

\begin{proof}
    We first note, we have \begin{equation}\label{reg_part_rep}
    \widehat{\beta}_1 - \beta_1 = \big[ \X_1^{\top} (I_n - P_{\X_2}) \X_1 \big]^{-1} \X_1^{\top} (I_n - P_{\X_2}) \boldsymbol{\varepsilon} = \widehat{\Gamma}_{X_1, X_2} \boldsymbol{\varepsilon}.
    \end{equation}
    \[
        \left[ n \widehat{\Gamma}_{X_1, X_{2}}  \widehat{\Gamma}_{X_1, X_{2}}^{\top} \right]^{-1} = \frac{1}{n} \mathbf{X}_1^{\top} \mathbf{X}_1 - \left[ \frac{1}{n} \mathbf{X}_1^{\top} \mathbf{X}_{2} \right] \left[ \frac{1}{n} \mathbf{X}_{2}^{\top} \mathbf{X}_{2} \right]^{-1} \left[ \frac{1}{n} \mathbf{X}_{2}^{\top} \mathbf{X}_1\right],
    \]
    we have 
    \[
        \begin{aligned}
             & \left[ n \widehat{\Gamma}_{X_1, X_{2}}  \widehat{\Gamma}_{X_1, X_{2}}^{\top} \right]^{-1} - \left[ \widehat{\var} (X_{1}) - \widehat{\cov} (X_{1}, X_{2}) \widehat{\var}^{-1} (X_{2})\widehat{\cov} (X_{ 2 }, X_{ 1}) \right]\\
             = & \left( \frac{1}{n} \mathbf{X}_1^{\top} \mathbf{X}_1 - \widehat{\var} (X_{1}) \right) \\
             & - \left( \left[ \frac{1}{n} \mathbf{X}_1^{\top} \mathbf{X}_{2} \right] \left[ \frac{1}{n} \mathbf{X}_{2}^{\top} \mathbf{X}_{2} \right]^{-1} \left[ \frac{1}{n} \mathbf{X}_{2}^{\top} \mathbf{X}_1\right] - \widehat{\cov} (X_{1}, X_{2}) \widehat{\var}^{-1} (X_{2})\widehat{\cov} (X_{ 2}, X_{ 1}) \right) \\
             & = 0,
        \end{aligned}
    \]
    which gives 
    \[
         n \widehat{\Gamma}_{X_1, X_{2}}  \widehat{\Gamma}_{X_1, X_{2}}^{\top}  = \left[ \widehat{\var} (X_{1}) - \widehat{\cov} (X_{1}, X_{2}) \widehat{\var}^{-1} (X_{2})\widehat{\cov} (X_{ 2 }, X_{ 1}) \right]^{-1}. 
    \]
    On the other hand, from \eqref{reg_part_rep}, we get that 
    \[
        \begin{aligned}
            \widehat{\beta}_{1} - \beta_{1} &= \widehat{\Gamma}_{X_{1}, X_{2}} \boldsymbol{\varepsilon}.
        \end{aligned}
    \]
    Since $\sqrt{n} \big( \widehat{\beta} - \beta \big)$ is asymptotic normal and $\E (\epsilon^2 \mid X) =\sigma^2$, its asymptotic variance is 
    \[
        \begin{aligned}
            & \left( \E  \left[ \begin{array}{cc}
                X_1 X_1^{\top}  & X_1 X_2^{\top} \\
                X_2 X_1^{\top} & X_2 X_2^{\top}
            \end{array}\right] \right)^{-1} \E \left( \epsilon_Y^2 \left[ \begin{array}{cc}
                X_1 X_1^{\top}  & X_1 X_2^{\top} \\
                X_2 X_1^{\top} & X_2 X_2^{\top}
            \end{array}\right] \right) \left( \E \left[ \begin{array}{cc}
                X_1 X_1^{\top}  & X_1 X_2^{\top} \\
                X_2 X_1^{\top} & X_2 X_2^{\top}
            \end{array}\right] \right)^{-1} \\
            = & \sigma^2 \left( \left[ \begin{array}{cc}
                \var(X_1)  & \cov(X_1, X_2) \\
                \cov(X_2, X_1) & \var (X_2)
            \end{array}\right] \right)^{-1} \\
            = & \sigma^2 \left[ \begin{array}{cc}
                \left( {\var} (X_{1}) - {\cov} (X_{1}, X_{2}) {\var}^{-1} (X_{2}) {\cov} (X_{ 2 }, X_{ 1}) \right)^{-1}  & * \\
                * & *
            \end{array}\right].
        \end{aligned}
    \]
    Combining the above result, we have 
    \[
        \begin{aligned}
            \sqrt{n} \big( \widehat\beta_1 - \beta_1 \big) \, & \rightsquigarrow \, \mathcal{N} \left( 0,  \sigma^2 \left[ {\var} (X_{1}) - {\cov} (X_{1}, X_{2}) {\var}^{-1} (X_{2}) {\cov} (X_{ 2 }, X_{ 1}) \right]^{-1} \right) \\
            & = \mathcal{N} \left( 0,  \sigma^2 \plim n \widehat{\Gamma}_{X_1, X_{2}}  \widehat{\Gamma}_{X_1, X_{2}}^{\top} \right) \\
            & = \mathcal{N} \left( 0, \plim \frac{\sum_{i = 1}^n \widehat\epsilon_i^2}{n} \plim n \widehat{\Gamma}_{X_1, X_{2}}  \widehat{\Gamma}_{X_1, X_{2}}^{\top}\right),
        \end{aligned}
    \]
    by the fact that $\frac{1}{n} \sum_{i = 1}^n \widehat\epsilon_i^2 \, \xlongrightarrow{\pr} \, \E \epsilon^2 = \sigma^2$. The asymptotic linear expression comes from the above display directly.
\end{proof}

\noindent \textbf{Proof of Theorem \ref{thm_CI_DE_IE_DM}}:
\begin{proof}
    From the regression \eqref{reg_exp}, we know that 
    \[
        \left( \begin{array}{c}
                \widehat{\beta}_{YC} \\
                \widehat{\alpha}_{YA} \\
                \widehat{\beta}_{YM}
            \end{array}\right) - \left( \begin{array}{c}
                {\beta}_{YC} \\
                {\alpha}_{YA} \\
                {\beta}_{YM}
            \end{array}\right) = \left( \left[ \begin{array}{c}
                \mathbf{C}^{\top} \\
                \mathbf{A}^{\top} \\
                \mathbf{M}^{\top}
            \end{array}\right] \left[ \begin{array}{ccc}
                \mathbf{C} & \mathbf{A} & \mathbf{M}
            \end{array}\right]  \right)^{-1} \left[ \begin{array}{c}
                \mathbf{C}^{\top} \\
                \mathbf{A}^{\top} \\
                \mathbf{M}^{\top}
            \end{array}\right] \boldsymbol\epsilon_Y.
    \]
    Thus, by \eqref{reg_part_rep}, we have 
    \[
        \widehat{\alpha}_{YA} - \alpha_{YA} = \widehat{\Gamma}_{A, (M, C)} \boldsymbol\epsilon_Y
    \]
    and
    \[
        \widehat{\beta}_{YM} - \beta_{YM} = \widehat{\Gamma}_{M, (C, A)} \boldsymbol\epsilon_Y.
    \]
    where \textbf{$\var (\epsilon_Y \mid C, A, M ) = \sigma_Y^2$ is a constant free of $C, A, M$}. Similarly, 
    \[
         \widehat{\theta}_{MA}^{\top} - \theta_{MA}^{\top} = \widehat{\Gamma}_{A, C} \mathbf{e}_{M}
    \]
    it is worthy to note that 
    \[
        \begin{aligned}
            \var (e_M \mid C, A ) & = \var \big[ (I - B_{MM}^{\top}) \epsilon_M  \mid (I - B_{CC}^{\top}) \epsilon_C, \beta_{AC}^{\top}(I - B_{CC}^{\top}) \epsilon_C + \epsilon_A \big] \\
            & = \var ((I - B_{MM}^{\top}) \epsilon_M ) = \sigma_{\epsilon_M}^2 (I - B_{MM}^{\top} ) (I - B_{MM})
        \end{aligned}
    \]
    is also a constant free of $C$ and $A$ by $\epsilon_M \indep (C, \epsilon_A)$. 
    Therefore, by Lemma \ref{lem_simple_reg}, we have
    \[
        \sqrt{n} \big( \widehat{\alpha}_{YA} - \alpha_{YA} \big) \, \rightsquigarrow \, \mathcal{N} \left( 0, \plim \sum_{i = 1}^n \widehat\epsilon_{Y, i}^2 \left[ \widehat{\Gamma}_{A, (M, C)}\widehat{\Gamma}_{A, (M, C)}^{\top} \right] \right),
    \]
    \[
        \sqrt{n} \big( \widehat{\beta}_{YM} - \beta_{YM} \big) \, \rightsquigarrow \, \mathcal{N} \left( 0, \plim \sum_{i = 1}^n \widehat\epsilon_{Y, i}^2 \left[ \widehat{\Gamma}_{M, (C, A)}\widehat{\Gamma}_{M, (C, A)}^{\top} \right] \right),
    \]
    and
    \[
        \begin{aligned}
            \sqrt{n} & \left( \begin{array}{c}
                \widehat{\alpha}_{YA} - \alpha_{YA}  \\
                 \widehat{\beta}_{YM} - \beta_{YM}
            \end{array}\right)= \left( \begin{array}{c}
                 \widehat{\Gamma}_{A, (M, C)} \\
                 \widehat{\Gamma}_{M, (C, A)}
            \end{array}\right) \boldsymbol\epsilon_Y \\ & \rightsquigarrow \, N \left( 0, \left[ \begin{array}{cc}
                * & \plim \sum_{i = 1}^n \widehat\epsilon_{Y, i}^2 \left[ \widehat{\Gamma}_{A, (M, C)}\widehat{\Gamma}_{M, (C, A)}^{\top} \right] \\
                \plim \sum_{i = 1}^n \widehat\epsilon_{Y, i}^2 \left[ \widehat{\Gamma}_{M, (C, A)} \widehat{\Gamma}_{A, (M, C)}^{\top}\right] & *
            \end{array}\right] \right).
        \end{aligned}
    \]
    Besides,
    \[
        \sqrt{n} \big( \widehat\theta_{MA} - \theta_{MA}\big) \, \rightsquigarrow \, \mathcal{N} \left( 0, \plim \widehat{\mathbf{e}}_M^{\top} \widehat{\mathbf{e}}_M \Gamma_{A, C} \Gamma_{A, C}^{\top} \right).
    \]
    Hence, we have 
    \[
        \begin{aligned}
            & \sqrt{n} \left( \begin{array}{c}
                \widehat{DE}^{OLS} - DE \\
                \widehat{IE}^{OLS} - IE
            \end{array}\right) = \left( \begin{array}{c}
                1 \\
                \theta_{MA}^{\top} \quad \beta_{YM}^{\top} 
            \end{array}\right) \left( \begin{array}{c}
                 \widehat{\alpha}_{YA} - \alpha_{YA} \\
                 \widehat\beta_{YM} - \beta_{YM} \\
                 \widehat\theta_{MA} - \widehat\theta_{MA}
            \end{array}\right) \\
            & \rightsquigarrow \, N \left( 0, \left[ \begin{array}{cc}
                \plim \sum_{i = 1}^n \widehat\epsilon_{Y, i}^2 \left[ \widehat{\Gamma}_{A, (M, C)}\widehat{\Gamma}_{A, (M, C)}^{\top} \right] & \plim \sum_{i = 1}^n \widehat\epsilon_{Y, i}^2 \left[ \widehat{\Gamma}_{A, (M, C)}\widehat{\Gamma}_{M, (C, A)}^{\top} \right] \theta_{MA} \\
                \plim \sum_{i = 1}^n \widehat\epsilon_{Y, i}^2 \theta_{MA}^{\top} \left[ \widehat{\Gamma}_{M, (C, A)} \widehat{\Gamma}_{A, (M, C)}\right] & {\theta}_{MA}^{\top} \bigg[ \plim \sum_{i = 1}^n \widehat\epsilon_{Y, i}^2 \left[ \widehat{\Gamma}_{M, (C, A)}\widehat{\Gamma}_{M, (C, A)}^{\top} \right] \bigg] \theta_{MA} \\
                 & + \beta_{YM}^{\top} \bigg[  \plim \widehat{\mathbf{e}}_M^{\top} \widehat{\mathbf{e}}_M \Gamma_{A, C} \Gamma_{A, C}^{\top} \bigg] \beta_{YM} 
            \end{array}\right]\right)
        \end{aligned}
    \]
    by $\epsilon_Y \indep e_M$, which gives the result for $\widehat{DE}^{OLS}$ as well as the result for $\widehat{IE}^{OLS}$ when $\theta_{MA}$ and $\beta_{YM}$ are not both equal to zero. When $\theta_{MA} = \beta_{YM} = 0$, we write 
    \[
        \begin{aligned}
            \frac{\sqrt{n}(\widehat{IE}^{\text{OLS}} - IE)}{\sqrt{\widehat\beta_{YM}^{\top} \widehat{\Sigma}_{\theta_{MA}} \widehat\beta_{YM} + \widehat\theta_{MA}^{\top} \widehat{\Sigma}_{\beta_{YM}} \widehat\theta_{MA}}} = \frac{\sqrt{n}\widehat\theta_{MA}^{\top} \sqrt{n} \beta_{YM}}{\sqrt{\sqrt{n}\widehat\beta_{YM}^{\top} \widehat{\Sigma}_{\theta_{MA}} \sqrt{n}\widehat\beta_{YM} + \sqrt{n}\widehat\theta_{MA}^{\top} \widehat{\Sigma}_{\beta_{YM}} \sqrt{n}\widehat\theta_{MA}}}
        \end{aligned}
    \]
    which leads to the result by the continuous mapping theorem. Similar method applies to the proof of $\widehat{DM}_j^{OLS}$. Therefore, we have 
    Then
    \[
        \frac{\sqrt{n}(\widehat{DE}^{\text{OLS}} - DE)}{\sqrt{\widehat{\Gamma}_{A, (M, C)} \widehat{\Gamma}_{A, (M, C)}^{\top} \sum_{i = 1}^n \widehat\epsilon_{Y, i}^2}} \, \rightsquigarrow \, \mathcal{N} (0, 1).
    \]
    and
    \[
        \begin{aligned}
            & \frac{\sqrt{n}(\widehat{IE}^{\text{OLS}} - IE)}{\sqrt{\widehat\beta_{YM}^{\top} \widehat{\Sigma}_{\theta_{MA}} \widehat\beta_{YM} + \widehat\theta_{MA}^{\top} \widehat{\Sigma}_{\beta_{YM}} \widehat\theta_{MA}}} \, 
            \rightsquigarrow \, \left\{ \begin{array}{ll}
                \displaystyle\frac{Z_{\theta}^{\top} Z_{\beta}}{\sqrt{Z_{\theta}^{\top} \Sigma_{\beta}Z_{\theta} + Z_{\beta}^{\top} \Sigma_{\theta}Z_{\beta}}} \qquad , & \text{if } \theta_{MA} = \beta_{YM} = 0, \\
                \mathcal{N}(0, 1), & \text{otherwise.}
            \end{array}\right.
        \end{aligned}
    \]
    where
    \[
        \left( \begin{array}{c}
         Z_\theta \\
         Z_\beta
    \end{array} \right) \, \sim \, \mathcal{N} \left( 0, \left( \begin{array}{cc}
        \Sigma_{\theta_{MA}} & \\
         & \Sigma_{\beta_{YM}}
    \end{array} \right) \right), \qquad \Sigma_{\theta_{MA}} = \plim \widehat\Sigma_{\theta_{MA}}, \qquad \Sigma_{\beta_{YM}} = \plim \widehat\Sigma_{\beta_{YM}}.
    \]
    Note that the $\frac{Z_{\theta}^{\top} Z_{\beta}}{\sqrt{Z_{\theta}^{\top} \Sigma_{\beta}Z_{\theta} + Z_{\beta}^{\top} \Sigma_{\theta}Z_{\beta}}}$ is much more concentrated around zero compared to the standard normal distribution \cite{chakrabortty2018inference}, we get the "$\geq$" instead of "$=$" for the confidence interval of $\widehat{IE}^{OLS}$. Similar result can be applied to $\widehat{DM}_j^{OLS}$.
\end{proof}

\noindent \textbf{Proof of Theorem \ref{thm_CI_IM}:}
\begin{proof}
    Let $\mu = \E X$ is the mean of $X$. Denote
    \[
        \xi_j(\mathcal{G}_{M}) :=  \E^{\text{reg}} \big[ Y \mid M_j \cup \Pa_j (\mathcal{G}_{M}) \cup A \cup C \big]_1
    \]
    and corresponding quantities
    \[
        \overline{\xi}_j (\mathcal{C}_M) := \frac{1}{\# \operatorname{MEC}(\mathcal{C}_M)} \sum_{\mathcal{G}_{M} \in \operatorname{MEC}({\mathcal{C}}_{ M})}  \E^{\text{reg}} \big[ Y \mid M_j \cup \Pa_j (\mathcal{G}_{M}) \cup A \cup C \big]_1
    \]
    and
    \[
        \widetilde{\xi}_j (\mathcal{C}_M) := \frac{1}{\# \operatorname{MEC}(\mathcal{C}_M)} \sum_{\mathcal{G}_{M} \in \operatorname{MEC}({\mathcal{C}}_{ M})}  \widehat{\xi} (\mathcal{G}_{M}), \qquad \widehat{\xi} (\mathcal{G}_{M}) := \widehat\E^{\text{reg}} \big[ Y \mid M_j \cup \Pa_j (\mathcal{G}_{M}) \cup A \cup C \big]_1
    \]
    From (ii') in Proposition \ref{thm_par_exp}, we have 
    \[
        \begin{aligned}
            & \widehat{{IM}}_j^{OLS} - \overline{IM}_j \\ 
            = & \widehat{\theta}_{MA, j} \Big( \widetilde{\xi}_j (\widehat{\mathcal{C}}_M )- \widehat{\beta}_{YM, j} \Big) - {\theta}_{MA, j}  \Big( \overline{\xi}_j ({\mathcal{C}}_{M} )- {\beta}_{YM, j} \Big) \\
            = & \widehat{\theta}_{MA, j} \Big( \big(\widetilde{\xi}_j (\widehat{\mathcal{C}}_M ) - \overline{\xi}_j ({\mathcal{C}}_{M} ) \big) - \big( \widehat{\beta}_{YM, j} - {\beta}_{YM, j} \big) \Big) + \big( \widehat{\theta}_{MA, j}  - {\theta}_{MA, j}\big)  \Big( \overline{\xi}_j ({\mathcal{C}}_{M} )- {\beta}_{YM, j} \Big) \\
            = & \widehat{\theta}_{MA, j} \Big(\widetilde{\xi}_j (\widehat{\mathcal{C}}_M ) - \widetilde{\xi}_j ({\mathcal{C}}_{M} )\Big) + \widehat{\theta}_{MA, j}  \Big( \widetilde{\xi}_j ({\mathcal{C}}_{M} ) - \overline{\xi}_j ({\mathcal{C}}_{M} ) \Big) \\
            & ~~~~~~~~~~~~~~~~~~~~~~~~ - \widehat{\theta}_{MA, j} \big( \widehat{\beta}_{YM, j} - {\beta}_{YM, j} \big) +  \big( \widehat{\theta}_{MA, j}  - {\theta}_{MA, j}  \big)  \Big( \overline{\xi}_j ({\mathcal{C}}_{M} )- {\beta}_{YM, j} \Big) \\
            =: & D_{j, 1} + D_{j, 2} - D_{j, 3} + D_{j, 4}
        \end{aligned}
    \]
    for $j \in [p]$. What we want to do next is finding the asymptotic expression of $\widehat{{IM}}_j - \overline{IM}_j $. First, for any non-negative sequence $\{ a_n \}$, 
    \[
        \begin{aligned}
            \pr \left( a_n \max_{j \in [p]}\big| D_{j, 1} \big | > \varepsilon \right) & \leq \pr \Big( \exists\, j \in [p] : \widetilde{\xi}_j (\widehat{\mathcal{C}}_M ) - \widetilde{\xi}_j ({\mathcal{C}}_{M} ) \neq 0 \Big) \\
            & \leq \pr \Big( \widehat{\mathcal{C}}_M \neq \mathcal{C}_{ M}\Big) \overset{\text{by Assumption \eqref{ass_nor_cpdag}}}{\longrightarrow} 0.
        \end{aligned}
    \]
    Thus, we have $\max_{j \in [p]} D_{j, 1} = o_p(a_n^{-1})$. We first consider the case at least one $\theta_{MA, j}$ or $\overline{\xi}_j ({\mathcal{C}}_{M} )- {\beta}_{YM, j}$ is not equal to zero. From the proof of Theorem \ref{thm_CI_DE_IE_DM} and Lemma \ref{lem_simple_reg}, we know that 
    \[
        D_{j, 2} = {\theta}_{MA, j} \Big( \widetilde{\xi}_j ({\mathcal{C}}_{M} ) - \overline{\xi}_j ({\mathcal{C}}_{M} ) \Big) + o_p \left( \frac{1}{n}\right),
    \]
    \[
        \begin{aligned}
            D_{j, 3} & = \theta_{MA, j} \Big[ \widehat{\Gamma}_{M, (C, A)} \boldsymbol\epsilon_Y \Big]_j+ o_p \left( \frac{1}{n}\right) \\
            & = \theta_{MA, j} \left[ \plim \widehat{\Gamma}_{M, (C, A)} \widehat{\Gamma}_{M, (C, A)}^{\top} \right]_{jj}^{1 / 2} \frac{1}{n} \sum_{i = 1}^n \epsilon_Y + o_p \left( \frac{1}{n}\right),
        \end{aligned}
    \]
    and
    \[
        \begin{aligned}
            D_{j, 4} & = \big( \widehat{\theta}_{MA, j}^{\top}  - {\theta}_{MA, j}^{\top}  \big)  \Big( \overline{\xi}_j ({\mathcal{C}}_{M} )- {\beta}_{YM, j} \Big) \\
            & = \left[ \widehat{\Gamma}_{A, C} \mathbf{e}_M \right]_{: j} \Big( \overline{\xi}_j ({\mathcal{C}}_{M} )- {\beta}_{YM, j} \Big) + o_p \left( \frac{1}{n}\right) \\
            & = \Big( \overline{\xi}_j ({\mathcal{C}}_{M} )- {\beta}_{YM, j} \Big) \left[ \widehat{\Gamma}_{A, C} \boldsymbol{\epsilon}_M (I_p - B_{MM})^{-1}\right]_{:j}  + o_p \left( \frac{1}{n}\right) \\
            & =  \Big( \overline{\xi}_j ({\mathcal{C}}_{M} )- {\beta}_{YM, j} \Big) \left[ \plim \widehat{\Gamma}_{A, C} \widehat{\Gamma}_{A, C}^{\top}\right]^{1 / 2}  \\
            & ~~~~~~~~~~~~~~~~~~~~~~~~~~~~~~\Big[ (I_p - B_{MM}^{\top})^{-1} \Sigma_M (I_p - B_{MM})^{-1} \Big]_{jj}^{1 / 2} \frac{1}{n} \sum_{i = 1}^n \varepsilon_{M, j, i} + o_p \left( \frac{1}{n}\right).
        \end{aligned}
    \]
    uniformly in $j \in [p]$, where $\varepsilon_M := \Sigma_M^{-1 / 2} \epsilon_M$. We denote $X_{S_{M, j, 1}}, \ldots, X_{S_{M, j, L_{\text{distinct}, j}}}$ be distinct parent sets of $M_j$ with $m_{j, 1}, \ldots, m_{j, L_{\text{distinct}, j}}$ times. Let 
    \[
        \begin{aligned}
            \mathcal{S}_j & := \big\{ S_{j r} = (j + t, 1, \ldots, t, X_{S_{M, j, r}}) : r \in \{1, \ldots, L_{\text{distinct}, j} \}, \, t \notin X_{S_{M, j, r}}\big\} \\
            & ~~~~~~~~ \cup \, \big\{ S_{j r} = (j + t, 1, \ldots, t - 1, X_{S_{M, j, r}}) : r \in \{1, \ldots, L_{\text{distinct}, j} \}, \, t \in X_{S_{M, j, r}}\big\} \boldsymbol\epsilon_Y
        \end{aligned}
    \]
    with $a_{S_{jr}} = (m_{ j r} / L_j, 0, \ldots, 0)^{\top} \in \mathbb{R}$ and $L_j = \sum_{r = 1}^{L_{\text{distinct}, j}} m_{j r}$.  Since 
    \[
        \sum_{S_{j r} \in \mathcal{S}_j} \| a_{S_{j r}} \|_2 = \sum_{r = 1}^{L_{\text{distinct}, j}} \frac{m_{ jr}}{L_j} = 1
    \]
    for any $j \in [p]$, we can apply REMARK 5.2 in \cite{chakrabortty2018inference}, we have
    \[
        \widetilde{\xi}_j ({\mathcal{C}}_{M} ) - \overline{\xi}_j ({\mathcal{C}}_{M} ) =  \frac{1}{n} \sum_{i = 1}^n W_{i}^{(j)}({\mathcal{C}}_{M} ) + o_p \left(\frac{1}{\sqrt{n}} \right) 
    \]
    where 
    \[
        \begin{aligned}
            & W^{(j)}({\mathcal{C}}_{M} ) \\
            = & \frac{1}{L_j} \sum_{\ell=1}^{L_j} {e}_{1,\left|S_{j \ell}\right|}^{\top} \left(\Sigma_{S_{j \ell} S_{j \ell}}\right)^{-1}  \left({X}_{S_{j \ell}}-{\mu}_{S_{j \ell}}\right) \Big( Y - \E Y  -\Sigma_{p S_{j \ell}} \left(\Sigma_{S_{j \ell} S_{j \ell}}\right)^{-1} \left({X}_{S_{j \ell}}-{\mu}_{S_{j \ell}}\right) \Big) \\
            = & \frac{1}{L_j} \sum_{\ell=1}^{L_j} \frac{m_r}{L_j \sigma_{t + j \mid \left(1, \ldots, t, S_{M, j,  \ell}\right)}^2}\left(\left(M_{j, i} - \E M_j \right)-{\beta}_{t + j \mid\left(1, \ldots, t, S_{M, j, \ell} \right)}^{\top}\left(X_{\left(1, \ldots, t, S_{M, \ell, j}\right), i}-{\E X}_{\left(1, \ldots, t, S_{M, \ell, j}\right)}\right)\right) \\
            & ~~~~~~ \times \Big( Y - \E Y  -\Sigma_{p S_{j \ell}} \left(\Sigma_{S_{j \ell} S_{j \ell}}\right)^{-1} \left({X}_{S_{j \ell}}-{\mu}_{S_{j \ell}}\right) \Big)
        \end{aligned}
    \]
    where $e_{i, k}$ the $i$-th column of the $k \times k$ identity matrix and ${\beta}_{t + j \mid\left(1, \ldots, t, S_{M, j, \ell}\right)}^{\top} := \left(\Sigma\right)_{j\left(1, \ldots, t, S_{M, j , \ell}\right)} \penalty 0 \left(\Sigma_{\left(1, \ldots, t, S_{M, j, \ell}\right)\left(1, \ldots, t,S_{M, j, \ell}\right)}\right)^{-1}$. Therefore, we can obtain
    \[
        \begin{aligned}
            & \sqrt{n} \big( \widehat{{IM}}_j^{OLS} - \overline{IM}_j \big)\\
            = & \frac{1}{\sqrt{n}} \sum_{i = 1}^n \theta_{MA, j} \Big\{ W_{i}^{(j)}({\mathcal{C}}_{M} ) + \left[ \plim \widehat{\Gamma}_{M, (C, A)} \widehat{\Gamma}_{M, (C, A)}^{\top} \right]_{jj}^{1 / 2} \epsilon_{Y, i} \Big\} \\
            & + \Big( \overline{\xi}_j ({\mathcal{C}}_{M} )- {\beta}_{YM, j} \Big) \left[ \plim \widehat{\Gamma}_{A, C} \widehat{\Gamma}_{A, C}^{\top}\right]^{1 / 2} \Big[ (I_p - B_{MM}^{\top})^{-1} \Sigma_M (I_p - B_{MM})^{-1} \Big]_{jj}^{1 / 2} \frac{1}{\sqrt{n}} \sum_{i = 1}^n \varepsilon_{M, j, i} \\
            & + o_p(1) \, \rightsquigarrow \, \mathcal{N} \big(0, \sigma_{\overline{IM}_j}^2 \big) 
        \end{aligned}
    \]
    where 
    \[
        \begin{aligned}
            & \sigma_{\overline{IM}_j}^2 := \E \bigg[ \theta_{MA, j} \Big\{ W^{(j)}({\mathcal{C}}_{M} ) + \left[ \plim \widehat{\Gamma}_{M, (C, A)} \widehat{\Gamma}_{M, (C, A)}^{\top} \right]_{jj}^{1 / 2} \epsilon_{Y} \Big\} \\
            & + \Big( \overline{\xi}_j ({\mathcal{C}}_{M} )- {\beta}_{YM, j} \Big) \left[ \plim \widehat{\Gamma}_{A, C} \widehat{\Gamma}_{A, C}^{\top}\right]^{1 / 2} \Big[ (I_p - B_{MM}^{\top})^{-1} \Sigma_M (I_p - B_{MM})^{-1} \Big]_{jj}^{1 / 2}\varepsilon_{M, j} \bigg]^2,
        \end{aligned}
    \]
    which implies $\sigma_{\overline{IM}_j}^2$ can be estimated by 
    \begin{equation}\label{est_sigma_IM}
        \begin{aligned}
            \widehat\sigma_{\overline{IM}_j}^2 & = \frac{1}{n} \sum_{i = 1}^n \bigg[ \widehat\theta_{MA, j} \Big\{ \widehat{W}_i^{(j)}({\widehat{\mathcal{C}}}_{M} ) + \left[ \widehat{\Gamma}_{M, (C, A)} \widehat{\Gamma}_{M, (C, A)}^{\top} \right]_{jj}^{1 / 2} \widehat\epsilon_{Y, i} \Big\} \\
            & + \Big( \widetilde{\xi}_j (\widehat{\mathcal{C}}_{M} )- \widehat{\beta}_{YM, j} \Big) \left[  \widehat{\Gamma}_{A, C} \widehat{\Gamma}_{A, C}^{\top}\right]^{1 / 2} \Big[ (I_p - \widehat{B}_{MM}^{\top})^{-1} \widehat\Sigma_M (I_p - \widehat{B}_{MM})^{-1} \Big]_{jj}^{1 / 2}\widehat\varepsilon_{M, j, i} \bigg]^2
        \end{aligned}
    \end{equation}
    where $\widehat{B}_{MM}$ can be estimated incidentally by a version of PC algorithm \citep{harris2013pc}. Finally, when $\theta_{MA, j} = \overline{\xi}_j ({\mathcal{C}}_{M} )- {\beta}_{YM, j} = 0$, we will have $ \lim_{n \rightarrow \infty} \pr \big( \sqrt{n} \big| \widehat{IM}_j^{OLS} - IM_j \big| \geq \widehat\sigma_{\overline{IM}_j} \Phi^{-1} (1 - \alpha / 2)  \big) > 1 - \alpha$ by using the similar argument in the proof of Theorem \ref{thm_CI_DE_IE_DM}.
\end{proof}

\noindent \textbf{Proof of Theorem \ref{thm_QR_nor}:}

\begin{proof}
    We will prove the consistency for $\widehat{DM}_j^{\text{QR}}$ and semiparametric efficiency for $\widehat{TM}_j^{\text{avg}, \, \text{QR}}$. The proof of other parts can be similarly derived. We first prove the consistency.\\
    
    \noindent \textbf{Consistency of} $\widehat{DM}_j^{\text{QR}}$:\\
    
    Note that $\widehat{DM}_j^{\text{QR}} - DM_j = \pn \big\langle \widehat{S}^{\text{eff, nonpar}} \big( \E \zeta_j (\bcdot, 0, C) \big)  \big\rangle$,
    so it is sufficient to prove $\pn \widehat{S}^{\text{eff, nonpar}} \big( \E \zeta_j (a', 0, C) \big) \, \xlongrightarrow{\pr} \, 0$ for any $a' \in \{ 0, 1\}$. Denote $\pi_{x_S}^{(T)} (m_T) = \pi_{x_S} (m_T)$, Then we write 
    \[
        \begin{aligned}
            & \psi_j^{(0)} \big( \mu, \pi_{C, a'}^{(j)}, \pi_{C, 0}^{(-j)}\big) = \int_{\mathcal{M}} \mu (C, 1, m) \pi_{C, 1} (m_j) \pi_{C, 0}(m_{-j}) \, \mathrm{d} m - \E  \zeta_j (a', 0, C) \\
            & \psi_j^{(1)} \big( e_0, \mu, \pi_{C, a'}^{(j)}\big) = \frac{\mathds{1}(A = 0)}{{e}_0(C)} \bigg\{ \int_{\mathcal{M}_j} \mu (C, 1, m_j, M_{-j}) \pi_{C, a'}(m_j) \, \mathrm{d} m_j - \zeta_j (a', 0, C) \bigg\} \\
            & \psi_j^{(2)} \big( {e}_{a'}, \mu, \pi_{C, 0}^{(-j)} \big) = \frac{\mathds{1}(A = a')}{{e}_{a'} (C)} \bigg\{ \int_{\mathcal{M}_{-j}} \mu(C, 1, M_j, m_{-j}) \pi_{C, 0} (m_{-j}) \, \mathrm{d} m_{-j} - \zeta_j (a', 0, C) \bigg\}  \\
            & \psi_j^{(3)} \big( e_1,  \pi_{C, a'}^{(-j)}, \pi_{C, 1, M_{j}}^{(-j)}, \mu \big) = \frac{\mathds{1}(A = 1)}{e_1(C)} \frac{ \pi_{C, a'}(M_{-j}) }{\pi_{C, 1, M_{j}}(M_{-j})} \Big\{ Y - \mu (C, 1, M) \Big\} 
        \end{aligned}
    \]
    then $\pn  S^{\text{eff, nonpar}} \big( \E \zeta_j (a', 0, C) \big)  = \sum_{\ell = 0}^3 \pn \psi_j^{(\ell)}(\cdot)$. We break the proof into four parts, \textbf{Part $\boldsymbol\ell$} gives the consistency under different model $\mathscr{M}_{j, \, \ell}$.
    \vspace{1ex}

    \textbf{Part 0.} When model $\mathscr{M}_0$ is correctly specified, i.e., $\widehat\mu$, $\widehat\pi_{C, a'}^{(j)}$, and $\widehat\pi_{C, 0}^{(-j)}$ are consistency. We have 
    \[
        \pn \psi_j^{(0)} \big( \widehat\mu, \widehat\pi_{C, a'}^{(j)}, \widehat\pi_{C, 0}^{(-j)}\big) = \pn \Big[ \psi_j^{(0)} \big( \widehat\mu, \widehat\pi_{C, a'}^{(j)}, \widehat\pi_{C, 0}^{(-j)}\big) - \psi_j^{(0)} \big( \mu, \pi_{C, a'}^{(j)}, \pi_{C, 0}^{(-j)}\big) \Big] + \pn \psi_j^{(0)} \big( \mu, \pi_{C, a'}^{(j)}, \pi_{C, 0}^{(-j)}\big),
    \]
    with $\pn \psi_j^{(0)} \big( \mu, \pi_{C, a'}^{(j)}, \pi_{C, 0}^{(-j)}\big) \, \xlongrightarrow{\pr} \, P \psi_j^{(0)} \big( \mu, \pi_{C, a'}^{(j)}, \pi_{C, 0}^{(-j)}\big) = 0$. For another part, we will use empirical processes technique. Indeed, we have 
    \[
        \begin{aligned}
            & \bigg| \pn \Big[ \psi_j^{(0)} \big( \widehat\mu, \widehat\pi_{C, a'}^{(j)}, \widehat\pi_{C, 0}^{(-j)}\big) - \psi_j^{(0)} \big( \mu, \pi_{C, a'}^{(j)}, \pi_{C, 0}^{(-j)}\big) \Big] \bigg| \\ 
            \leq & \, \bigg| \pn \int_{\mathcal{M}} \big\{ \widehat\mu (C, 1, m) - \mu (C, 1, m) \big\} \pi_{C, a} (m_j) \pi_{C, 0}(m_{-j}) \, \mathrm{d} m \bigg|\\
            & \, + \bigg| \pn \int_{\mathcal{M}}  \mu (C, 1, m) \big\{ \widehat\pi_{C, a} (m_j) - \pi_{C, a} (m_j) \big\} \pi_{C, 0}(m_{-j}) \, \mathrm{d} m \bigg|\\
            & + \bigg| \pn \int_{\mathcal{M}}  \mu (C, 1, m)  \pi_{C, a} (m_j) \big\{ \widehat\pi_{C, 0}(m_{-j}) - \pi_{C, 0}(m_{-j}) \big\} \, \mathrm{d} m \bigg|.
        \end{aligned}
    \]
    For sufficient small constant $\varepsilon>0$, we define a set of functions $\mathcal{U}(\varepsilon)$ that contains conditional expectation $\mu_{\varepsilon}$ such that
    \[
        \E \int_{\mathcal{M}} \big\{ \mu_{\varepsilon} (C, 1, m) - \mu (C, 1, m) \big\}^2 \, \mathrm{d} m \leq \varepsilon^2.
    \]
    then we consider  
    \begin{small}
    \[
        \begin{aligned}
             & \E \sup_{\mu_{\varepsilon}  \, \in \, \mathcal{U}(\varepsilon)} \bigg| \pn \int_{\mathcal{M}} \big\{ \mu_{\varepsilon} (C, 1, m) - \mu (C, 1, m) \big\} \pi_{C, a} (m_j) \pi_{C, 0}(m_{-j}) \, \mathrm{d} m \bigg| \\
             \leq & \E \sup_{\mu_{\varepsilon} \, \in \, \mathcal{U}(\varepsilon)} \bigg[ \pn \int_{\mathcal{M}} \big| \mu_{\varepsilon}  (C, 1, m) - \mu (C, 1, m) \big|\pi_{C, a} (m_j) \pi_{C, 0}(m_{-j}) \, \mathrm{d} m \bigg] \\
             \leq & \pn \sup_{\mu_{\varepsilon} \, \in \, \mathcal{U}(\varepsilon)} \E \int_{\mathcal{M}} \big| \mu_{\varepsilon}  (C, 1, m) - \mu (C, 1, m) \big|\pi_{C, a} (m_j) \pi_{C, 0}(m_{-j}) \, \mathrm{d} m \\
             & \, + \E \sup_{\mu_{\varepsilon} \, \in \, \mathcal{U}(\varepsilon) }  \pn \bigg[ \int_{\mathcal{M}} \big| \mu_{\varepsilon}  (C, 1, m) - \mu (C, 1, m) \big|\pi_{C, a} (m_j) \pi_{C, 0}(m_{-j}) \, \mathrm{d} m \\
             & ~~~~~~~~~~~~~~~~~~~~~~~~~~~~ - \E \int_{\mathcal{M}} \big| \mu_{\varepsilon}  (C, 1, m) - \mu (C, 1, m) \big|\pi_{C, a} (m_j) \pi_{C, 0}(m_{-j}) \, \mathrm{d} m \bigg] \\
             \alignedoverset{\text{Cauchy Inequality and $\ell^2$ assumption}}{\leq} \varepsilon^2 \\
             & \, + \E \sup_{\mu_{\varepsilon} \, \in \, \mathcal{U}(\varepsilon) }  \pn \bigg[ \int_{\mathcal{M}} \big| \mu_{\varepsilon}  (C, 1, m) - \mu (C, 1, m) \big|\pi_{C, a} (m_j) \pi_{C, 0}(m_{-j}) \, \mathrm{d} m \\
            & ~~~~~~~~~~~~~~~~~~~~~~~~~~~~ - \E \int_{\mathcal{M}} \big| \mu_{\varepsilon}  (C, 1, m) - \mu (C, 1, m) \big|\pi_{C, a} (m_j) \pi_{C, 0}(m_{-j}) \, \mathrm{d} m \bigg] \\
            \alignedoverset{\text{Corollary 5.1 in \cite{chernozhukov2014gaussian}}}{\lesssim} \varepsilon^2 + n^{-1 / 2} \sqrt{n^\vartheta} \varepsilon \big( \log (\varepsilon) + \log n \big) \\
            \alignedoverset{\text{Let } \varepsilon  \rightarrow 0 \text{ with Assumption \ref{ass_fun_class}}}{\longrightarrow} 0.
        \end{aligned}
    \]
    \end{small}
    Therefore, we must have $\pn \int_{\mathcal{M}} \big\{ \widehat\mu (C, 1, m) - \mu (C, 1, m) \big\} \pi_{C, a} (m_j) \pi_{C, 0}(m_{-j}) \, \mathrm{d} m = o_p(1)$. Similarly, we have 
    \[
        \begin{aligned}
            & \pn \int_{\mathcal{M}}  \mu (C, 1, m) \big\{ \widehat\pi_{C, a} (m_j) - \pi_{C, a} (m_j) \big\} \pi_{C, 0}(m_{-j}) \, \mathrm{d} m \\
            = & \pn \int_{\mathcal{M}}  \mu (C, 1, m)  \pi_{C, a} (m_j) \big\{ \widehat\pi_{C, 0}(m_{-j}) - \pi_{C, 0}(m_{-j}) \big\} \, \mathrm{d} m = o_p(1),
        \end{aligned}
    \]
    which implies
    \[
        \pn \Big[ \psi_j^{(0)} \big( \widehat\mu, \widehat\pi_{C, a'}^{(j)}, \widehat\pi_{C, 0}^{(-j)}\big) - \psi_j^{(0)} \big( \mu, \pi_{C, a'}^{(j)}, \pi_{C, 0}^{(-j)}\big) \Big] = o_p(1)
    \]
    This yields $\pn \psi_j^{(0)} \big( \widehat\mu, \widehat\pi_{C, a'}^{(j)}, \widehat\pi_{C, 0}^{(-j)}\big) = o_p(1)$. Note that the expressions of $\psi_j^{(3)}(\cdot)$ and Assumption \ref{ass_pos} ensure
    \[
        \begin{aligned}
            \Big| \pn \psi_j^{(3)} \big( \widehat{e}_1, \widehat{\pi}_{C, a'}^{(-j)}, \widehat{\pi}_{C, 1, M_{j}}^{(-j)},  \widehat\mu \big) \Big| \lesssim \bigg| \frac{1}{n} \sum_{i = 1}^n \big\{ Y_i - \widehat\mu (C_i, 1, M_i)\big\} \bigg| = o_p(1),
        \end{aligned}
    \]
    by $\widehat{\mu}(\bcdot)$ is correctly estimated, which gives $ \pn \psi_j^{(3)} \big( \widehat{e}_1, \widehat{\pi}_{C, 0}^{(j)}, \widehat{\pi}_{C, a'}^{(-j)}, \widehat{\pi}_{C, 1, M_{-j}}^{(j)},  \widehat\pi_{C, 1}^{(-j)}, \widehat\mu \big) = o_p(1)$. Similarly,
    \[
        \begin{aligned}
            & \Big| \pn \psi_j^{(1)} \big( \widehat{e}_0, \widehat\mu, \widehat\pi_{C, a'}^{(j)}\big) \Big|\\
            \lesssim & \pn \bigg[ \int_{\mathcal{M}_j} \widehat\mu (C, 1, m_j, M_{-j}) \widehat\pi_{C, a'}(m_j) \, \mathrm{d} m_j - \zeta_j (a', 0, C) \bigg] \\
            \leq & \bigg| \pn \bigg[ \int_{\mathcal{M}_j} \mu (C, 1, m_j, M_{-j}) \pi_{C, a'}(m_j) \, \mathrm{d} m_j - \zeta_j (a', 0, C) \bigg] \bigg| \\
            &+ \bigg| \pn \int_{\mathcal{M}_j} \Big\{ \widehat\mu (C, 1, m_j, M_{-j}) \widehat\pi_{C, a'}(m_j) - \mu (C, 1, m_j, M_{-j}) \pi_{C, a'}(m_j) \Big\} \, \mathrm{d} m_j \bigg|.
        \end{aligned}
    \]
    The first term above is $o_p(1)$ obviously by the definition of $\zeta_j(a', 0, C)$. By using the same steps for proving $\pn \Big[ \psi_j^{(0)} \big( \widehat\mu, \widehat\pi_{C, a'}^{(j)}, \widehat\pi_{C, 0}^{(-j)}\big) - \psi_j^{(0)} \big( \mu, \pi_{C, a'}^{(j)}, \pi_{C, 0}^{(-j)}\big) \Big] = o_p(1)$ above, we can prove the second term is also $o_p(1)$ by both $\widehat{\mu}$ and $\widehat\pi_{C, a'}(m_j)$ are correctly estimated. This yields $ \pn \psi_j^{(1)} \big( \widehat{e}_0, \widehat\mu, \widehat\pi_{C, a'}^{(j)}\big) = o_p(1)$, and similarly gives $\pn \psi_j^{(2)} \big( \widehat{e}_{a'}, \widehat\mu, \widehat\pi_{C, 0}^{(-j)} \big) = o_p(1)$. The proof for \textbf{Part 0} is thus completed.
    \vspace{1ex}

    \textbf{Part 1.} When the model $\mathscr{M}_{j, \, 1}$ is correctly specified, we have consistency estimator $\widehat{e}_{a'}$, $\widehat\mu$, and $\widehat\pi_{C, a'}^{(j)}$. 
    We first handle $\psi_j^{(0)}(\cdot) + \psi_j^{(1)}(\cdot)$. We rewrite
    \[
        \begin{aligned}
            & \psi_j^{(0)} \big( \mu, \pi_{C, a'}^{(j)}, \pi_{C, 0}^{(-j)}\big) + \psi_j^{(1)} \big( {e}_{0}, \mu, \pi_{C, a'}^{(j)} \big) \\
            = & \zeta_j (a', 0, C) - \E  \zeta_j (a', 0, C) +  \frac{\mathds{1}(A = 0)}{{e}_0(C)} \bigg\{ \int_{\mathcal{M}_j} \mu (C, 1, m_j, M_{-j}) \pi_{C, a'}(m_j) \, \mathrm{d} m_j - \zeta_j (a', 0, C) \bigg\} \\
            = & \frac{\zeta_j (a', 0, C)}{e_{0} (C)} \big[ \mathds{1} (A = 0) - e_{0}(C)\big] \\
            & ~~~~~~~~~~ + \frac{\mathds{1}(A = 0)}{{e}_0(C)} \bigg\{ \int_{\mathcal{M}_j} \mu (C, 1, m_j, M_{-j}) \pi_{C, a'}(m_j) \, \mathrm{d} m_j - \E  \zeta_j (a', 0, C).
        \end{aligned}
    \]
    Therefore, 
    \[
        \begin{aligned}
            & \Big| \pn \big[ \psi_j^{(0)} \big( \widehat\mu, \widehat\pi_{C, a'}^{(j)}, \widehat\pi_{C, 0}^{(-j)}\big) + \psi_j^{(1)} \big( \widehat{e}_{0}, \widehat\mu, \widehat\pi_{C, a'}^{(j)} \big) \big] \Big| \\
            \lesssim & \, \Bigg| \pn \bigg[ \frac{ \int_{\mathcal{M}} \widehat\mu (C, 1, m) \widehat\pi_{C, a} (m_j) \widehat\pi_{C, 0}(m_{-j}) \, \mathrm{d} m}{\widehat{e}_{0} (C)} \big\{ \mathds{1} (A = 0) - \widehat{e}_{0}(C)\big\} \bigg] \Bigg| \\
            & ~~~~~~~~ + \Bigg| \pn \bigg[ \frac{\mathds{1}(A = 0)}{\widehat{e}_0(C)} \int_{\mathcal{M}_j} \widehat\mu (C, 1, m_j, M_{-j}) \widehat\pi_{C, a'}(m_j) \, \mathrm{d} m_j - \E  \zeta_j (a', 0, C) \bigg] \Bigg|.
        \end{aligned}
    \]
    By the bounded assumption in Assumption \ref{ass_fun_class}, we have
    \[
        \begin{aligned}
            & \Bigg| \pn \bigg[ \frac{ \int_{\mathcal{M}} \widehat\mu (C, 1, m) \widehat\pi_{C, a} (m_j) \widehat\pi_{C, 0}(m_{-j}) \, \mathrm{d} m}{\widehat{e}_{0} (C)} \big\{ \mathds{1} (A = 0) - \widehat{e}_{0}(C)\big\} \bigg] \Bigg| \\
            \lesssim & \Big| \pn \big\{ \mathds{1} (A = 0) - \widehat{e}_{0}(C)\big\} \Big| = o_p(1)
        \end{aligned}
    \]
    by $\widehat{e}_{0} (\cdot)$ convergences to ${e}_{0} (\cdot)$ in $\ell^2$ norm and $\E \big\{ \mathds{1} (A = 0) - {e}_{0}(C)\big\} = 0$. On the other hand, we have 
    \[
        \begin{aligned}
            & \pn \bigg[ \frac{\mathds{1}(A = 0)}{\widehat{e}_0(C)} \bigg\{ \int_{\mathcal{M}_j} \widehat\mu (C, 1, m_j, M_{-j}) \widehat\pi_{C, a'}(m_j) \, \mathrm{d} m_j - \E  \zeta_j (a', 0, C) \bigg] \\
            \alignedoverset{\text{by consistency of $\widehat{e}_{0}$, $\widehat\mu$, and $\widehat\pi_{C, a'}^{(j)}$}}{=} \pn \bigg[ \frac{\mathds{1}(A = 0)}{{e}_0(C)} \int_{\mathcal{M}_j} \mu (C, 1, m_j, M_{-j}) \pi_{C, a'}(m_j) \, \mathrm{d} m_j - \E  \zeta_j (a', 0, C) \bigg] + o_p(1)\\
            & = \E \bigg[ \frac{\mathds{1}(A = 0)}{{e}_0(C)} \int_{\mathcal{M}_j} \mu (C, 1, m_j, M_{-j}) \pi_{C, a'}(m_j) \, \mathrm{d} m_j - \E  \zeta_j (a', 0, C) \bigg] + o_p(1) \\
            \alignedoverset{\text{by  \eqref{strategy_1_DM}}}{=} o_p(1),
        \end{aligned}
    \]
    which gives $\pn \big[ \psi_j^{(0)} \big( \widehat\mu, \widehat\pi_{C, a'}^{(j)}, \widehat\pi_{C, 0}^{(-j)}\big) + \psi_j^{(1)} \big( \widehat{e}_{0}, \widehat\mu, \widehat\pi_{C, a'}^{(j)} \big) \big] = o_p(1)$. It remains to show $\pn \big[ \psi_j^{(2)} \big( \widehat{e}_{a'}, \widehat\mu, \widehat\pi_{C, 0}^{(-j)} \big) + \psi_j^{(3)} \big( \widehat{e}_1, \widehat\pi_{C, a'}^{(-j)}, \widehat\pi_{C, 1, M_{j}}^{(-j)}, \widehat\mu \big) \big] = o_p(1)$. Indeed, the consistency of $\widehat\mu$ ensures $\pn \big[ \psi_j^{(3)} \big( \widehat{e}_1, \widehat\pi_{C, a'}^{(-j)}, \widehat\pi_{C, 1, M_{j}}^{(-j)}, \widehat\mu \big) \big] = o_p(1)$ as stated in \textbf{Part 0}, and 
    \[
        \begin{aligned}
            & \Big| \pn \big[ \psi_j^{(2)} \big( \widehat{e}_{a'}, \widehat\mu, \widehat\pi_{C, 0}^{(-j)} \big) \big] \Big| \\
            & = \bigg| \pn \bigg[ \frac{\mathds{1}(A = a')}{\widehat{e}_{a'} (C)} \bigg\{ \int_{\mathcal{M}_{-j}} \widehat\mu(C, 1, M_j, m_{-j}) \widehat\pi_{C, 0} (m_{-j}) \, \mathrm{d} m_{-j}  \\
            &~~~~~~~~~~~~~~~~~~~~~~~~~ - \int_{\mathcal{M}} \widehat\mu (C, 1, m) \widehat\pi_{C, 1} (m_j) \widehat\pi_{C, 0}(m_{-j}) \, \mathrm{d} m \bigg\} \bigg] \bigg| \\
            & \leq \pn \bigg[ \frac{\mathds{1}(A = a')}{\widehat{e}_{a'} (C)} \int_{\mathcal{M}_{-j}} \widehat\pi_{C, 0} (m_{-j}) \, \mathrm{d} m_{-j} \bigg| \int_{\mathcal{M}_j} \widehat\mu(C, 1, m_j, m_{-j}) \big[ \pi_{C, 1} (m_j) - \widehat\pi_{C, 1} (m_j) \big] \, \mathrm{d} m_j  \bigg| \bigg] \\
            \alignedoverset{\text{by $\ell^2$ assumption and Cauchy}}{\lesssim} \pn \bigg| \int_{\mathcal{M}_j} \big[ \pi_{C, 1} (m_j) - \widehat\pi_{C, 1} (m_j) \big]^2 \, \mathrm{d} m_j  \bigg| \\
            & = \E \bigg| \int_{\mathcal{M}_j} \big[ \pi_{C, 1} (m_j) - \widehat\pi_{C, 1} (m_j) \big]^2 \, \mathrm{d} m_j  \bigg| + o_p(1) = o_p(1),
        \end{aligned}
    \]
    which completes the proof required for \textbf{Part 1}.
    \vspace{1ex}

    \textbf{Part 2.} Exactly the same as \textbf{Part 1}, one can prove when model $\mathscr{M}_{j, \, 2}$ is correctly specified, we have 
    \[
        \begin{aligned}
            & \bigg| \pn \Big[ \psi_j^{(0)} \big( \widehat\mu, \widehat\pi_{C, a'}^{(j)}, \widehat\pi_{C, 0}^{(-j)}\big) + \psi_j^{(2)} \big( \widehat{e}_{a'}, \widehat\mu, \widehat\pi_{C, 0}^{(-j)} \big) \Big] \bigg| \\
            = & \bigg| \pn \bigg[ \int_{\mathcal{M}} \widehat\mu (C, 1, m) \widehat\pi_{C, 1} (m_j) \widehat\pi_{C, 0}(m_{-j}) \, \mathrm{d} m - \E  \zeta_j (a', 0, C) \\
            & ~~~~~~~~~~~~~~~~~~~~~~~~ +  \frac{\mathds{1}(A = a')}{\widehat{e}_{a'} (C)} \bigg\{ \int_{\mathcal{M}_{-j}} \widehat\mu(C, 1, M_j, m_{-j}) \widehat\pi_{C, 0} (m_{-j}) \, \mathrm{d} m_{-j} - \zeta_j (a', 0, C) \bigg\} \bigg] \bigg| \\
            \leq & \bigg| \pn \frac{\int_{\mathcal{M}} \widehat\mu (C, 1, m) \widehat\pi_{C, a} (m_j) \widehat\pi_{C, 0}(m_{-j}) \, \mathrm{d} m}{e_{a'} (C)} \big[ \mathds{1} (A = a') - \widehat{e}_{a'}(C)\big] \bigg| \\
            & ~~~~~~~~~~~~~~~~~~~~~~~~~ + \bigg| \pn \bigg[ \frac{\mathds{1}(A = a')}{\widehat{e}_{a'} (C)} \int_{\mathcal{M}_{-j}} \widehat\mu(C, 1, M_j, m_{-j}) \widehat\pi_{C, 0} (m_{-j}) \, \mathrm{d} m_{-j} - \E  \zeta_j (a', 0, C) \bigg] \bigg| \\
            \lesssim & \E \big[ \mathds{1} (A = a') - \widehat{e}_{a'}(C) \big]^2 \\
            & ~~~~~~ + \E \bigg[ \frac{\mathds{1}(A = a')}{{e}_{a'} (C)} \int_{\mathcal{M}_{-j}} \mu(C, 1, M_j, m_{-j}) \pi_{C, 0} (m_{-j}) \, \mathrm{d} m_{-j} - \E  \zeta_j (a', 0, C) \bigg] + o_p(1) \\
            \alignedoverset{\text{by \eqref{strategy_3_DM}}}{=} o_p(1) + 0 + o_p(1) = o_p(1).
        \end{aligned}
    \]
    Similarly, we can prove 
    \[
        \begin{aligned}
            & \Big| \pn \psi_j^{(1)} (\widehat{e}_0, \widehat{\mu}, \widehat\pi_{C, a'}^{(j)})\Big| \\
            \leq & \bigg|  \pn \bigg[ \frac{\mathds{1}(A = 0)}{{e}_0(C)} \bigg\{ \int_{\mathcal{M}_j} \widehat\pi_{C, a'}(m_j) \, \mathrm{d} m_j \int_{\mathcal{M}_{-j}} \widehat\mu (C, 1, m_j, m_{-j}) \big[ \pi_{C, 0} (m_{-j}) - \widehat\pi_{C, 0} (m_{-j})\big] \, \mathrm{d} m_{-j} \bigg] \bigg| \\
            \lesssim & \pn \bigg| \int_{\mathcal{M}_{-j}} \big[ \pi_{C, 0} (m_{-j}) - \widehat\pi_{C, 0} (m_{-j})\big]^2 \, \mathrm{d} m_{-j} \bigg| = o_p(1)
        \end{aligned}
    \]
    and $\pn \big[ \psi_j^{(3)} \big( \widehat{e}_1, \widehat\pi_{C, a'}^{(-j)}, \widehat\pi_{C, 1, M_{j}}^{(-j)}, \widehat\mu \big) \big] = o_p(1)$,
    which completes the proof of this part.
    \vspace{1ex}

    \textbf{Part 3.} When the model $\mathscr{M}_{j, \, 3}$ is correctly specified, we have $\widehat{e}_1$, $\widehat\pi_{C, a'}^{(-j)}$, and $\widehat\pi_{C, 1, M_{j}}^{(-j)}$ are consistent, as well as $\widehat\pi_{C, a', \Pa_j (\mathcal{G}_M)}^{(j)}$ thus $\widehat\pi_{C, a'}^{(j)}$, but not $\widehat{\mu}$. In this case, we first rewrite
    \[
         \begin{aligned}
             & \psi_j^{(0)} \big( \mu, \pi_{C, a'}^{(j)}, \pi_{C, 0}^{(-j)}\big) + \psi_j^{(3)} \big( e_1, \pi_{C, a'}^{(-j)}, \pi_{C, 1, M_{j}}^{(-j)}, \mu \big) \\
             = & \bigg[ \frac{\mathds{1}(A = 1)}{e_1(C)} \frac{ \pi_{C, a'}(M_{-j}) }{\pi_{C, 1, M_j}(M_{-j})} Y - \E \zeta_j(a', 0, C) \bigg] \\
             & ~~~~~~~~ +  \int_{\mathcal{M}} \mu (C, 1, m) \pi_{C, 1} (m_j) \pi_{C, a'}(m_{-j}) \, \mathrm{d} m - \frac{\mathds{1}(A = 1)}{e_1(C)} \frac{\pi_{C, a'}(M_{-j}) }{\pi_{C, 1, M_j}(M_{-j})} \mu(C, 1, M).
         \end{aligned}
    \]
    By using the same empirical processes technique in \textbf{Part 0}, we can prove
    \begin{small}
    \[
        \begin{aligned}
            & \bigg| \pn \Big[ \psi_j^{(0)} \big( \widehat\mu, \widehat\pi_{C, a'}^{(j)}, \widehat\pi_{C, 0}^{(-j)}\big) + \psi_j^{(3)} \big( \widehat{e}_1, \widehat\pi_{C, a'}^{(-j)}, \widehat\pi_{C, 1, M_{j}}^{(-j)}, \widehat\mu \big) \Big] \bigg| \\
            \leq &  \bigg|  \pn \bigg[ \frac{\mathds{1}(A = 1)}{\widehat{e}_1(C)} \frac{ \widehat\pi_{C, a'}(M_{-j}) }{\widehat\pi_{C, 1, M_j}(M_{-j})} Y - \E \zeta_j(a', 0, C) \bigg] \bigg| \\
            & ~~~ + \bigg| \pn \bigg[ \int_{\mathcal{M}} \widehat\mu (C, 1, m) \widehat\pi_{C, 1} (m_j) \widehat\pi_{C, a'}(m_{-j}) \, \mathrm{d} m - \frac{\mathds{1}(A = 1)}{\widehat{e}_1(C)} \frac{\widehat\pi_{C, a'}(M_{-j}) }{\widehat\pi_{C, 1, M_j}(M_{-j})} \widehat\mu(C, 1, M) \bigg] \bigg| \\
            = & \Bigg| \E \bigg[ \frac{\mathds{1}(A = 1)}{e_1(C)} \frac{ \pi_{C, a'}(M_{-j}) }{\pi_{C, 1, M_j}(M_{-j})} Y - \E \zeta_j(a', 0, C) \bigg] + o_p(1) \Bigg| \\
            & + \Bigg| \E \bigg[ \int_{\mathcal{M}} \widehat\mu (C, 1, m) \pi_{C, 1} (m_j) \pi_{C, a'}(m_{-j}) \, \mathrm{d} m - \frac{\mathds{1}(A = 1)}{e_1(C)} \frac{\pi_{C, a'}(M_{-j}) }{\pi_{C, 1, M_j}(M_{-j})} \widehat\mu(C, 1, M) \bigg] + o_p(1) \Bigg| \\
            \alignedoverset{\text{by \eqref{strategy_1_DM}}}{=}  \Bigg| \E \bigg[ \int_{\mathcal{M}} \widehat\mu (C, 1, m) \pi_{C, 1} (m_j) \pi_{C, a'}(m_{-j}) \, \mathrm{d} m - \frac{\mathds{1}(A = 1)}{e_1(C)} \frac{\pi_{C, a'}(M_{-j}) }{\pi_{C, 1, M_j}(M_{-j})} \widehat\mu(C, 1, M) \bigg]  \Bigg| + o_p(1) \\
            = & \Bigg| \E \bigg[ \int_{\mathcal{M}} \widehat\mu (C, 1, m) \pi_{C, 1} (m_j) \pi_{C, a'}(m_{-j}) \, \mathrm{d} m -\E \bigg[ \frac{\mathds{1}(A = 1)}{e_1(C)} \frac{\pi_{C, a'}(M_{-j}) }{\pi_{C, 1, M_j}(M_{-j})} \widehat\mu(C, 1, M) \mid C \bigg] \bigg] \Bigg| + o_p(1) \\
            = & 0 + o_p(1) = o_p(1),
        \end{aligned}
    \]
    \end{small}
    where we use the fact that 
    \[
        \begin{aligned}
            & \E \bigg[ \frac{\mathds{1}(A = 1)}{e_1(C)} \frac{\pi_{C, a'}(M_{-j}) }{\pi_{C, 1, M_j}(M_{-j})} \widehat\mu(C, 1, M) \mid C \bigg] \\
            = & \int_{\mathcal{M}} \frac{1}{f(a = 1 \mid C)} \frac{f(m_{-j} \mid C, a = a') }{f(m_{-j} \mid C, a = 1, m_j)} \widehat\mu(C, 1, m) \mathds{1}(a = 1) f(a, m \mid C) \, \mathrm{d} m \\
            = & \int_{\mathcal{M}} \frac{1}{f(a = 1 \mid C)} \frac{f(m_{-j} \mid C, a = a') }{f(m_{-j} \mid C, a = 1, m_j)} \widehat\mu(C, 1, m) \\
            & ~~~~~~~~~~~~~ \times f(a = 1\mid C) f(m_j \mid C, a = 1) f(m_{-j} \mid C, a = 1, m_j)  \, \mathrm{d} m \\
            = & \int_{\mathcal{M}} \widehat\mu (C, 1, m) f(m_j \mid C, a = 1) f(m_{-j} \mid C, a = a')  \, \mathrm{d} m = \int_{\mathcal{M}} \widehat\mu (C, 1, m) \pi_{C, 1} (m_j) \pi_{C, a'}(m_{-j}) \, \mathrm{d} m.
        \end{aligned}
    \]
    We can also prove that 
    \[
        \Big| \pn \psi_j^{(1)} (\widehat{e}_0, \widehat{\mu}, \widehat\pi_{C, a'}^{(j)})\Big| \lesssim \pn \bigg| \int_{\mathcal{M}_{-j}} \big[ \pi_{C, 0} (m_{-j}) - \widehat\pi_{C, 0} (m_{-j})\big]^2 \, \mathrm{d} m_{-j} \bigg| = o_p(1)
    \]
    and 
    \[
        \begin{aligned}
            \Big| \pn \big[ \psi_j^{(2)} \big( \widehat{e}_{a'}, \widehat\mu, \widehat\pi_{C, 0}^{(-j)} \big) \big] \Big| & \lesssim \pn \bigg| \int_{\mathcal{M}_j} \big[ \pi_{C, 1} (m_j) - \widehat\pi_{C, 1} (m_j) \big]^2 \, \mathrm{d} m_j  \bigg| \\
            & = \pn \bigg| \int_{\mathcal{M}_j} \int_{\mathcal{M}_{\pa_j} (\mathcal{G}_M)}\big[ \pi_{C, 1, \pa_j(\mathcal{G}_M)} (m_j) - \widehat\pi_{C, 1, \pa_j(\mathcal{G}_M)} (m_j) \big]^2 \, \mathrm{d} m_j  \bigg| \\
            & = o_p(1).
        \end{aligned}
    \]
    Thus, we finish proof of \textbf{Part 3}. The proof of consistency of $\widehat{DM}_j^{\text{QR}}$ is hence completed. \\

    \noindent \textbf{Semiparametric Efficiency of} $\widehat{TM}_j^{\text{avg}, \, \text{QR}}$:\\

    Next, we want to prove the semiparametric efficiency for $\widehat{TM}_j^{\text{avg}, \, \text{QR}}$. Here we have all estimators in model $\mathscr{M}_{j, \, \ell}, \ell = 0, 1, 2, 3$ are correct. We use $\widehat{P}$ to denote these correct estimated functions or estimated distribution, which is a consistent estimator of the true law of $X$, $P$. First note that, for any fixed $\mathcal{G}_M \in \operatorname{MEC}(\widehat{\mathcal{C}}_M)$ and $j \in [p]$, $\widehat{TM}_j^{\text{QR}}(\mathcal{G}_M)$ can be written as the following one-step estimators
    \[
        \begin{aligned}
            & \widehat{TM}_j^{\text{QR}}(\mathcal{G}_M) \\
            = & \mathbb{P}_n \Bigg[ \frac{\mathds{1}(A = 1)}{\widehat{e}_1(C)} \Big\{ Y - \widehat{\mu} (C, 1) \Big\} - \frac{\mathds{1}(A = 0)}{\widehat{e}_0(C)} \Big\{ Y - \widehat{\mu} (C, 0) \big] \Big\} \Bigg] \\
            & - \mathbb{P}_n \Bigg[ \frac{\mathds{1} (A = 1)}{\widehat{e}_1(C)} \widehat{\pi}_{C, 1} \big( \Pa_j(\mathcal{G}_M) \big) \Big\{ Y - \widehat{\mu} (C, 1, \Pa_j(\mathcal{G}_M), M_j)\Big\} \\
        & ~~~~~~~~~~~~~~~~~  - \frac{\mathds{1} (A = 0)}{\widehat{e}_0(C)} \widehat{\pi}_{C, 0} \big( \Pa_j(\mathcal{G}_M) \big)  \Big\{ Y - \widehat{\mu} (C, 0, \Pa_j(\mathcal{G}_M), M_j) \Big\} \\
        & +\int_{\mathcal{M}_j}  \bigg( \frac{\mathds{1} (A = 1)}{\widehat{e}_1(C)} \widehat{\mu} ( C, 1, \Pa_j(\mathcal{G}_M), m_j) - \frac{\mathds{1} (A = 0)}{\widehat{e}_0(C)} \widehat{\mu} ( C, 0, \Pa_j(\mathcal{G}_M), m_j)\bigg) \widehat{\pi}_C(m_j)  \, \mathrm{d} m_j \\
        & - \bigg( \frac{\mathds{1} (A = 1)}{\widehat{e}_1 (C)} \widehat{\E} \big[ \varrho_j (1, M_j, C\, ; \, \mathcal{G}_M) \mid C \big] - \frac{\mathds{1} (A = 0)}{\widehat{e}_0 (C)} \widehat{\E} \big[ \varrho_j (0, M_j, C \, ; \, \mathcal{G}_M) \mid C \big] \bigg) \Bigg] + \widehat{TM}_j^{\mathscr{M}_0} \\
        =: & \Big[ \pn \upvarphi_{j, 1}^{\text{QR}} (X, \widehat{P} \, ; \, \mathcal{G}_M ) + \uppsi_{j, 1} (\widehat{P} \, ; \, \mathcal{G}_M) \Big] - \Big[ \pn \upvarphi_{j, 0}^{\text{QR}} (X, \widehat{P} \, ; \, \mathcal{G}_M) + \uppsi_{j, 0} (\widehat{P} \, ; \, \mathcal{G}_M) \Big]
        \end{aligned}
    \]
    where $\uppsi_{j, a'}$ with $a' = 0, 1$ is defined as 
    \[
        \begin{aligned}
            \uppsi_{j, a'}: F_X \times \mathcal{G}_M \, \mapsto \, \E \Bigg[ & \bigg\{ \mu(a', C) - \int_{\mathcal{M}_{\pa_j} (\mathcal{G}_M)} \mu(C, a', \pa_j, M_j) \, \pi_{C, a'}(\pa_j)  \, \mathrm{d} \pa_j \bigg\} \Bigg].
        \end{aligned}
    \]
    and $\upvarphi_{j, a'}$ is the mapping from the observation and the underlying distribution of $X$ to the efficient score for $\uppsi_{j, a'}$, from Theorem \ref{thm_EIF}, we know that 
    \[
        \begin{aligned}
            & \upvarphi_{j, a'}^{\text{QR}} (X, Q \, ; \, \mathcal{G}_M) \\
            := & \frac{\mathds{1}(A = a')}{e_{a'}^Q (C)} \Big\{ Y - \kappa^Q (a', C) \Big\} + \kappa^Q (a', C) - \Bigg[ \frac{\mathds{1}(A = a')}{e_{a'}^Q (C)} \pi_{C, a'}^Q \big( \Pa_j (\mathcal{G}_M) \big) \Big\{  Y - \mu^Q (C, a', M) \Big\} \Bigg] \\
            & - \Bigg[ \frac{\mathds{1}(A = a')}{e_{a'}^Q (C)} \left\{ \int_{\mathcal{M}_j} \mu^Q (C, a', \Pa_j(\mathcal{G}_M), m_j) \pi_{C}^Q (m_j) \, \mathrm{d} m_j - \E_P \big[ \varrho_j^Q (a', M_j, C \, ; \, \mathcal{G}_M) \mid C \big]\right\} \\
            & \qquad \quad + \varrho_j^Q (a', M_j, C \, ; \, \mathcal{G}_M) \Bigg] - \uppsi_{j, a'} (Q \, ; \, \mathcal{G}_M) \\
            = & \upvarphi_{j, a'}^{(1)} (X, Q \, ; \, \mathcal{G}_M) - \upvarphi_{j, a'}^{(2)} (X, Q \, ; \, \mathcal{G}_M) - \uppsi_{j, a'} (Q \, ; \, \mathcal{G}_M) 
        \end{aligned}
    \]
    for arbitrary distribution $Q$ for $X$, where $e^Q_{a'}(c) := Q(A = a' \mid c)$ and similar definition for $k^Q(a', c)$, $\pi_{S}^Q (m_T)$, $\mu^{Q} (\cdot)$, and $\rho_j^Q(a', m_j, c \, ; \, \mathcal{G}_M)$, where 
   \[
       	\begin{aligned}
       		& \upvarphi_{j, a'}^{(1)} (X, Q \, ; \, \mathcal{G}_M) \\
       		:= & \frac{\mathds{1}(A = a')}{e_{a'}^Q (C)} \Big\{ Y - \kappa^Q (a', C) \Big\} + \kappa^Q (a', C) - \Bigg[ \frac{\mathds{1}(A = a')}{e_{a'}^Q (C)} \pi_{C, a'}^Q \big( \Pa_j (\mathcal{G}_M) \big) \Big\{  Y - \mu^Q (C, a', M) \Big\} \Bigg] \\
                &  + \varrho_j^Q (a', M_j, C \, ; \, \mathcal{G}_M),
       	\end{aligned}
   \]
   and
   \[
       \begin{aligned}
       	    & \upvarphi_{j, a'}^{(2)} (X, Q \, ; \, \mathcal{G}_M) \\
            := & \frac{\mathds{1}(A = a')}{e_{a'}^Q (C)} \left\{ \int_{\mathcal{M}_j} \mu^Q (C, a', \Pa_j(\mathcal{G}_M), m_j) \pi_{C}^Q (m_j) \, \mathrm{d} m_j - \E_P \big[ \varrho_j^Q (a', M_j, C \, ; \, \mathcal{G}_M) \mid C \big]\right\}.
       \end{aligned}
   \]
   We can see that $\uppsi_{j, 1} - \uppsi_{j, 0}$ the functional mapping of the underlying distribution of $X$ to $TM_j(\mathcal{G}_M)$ for a fixed DAG $\mathcal{G}_M$. Therefore, we can decompose 
    \[
        \begin{aligned}
            & \widehat{TM}_j^{\text{QR}}(\mathcal{G}_M) - TM_j(\mathcal{G}_M) \\
            = & \Big\langle \left(\mathbb{P}_n - P\right)\upvarphi_{j, \bcdot}^{\text{QR}}(X, P \, ; \, \mathcal{G}_M ) + \left(\mathbb{P}_n - P\right) \big[ \upvarphi_{j, \bcdot}^{\text{QR}}(X, \widehat{P} \, ; \, \mathcal{G}_M) -\upvarphi_{j, \bcdot}^{\text{QR}}(X, P \, ; \, \mathcal{G}_M ) \big] + R_{j, \bcdot}^{\text{QR}}(\widehat{P}, {P} \, ; \, \mathcal{G}_M) \Big\rangle \\
        \end{aligned}
    \]
    where
    \[
        R_{j, a'}^{\text{QR}} (\widehat{P}, {P} \, ; \, \mathcal{G}_M) = \uppsi_{j, a'}(\widehat{P} \, ; \, \mathcal{G}_M) - \uppsi_{j, a'}({P} \, ; \, \mathcal{G}_M) + \int \upvarphi_{j, a'}^{\text{QR}}(x, \widehat{P} \, ; \, \mathcal{G}_M) \, \mathrm{d} P(x).
    \]
    Given Assumption \ref{ass_nor_cpdag}, by using the argument in Proof of Theorem \ref{thm_CI_IM}, we can prove
    \begin{equation}\label{TM_QR_avg_decompose}
        \begin{aligned}
            & \widehat{TM}_j^{\text{avg}, \, \text{QR}} - \overline{TM}_j \\
            = & \frac{1}{\# \operatorname{MEC}(\mathcal{C}_M)} \sum_{\mathcal{G}_M \in \operatorname{MEC}(\mathcal{C}_M)} (\mathbb{P}_n - P) \big\{ \upvarphi_{j, 1}^{\text{QR}} (X, P \, ; \, \mathcal{G}_M ) - \upvarphi_{j, 0}^{\text{QR}} (X, P \, ; \, \mathcal{G}_M ) \big\} \\
            & + \frac{1}{\# \operatorname{MEC}(\mathcal{C}_M)} \sum_{\mathcal{G}_M \in \operatorname{MEC}(\mathcal{C}_M)} \left(\mathbb{P}_n - P\right) \Big[ \{\upvarphi_{j, 1}^{\text{QR}} (X, \widehat{P} \, ; \, \mathcal{G}_M) -\upvarphi_{j, 1}^{\text{QR}} (X, P \, ; \, \mathcal{G}_M ) \} \\
            & ~~~~~~~~~~~~~~~~~~~~~~~~~~~~~~~~~~~~~~~~~~~~~~~~~~~~~ - \{\upvarphi_{j, 0}^{\text{QR}} (X, \widehat{P} \, ; \, \mathcal{G}_M) - \upvarphi_{j, 0}^{\text{QR}} (X, P \, ; \, \mathcal{G}_M ) \} \Big] \\
            & + \frac{1}{\# \operatorname{MEC}(\mathcal{C}_M)} \sum_{\mathcal{G}_M \in \operatorname{MEC}(\mathcal{C}_M)} \big\{ R_{j, 1}^{\text{QR}} (\widehat{P}, {P} \, ; \, \mathcal{G}_M) - R_{j, 0}^{\text{QR}} (\widehat{P}, {P} \, ; \, \mathcal{G}_M) \big\} + o_p (\ell_n^{-1})
        \end{aligned}
    \end{equation}
    with any $\ell_n$ such that $\lim_{n \rightarrow \infty} \ell_n = \infty$.
    The (vector-)function classes for $\upvarphi_{j, a'}^{\text{QR}} (X, P \, ; \, \mathcal{G}_M)$ with $a' \in \{ 0, 1\}$ is $q = (q_{k})_{k \in \mathcal{K}} \in \mathcal{Q} := \bigotimes_{\{ T, S\} \text{ used in \eqref{DM_MR} and \eqref{IM_MR} for a fixed $j \in [p]$}} \penalty 0 \mathcal{E} \otimes \mathcal{F}_{T \, \mid \, S} \otimes \mathcal{U}_{S}$, which is also VC-class with index the same index $\vartheta_j \in [0, 1 / 2)$ by Lemma 2.6.18 in \cite{van1996weak}. 
    On the other hand, note that the consistency in Assumption \ref{ass_conv_rate_QR} ensures there exists sufficient large $n$ such that all denominators in $\upvarphi_{j, a'}^{\text{QR}} (X, \widehat{P} \, ; \, \mathcal{G}_M)$ will be larger than $\varepsilon / 2$, then by using $|a_1 a_2 - b_1 b_2| \leq |b_1| |a_2 - b_2| + |b_2||a_1 - b_1| + |a_1 - a_2| |b_1 - b_2|$ for any real number $a_1, a_2, b_1, b_2$, we can show that
    \begin{equation}\label{T1_rem_QR}
    	\begin{aligned}
    		& \Big| (\pn - P) \big\{ \upvarphi_{j, a'}^{\text{QR}} (X, \widehat{P} \, ; \, \mathcal{G}_M) - \upvarphi_{j, a'}^{\text{QR}} (X, P \, ; \, \mathcal{G}_M) \big\} \Big| \\
    		\alignedoverset{\text{by $\ell^2$ assumption}}{\lesssim} \underbrace{ \sum_{k \in \mathcal{K}} \Big| (\pn - P) \big\{ \widehat{q}_k (X; \mathcal{G}_M) - {q}_k (X; \mathcal{G}_M)\big\} \Big|}_{\text{comes from } \upvarphi_{j, a'}^{(1)} (X, \widehat{P} \, ; \, \mathcal{G}_M) - \upvarphi_{j, a'}^{(1)} (X, {P} \, ; \, \mathcal{G}_M)} \\
    		&~~~~~~~ + \underbrace{\Bigg| (\pn - P) \int_{\mathcal{M}_{\pa_j}(\mathcal{G}_M)}  \Big[  \widehat{\pi}_{C, a'} \big(\pa_j( \mathcal{G}_M) \big) - {\pi}_{C, a'} \big(\pa_j( \mathcal{G}_M) \big) \Big] \, \mathrm{d} \pa_j (\mathcal{G}_M) \Bigg|}_{\text{comes from } \upvarphi_{j, a'}^{(2)} (X, \widehat{P} \, ; \, \mathcal{G}_M) - \upvarphi_{j, a'}^{(2)} (X, {P} \, ; \, \mathcal{G}_M)} \\
    		&~~~~~~~ + \underbrace{\Bigg| (\pn - P) \int_{\mathcal{M}_{\pa_j} (\mathcal{G}_M)} \Big[ \widehat\mu(C, a, \pa_j, M_j) - \mu(C, a, \pa_j, M_j)  \Big] \, \mathrm{d} \pa_j \Bigg|}_{\text{comes from } \upvarphi_{j, a'}^{(2)} (X, \widehat{P} \, ; \, \mathcal{G}_M) - \upvarphi_{j, a'}^{(2)} (X, {P} \, ; \, \mathcal{G}_M)} \\
    		&~~~~~~~ + \underbrace{\Bigg| (\pn - P) \int_{\mathcal{M}_{\pa_j} (\mathcal{G}_M)} \Big[ \widehat\pi_{C, a}(\pa_j) - \pi_{C, a}(\pa_j) \Big] \, \mathrm{d} \pa_j \Bigg|}_{\text{comes from } \upvarphi_{j, a'}^{(2)} (X, \widehat{P} \, ; \, \mathcal{G}_M) - \upvarphi_{j, a'}^{(2)} (X, {P} \, ; \, \mathcal{G}_M)} \\
    		&~~~~~~~ + \underbrace{\Big| (\pn - P) \big[ \widehat\mu(a', C) - \mu (a', C) \big] \Big|}_{\text{comes from } \uppsi_{j, a'} (\widehat{P} ; \mathcal{G}_M) - \uppsi_{j, a'} ({P} ; \mathcal{G}_M)}\\
            &~~~~~~~ + \underbrace{\Bigg| (\pn - P) \bigg[ \int_{\mathcal{M}_{\pa_j} (\mathcal{G}_M)} \Big[ \widehat\mu(C, a', \pa_j, M_j)  - \mu(C, a', \pa_j, M_j) \Big] \, \mathrm{d} \pa_j \bigg] \Bigg|}_{\text{comes from } \uppsi_{j, a'} (\widehat{P} ; \mathcal{G}_M) - \uppsi_{j, a'} ({P} ; \mathcal{G}_M)} \\
    	&~~~~~~~ + \underbrace{\Bigg| (\pn - P) \bigg[ \int_{\mathcal{M}_{\pa_j} (\mathcal{G}_M)} \Big[ \widehat\pi_{C, a'}(\pa_j) - \pi_{C, a'}(\pa_j) \Big] \, \mathrm{d} \pa_j \bigg] \Bigg|}_{\text{comes from } \uppsi_{j, a'} (\widehat{P} ; \mathcal{G}_M) - \uppsi_{j, a'} ({P} ; \mathcal{G}_M)}.
    	\end{aligned}
    \end{equation}
    For any $k \in \mathcal{K}$, $\widehat{q}_k$ converge with $\ell^2$-norm to $q_k$ at a rate of $n^{- \vartheta_{q_k}^*}$ in class $\mathcal{Q}$ with $\vartheta_{q_k}^* \in \{ \vartheta_{j, e}^*, \vartheta_{j, \pi}^*, \vartheta_{j, \mu}^* \}$ . Then by Corollary 5.1 in \cite{chernozhukov2014gaussian}, we have 
    \[
        \begin{aligned}
            & \E \Big| (\pn - P) \big\{ \widehat{q}_k (X; \mathcal{G}_M) - {q}_k (X; \mathcal{G}_M)\big\} \Big|\\
            \asymp & \, n^{-1/2} \sqrt{n^{\vartheta_j}} n^{- \vartheta_{q_k}^*} \big( \log  n^{- \vartheta_{q_k}^*} + \log n \big) \\
            \asymp \, & O (n^{- 1 / 2 + \vartheta_j / 2 - \vartheta_{q_k}^*} \log n) = o\big( n^{-1/2} \big)
        \end{aligned}
    \]
    by $\vartheta_j / 2 - \min\{ \vartheta_{j, e}^*, \vartheta_{j, \pi}^*, \vartheta_{j, \mu}^* \} < 0$ in Assumption \ref{ass_conv_rate_QR},
    which implies $(\pn - P) \big\{ \widehat{q}_k (X; \mathcal{G}_M) - {q}_k (X; \mathcal{G}_M)\big\} = o_p(n^{- 1 / 2})$ for any $\mathcal{G}_M \in \operatorname{MEC}(\mathcal{C}_M)$. Thus, 
    \[
        \begin{aligned}
            & (\pn - P) \big[ \widehat\mu(a', C) - \mu (a', C) \big] = o_p(n^{- 1 / 2}), \\
            & (\pn - P) \bigg[ \int_{\mathcal{M}_{\pa_j} (\mathcal{G}_M)} \Big[ \widehat\mu(C, a', \pa_j, M_j)  - \mu(C, a', \pa_j, M_j) \Big] \, \mathrm{d} \pa_j \bigg] = o_p(n^{- 1 / 2}), \\
            & (\pn - P) \bigg[ \int_{\mathcal{M}_{\pa_j} (\mathcal{G}_M)} \Big[ \widehat\pi_{C, a'}(\pa_j) - \pi_{C, a'}(\pa_j) \Big] \, \mathrm{d} \pa_j =  o_p(n^{- 1 / 2}).
        \end{aligned}
    \]
    Similarly, for any $\mathcal{G}_M \in \operatorname{MEC}(\mathcal{C}_M)$ we can prove 
    \[
        \begin{aligned}
            & \E \Bigg| (\pn - P) \int_{\mathcal{M}_{\pa_j}(\mathcal{G}_M)}  \Big[  \widehat{\pi}_{C, a'} \big(\pa_j( \mathcal{G}_M) \big) - {\pi}_{C, a'} \big(\pa_j( \mathcal{G}_M) \big) \Big] \, \mathrm{d} \pa_j (\mathcal{G}_M) \Bigg| = o (n^{-1/2}), \\
            & \E \Bigg| (\pn - P) \int_{\mathcal{M}_{\pa_j} (\mathcal{G}_M)} \Big[ \widehat\mu(C, a, \pa_j, M_j) - \mu(C, a, \pa_j, M_j)  \Big] \, \mathrm{d} \pa_j \Bigg| = o (n^{-1/2}), \\
            & \E \Bigg| (\pn - P) \int_{\mathcal{M}_{\pa_j} (\mathcal{G}_M)} \Big[ \widehat\pi_{C, a}(\pa_j) - \pi_{C, a}(\pa_j) \Big] \, \mathrm{d} \pa_j \Bigg| = o (n^{-1/2}).
        \end{aligned}
    \]
    Therefore, we conclude
    \begin{equation}\label{reminder1_final_QR}
        \begin{aligned}
            & \frac{1}{\# \operatorname{MEC}(\mathcal{C}_M)}  \sum_{\mathcal{G}_M \in \operatorname{MEC}(\mathcal{C}_M)} (\mathbb{P}_n - P) \Big[ \{\upvarphi_{j, 1}^{\text{QR}} (X, \widehat{P} \, ; \, \mathcal{G}_M) -\upvarphi_{j, 1}^{\text{QR}} (X, P \, ; \, \mathcal{G}_M ) \} \Big]  = o_p \big(n^{-1 / 2} \big), \\
            & \frac{1}{\# \operatorname{MEC}(\mathcal{C}_M)} \sum_{\mathcal{G}_M \in \operatorname{MEC}(\mathcal{C}_M)} (\mathbb{P}_n - P) \{\upvarphi_{j, 0}^{\text{QR}} (X, \widehat{P} \, ; \, \mathcal{G}_M) -\upvarphi_{j, 0}^{\text{QR}} (X, P \, ; \, \mathcal{G}_M ) \}  = o_p \big(n^{-1 / 2} \big).
        \end{aligned}
     \end{equation}
    It remains to deal with the reminder $R_{j, a'}^{\text{QR}} (\widehat{P}, P)$ for $a' \in \{ 0, 1\}$. Note that for any distribution $Q$, we rewrite
    \begin{small}
    \[
        \begin{aligned}
            & R_{j, a'}^{\text{QR}} (Q, P \, ; \, \mathcal{G}_M) \\
            = & \uppsi_{j, a'} (Q \, ; \, \mathcal{G}_M) - \uppsi_{j, a'} (P \, ; \, \mathcal{G}_M) + \int \upvarphi_{j, a'}^{\text{QR}} (x, Q \, ; \, \mathcal{G}_M) \, \mathrm{d} P(x) \\
            = & - \uppsi_{j, a'} (P \, ; \, \mathcal{G}_M) + \int \Bigg\{ \frac{\mathds{1}(a = a')}{e_{a'}^Q (c)} \Big\{ y - \kappa^Q (a', c) \Big\} + \kappa^Q (a', c) \\
            & \qquad - \Bigg[ \frac{\mathds{1}(a = a')}{e_{a'}^Q (c)} \pi_{C, a'}^Q \big( \Pa_j (\mathcal{G}_M) \big)  \Big\{  y - \mu^Q (c, a', m) \Big\} \\
            & \qquad \quad + \frac{\mathds{1}(a = a')}{e_{a'}^Q (c)} \left( \int_{\mathcal{M}_j} \mu^Q (c, a', \pa_j(\mathcal{G}_M), m_j) \pi_{c}^Q (m_j) \, \mathrm{d} m_j - \E_P \big[ \varrho_j^Q (a', M_j, c \, ; \, \mathcal{G}_M) \mid c \big]\right) \\
            & \qquad \quad + \varrho_j^Q (a', m_j, c \, ; \, \mathcal{G}_M) \Bigg] \Bigg\} \, \mathrm{d} P(x) \\
            & = - \int \Big[ \kappa (a', c) - \varrho_j(a', m_j, c \, ; \, \mathcal{G}_M) \Big] \, \mathrm{d} P(x) + \int \Bigg\{ \frac{\mathds{1}(a = a')}{e_{a'}^Q (c)} \Big\{ y - \kappa^Q (a', c) \Big\} + \kappa^Q (a', c) \\
            & \qquad - \Bigg[ \frac{\mathds{1}(a = a')}{e_{a'}^Q (c)} \frac{\pi_{c}^Q (m_j)}{\pi_{c, a', \pa_j (\mathcal{G}_M)}^Q (m_j)}\Big\{  y - \mu^Q (c, a', m) \Big\} \\
            & \qquad \quad + \frac{\mathds{1}(a = a')}{e_{a'}^Q (c)} \left( \int_{\mathcal{M}_j} \mu^Q (c, a', \pa_j(\mathcal{G}_M), m_j) \pi_{c}^Q (m_j) \, \mathrm{d} m_j - \E_P \big[ \varrho_j^Q (a', M_j, c \, ; \, \mathcal{G}_M) \mid c \big]\right) \\
            & \qquad \quad + \varrho_j^Q (a', m_j, c \, ; \, \mathcal{G}_M) \Bigg] \Bigg\} \, \mathrm{d} P(x) \\
            \alignedoverset{\text{by rearranging}}{=}\int \Bigg\{ \frac{e_{a'} (c)}{e_{a'}^Q (c)} \Big\{ \kappa(a', c) - \kappa^Q (a', c) \Big\} + \Big\{ \kappa^Q (a', c) - \kappa(a', c) \Big\} \\
            & \qquad - \Bigg[ \frac{e_{a'}(c)}{e_{a'}^Q (c)} \frac{\pi_{c}^Q (m_j)}{\pi_{c, a', \pa_j (\mathcal{G}_M)}^Q (m_j)}\Big\{  \mu (c, a', m) - \mu^Q (c, a', m) \Big\} \\
            & \qquad \quad + \frac{e_{a'}(c)}{e_{a'}^Q (c)} \left( \int_{\mathcal{M}_j} \mu^Q (c, a', \pa_j(\mathcal{G}_M), m_j) \pi_{c}^Q (m_j) \, \mathrm{d} m_j - \E_P \big[ \varrho_j^Q (a', M_j, c \, ; \, \mathcal{G}_M) \mid c \big]\right) \\
            & \qquad \quad + \varrho_j^Q (a', m_j, c \, ; \, \mathcal{G}_M) - \varrho_j (a', m_j, c \, ; \, \mathcal{G}_M) \Bigg] \Bigg\} \, \mathrm{d} P(x) \\
            =: & R_{j, a'}^{(1)} (Q, P) + R_{j, a'}^{(2)} (Q, P \, ; \, \mathcal{G}_M) - R_{j, a'}^{(3)} (Q, P \, ; \, \mathcal{G}_M) + R_{j, a'}^{(4)} (Q, P \, ; \, \mathcal{G}_M).
        \end{aligned}
    \]
    \end{small}
    In the equation, we use the identities  
    \[
        \begin{aligned}
            & \int \bigg[ \frac{e_{a'}(c)}{e_{a'}^Q (c)} \frac{\pi_{c}^Q (m_j)}{\pi_{c, a', \pa_j (\mathcal{G}_M)}^Q (m_j)}\Big\{  \mu (c, a', m) - \mu^Q (c, a', m) \Big\} \\
            & ~~~~~~~~~~ + \varrho_j^Q (a', m_j, c \, ; \, \mathcal{G}_M) - \varrho_j (a', m_j, c \, ; \, \mathcal{G}_M) \bigg] \, \mathrm{d} P(x) \\
            & \overset{\text{expanding the formula of }  \varrho_j^Q}{=} \int \, \mathrm{d} P(x) \Bigg[ \frac{e_{a'}(c)}{e_{a'}^Q (c)} \pi_{c, a'}^Q \big(\pa_j (\mathcal{G}_M) \big) \Big\{  \mu (c, a', m) - \mu^Q (c, a', m) \Big\} \\
            & ~~~~~~~~~~~~~~~~~~~~~~~~~~~~~~~~~~~~ - \int \bigg[  \Big\{  \mu (c, a', m) - \mu^Q (c, a', m) \Big\} \pi_{c, a'}^Q \big(\pa_j (\mathcal{G}_M) \big) \\
            & ~~~~~~~~~~~~~~~~~~~~~~~~~~~~~~~~~ +  \mu (c, a', m) \Big\{  \pi_{c, a'} \big(\pa_j (\mathcal{G}_M) \big) - \pi_{c, a'}^Q \big(\pa_j (\mathcal{G}_M) \big) \Big\} \bigg] \, \mathrm{d} \pa_j (\mathcal{G}_M) \\
            & \overset{\text{rearranging}}{=} \int \, \mathrm{d} P(x) \bigg[ \frac{e_{a'}(c)}{e_{a'}^Q (c)} - 1\bigg]  \Big\{  \mu (c, a', m) - \mu^Q (c, a', m) \Big\} \pi_{c, a'}^Q \big(\pa_j (\mathcal{G}_M) \big)  \, \mathrm{d} \pa_j (\mathcal{G}_M) \\
            & ~~~~~~~~~~~~~~ - \int \mu (c, a', m) \Big\{  \pi_{c, a'} \big(\pa_j (\mathcal{G}_M) \big) - \pi_{c, a'}^Q \big(\pa_j (\mathcal{G}_M) \big) \Big\} \, \mathrm{d} \pa_j (\mathcal{G}_M),
        \end{aligned} 
    \]
    and
    \[
        \begin{aligned}
            & \int \mu (c, a', m) \Big\{  \pi_{c, a'} \big(\pa_j (\mathcal{G}_M) \big) - \pi_{c, a'}^Q \big(\pa_j (\mathcal{G}_M) \big) \Big\} \, \mathrm{d} \pa_j (\mathcal{G}_M) \\
            & + \int \, \frac{e_{a'}(c)}{e_{a'}^Q (c)} \left( \int_{\mathcal{M}_j} \mu^Q (c, a', \pa_j(\mathcal{G}_M), m_j) \pi_{c}^Q (m_j) \, \mathrm{d} m_j - \E_P \big[ \varrho_j^Q (a', M_j, c \, ; \, \mathcal{G}_M) \mid c \big]\right) \mathrm{d} P(x) \\
            & \overset{\text{expanding the expectation}}{=} \int \, \mathrm{d} P(x) \int \mu (c, a', m) \Big\{  \pi_{c, a'} \big(\pa_j (\mathcal{G}_M) \big) - \pi_{c, a'}^Q \big(\pa_j (\mathcal{G}_M) \big) \Big\} \, \mathrm{d} \pa_j (\mathcal{G}_M) \\
            & ~~~~~~~~~~~~~~~~~~~~~~~~~~~ - \int \frac{e_{a'}(c)}{e_{a'}^Q (c)} \, \mathrm{d} P(x) \int_{\mathcal{M}_j} \mu^Q (c, a', m) \pi_{c}^Q (m_j) \, \mathrm{d} m_j \\
            & ~~~~~~~~~~~~~~~~~~~~~~~~~~~ \int_{\mathcal{M}_{\pa_j}(\mathcal{G}_M)}  \Big\{ {\pi}_{C, a'} \big(\pa_j( \mathcal{G}_M) \big) - {\pi}_{C, a'}^Q \big(\pa_j( \mathcal{G}_M) \big) \Big\} \, \mathrm{d} \pa_j (\mathcal{G}_M) \\
            & \overset{\text{rearranging}}{=} \int \, \mathrm{d} P(x) \int \mu (c, a', m) \Big\{  \pi_{c, a'} \big(\pa_j (\mathcal{G}_M) \big) - \pi_{c, a'}^Q \big(\pa_j (\mathcal{G}_M) \big) \Big\} \, \mathrm{d} \pa_j (\mathcal{G}_M) \\
            & ~~~~~~~~~~~~~~~~~~~~~~~~~~~ - \int \frac{e_{a'}(c)}{e_{a'}^Q (c)} \, \mathrm{d} P(x) \int_{\mathcal{M}_j} \mu (c, a', m) \pi_{c}^Q (m_j) \, \mathrm{d} m_j \\
            & ~~~~~~~~~~~~~~~~~~~~~~~~~~~ \int_{\mathcal{M}_{\pa_j}(\mathcal{G}_M)}  \Big\{ {\pi}_{C, a'} \big(\pa_j( \mathcal{G}_M) \big) - {\pi}_{C, a'}^Q \big(\pa_j( \mathcal{G}_M) \big) \Big\} \, \mathrm{d} \pa_j (\mathcal{G}_M) \\
            & ~~~~~~~~~~~~~~~~~~~~~~~~~~~ + \int \frac{e_{a'}(c)}{e_{a'}^Q (c)} \, \mathrm{d} P(x) \int_{\mathcal{M}_j} \Big\{ \mu (c, a', m) - \mu^Q (c, a', m) \Big\} \pi_{c}^Q (m_j) \, \mathrm{d} m_j \\
            & ~~~~~~~~~~~~~~~~~~~~~~ \int_{\mathcal{M}_{\pa_j}(\mathcal{G}_M)}  \Big\{ {\pi}_{C, a'} \big(\pa_j( \mathcal{G}_M) \big) - {\pi}_{C, a'}^Q \big(\pa_j( \mathcal{G}_M) \big) \Big\} \, \mathrm{d} \pa_j (\mathcal{G}_M) \\
            & \overset{\text{rearranging}}{=} - \int \left( \frac{e_{a'}(c)}{e_{a'}^Q (c)} - 1 \right) \, \mathrm{d} P(x) \int_{\mathcal{M}_j} \mu (c, a', m) \pi_{c}^Q (m_j) \, \mathrm{d} m_j \\
            & ~~~~~~~~~~~~~~~~~~~~~~ \int_{\mathcal{M}_{\pa_j}(\mathcal{G}_M)}  \Big\{ {\pi}_{C, a'} \big(\pa_j( \mathcal{G}_M) \big) - {\pi}_{C, a'}^Q \big(\pa_j( \mathcal{G}_M) \big) \Big\} \, \mathrm{d} \pa_j (\mathcal{G}_M) \\
            & ~~~~~~ + \int \frac{e_{a'}(c)}{e_{a'}^Q (c)}\pi_{c}^Q (m_j) \, \mathrm{d} P(x)   \\
            & ~~~~~~ \int_{\mathcal{M}_{\pa_j}(\mathcal{G}_M)}  \Big(  {\pi}_{C, a'} \big(\pa_j( \mathcal{G}_M) \big) - {\pi}_{C, a'}^Q \big(\pa_j( \mathcal{G}_M) \big) \Big) \Big( \mu (c, a', m) - \mu^Q (c, a', m) \Big) \, \mathrm{d} \pa_j (\mathcal{G}_M),
        \end{aligned}
    \]
    and then the components $R_{j, a'}^{(\ell)}(\bcdot)$ with $\ell = 1, 2, 3, 4$ can be rewritten as following
    \begin{small}
    \[
        \begin{aligned}
            R_{j, a'}^{(1)} (Q, P) & = \int \Bigg[ \frac{e_{a'} (c)}{e_{a'}^Q (c)} \Big\{ \kappa(a', c) - \kappa^Q (a', c) \Big\} + \Big\{ \kappa^Q (a', c) - \kappa(a', c) \Big\} \Bigg] \, \mathrm{d} P(x) \\
            & = \int \bigg( \frac{1}{e_{a'}^Q (c)} - \frac{1}{e_{a'} (c)} \bigg) \Big( \kappa(a', c) - \kappa^Q (a', c) \Big) e_{a'}(c) \, \mathrm{d} P(x),
        \end{aligned}
    \]
    \end{small}
    \begin{small}
    \[
        \begin{aligned}
            & R_{j, a'}^{(2)} (Q, P \, ; \, \mathcal{G}_M) \\
            = & \int \, \mathrm{d} P(x) \bigg[ \frac{e_{a'}(c)}{e_{a'}^Q (c)} - 1\bigg]  \Big\{  \mu (c, a', m) - \mu^Q (c, a', m) \Big\} \pi_{c, a'}^Q \big(\pa_j (\mathcal{G}_M) \big)  \, \mathrm{d} \pa_j (\mathcal{G}_M) \\
            = & \int \bigg( \frac{1}{e_{a'}^Q (c)} - \frac{1}{e_{a'} (c)} \bigg) \Big(  \mu (c, a', m) - \mu^Q (c, a', m) \Big) e_{a'}(c)\pi_{c, a'}^Q \big(\pa_j (\mathcal{G}_M) \big) \, \mathrm{d} \pa_j (\mathcal{G}_M)  \, \mathrm{d} P(x),
        \end{aligned}
    \]
    \end{small}
    and 
    \begin{small}
    \[
        \begin{aligned}
            R_{j, a'}^{(3)} (Q, P \, ; \, \mathcal{G}_M) & = \int e_{a'}^Q (c) \, \mathrm{d} P(x) \int_{\mathcal{M}_j} \mu (c, a', m) \pi_{c}^Q (m_j) \, \mathrm{d} m_j \\
            & ~~ ~ \int_{\mathcal{M}_{\pa_j}(\mathcal{G}_M)}  \bigg( \frac{1}{e_{a'}^Q (c)} - \frac{1}{e_{a'} (c)} \bigg) \Big( {\pi}_{C, a'} \big(\pa_j( \mathcal{G}_M) \big) - {\pi}_{C, a'}^Q \big(\pa_j( \mathcal{G}_M) \big) \Big) \, \mathrm{d} \pa_j (\mathcal{G}_M),
        \end{aligned}
    \]
    \end{small}
    \begin{small}
    \[
        \begin{aligned}
            R_{j, a'}^{(4)} (Q, P \, ; \, \mathcal{G}_M) & = \int \frac{e_{a'}(c)}{e_{a'}^Q (c)}\pi_{c}^Q (m_j) \, \mathrm{d} P(x)   \\
            &  \int_{\mathcal{M}_{\pa_j}(\mathcal{G}_M)}  \Big(  {\pi}_{C, a'} \big(\pa_j( \mathcal{G}_M) \big) - {\pi}_{C, a'}^Q \big(\pa_j( \mathcal{G}_M) \big) \Big) \Big( \mu (c, a', m) - \mu^Q (c, a', m) \Big) \, \mathrm{d} \pa_j (\mathcal{G}_M).
        \end{aligned}
    \]
    \end{small}
    Therefore, by Assumption \ref{ass_conv_rate_QR}, we have for sufficient large $n$ such that 
    \[
        \Big| \widehat{f}(x_{S_2} \mid x_{S_1}) - {f}(x_{S_2} \mid x_{S_1}) \Big| \leq {f}(x_{S_2} \mid x_{S_1}) \quad \text{and} \quad \Big| \widehat{\E} \big[ x_{S_2} \mid x_{S_1}\big] - {\E}\big[ x_{S_2} \mid x_{S_1} \big] \Big| \leq {\E}\big[ x_{S_2} \mid x_{S_1} \big]
    \]
    almost surely. Then components $R_{j, a'}^{(\ell)}(\bcdot)$ with $\ell = 1, 2, 3, 4$ can be furthermore bounded by

    \[
        \begin{aligned}
            \big| R_{j, a'}^{(1)} (\widehat{P}, P) \big| & = \Bigg| \int \bigg( \frac{1}{\widehat{e}_{a'}(c)} - \frac{1}{e_{a'} (c)} \bigg) \Big( \kappa(a', c) - \widehat\kappa (a', c) \Big) e_{a'}(c) \, \mathrm{d} P(x) \Bigg| \\
            \alignedoverset{\text{by Assumption \ref{ass_pos}}}{\leq} \frac{1}{\varepsilon} \int \big| e_{a'}(c) - \widehat{e}_{a'}(c) \big| \big|  \kappa(a', c) - \widehat\kappa (a', c) \big| \, \mathrm{d} P(x) \\
            \alignedoverset{\text{by Assumption  \ref{ass_conv_rate_QR}}}{\asymp} n^{- \vartheta_{j, e}^* + \vartheta_{j, \mu}^*} = o_p (n^{-1 / 2}),
        \end{aligned}
    \]
    and similarly
    \[
        \begin{aligned}
            & \big| R_{j, a'}^{(2)} (\widehat{P}, P \, ; \, \mathcal{G}_M) \big| \\
            & = \Bigg| \int \bigg( \frac{1}{\widehat{e}_{a'} (c)} - \frac{1}{e_{a'} (c)} \bigg) \Big(  \mu (c, a', m) - \widehat\mu (c, a', m) \Big) e_{a'}(c)\pi_{c, a'}^Q \big(\pa_j (\mathcal{G}_M) \big) \, \mathrm{d} \pa_j (\mathcal{G}_M)  \, \mathrm{d} P(x) \Bigg| \\
             \alignedoverset{\text{by Assumption \ref{ass_pos}}}{\lesssim} \int \big| e_{a'}(c) - \widehat{e}_{a'}(c) \big| \big|  \mu (c, a', m) - \widehat\mu (c, a', m)  \big| \, \mathrm{d} P(x) \\
             \alignedoverset{\text{by Assumption \ref{ass_conv_rate_QR}}}{\asymp} n^{- \vartheta_{j, e}^* + \vartheta_{j, \mu}^*} = o_p (n^{-1 / 2}),
        \end{aligned}
    \]
    \[
        \begin{aligned}
            & \big| R_{j, a'}^{(3)} (\widehat{P}, P \, ; \, \mathcal{G}_M) \big| \\
            \alignedoverset{\text{by $\ell^2$ assmuption and Assumption \ref{ass_pos}}}{\lesssim} \int \big| e_{a'}(c) - \widehat{e}_{a'}(c) \big|\Big| {\pi}_{C, a'} \big(\pa_j( \mathcal{G}_M) \big) - \widehat{\pi}_{C, a'} \big(\pa_j( \mathcal{G}_M) \big) \Big| \, \mathrm{d} P(x) \\
            \alignedoverset{\text{by Assumption \ref{ass_conv_rate_QR}}}{\asymp} n^{- \vartheta_{j, e}^* + \vartheta_{j, \pi}^*} = o_p (n^{-1 / 2}),
        \end{aligned}
    \]
    and
    \[
        \begin{aligned}
            & \big| R_{a'}^{(4)} (\widehat{P}, P \, ; \, \mathcal{G}_M) \big| \\
            \alignedoverset{\text{by $\ell^2$ assmuption and Assumption \ref{ass_pos}}}{\lesssim} \int \Big| {\pi}_{C, a'} \big(\pa_j( \mathcal{G}_M) \big) - \widehat{\pi}_{C, a'} \big(\pa_j( \mathcal{G}_M) \big) \Big| \Big| \mu (c, a', m) - \widehat\mu (c, a', m) \Big| \, \mathrm{d} P(x) \\
            \alignedoverset{\text{by Assumption \ref{ass_conv_rate_QR}}}{\asymp} n^{- \vartheta_{j, \pi}^* + \vartheta_{j, \mu}^*} = o_p (n^{-1 / 2}).
        \end{aligned}
    \]
    Therefore, for the reminder, we have
    \begin{equation}\label{reminder2_final_QR}
        \begin{aligned}
    	& \frac{1}{\# \operatorname{MEC}(\mathcal{C}_M)} \sum_{\mathcal{G}_M \in \operatorname{MEC}(\mathcal{C}_M)} R_{j, 1}^{\text{QR}}(\widehat{P}, {P} \, ; \, \mathcal{G}_M) \\
            & \asymp \frac{1}{\# \operatorname{MEC}(\mathcal{C}_M)} \sum_{\mathcal{G}_M \in \operatorname{MEC}(\mathcal{C}_M)} R_{j, 0}^{\text{QR}}(\widehat{P}, {P} \, ; \, \mathcal{G}_M)\\\
            & =  o_p \big(n^{- 1 / 2} \big)
        \end{aligned} 
    \end{equation}
    for our quadruply robust estimators. Plug \eqref{reminder1_final_QR} and \eqref{reminder2_final_QR} into \eqref{TM_QR_avg_decompose}, we obtain
    \[
    	\begin{aligned}
    		& \sqrt{n} \big( \widehat{TM}_j^{\text{avg}, \, \text{QR}} - \overline{TM}_j \big) \\
            & = \frac{1}{\# \operatorname{MEC}(\mathcal{C}_M)} \sum_{\mathcal{G}_M \in \operatorname{MEC}(\mathcal{C}_M)} \Big[ \mathbb{G}_n \big\{ \upvarphi_{j, 1}^{\text{QR}}(X, P \, ; \, \mathcal{G}_M ) - \upvarphi_{j, 0}^{\text{QR}} (X, P \, ; \, \mathcal{G}_M ) \big\} + o_p(1) \Big] \\
            & \overset{\text{by $o_p(1)$ is uniform on $\operatorname{MEC}(\mathcal{C}_M)$}}{=} \mathbb{G}_n \Bigg\{ \frac{1}{\# \operatorname{MEC}(\mathcal{C}_M)} \sum_{\mathcal{G}_M \in \operatorname{MEC}(\mathcal{C}_M)} \big\{ \upvarphi_{j, 1}^{\text{QR}} (X, P \, ; \, \mathcal{G}_M ) - \upvarphi_{j, 0}^{\text{QR}} (X, P \, ; \, \mathcal{G}_M ) \big\} \Bigg\} \\
            & ~~~~~~~~~~~~~~~~~~~~~~~~~~~~~~~~~~~~~~~~~~~~~~~~~ + o_p(1) \\
            & \rightsquigarrow \, \mathcal{N} \left(0, \E \left[ \frac{1}{\# \operatorname{MEC}(\mathcal{C}_{0, M})} \sum_{\mathcal{G}_M \in \operatorname{MEC} (\mathcal{C}_{0, M})} S^{\text{eff}, \, \text{nonpar}} \big( TM_j (\mathcal{G}_M)\big) \right]^2 \right)
    	\end{aligned}
    \]
    which completes the proof of the semiparametric efficiency for $\widehat{TM}_j^{\text{avg}, \,\text{QR}}$.
\end{proof}

\end{document}